\pdfoutput=1
\documentclass[11pt,letterpaper]{article}
\usepackage[T1]{fontenc}
\usepackage{lmodern,amsmath,amssymb,amsthm,mathtools,microtype}
\usepackage[margin=1in]{geometry}
\usepackage{booktabs,tabularx,enumitem,xcolor,longtable,needspace}
\usepackage[colorlinks=true,linkcolor=blue!40!black,citecolor=blue!40!black,urlcolor=blue!40!black]{hyperref}
\newtheorem{theorem}{Theorem}[section]
\newtheorem{lemma}[theorem]{Lemma}
\newtheorem{corollary}[theorem]{Corollary}
\newtheorem{proposition}[theorem]{Proposition}
\theoremstyle{definition}
\theoremstyle{remark}\newtheorem{remark}[theorem]{Remark}
\newcommand{\E}{\mathbb E}
\newcommand{\Prob}{\mathbb P}
\newcommand{\Var}{\operatorname{Var}}
\newcommand{\Poi}{\operatorname{Poi}}
\newcommand{\TV}{\operatorname{TV}}
\newcommand{\one}{\mathbf 1}
\newcommand{\eps}{\varepsilon}

\newcommand{\G}{\mathsf G}
\newcommand{\g}{\mathsf g}
\newcommand{\norm}[1]{\left\lVert#1\right\rVert}

\numberwithin{equation}{section}
\allowdisplaybreaks[2]
\setlist{itemsep=3pt,topsep=4pt}
\hypersetup{pdftitle={Uniform Renyi Entropy Estimation and Logarithmic Additive Functionals},pdfauthor={Arman Adibi and Piotr Krysta},pageanchor=false}

\begin{document}
\begin{titlepage}
\centering
\vspace*{0.45in}
{\LARGE\bfseries Optimal R\'enyi 
and Logarithmic Additive Functional Entropy Estimation\par}
\vspace{0.3in}
{\large Arman Adibi\qquad Piotr Krysta\par}
\vspace{0.12in}
Department of Computer Science\\
School of Computer and Cyber Sciences\\
Augusta University
\vspace{0.25in}
\begin{abstract}
We characterize the minimax sample complexity of estimating R\'enyi
entropy from independent samples on an alphabet of size at most $k$,
when the known order $\alpha>1$ may vary with $k$.\footnote{Lean
verification files of our proofs are available at \url{https://github.com/adibiarman/Optimal-Renyi-and-Logarithmic-Additive-Functional-Entropy-Estimation}.}
For every fixed additive accuracy $\varepsilon\in(0,1/2)$, we obtain
matching bounds with constants independent of both $k$ and $\alpha$.
Below order $\log k$, the sample complexity is
$\Theta_\varepsilon(\alpha k^{1-1/\alpha}X_\alpha(k,\varepsilon))$,
where a monotone scalar inequality determines the noninteger
multiplier. This characterization resolves the transitions as the
order approaches one or an integer, or grows with the alphabet.
Integer orders have complexity
$\Theta_\varepsilon(\alpha k^{1-1/\alpha})$, while orders at least
$\log k$ and min-entropy have complexity
$\Theta_\varepsilon(k\log k)$.

We also develop a framework for estimating logarithms of positive
additive functionals. Exact factorial-count variance identities and
weighted approximation duality retain the multiplicative thresholds
required for logarithmic accuracy. For every fixed continuous function
$f$ satisfying $f(0)=0$, $f(x)>0$ for $x>0$, and
$f(x)\sim ax^r$ as $x\downarrow0$, with $a>0$ and $r>1$, we show that
estimating $-\beta^{-1}\log\sum_i f(p_i)$, for fixed $\beta>0$,
has sample complexity $\Theta(k^{1-1/r})$ when $r$ is an integer
and $\Theta(k/\log k)$ otherwise. The proof uses a replication
reduction that preserves the sampling experiment. Smooth flat
functions demonstrate that regularity alone does not determine
these rates.

The R\'enyi upper bound combines polynomial estimates for small
probabilities, Gamma-function count statistics for larger
probabilities, and an independent pilot. A global affine correction
preserves the known first moment near order one. The lower bound
constructs positive mean-one priors by finite interpolation and
controls both statistical indistinguishability and entropy
separation after normalization.
\end{abstract}
\vfill
\end{titlepage}
\hypersetup{pageanchor=true}
\pagenumbering{roman}
\tableofcontents
\clearpage\pagenumbering{arabic}

\section{Introduction}
\label{sec:intro}

Estimating a distribution and estimating a functional of that distribution
are different statistical tasks. A sample may contain too little
information to recover most small probabilities, while still determining
their aggregate contribution to entropy. This distinction is central to
large-alphabet statistics and distribution property estimation. It also
creates a precise algorithmic question: which nonlinear summaries can be
estimated from substantially fewer observations than the alphabet size?

R\'enyi entropy~\cite{Renyi61} provides a family in which the answer
depends sharply on a real parameter. For a distribution
$p=(p_1,\ldots,p_k)$ and order $\alpha>1$, it is
$H_\alpha(p)=-(\alpha-1)^{-1}\log\sum_i p_i^\alpha$.
At integer orders, collisions provide unbiased power-sum estimators.
At noninteger orders, the behavior of $x^\alpha$ near zero creates an
approximation problem. The distinction becomes more delicate when
$\alpha$ changes with $k$: a fixed-order constant can conceal a transition
near an integer, near one, or near the high-order regime.

We resolve these transitions for all regimes of $\alpha$ through a uniform sample-complexity
characterization. We also ask which parts of our arguments extend beyond
powers. For a known positive function $f$, the logarithmic target
$-\beta^{-1}\log\sum_i f(p_i)$ requires relative accuracy in its
additive sum. Its statistical difficulty can therefore differ from that
of estimating the same sum to fixed absolute accuracy. We develop the
count and approximation identities for this loss and prove a transfer
theorem for every fixed function with a finite leading power at zero.

\paragraph{Contributions.}
Our first result gives matching bounds for every real $\alpha>1$, with
the accuracy fixed before the alphabet size and order. We define a scalar root that
retains the dependence on the distance to adjacent integers and on the
requested accuracy. The result includes the collision scale
$\alpha k^{1-1/\alpha}$ at integer orders and the saturation scale
$k\log k$ at large orders. At every fixed noninteger order it gives
$k/\log k$.

Our second result separates the general functional framework from its
quantitative evaluation for powers. We give exact count-variance and
likelihood identities, an approximation duality with the exact
logarithmic threshold, and explicit conditions needed to turn these
identities into statistical bounds. Our third result shows that, for
fixed function $f(x)\sim ax^r$, the leading order determines the alphabet
dependence of the logarithmic additive estimation. The reduction assumes
only continuity and positivity, and retains its accuracy loss explicitly.

\subsection{Prior work and the remaining uniformity question}

\paragraph{Entropy and R\'enyi estimation.}
Valiant and Valiant~\cite{VV11} established the optimal
$\Theta(k/\log k)$ sample complexity for constant-accuracy Shannon
entropy estimation. Acharya, Orlitsky, Suresh, and Tyagi~\cite{AOST17}
identified the integer/noninteger distinction for R\'enyi entropy:
fixed integer orders admit the collision scale
$k^{1-1/\alpha}$, whereas noninteger orders have near-linear alphabet
dependence. Their polynomial estimator gives an $O(k/\log k)$ upper
bound for fixed noninteger orders above one. Obremski and
Skorski~\cite{OS17} refined R\'enyi bounds, including
entropy-dependent integer-order estimates and falling-factorial
variance calculations. Adibi and Krysta~\cite{AK26} obtained matching
bounds for growing integer orders and min-entropy. Those results supply
important parts of the present picture; they do not give the
noninteger transition uniformly as the order approaches one or an integer.

\paragraph{Polynomial methods and additive functionals.}
Jiao, Venkat, Han, and Weissman~\cite{JVHW15} developed a general
method that estimates polynomial approximations in nonsmooth regions
and uses bias correction in smooth regions. Wu and Yang~\cite{WY16}
proved sharp entropy risk bounds through best polynomial approximation.
Their later work~\cite{WY19,WY20} develops the associated connections
between approximation, moment matching, and estimation of unseen
probabilities. Fukuchi and Sakuma~\cite{FS18} study general additive
functionals under squared-error loss, with rates governed by derivative
behavior. Our functional conclusions concern logarithmic accuracy and
require no differentiability under the finite-leading-power assumption.
The underlying use of polynomial estimates and moment matching is an
established methodology.

\paragraph{Approximation and statistical duality.}
Polyanskiy and Wu~\cite{PW} connect functional estimation to convex
duality and optimized testing bounds. We use a weighted form of
approximation duality that records an exact multiplicative ratio.
The identity alone does not evaluate a minimax rate: uniform variance,
likelihood tails, concentration of the realized target, and the sampling
reduction must also be controlled. Our main technical work here is the
order-uniform evaluation of these quantities for powers at the original
accuracy. In particular, a constant-factor reduction in entropy accuracy
cannot be absorbed into a sample-size constant when the order varies.

\paragraph{Organization.}
Section~\ref{sec:results} states our two main statistical results.
Section~\ref{sec:framework-main} explains our common functional
framework. Section~\ref{main:upper} gives the i.i.d. proof and its
principal quantitative estimates; Section~\ref{sec:functional-main}
explains our functional reduction. The appendices give the full versions of our count
identities, polynomial constructions, sampling bounds, lower-bounds, transfer proofs, and approximation dualities.

\section{Models and main results}
\label{sec:models}\label{app:iid}\label{sec:results}

All logarithms are natural. Write
$\Delta_k=\{p\in[0,1]^k:\sum_i p_i=1\}$; zero probabilities are
allowed. The observations are independent samples from an unknown
$p\in\Delta_k$, and the alphabet bound $k$ is known. For any target
$T$, define
\begin{equation}\label{main:minimax}
 S_T(k,\varepsilon,\zeta)=\min\left\{n\in\mathbb N_0:
 \exists\widehat T\quad\sup_{p\in\Delta_k}
 \Prob_p\{|\widehat T-T(p)|>\varepsilon\}\le\zeta\right\}.
\end{equation}
Estimators may use independent randomization. For real $\alpha>1$, set
\begin{equation}\label{main:renyi}
 P_\alpha(p)=\sum_i p_i^\alpha,\qquad
 H_\alpha(p)=-\frac{\log P_\alpha(p)}{\alpha-1},
 \qquad S_\alpha=S_{H_\alpha},
\end{equation} where $H_{\alpha}$ is the R\'enyi entropy of order $\alpha$. Jensen's inequality gives
\begin{equation}\label{intro:range}
 k^{1-\alpha}\le P_\alpha(p)\le1,\qquad
 0\le H_\alpha(p)\le\log k.
\end{equation}
Min-entropy is $H_\infty(p)=-\log\max_i p_i$, with complexity
$S_\infty$ defined by the same loss. We suppress the failure parameter
when it equals $1/3$.

\subsection{A rate uniform over all real orders above one}

Fix $0<\varepsilon<1/2$. For noninteger $1<\alpha<\log k$, put
\begin{align}
 \beta&=\alpha-1,&m&=\lfloor\alpha\rfloor,&
 \gamma&=\alpha-m,&\delta&=1-\gamma,&h&=\gamma\delta,\nonumber\\
 E&=2\varepsilon\beta,&
 \Psi&=\gamma^{1+1/\delta}\delta^{1+1/\gamma},&&&
 \mathcal Q^2&=\frac{\Psi}{\alpha^2\beta^2}
   \exp\!\left(\frac{\log k}{\alpha}-\frac{E}{h}\right).
 \label{main:root-data}
\end{align}
The letters $\gamma,\delta$ denote distances to the adjacent integers,
not failure probabilities. For $[t]_+=\max\{t,0\}$, define
\begin{equation}\label{main:root}
 X_\alpha^{\mathrm{root}}=
 \inf\left\{x\ge1:\frac{\alpha h}{\beta}x^\alpha
 \left[1+\log\frac{x(1+\log x)}{\mathcal Q^2}\right]_+\ge1\right\}.
\end{equation}
The defining function is continuous and nondecreasing, and strictly
increasing wherever positive. Thus the infimum is attained and, unless
it equals one, is the unique crossing of level one. It can be computed
by monotone search in $\log x$.

For $\log k>2$, extend this multiplier to all orders by
\begin{equation}\label{main:global-x}
 X_\alpha(k,\varepsilon)=
 \begin{cases}
 X_\alpha^{\mathrm{root}},&1<\alpha<\log k,\ \alpha\notin\mathbb N,\\
 \max\{1,(2\varepsilon r)^{-1/r}\},
     &\alpha=r\in\mathbb N,\ 2\le r<\log k,\\
 (\log k)k^{1/\alpha}/\alpha,&\alpha\ge\log k,
 \end{cases}
\end{equation}
and set
\begin{equation}\label{main:rate}
 R_\alpha(k,\varepsilon)
   =\alpha k^{1-1/\alpha}X_\alpha(k,\varepsilon).
\end{equation}
The integer multiplier is bounded above and below by positive
accuracy-dependent constants. The high-order branch gives
$R_\alpha=k\log k$ exactly; its multiplier need not be at least one.

\begin{theorem}[Uniform R\'enyi sample complexity]\label{main:theorem}
For every $0<\varepsilon<1/2$ there are constants
$c_\varepsilon,C_\varepsilon>0$ and $K_\varepsilon\in\mathbb N$,
depending only on $\varepsilon$, such that for every
$k\ge K_\varepsilon$ and every real $\alpha>1$,
\[
 c_\varepsilon R_\alpha(k,\varepsilon)
 \le S_\alpha(k,\varepsilon,1/3)
 \le C_\varepsilon R_\alpha(k,\varepsilon).
\]
Moreover, $S_\infty(k,\varepsilon,1/3)
=\Theta_\varepsilon(k\log k)$.
\end{theorem}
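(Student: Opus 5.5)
We would prove the two inequalities separately and split by regime of $\alpha$, following the three branches of \eqref{main:global-x}. Accuracy $\varepsilon$ in $H_\alpha$ is equivalent to multiplicative accuracy $e^{\pm\beta\varepsilon}$ in $P_\alpha$, so relative error about $E=2\varepsilon\beta$ in $P_\alpha$ suffices. The task is therefore to estimate $P_\alpha$ to relative accuracy $\asymp\beta\varepsilon$, uniformly in $\alpha$. Throughout we use Poissonization, losing only constants, and the median trick to pass from constant to $1/3$ failure probability.

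\textbf{Integer and high orders.} For integer $r<\log k$, the falling-factorial collision statistic $\sum_i (N_i)_r/n^r$ is unbiased for $P_r$. Its variance is a sum of terms $n^{-j}P_{2r-j}$. Using $P_{2r-j}\le P_r^{(2r-j-1)/(r-1)}$ and $P_r\ge k^{1-r}$, Chebyshev gives relative accuracy $E$ once $n\gtrsim r k^{1-1/r}\max\{1,(2\varepsilon r)^{-1/r}\}$. This matches the integer multiplier. For $\alpha\ge\log k$ and for $H_\infty$, we compare with $H_\infty$: since $H_\infty\le H_\alpha\le \frac{\alpha}{\alpha-1}H_\infty$, the two differ by at most $O(\log k/\alpha)=O(1)$. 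A more careful bound shows that estimating the largest few $p_i$ to relative accuracy $\varepsilon$ suffices, and this costs $O(k\log k)$ samples by a union bound over $k$ symbols with Chernoff. The lower bound $k\log k$ comes from a coupon-collector style prior. Take a uniform distribution on $k$ symbols versus one with a single symbol boosted by a factor $e^{\Theta(\varepsilon)}$ at a random location. Detecting the boosted symbol needs $\Omega(k\log k)$ samples, by a Le Cam argument with a mixture over positions, and the resulting entropy gap is $\ge 2\varepsilon$ for $\alpha\ge\log k$.

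\textbf{Noninteger upper bound.} Use an independent pilot sample to classify symbols as small ($p_i\lesssim x_0/n$) or large. For large symbols, use the unbiased Gamma-ratio estimator $\Gamma(N_i+1)/(\Gamma(N_i+1-\alpha)n^\alpha)$ for $p_i^\alpha$ under Poisson sampling, which has relative variance $O(\alpha^2/(np_i))$. For small symbols, use the best polynomial approximation of $x^\alpha$ on $[0,x_0/n]$ of degree $L$, estimated unbiasedly by falling factorials. The approximation error involves $x^{\alpha}$ against a weight near $x^m$. Its sharp size, for distance $\gamma,\delta$ to the adjacent integers, involves $\Psi$ and $h$; this is what produces the factors $\gamma^{1+1/\delta}\delta^{1+1/\gamma}$. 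Add the global affine correction $c\sum_i p_i$ using the known $\sum_i p_i=1$ to cancel the leading error when $\beta\to0$. Optimizing the cutoff $x_0=x$ then turns the condition ``bias $+$ standard deviation $\le E\,P_\alpha$, with worst case $P_\alpha\approx k^{1-\alpha}$'' into exactly the scalar inequality in \eqref{main:root}, giving $n\asymp\alpha k^{1-1/\alpha}X^{\mathrm{root}}_\alpha$.

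\textbf{Noninteger lower bound, the main obstacle.} By weighted approximation duality, build two positive priors $U,V$ on $[0,\lambda]$ with equal mean $1$ and equal first $L$ moments, whose $\alpha$-th moments differ by the multiplicative factor $e^{2\varepsilon\beta}$. Their existence is dual to the failure of degree-$L$ weighted approximation, and we would realize them by finite interpolation at Chebyshev-type nodes. Draw $p_i\propto U_i/k$ (respectively $V_i/k$) i.i.d. and normalize. Then show three things: the Poisson mixtures have small total variation, via moment matching and likelihood-tail control; $P_\alpha$ concentrates under each prior; and normalization shifts $H_\alpha$ by less than the separation. The hard part is keeping the separation at the original accuracy $\varepsilon$ rather than $c\varepsilon$, since a constant-factor loss in accuracy cannot be absorbed when $\alpha$ varies. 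This requires the exact multiplicative threshold in the duality and concentration of the realized target to $o(\beta\varepsilon)$. We would attempt this with variance bounds on $\sum_i U_i^\alpha$ relative to $k\,\E U^\alpha$, using $\lambda$ chosen from $X^{\mathrm{root}}_\alpha$ so that the lower bound matches the root equation.
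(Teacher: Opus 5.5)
\textbf{There are genuine gaps.} The integer-order upper bound fails as written, the noninteger rate is asserted rather than derived, and several lower bounds are missing or too weak.

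Your reduction to estimating $P_\alpha$ to relative accuracy $\asymp\beta\varepsilon$, checked by bias plus standard deviation, breaks down when the order grows. The requirement at integer orders is the multiplicative band $e^{-\varepsilon(r-1)}P_r\le\widehat P_r\le e^{\varepsilon(r-1)}P_r$, and its lower side still demands a relative deviation below one. But at $n=Drk^{1-1/r}$ the collision statistic has relative variance at least $r/(4D)$ when one coordinate equals $k^{-1+1/r}$ (see \eqref{tutorial:variance-obstruction}). So Chebyshev gives no lower-tail bound once $r\gg D$. Your inequality $P_{2r-j}\le P_r^{(2r-j-1)/(r-1)}$ is also reversed for $j<r$: $q\mapsto\log P_q$ is convex with $\log P_1=0$, so $P_q\ge P_r^{(q-1)/(r-1)}$ for $q>r$. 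The valid bound $P_{2r-j}\le P_r^{(2r-j)/r}$ gives only relative variance $\le e^{r/D}-1$. That handles the upper tail against $(e^{\varepsilon(r-1)}-1)^2$, but not the lower tail. You need a separate lower-tail argument, as in Theorem~\ref{tutorial:integer-raw}: split coordinates at $np_i\asymp r/\varepsilon^2$, bound the light part additively relative to $P_r$, and bound the heavy part from below using $(x)_r\ge(x-r)_+^r$ and Jensen's inequality for one weighted average of the $N_i/(np_i)$. Your Gamma branch has the same defect. Its relative variance is at least about $\alpha^2/\lambda$, which is not small for heavy means $\lambda$ of order $\alpha$ near the pilot threshold. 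The paper instead bounds the $\ell_\alpha$ norm of the count error (Lemma~\ref{lem:cc-heavy-norm}), giving a factor $(1\pm2t)^\alpha$ that fits inside $e^{\pm c\beta}$.

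For noninteger orders, the claim that optimizing the cutoff yields exactly \eqref{main:root} is the substance of the theorem, and you do not derive it. The mechanism you describe, uniform approximation with a two-sided bias-plus-deviation budget, cannot produce it. The additive approximation error of $x^{m+\gamma}$ vanishes only linearly in $\gamma$, whereas $\mathcal Q^2$ contains $e^{-E/h}$. For example, at $\alpha=2+1/\log k$ and $\varepsilon<1/4$ one has $R_\alpha=k^{1-2\varepsilon+o(1)}$, while a bias budget from additive error gains only polylogarithmic factors over $k$. The missing idea is to target the one-sided band $[e^{-E}p^\alpha,p^\alpha]$ through the two adjacent integer powers. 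Use a reverse profile in $(a/p)^\gamma$ and a forward profile in $(p/b)^\delta$, each an ordinary polynomial after multiplication by $p^\alpha$, and combine them with the exact two-arm envelope (Lemma~\ref{oc:envelope}). This lets $b/a$ be of order $(m^2L^2/h)z_*^{-1/h}$ with $z_*^{-1/h}\asymp_\varepsilon\frac{\beta h}{\alpha\Psi}e^{E/h}$ (Theorem~\ref{oc:capacity}). Since relative approximation fails at zero, the coordinates below $a$ must be paid by a deterministic correction drawn from the upper margin.

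The lower bounds are also incomplete.
\begin{itemize}
\item You never prove $\Omega(\alpha k^{1-1/\alpha})$ for integer orders or for noninteger orders with bounded root. This needs one coordinate raised from $k^{-1+1/\alpha}$ by a constant factor and hidden among $e^{\Theta(\alpha)}$ candidates; the logarithm of the number of candidates supplies the factor $\alpha$ (Theorem~\ref{thm:iid-collision-new}).
\item Your high-order prior fails near $\alpha=\log k$. Boosting one symbol of the uniform law by $B$ changes $H_\alpha$ by $\frac1{\alpha-1}\log\{1+(B^\alpha-1)/k\}$, which is $o(1)$ for $\log B\le1$. So a boost $e^{\Theta(\varepsilon)}$ is too small, and you need $\log B>1+2\varepsilon$.
\item In the moment-matching construction, an $\alpha$-moment ratio of exactly $e^{2\varepsilon\beta}$ leaves no room for normalization error. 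A margin of order $\varepsilon\beta$ would instead turn $e^{-E/h}$ into $e^{-(1+c)E/h}$ and change the rate. The paper takes $\log(A_1/A_0)\ge E+w$ with $w=h+(\beta/\alpha)X^{-\alpha}$, which the root absorbs. It controls normalization through the residual $U^\alpha-AU$, which carries the factor $\beta$. Near integers it leaves the nearest integer moment unmatched and bounds that defect using atoms that keep the count probabilities away from zero.
\item The weighted duality requires $\min f>0$, so it does not directly give priors on $[0,\lambda]$.
\end{itemize}
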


The constants are chosen before both $k$ and $\alpha$. The estimator
may depend on the known order; different ranges use different auxiliary
parameters. Table~\ref{tab:rates} records useful consequences.
\begin{table}[t]
\centering
\caption{Consequences at fixed additive accuracy and failure probability
$1/3$. Constants are uniform in $\alpha$ only in the first, second,
and fourth rows.}\label{tab:rates}
\begin{tabular}{@{}ll@{}}
\toprule
Order range & Sample complexity\\\midrule
$1<\alpha\le3/2$ & $\Theta_\varepsilon(k/\log k)$\\
Integer $2\le\alpha<\log k$ &
 $\Theta_\varepsilon(\alpha k^{1-1/\alpha})$\\
Fixed noninteger $\alpha>1$ &
 $\Theta_{\alpha,\varepsilon}(k/\log k)$\\
$\alpha\ge\log k$, or $\alpha=\infty$ &
 $\Theta_\varepsilon(k\log k)$\\\bottomrule
\end{tabular}
\end{table}
The root itself is needed for moving noninteger orders outside the
first row. Its exponential term retains the original accuracy:
replacing $\varepsilon$ by a constant fraction can change the rate.

\subsection{Logarithmic additive functionals}

Let $f\in C([0,1])$ satisfy $f(0)=0$ and $f(x)>0$ for $x>0$.
For known $\beta_f>0$, define
\begin{equation}\label{main:additive}
 F_f(p)=\sum_i f(p_i),\qquad
 H_{f,\beta_f}(p)=-\beta_f^{-1}\log F_f(p).
\end{equation}
The scale $\beta_f$ is independent of any leading exponent of $f$.

\begin{theorem}[Finite leading power]\label{main:functional}
Suppose $f(x)/(ax^r)\to1$ as $x\downarrow0$, for fixed $a>0$
and $r>1$. For every fixed $\beta_f,\varepsilon>0$ and
$0<\zeta<1/2$, as $k\to\infty$,
\[
 S_{H_{f,\beta_f}}(k,\varepsilon,\zeta)=
 \begin{cases}
 \Theta_{f,r,\beta_f,\varepsilon,\zeta}(k^{1-1/r}),
    &r\in\{2,3,\ldots\},\\
 \Theta_{f,r,\beta_f,\varepsilon,\zeta}(k/\log k),
    &r>1\text{ noninteger}.
 \end{cases}
\]
\end{theorem}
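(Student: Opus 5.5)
The plan is to transfer the fixed-order Rényi rates (the third row of Table~\ref{tab:rates} and the integer row of Theorem~\ref{main:theorem}, with $\alpha=r$ fixed) to $H_{f,\beta_f}$. The transfer has two parts. The upper bound splits the symbols into small and large probabilities. The lower bound is a replication reduction that embeds a power-law hard instance into a regime where $f$ behaves like $ax^r$.

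First observe that additive accuracy $\varepsilon$ for $H_{f,\beta_f}$ is equivalent to multiplicative accuracy $e^{\pm\beta_f\varepsilon}$ for $F_f$. Since $\beta_f,\varepsilon$ are fixed, this is a constant relative accuracy. Moreover $F_f(p)\ge\min_i f(p_i)$ over the large coordinates, and by continuity and positivity $F_f(p)\ge c_f k^{1-r}$. Indeed, either some $p_i\ge\eta$, giving $F_f\ge\min_{[\eta,1]}f>0$, or all $p_i<\eta$ and $f\ge (a/2)x^r$, so Jensen gives $F_f\ge (a/2)k^{1-r}$.

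\textbf{Upper bound.} Fix a small $\eta>0$ such that $f(x)/(ax^r)\in[1-\theta,1+\theta]$ on $(0,\eta]$, with $\theta$ a small fraction of $\beta_f\varepsilon$. Draw an independent pilot (Poissonized) sample to classify each symbol as small ($p_i\lesssim\eta/2$) or large. On large symbols there are at most $O(1/\eta)$ of them, each with $p_i\ge\eta/4$, so $O(\log k)$ samples estimate them to accuracy that makes $\sum_{\text{large}}f(p_i)$ accurate by uniform continuity of $f$. This contributes negligibly compared to $k^{1-1/r}$. On small symbols, $\sum_{\text{small}}f(p_i)=a(1\pm\theta)\sum_{\text{small}}p_i^r$. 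I would then apply the paper's power-sum estimator restricted to the small symbols. For integer $r$ this is the falling-factorial collision estimator, whose exact variance identity gives relative accuracy with $O(k^{1-1/r})$ samples, since the target is at least $\Omega(k^{1-r})$ after the large-part case split. For noninteger $r$ it is the polynomial-approximation estimator with $O(k/\log k)$ samples. The key point is that the additive error needed is only relative to the total $F_f$, and $F_f$ dominates the small part. The boundary effect of misclassification near $\eta$ is handled because $f$ is continuous and the pilot error window contributes a controlled relative amount.

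\textbf{Lower bound.} Take the hard pair of priors for the $r$-Rényi problem on alphabet $k'=\lfloor k/2\rfloor$: collision-type instances for integer $r$, and moment-matching priors for noninteger $r$. Both are supported on probabilities $O(\log k/k)\ll\eta$, and they separate $\sum p_i^r$ by a constant factor. On these supports $f(p_i)=a(1+o(1))p_i^r$ uniformly, so $F_f$ inherits a constant-factor separation, up to a $(1+o(1))$ loss. This loss is why the accuracy loss is retained explicitly and absorbed into the $\Theta$ constants, which may depend on $\varepsilon$. If the separation obtained is below the needed $e^{2\beta_f\varepsilon}$ factor, use replication: scale the instance by mixing with known masses, or replace each symbol by several equal copies. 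Splitting $p_i$ into $t$ copies of $p_i/t$ multiplies $\sum p_i^r$ by $t^{1-r}$ while preserving the sampling experiment up to a known random relabeling. This lets one realize the required ratio by a fixed constant blow-up of $k$ alone, so the indistinguishability bound (TV after $n=c\,k^{1-1/r}$ or $c\,k/\log k$ samples) transfers.

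The main obstacle I expect is the lower bound for small $\varepsilon$ relative to a generic $f$. The hard instances give only a fixed separation ratio, and amplifying it to $e^{2\beta_f\varepsilon}$ requires the replication reduction to keep the $(1+o(1))$ approximation of $f$ uniform. The plan is to use a fixed number of replicated independent blocks, combining the separations multiplicatively while the TV grows only by a constant factor.
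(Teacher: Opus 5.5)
\textbf{Gap in the lower bound: the amplification step fails.} Replication cannot change the separation ratio. Splitting every symbol into $t$ copies multiplies \emph{both} hypotheses' power sums by the same factor $t^{1-r}$, so $P_r^{(1)}/P_r^{(0)}$ is unchanged. Mixing in known mass adds a common term, which only shrinks the ratio. Over disjoint blocks, power sums add, so blocks drawn from one hypothesis have the same ratio as a single block. (Only a tensor product multiplies power sums, and it squares the alphabet, which destroys the rate.) So no constant-size reduction can boost a R\'enyi separation up to $e^{2\beta_f\varepsilon}$.

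What is missing is that no boosting is needed. The accuracy should change in the transfer instead. Fix $0<\xi<\beta_f\varepsilon$ and $\tau$ with $e^{-\xi}ax^r\le f(x)\le e^{\xi}ax^r$ on $[0,\tau]$. For every $p\in\Delta_{\lfloor k/m\rfloor}$ with $m\ge1/\tau$, replicate each symbol $m$ times, so that every coordinate is at most $\tau$. Then $|\log F_f(q)-\log(am^{1-r}P_r(p))|\le\xi$. Hence any $H_{f,\beta_f}$-estimator with error $\varepsilon$ is an $H_r$-estimator with error $(\beta_f\varepsilon+\xi)/(r-1)$, with the same samples and failure probability. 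That accuracy is an arbitrary fixed number, possibly above $1/2$. So you need R\'enyi lower bounds at \emph{every} fixed accuracy, and the paper supplies them. For noninteger $r$, Theorem~\ref{fixed:thm:renyi} takes $\kappa=\tfrac12e^{-E-2}$, giving $\log(A_1/A_0)\ge E+2$ for any fixed $E$. For integer $r$, Theorem~\ref{thm:iid-collision-new} uses $B=e^{2\varepsilon+2}$. In the paper, replication only forces every coordinate below $\tau$ so that the reduction is black-box; it is not a ratio amplifier. Your observation that the hard instances already have vanishing coordinates would also work, but only combined with these arbitrary-accuracy bounds.

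\textbf{Secondary issue in the upper bound.} Applying the power-sum estimator ``restricted to the small symbols'' is not covered by anything proved for noninteger $r$. The hybrid estimator's guarantee is for the full sum $P_r(p)$. Its affine anchor $q_1$ also uses the global identity $\sum_ip_i=1$, which is unknown on a pilot-selected subset; the paper points this out explicitly. The paper avoids subsets with a dichotomy.
\begin{itemize}
\item Lemma~\ref{lem:dimension-free-additive} estimates $F_f$ to a fixed absolute tolerance $d_0=\tfrac b4(1-e^{-\beta_f\varepsilon})$, where $b=\min_{[\tau,1]}f$, from $O(1)$ samples. It approximates $f(x)/x$ by a polynomial, which is possible since $f(x)/x\to0$ for $r>1$.
\item If the estimate is at least $b/2$, it already gives the required logarithmic accuracy.
\item Otherwise $F_f<b$, which forces every $p_i<\tau$. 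The full-distribution R\'enyi estimator at accuracy $(\beta_f\varepsilon-\xi)/(r-1)$ then finishes.
\end{itemize}
General $\zeta\in(0,1/2)$ follows by median amplification, which you did not address.
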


For example, $x^2+x^{5/2}$, $x^2e^x$, and $e^x-1-x$ have
complexity $\Theta(\sqrt{k})$, whereas $x^{3/2}(1+x)$ has complexity
$\Theta(k/\log k)$. No differentiability is required. The function,
its leading exponent, and the accuracy are fixed here; the theorem
does not claim uniform constants over changing functions or orders.
The reduction proving it is stated in
Theorem~\ref{thm:real-leading-order-transfer} independently of any
formula for the R\'enyi rate. It incurs an explicit deterministic
accuracy loss while preserving the sampling experiment exactly.

\section{The general functional framework}
\label{sec:framework-main}

Three identities explain the common structure of the proofs. They
separate deterministic approximation, sampling noise, and statistical
indistinguishability. Their quantitative evaluation for powers is the
additional step needed for Theorem~\ref{main:theorem}.

\paragraph{Polynomial approximation and exact count noise.}
In the Poisson experiment, counts are independent with
$N_i\sim\Poi(np_i)$. For a polynomial
$Q(x)=\sum_{j=0}^L c_jx^j$, factorial substitution gives
\[
 Z_Q=\sum_i\sum_{j=0}^L c_j(N_i)_j/n^j,
 \qquad (N)_j=N(N-1)\cdots(N-j+1).
\]
Its expectation is $\sum_iQ(p_i)$, and
\begin{equation}\label{main:energy}
 \Var Z_Q=\sum_i\sum_{\ell=1}^L
 \frac{p_i^\ell}{\ell!n^\ell}\{Q^{(\ell)}(p_i)\}^2.
\end{equation}
Lemma~\ref{lem:factorial-energy-new} proves the identity, including
all cross terms. Thus approximation error controls the bias, while
derivatives control the noise. If the polynomial has a linear term,
its aggregate is known: $\sum_i c_1p_i=c_1$. Replacing that term
globally by its deterministic value is particularly important near
order one.

\paragraph{Exact logarithmic accuracy.}
For a positive functional $F$ and scale $\beta>0$, set
$\eta=\tanh(\beta\varepsilon)$. If
$|\E Z-F|\le\eta F/2$ and
$\Var Z\le\zeta\eta^2F^2/4$, Chebyshev's inequality gives
$Z/F\in[1-\eta,1+\eta]$ with failure at most $\zeta$.
The centered output
\[
 -\frac{\log Z}{\beta}
 +\frac{\log(1-\eta^2)}{2\beta}
\]
then has error at most $\varepsilon$. The logarithmic half-width
is exactly $\operatorname{arctanh}(\eta)=\beta\varepsilon$.
Proposition~\ref{prop:additive-margin-new} gives the full certificate.

\paragraph{Approximation duality at the same threshold.}
Let $G$ be positive and continuous on a compact parameter space
$\Theta$, with $\min G>0$, and let $V\subset C(\Theta)$ be a
finite-dimensional space containing constants. The weighted
approximation error satisfies
\begin{equation}\label{main:duality}
 \inf_{v\in V}\left\|\frac{G-v}{G}\right\|_\infty
 =\max_{\substack{\int v\,d\omega=0\ (v\in V)\\
                   \int G\,d|\omega|\le1}}
    \int G\,d\omega.
\end{equation}
Theorem~\ref{thm:exact-multiplicative-dual} proves attainment and
shows that, when $d>0$, the normalized Jordan parts of an optimizer have matching
$V$-moments and target-expectation ratio $(1+d)/(1-d)$, where $d$
is the displayed value. Therefore the threshold
$d=\tanh(\beta\varepsilon)$ corresponds exactly to a logarithmic
separation $2\varepsilon$. The regularized version in
Theorem~\ref{thm:regularized-weighted-dual} permits controlled moment
defects.

The sampling counterpart is equally explicit. For independent
$N_e\sim\Poi(\nu_e\theta_e)$ and a positive reference $\theta^*$,
the likelihood ratios obey
\begin{equation}\label{main:kernel}
 \E_{\theta^*}L_\theta L_\vartheta
 =\exp\left\{\sum_e\frac{\nu_e
    (\theta_e-\theta_e^*)(\vartheta_e-\theta_e^*)}{\theta_e^*}\right\}.
\end{equation}
An orthonormal Poisson polynomial basis turns coefficient energy into
reference variance and moment-defect energy into likelihood overlap.
Appendix~\ref{app:duality} derives both identities and the truncation
tail. Reference variance is not a uniform variance bound, and separated
prior expectations do not ensure concentration of the realized target.
These points are verified separately in the i.i.d. proof.

\section{Proof of the uniform i.i.d. theorem}
\label{main:upper}

We describe the construction and its quantitative obligations before
giving the complete proofs in the appendices. Write
$\beta=\alpha-1$ and $P=P_\alpha(p)$.

\subsection{Upper bound: approximation with a sampling margin}

The noninteger estimator targets the exact interval
\begin{equation}\label{intro:accuracy-band}
 e^{-2\varepsilon\beta}P\le\widehat P\le P.
\end{equation}
On this event the output
\begin{equation}\label{intro:entropy-output}
 \widehat H=-\beta^{-1}\log\widehat P-\varepsilon
\end{equation}
has error at most $\varepsilon$. The construction must leave positive
margins at both ends of this interval, so that random fluctuations do
not consume the deterministic approximation tolerance.

\paragraph{Orders bounded away from one.}
For noninteger $3/2\le\alpha<\log k$, use main counts
$N_i\sim\Poi(np_i)$ and independent pilot counts
$Y_i\sim\Poi(c_pnp_i)$, with
$n=D\alpha k^{1-1/\alpha}X$ and $X=X_\alpha^{\mathrm{root}}$.
Set $\chi=\min\{\beta,1\}$ and choose
\[
 L\asymp_\varepsilon\alpha(1+\log X),\quad b=c_bL/n,
 \quad v=(\beta/\alpha)X^{-\alpha},\quad
 u=\min\{v+h,u_0\}.
\]
The exact two-arm envelope of Lemma~\ref{oc:envelope} chooses a
lower cutoff $a$ and a deterministic tolerance
$E_0=E-\min\{v+C_hh,E/2\}$. The root calibration ensures
$a\le c_\varepsilon\beta/(\alpha k)$.

Appendix~\ref{sec:coupled-bias-bound} constructs an explicit
polynomial $Q(x)=\sum_{j=0}^d q_jx^j$, of degree $d=O(L)$,
from two endpoint profiles. Reversing one profile makes its fractional
power cancel against $p^\alpha$, so both profiles yield ordinary
polynomials in $p$. For the larger probabilities use the analytic
count transform
$\mathsf H_\alpha(j)=\Gamma(j+1)/\Gamma(j+1-\alpha)$,
interpreted with the entire reciprocal Gamma function. It can be
signed at small counts; those values must be included in the error
analysis. With $I_i=\one\{Y_i\le c_pB\alpha\}$, define
\begin{equation}\label{main:hybrid}
 \widehat P=2E_aP_*+q_1+
 \sum_i\left[I_i(\widehat Q_i-q_1N_i/n)
 +(1-I_i)(\widehat G_i-q_1N_i/n)\right],
\end{equation}
where $P_*=k^{1-\alpha}$,
$\widehat Q_i=\sum_jq_j(N_i)_j/n^j$,
$\widehat G_i=\omega n^{-\alpha}\mathsf H_\alpha(N_i)$,
and $E_a=\chi d_\varepsilon C_0^m(ka)^\beta$.
The known coefficients, damping factor $\omega$, and constants are
specified in Appendix~\ref{ped:upper-calibration}. The deterministic
$q_1$ in \eqref{main:hybrid} replaces the random global linear total.

\begin{lemma}[Upper-bound budgets]\label{main:upper-budget}
The design constants can be chosen in terms of $\varepsilon$ alone
so that \eqref{main:hybrid} satisfies
\eqref{intro:accuracy-band} with probability at least $3/4$,
uniformly over $p\in\Delta_k$ and noninteger
$3/2\le\alpha<\log k$, for all sufficiently large $k$.
\end{lemma}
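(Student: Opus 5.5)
The plan is to condition on the pilot counts $Y$, which are independent of the main counts $N$, and split the analysis into a deterministic bias budget and a stochastic variance budget. Both are evaluated relative to $P$ at the original accuracy $E=2\varepsilon\beta$. Concretely, I aim to show $\E[\widehat P\mid Y]\in[e^{-E}P(1+\tau),P(1-\tau)]$ for a margin $\tau\asymp_\varepsilon\chi$. I also aim to show $\Var(\widehat P\mid Y)\le c\,\tau^2P^2$, with a small constant $c$, on a pilot event of probability at least $1-1/12$. Chebyshev's inequality then gives \eqref{intro:accuracy-band} with conditional failure at most $1/6$, and hence total failure at most $1/4$.

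\emph{Step 1 (pilot classification).} Poisson tail bounds on $Y_i\sim\Poi(c_pnp_i)$ give, for large $B$, that with probability $\ge 1-1/12$ every $i$ with $np_i\le B\alpha/2$ has $I_i=1$, and every $i$ with $np_i\ge 2B\alpha$ has $I_i=0$. The number of symbols with $np_i>B\alpha/2$ is at most $2n/(B\alpha)$, so a union bound over them is affordable. For the small symbols I would not take a union bound over all $k$ of them, since a misclassification there costs nothing: either branch has controlled bias in the ambiguous window $[B\alpha/2,2B\alpha]/n$, as Step 2 shows.

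\emph{Step 2 (bias).} $\E\widehat Q_i=Q(p_i)$ exactly, and the linear parts cancel against $q_1$ because $\sum_i q_1p_i=q_1$. The polynomial branch therefore contributes $\sum_i Q(p_i)$. I will invoke the construction of Appendix~\ref{sec:coupled-bias-bound} through the two-arm envelope of Lemma~\ref{oc:envelope}. This gives $x^\alpha-E_a'\le Q(x)\le x^\alpha$ on $[a,u]$, up to the tolerance $E_0$ relative to $x^\alpha$, together with the bound $|Q(x)|\le E_a/k$ on $[0,a]$. The term $2E_aP_*$ is chosen to shift the aggregate small-probability error, which is at most $kE_a\cdot(\text{stuff})$, into the one-sided band. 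Since $P\ge P_*=k^{1-\alpha}$, the cutoff $a\le c_\varepsilon\beta/(\alpha k)$ makes this shift at most a $\chi$-fraction of $E$ relative to $P$. For the Gamma branch, $\E\,n^{-\alpha}\mathsf H_\alpha(N)$ equals $p^\alpha$ up to a relative error $O(\alpha^2/(np))$; I would obtain this from the Poisson–Gamma identity for the reciprocal Gamma function. The signed small-count values are included, and their Poisson weight at $np\ge B\alpha/2$ is $e^{-\Omega(B\alpha)}$. The damping $\omega$ is chosen to place this branch inside the band as well. Summing the bounds yields the one-sided conditional mean bound with margin $\tau$.

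\emph{Step 3 (variance).} Given $Y$, the summands are independent. For the polynomial branch I use \eqref{main:energy} with $Q$ replaced by $Q-q_1x$. The key inequality is $\sum_{\ell\le d}\frac{p^\ell}{\ell!n^\ell}(Q^{(\ell)}(p))^2\lesssim (p^\alpha)^2\cdot\frac{\alpha^2}{np}e^{O(L)}$, which follows from Markov-type coefficient bounds on $[0,b]$ with $b=c_bL/n$. I then need $\sum_i p_i^{2\alpha}/(np_i)\cdot e^{O(L)}\le c\tau^2P^2$. Using $p_i\le u$ and $P\ge k^{1-\alpha}$, this reduces to $n\ge D\alpha k^{1-1/\alpha}X$, which is exactly where the root condition \eqref{main:root} is calibrated. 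For the Gamma branch the variance is $O(\alpha^2p^{2\alpha-1}/n)$, so summing over the large symbols gives $\lesssim \alpha^2P\max_ip_i^{\alpha-1}/n\le\tau^2P^2$, since $P\ge\max_i p_i^\alpha$.

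The main obstacle is Step 3 for the polynomial branch at the boundary scale $p\approx a$. There the variance bound must beat $\tau^2P^2$ with $\tau$ only of order $\chi$, and not a constant fraction of $E$ that could be lost. I expect the needed inequality to be precisely the inequality defining $X^{\mathrm{root}}_\alpha$, with the factor $e^{-E/h}$ in $\mathcal Q^2$ coming from the envelope's tolerance $E_0$. My first attempt would be to bound the worst case by maximizing $p^{2\alpha-1}$-weighted energy over $p\in[a,b]$ subject to $\sum p_i=1$.
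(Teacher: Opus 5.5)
\paragraph{Margin.} Your plan fails at its first quantitative target. A margin $\tau\asymp\chi$ is correct when $E/h$ stays bounded, as for $1<\alpha\le3/2$. For noninteger $\alpha\ge3/2$, however, $\chi\in[1/2,1]$, so you are asking for a \emph{fixed} relative gap at both ends of $[e^{-E}P,P]$, and near an integer no such gap exists. The construction you invoke is calibrated to be tight. Since $z$ solves $G_\gamma(z)=e^{-E_0}$ with $E_0=E-\min\{v+C_hh,E/2\}$, the guaranteed lower gap where the envelope of Lemma~\ref{oc:envelope} is active is only of order $\kappa u\,p^\alpha$, plus a local term $\kappa\min(Lnp,1)\,p^\alpha$. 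Moreover $u=\min\{v+h,u_0\}\to0$ near an integer. No other construction can do better at this sample size. A fixed gap, together with the variance control you also claim, would give the band at a logarithmic tolerance smaller than $E$ by a constant. That multiplies $e^{-E/h}$ in $\mathcal Q^2$ by a factor unbounded as $h\to0$, and Theorem~\ref{cal:thm:lower} at the reduced accuracy then excludes $n\asymp R_\alpha(k,\varepsilon)$.

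The usable room is therefore the $p$-dependent $\mathcal M=E_aP_*+\kappa uP+\chi\kappa\mathcal B$. Its lower parts carry $\kappa=e^{-E}$, so a budget $c\tau^2P^2$ would not protect the lower endpoint for large $E$ anyway. The shift $2E_aP_*$ must also be paid from the summed upper deficit $\min(Z,1)$, which you do not address.

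\paragraph{Polynomial noise.} Your per-coordinate noise bound fails where it matters. Just above $\alpha=2$, for instance, $q_2$ is of order $a^\gamma$, so at $p\approx a$ the $\ell=2$ term of \eqref{main:energy} is about $(na)^{-2}p^{2\alpha}$, not $\alpha^2p^{2\alpha}/(na)$. In this regime $X$ can be polylogarithmic, so $na\le r_+\beta k^{-1/\alpha}X$ is polynomially small. This noise is affordable only in aggregate, against the additive baseline and the regularizer. Theorem~\ref{thm:cc-count} bounds the degree-$j$ part by $q_j^2(C\alpha/n)^jP_j$ via the pilot moment bound. It then uses $P_j\le k^{1-j}(P/P_*)^{j/\alpha}$ and $|q_j|\lesssim a^{\alpha-j}$, and normalizes by $(E_b+\kappa uP/P_*)^2$. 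This yields the exact power $D^{-2+j/\alpha}$, with $C_0^m$ absorbing $e^{C\alpha}$. Degrees above $\alpha$ are split at mean $e^{-C_\tau L}$; the larger means are charged to $\kappa\mathcal B$, with $c_L$ small so that $e^{CL}$ is beaten by $X^{-\alpha/2}$. The root enters through the cutoff $a$ via the capacity equation, not through a worst-case energy maximization.

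\paragraph{Gamma branch.} Two of your stochastic steps also fail uniformly in $\alpha$. The quantity $\alpha^2p^{2\alpha-1}/n$ is only the first Charlier term, and even that term is too large. Take $\gamma=\frac12$ and $\alpha=\lfloor\frac12\log k\rfloor+\frac12$, so that $X=1$. Put $p_1=10k^{-1+1/\alpha}$ and spread the remaining mass uniformly. Then $np_1=10D\alpha$ and $P\approx p_1^\alpha$, so your displayed inequality would require $\alpha/(10D)\le\tau^2$. Worse, the relative variance of $\mathsf H_\alpha(N_1)$ exceeds one, so Chebyshev cannot keep $\widehat P$ above $e^{-E}P$. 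What is needed is multiplicative concentration that exploits the width of the logarithmic band. Lemma~\ref{lem:cc-heavy-norm} does this:
\begin{itemize}
\item it applies Markov to $\sum_i|N_i-\lambda_i|^\alpha$;
\item it uses $(j-\alpha)_+\le\g_\alpha(j)\le j$ with the $\ell_\alpha$ triangle inequality;
\item it bounds the signed low counts separately.
\end{itemize}
The resulting logarithmic error is $O(t\alpha)$, which fits in $E$ because $\alpha/\beta\le3$.

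\paragraph{Pilot.} Your Step~1 event is false, not merely unproved. Suppose all mass sits on about $n/(2B\alpha)$ symbols with $np_i\approx2B\alpha$, or just above $nb$. Then the expected number of misclassified symbols is of order $(n/B\alpha)e^{-cB\alpha}\to\infty$ for bounded $\alpha$, because $n$ grows polynomially in $k$ while $B,c_p$ are fixed. Misclassifications outside your ambiguous window are not free either. A symbol with $np_i<B\alpha/2$ may be sent to the signed Gamma branch. A symbol with $p_i>b$ may be sent to a polynomial of degree $d\asymp L$, of size up to $e^{CL}b^\alpha(p_i/b)^d$. Neither case is covered by your Step~2 bias bounds. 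The paper instead bounds the \emph{aggregate expected absolute} contribution of these errors by $e^{-c\alpha}\mathcal B$ and $e^{-cL}\mathcal B$ respectively (Lemmas~\ref{lem:cc-false-heavy} and~\ref{lem:cc-false-light}). It then applies Markov once, with no union bound over symbols.
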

\begin{proof}[Proof structure]
The deterministic construction and the correctly selected heavy
aggregate leave margins comparable to
\[
 \mathcal M=E_aP_*+e^{-E}uP+
 \chi e^{-E}\sum_i p_i^\alpha\min\{Lnp_i,1\}.
\]
The derivative energy \eqref{main:energy} bounds the conditional
polynomial noise by a small multiple of $\mathcal M^2$.
The global affine replacement obeys the same variance budget.
Pilot misclassification and the lower cutoff contribute expected
absolute error at most a small multiple of $\mathcal M$.
The heavy-count norm estimate supplies the remaining concentration
event. Assigning failure $1/16$ to each of these four contributions
proves the assertion by Chebyshev, Markov, and a union bound.
The constructions and all four quantitative estimates are proved in
Appendices~\ref{ped:upper-calibration} and
\ref{sec:tutorial-sampling}; Theorem~\ref{cu:theorem} assembles them.
\end{proof}

\paragraph{Orders approaching one.}
For $1<\alpha\le3/2$, the root gives, uniformly,
\begin{equation}\label{main:nearone-rate}
 X\asymp_\varepsilon k^{1/\alpha}/\log k,
 \qquad R_\alpha\asymp_\varepsilon k/\log k.
\end{equation}
We use separate fixed choices of the auxiliary parameters in this
range. The localized polynomial is anchored to the tangent of
$p^{1+\beta}$ and has a factor $\beta$ in every nonlinear
coefficient. After the global linear replacement, the complete variance,
including the pilot and signed low-count corrections, is
$o(\beta^2P^2)$. Its bias is an arbitrarily small multiple of
$\beta P$, and its center is at distance at least
$c_\varepsilon\beta P$ from both ends of
\eqref{intro:accuracy-band}. Chebyshev proves the upper bound.
Appendix~\ref{sec:near-one-proof} gives the construction and proves the
exact affine cancellation.

\paragraph{Integer and high orders.}
For integers $2\le r<\log k$, use
$Z_r=n^{-r}\sum_i(N_i)_r$. Its mean is $P_r$ and
\[
 \frac{\Var Z_r}{P_r^2}
 \le\sum_{j=1}^r\binom rj^2j!
          (nP_r^{1/r})^{-j}.
\]
At $n=C_\varepsilon r k^{1-1/r}$, the correct target is
\begin{equation}\label{main:integer-band}
 e^{-\varepsilon(r-1)}P_r\le Z_r\le e^{\varepsilon(r-1)}P_r,
\end{equation}
rather than a fixed relative-variance bound. The proof splits
coordinates according to whether $np_i$ exceeds a fixed multiple of
$r$, combines count concentration with factorial variance, and controls
both tails (Appendix~\ref{tutorial:integer}).
For $\alpha\ge\log k$, simultaneous count bounds at
$n=C_\varepsilon k\log k$ give a relative-plus-$1/k$ approximation
to each $p_i$. A nonnegative capped-order Gamma transform and an
$\ell_\alpha$ norm bound then estimate $H_\alpha$; the maximum
coordinate gives min-entropy. Proposition~\ref{prop:iid-high}
proves both guarantees.

All these count estimators can be implemented from a fixed sample pool.
Draw a Poisson total of mean $(1+c_p)n$, split it into main and pilot
counts, and use a pool of twice that size, with a fallback on overflow.
The overflow probability is exponentially small. The ideal and implemented
experiments agree off overflow, so no conditioning destroys independence.
Appendix~\ref{sec:count-foundations} gives the coupling.

\subsection{Lower bound: positive priors and observable counts}
\label{main:lower}

\paragraph{The collision obstruction.}
For $\alpha\ge3/2$, a uniformly located heavier coordinate gives
$\Omega_\varepsilon(\alpha k^{1-1/\alpha})$ below saturation and
$\Omega_\varepsilon(k\log k)$ for high orders and min-entropy.
The Poisson likelihood identity \eqref{main:kernel} controls the
mixture distance. This suffices when $X$ is bounded; the unbounded-root
case requires a stronger approximation obstruction.

\paragraph{Constructing positive priors.}
Finite interpolation produces two signed factors, with exponents
$\gamma$ and $1-\gamma$. Splitting their tensor product by parity
of signs, applying a change of variables, and completing at zero
gives genuine laws $U_0,U_1\in\{0\}\cup[1,B]$ with
$\E U_0=\E U_1=1$. If $A_i=\E U_i^\alpha$, their exact ratio is
\begin{equation}\label{main:prior-gap}
 \frac{A_1}{A_0}=\frac{1+R_-R_+}{R_-+R_+},
\end{equation}
where $R_-,R_+$ are the negative masses of the two normalized
factors. Integer moments cancel through the selected degree, except
possibly at the integer nearest $\alpha$; the first moment always
matches exactly. Appendix~\ref{sec:common-noninteger-priors} gives
the nodes, weights, positivity, and moment identities.

\paragraph{Normalization is part of the lower bound.}
Draw $Q\sim\Poi(b)$ weights with $b=k/2$, and put
$S=b^{-1}\sum_j U_j$, $V=b^{-1}\sum_j U_j^\alpha$,
and $W=V-A_iS$. On $S>0$ and $Q\le k$, normalize to
$p_j=U_j/(bS)$. Its entropy center is
$h_i=\log b-(\log A_i)/\beta$, and
\begin{equation}\label{main:normalization}
 \beta(H_\alpha(p)-h_i)
 =\beta\log S-\log\left(1+\frac{W}{A_iS}\right).
\end{equation}
The residual $U_i^\alpha-A_iU_i$ contains a factor $\beta$,
which preserves concentration near order one.

Give each weight an independent count of mean $\theta U_i$.
The numbers observed exactly $j$ times are independent Poisson
variables of means $bq_{ij}$, where
$q_{ij}=\E[e^{-\theta U_i}(\theta U_i)^j/j!]$ for $j\ge1$.
Their total variation is at most $\sqrt{bD}$, with
\[
 D=\sum_{j\ge1}\frac{(q_{1j}-q_{0j})^2}{q_{1j}+q_{0j}}.
\]
A common random labeling and ordering kernel recovers the labeled
sample experiment. Conditional on the weights and observation length,
the ordered labels are i.i.d. from the normalized vector. The transfer
accounts separately for alphabet overflow and insufficient observations.

\paragraph{Calibration at the original accuracy.}
For $X$ above a fixed threshold, put $T=\alpha\log X$,
$v=(\beta/\alpha)e^{-T}$, and $w=h+v$. Choose
$L\asymp_\varepsilon(T+\alpha)$,
$\theta=c_\varepsilon\alpha k^{-1/\alpha}X$, and
$B=c_mL/\theta$. The root equality gives the exact balance
\[
 \frac{\log k}{\alpha}
 =\frac{E-v}{h}+\log X+\log(1+\log X)
       +1-\log\Psi+2\log\alpha+2\log\beta.
\]
It permits separation $\log(A_1/A_0)\ge E+w$ while keeping
$kD$ small. Matched moments cancel in the mixed-Poisson expansion;
actual atoms in the priors bound the denominator for the one possible
moment defect. Normalization fluctuations fit inside the extra margin
$w$, uniformly also in the degree-zero near-one case.
Theorem~\ref{cal:thm:lower} proves these estimates and the testing
contradiction at $n\asymp_\varepsilon R_\alpha$.

\subsection{Completion for general real order}
\label{sec:conclusions}
\begin{proof}[Proof of Theorem~\ref{main:theorem}]
Lemma~\ref{main:upper-budget}, Theorem~\ref{nuc:thm:upper},
Theorem~\ref{tutorial:integer-raw}, and
Proposition~\ref{prop:iid-high} give the four upper-bound ranges.
Theorem~\ref{cal:thm:lower} and the hidden-coordinate construction
give the corresponding lower bounds. Take the minimum of their
positive lower constants, the maximum of their upper constants, and
the maximum of their alphabet cutoffs. Every choice depends only on
$\varepsilon$, giving the asserted uniform quantifiers.
\end{proof}
For fixed noninteger $\alpha$, write
$\mathcal Q^2=c_{\alpha,\varepsilon}k^{1/\alpha}$ and let
$y(1+\log y)=\mathcal Q^2/e$. The root criterion is zero at $y$
and exceeds one at $2y$ for large $k$. Thus
$X\asymp_{\alpha,\varepsilon}k^{1/\alpha}/\log k$ and
$R_\alpha\asymp_{\alpha,\varepsilon}k/\log k$.

\section{Why the leading power determines the functional rate}
\label{sec:functional-main}

The transfer argument is simpler than the uniform-order calculation.
Fix $0<\xi<\beta_f\varepsilon$ and choose $\tau>0$ so that
\[
 e^{-\xi}ax^r\le f(x)\le e^\xi ax^r\qquad(0\le x\le\tau).
\]
For the upper bound, first estimate $F_f$ to a fixed absolute
tolerance using a number of samples independent of $k$.
Lemma~\ref{lem:dimension-free-additive} does this by approximating
the continuous function $f(x)/x$ by a polynomial. If the estimated
sum is substantial, this absolute estimate already gives logarithmic
accuracy. Otherwise positivity implies that every coordinate is below
$\tau$, and the leading-power band reduces the task to estimating
$H_r$ with accuracy $(\beta_f\varepsilon-\xi)/(r-1)$.

For the lower bound, choose a fixed integer $m\ge1/\tau$ and replace
each symbol of $p\in\Delta_{\lfloor k/m\rfloor}$ by $m$ copies,
each of probability $p_i/m$. A sample from this replicated distribution
is exactly a sample from $p$ together with an independent uniform label.
Discarding that label reverses the operation. Every probability now
lies below $\tau$, and
\[
 \left|\log F_f(q)-\log\bigl(am^{1-r}P_r(p)\bigr)\right|\le\xi.
\]
Consequently an estimator of the logarithmic functional would estimate
$H_r$ with accuracy $(\beta_f\varepsilon+\xi)/(r-1)$ and the
same sample size. Theorem~\ref{thm:real-leading-order-transfer}
proves both reductions with explicit failure accounting.

\begin{proof}[Proof of Theorem~\ref{main:functional}]
Apply those reductions with, for example,
$\xi=\beta_f\varepsilon/2$. Use
Theorem~\ref{fixed:thm:renyi} for noninteger $r$ and the direct
integer bounds for integer $r$. These bounds cover every fixed
positive accuracy and confidence, as required by both reductions.
A fixed replication changes neither $k^{1-1/r}$ nor $k/\log k$
by more than a constant factor, and the auxiliary sample cost is constant.
\end{proof}

The finite-leading-power condition has substantive content. For every
$s>0$, a smooth, convex, increasing, unit-Lipschitz function proportional
to $e^{-A/x^s}$ has logarithmic estimation complexity
$\Omega(k^{2s})$, although all its derivatives vanish at zero.
Appendix~\ref{app:additive} proves this and a single smooth example
with an exponential lower bound. Thus continuity, smoothness, or
convexity alone cannot replace the leading-power hypothesis.
\label{main:end}
\clearpage\appendix
\section{The counting identities used throughout the proof}
\label{app:counts}\label{sec:count-foundations}

The upper and lower bounds use the same observation model: independent
Poisson counts. This section derives the count identities before they are
used. There are three distinct operations to keep in mind. A polynomial
in a probability becomes an unbiased statistic by replacing powers with
falling factorials. A real power can also be estimated by a Gamma ratio,
with a small, explicit bias. Finally, the likelihood ratios of Poisson
counts have an exact overlap formula, which will make hard distributions
statistically indistinguishable.

\subsection{Poissonization and factorial counts}
Suppose that the number of observations is a random variable
$M\sim\Poi(n)$, independent of an iid sequence with distribution
$p=(p_1,\ldots,p_k)$. Write $N_i$ for the number of occurrences of symbol
$i$ among the first $M$ observations. Then
\[
 N_1,\ldots,N_k\text{ are independent},\qquad N_i\sim\Poi(np_i).
\]
One can see independence directly from their joint probability-generating
function. For $z_1,\ldots,z_k\ge0$,
\[
 \E\prod_{i=1}^kz_i^{N_i}
 =\E\left(\sum_i p_i z_i\right)^M
 =\exp\left(n\sum_i p_i(z_i-1)\right)
 =\prod_i\exp(np_i(z_i-1)).
\]
The final expression is a product of Poisson generating functions.
Here $n$ is a sample-size \emph{mean}; it need not be an integer. Later,
a Poisson tail bound converts the construction into an estimator using a
fixed number of observations. An independent pilot sample is obtained by
independent thinning, and therefore does not change these identities.

\paragraph{Fixed and Poisson sample sizes have different denominators.}
If the sample size is a fixed integer $N$, then counting ordered choices
of $j$ distinct positions gives $\E(N_i)_j=(N)_j p_i^j$.
The unbiased denominator is $(N)_j$. In the Poisson experiment it is
$n^j$, because $n$ is the Poisson mean. The count formulas below use
the Poisson convention unless a fixed sample size is explicitly stated.

\paragraph{How to implement Poissonization from a fixed sample pool.}
With $\lceil2n\rceil$ available observations, independently draw
$M\sim\Poi(n)$ and use the first $M$ observations if available.
Couple this procedure to an infinite i.i.d. sequence. The ideal Poisson
estimator and the implemented estimator agree unless $M>\lceil2n\rceil$.
Since $\E2^M=e^n$, Markov's inequality yields
\[
 \Prob\{M>\lceil2n\rceil\}\le e^{-(2\log2-1)n}.
\]
An ideal failure probability $\zeta$ therefore becomes at most
$\zeta+e^{-(2\log2-1)n}$. Independent main and pilot lengths of means
$n$ and $c_pn$ have a total of mean $(1+c_p)n$; allocate successive
disjoint blocks and use the same argument for their sum. Independence
is asserted in the ideal experiment. We handle the finite-pool overflow
by coupling, rather than conditioning the counts on no overflow.

For integers $j\ge0$, define
\[
 (N)_j=N(N-1)\cdots(N-j+1),\qquad (N)_0=1.
\]
The product vanishes when $j>N$. If $N\sim\Poi(\lambda)$, then
\begin{equation}\label{main:factorial}
 \E(N)_j
 =\sum_{q=j}^{\infty}\frac{q!}{(q-j)!}
             e^{-\lambda}\frac{\lambda^q}{q!}
 =\lambda^j\sum_{v=0}^{\infty}e^{-\lambda}\frac{\lambda^v}{v!}
 =\lambda^j.
\end{equation}
Consequently $(N_i)_j/n^j$ estimates $p_i^j$ without bias. Ordinary
powers do not have this property: $\E(N_i/n)^2=p_i^2+p_i/n$.

\subsection{From a polynomial approximation to a count estimator}
\label{sec:additive-complete}
Let $Q(x)=\sum_{j=0}^{L}c_jx^j$ approximate $x^\alpha$ on the
probability range of interest. Replacing each $x^j$ by $(N_i)_j/n^j$
produces an estimator whose mean is exactly $Q(p_i)$. Thus its bias is a
\emph{deterministic approximation error}. Its variance is also explicit;
the next identity is why derivatives of the approximation polynomial
appear in the later proof. In the following lemma, $\G_j(N)=(N)_j$ for
integer $j$, with $\G_0(N)=1$.

\begin{lemma}[Factorial derivative energy]\label{lem:factorial-energy-new}
Let $N_i\sim\Poi(n p_i)$ independently, and let
$Q(x)=\sum_{j=0}^L c_jx^j$. The statistic
$Z_Q=\sum_i\sum_{j=0}^L c_j\G_j(N_i)/n^j$, with $\G_0=1$, satisfies
\[
 \E Z_Q=\sum_iQ(p_i),\qquad
 \Var Z_Q=\sum_i\sum_{\ell=1}^L
       \frac{p_i^\ell}{\ell!n^\ell}\{Q^{(\ell)}(p_i)\}^2.
\]
\end{lemma}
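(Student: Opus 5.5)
The mean follows coordinatewise from \eqref{main:factorial}: linearity and $\E(N_i)_j=(np_i)^j$ give $\E\sum_j c_j(N_i)_j/n^j=Q(p_i)$. The coordinates are independent, so all cross terms between different $i$ vanish. This reduces the variance identity to a single coordinate. For $N\sim\Poi(\lambda)$ with $\lambda=np$, I will prove
\[
 \Var\sum_{j=0}^L c_j\frac{(N)_j}{n^j}=\sum_{\ell=1}^L\frac{p^\ell}{\ell!\,n^\ell}\{Q^{(\ell)}(p)\}^2.
\]

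The cleanest route is the Charlier (Poisson–orthogonal) expansion. I will use the factorial-moment product formula
\[
 (N)_j(N)_{j'}=\sum_{s\ge0}\binom js\binom{j'}s s!\,(N)_{j+j'-s}.
\]
It can be checked combinatorially, by counting pairs of ordered $j$- and $j'$-tuples of distinct positions overlapping in $s$ positions, or by comparing generating functions. Taking expectations gives
\[
 \E(N)_j(N)_{j'}=\sum_s\binom js\binom{j'}s s!\,\lambda^{j+j'-s}.
\]
Hence
\[
 \E Z^2=\sum_{j,j'}c_jc_{j'}n^{-j-j'}\sum_s\binom js\binom{j'}s s!\,(np)^{j+j'-s}.
\]
The $s=0$ term is exactly $Q(p)^2=(\E Z)^2$, so the variance is the sum over $s\ge1$. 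For fixed $s=\ell$, the inner double sum factors:
\[
 \frac{p^{-\ell}\ell!}{n^\ell}\Bigl(\sum_j c_j\binom j\ell p^{j}\Bigr)^2
 =\frac{p^{-\ell}\ell!}{n^\ell}\Bigl(\frac{p^\ell Q^{(\ell)}(p)}{\ell!}\Bigr)^2
 =\frac{p^\ell}{\ell!\,n^\ell}\{Q^{(\ell)}(p)\}^2,
\]
using $\sum_j c_j\binom j\ell p^{j-\ell}=Q^{(\ell)}(p)/\ell!$. Terms with $\ell>L$ vanish because $\binom j\ell=0$ for $j\le L<\ell$. The case $p=0$ is trivial: then $N=0$ almost surely, the variance is $0$, and every term on the right carries a factor $p^\ell$ with $\ell\ge1$. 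So I may divide by $p^\ell$ freely when $p>0$.

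The only real step is the product formula for falling factorials. I would justify it by the exponential generating-function identity
\[
 \E\,(1+s)^N(1+t)^N=e^{\lambda(s+t+st)},
\]
whose coefficient extraction gives exactly $\E(N)_j(N)_{j'}=\sum_s\binom js\binom{j'}s s!\,\lambda^{j+j'-s}$. This is a direct check rather than a serious obstacle. Summing the single-coordinate identity over $i$ then gives the lemma.
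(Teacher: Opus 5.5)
Your proof is correct and is essentially the paper's own argument: the same falling-factorial product formula $(N)_j(N)_{j'}=\sum_{s}\binom js\binom{j'}s s!\,(N)_{j+j'-s}$, the identification of its $s=0$ term with $(\E Z)^2$, and regrouping by the overlap index $\ell$ into $\frac{p^\ell}{\ell!n^\ell}\{Q^{(\ell)}(p)\}^2$. The differences are only cosmetic. You derive the needed expectation identity from $\E(1+s)^N(1+t)^N=e^{\lambda(s+t+st)}$ rather than from the combinatorial polynomial identity, and the Charlier expansion you announce at the start is never actually used.
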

\begin{proof}
For integers $r,s\ge0$, counting pairs of ordered injective selections
by the number of shared indices gives the polynomial identity
\[
 (N)_r(N)_s=\sum_{\ell=0}^{\min(r,s)}
       \binom r\ell\binom s\ell\ell!(N)_{r+s-\ell}.
\]
Taking the Poisson expectation and subtracting the $\ell=0$ term
gives the covariance. Expanding $Q$ and collecting the terms with
the same $\ell$ proves the formula. Independent coordinates add their
variances. For a single coordinate, the term with a fixed $\ell$ is
\[
 \sum_{r,s\ge\ell}c_rc_s
       \binom r\ell\binom s\ell\ell!
       \frac{p^{r+s-\ell}}{n^\ell}
 =\frac{p^\ell}{\ell!n^\ell}
       \left(\sum_{r\ge\ell}c_r(r)_\ell p^{r-\ell}\right)^2.
\]
The expression in parentheses is $Q^{(\ell)}(p)$. This proves the
claimed derivative formula, including all cross terms between the
polynomial coefficients.
\end{proof}

\paragraph{The order-two specialization.}
Putting $Q(x)=x^2$ in the derivative-energy identity recovers
\eqref{intro:collision-variance}: the first derivative contributes
$4P_3/n$ and the second contributes $2P_2/n^2$. Thus the elementary
collision calculation is a special case of the general polynomial formula.

\subsection{Why power-sum accuracy must be calibrated before taking a logarithm}
For R\'enyi entropy the target is $H_\alpha(p)=-\beta^{-1}\log P_\alpha(p)$,
where $\beta=\alpha-1$. A power-sum error of relative size $O(\beta\varepsilon)$
is sufficient near order one, but this informal statement loses constants
that matter when the order varies. We will instead use an exact interval.
If
\[
 e^{-2\varepsilon\beta}P_\alpha(p)\le\widehat P\le P_\alpha(p),
\]
then $-\beta^{-1}\log\widehat P-\varepsilon$ differs from
$H_\alpha(p)$ by at most $\varepsilon$. The following symmetric version
is also useful. The following statement applies to every positive additive sum
$F(p)=\sum_i f(p_i)$ and every scale $\beta>0$. The letter $\delta$ in this proposition denotes a failure
probability, not a distance to an integer.

\begin{proposition}[Exact logarithmic accuracy bound]
\label{prop:additive-margin-new}
Let $F(p)=\sum_i f(p_i)>0$, $\beta>0$, and
$\eta=\tanh(\beta\eps)$. If
\[
 \left|\sum_i Q(p_i)-F(p)\right|\le\frac\eta2F(p),
 \qquad
 \Var Z_Q\le\frac{\delta\eta^2}4F(p)^2
\]
uniformly in the parameter class, then
\[
 \widehat H=-\frac{\log Z_Q}{\beta}
             +\frac{\log(1-\eta^2)}{2\beta}
\]
estimates $-\beta^{-1}\log F$ within $\eps$ with failure at most
$\delta$; define it arbitrarily when $Z_Q\le0$.
The same conclusion holds for any real statistic with the stated
bias and variance, including a statistic defined using an independent pilot.
\end{proposition}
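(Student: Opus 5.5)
The proof combines Chebyshev's inequality with an exact computation of the logarithmic half-width. Write $F=F(p)$ and $\mu=\E Z_Q=\sum_iQ(p_i)$. The hypothesis $|\mu-F|\le\eta F/2$ and the variance bound give, by Chebyshev's inequality,
\[
 \Prob\{|Z_Q-\mu|>\eta F/2\}\le\frac{\Var Z_Q}{(\eta F/2)^2}\le\delta .
\]
On the complementary event the triangle inequality yields $|Z_Q-F|\le\eta F$, that is, $Z_Q/F\in[1-\eta,1+\eta]$. Since $0\le\eta=\tanh(\beta\eps)<1$, the lower end is positive. Hence $Z_Q>0$ on this event, and the arbitrary definition of $\widehat H$ when $Z_Q\le0$ only affects the failure event.

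It then remains to convert the multiplicative band into an additive one. On the good event,
\[
 \widehat H-\Bigl(-\tfrac{\log F}{\beta}\Bigr)
 =-\frac1\beta\Bigl[\log\frac{Z_Q}{F}-\tfrac12\log(1-\eta^2)\Bigr].
\]
The bracket lies between $\log(1-\eta)-\tfrac12\log(1-\eta^2)=-\tfrac12\log\frac{1+\eta}{1-\eta}$ and $\log(1+\eta)-\tfrac12\log(1-\eta^2)=\tfrac12\log\frac{1+\eta}{1-\eta}$. Both endpoints have absolute value $\operatorname{arctanh}(\eta)=\beta\eps$. Dividing by $\beta$ shows that the error is at most $\eps$. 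This is the only computation that matters: the centering constant $\tfrac12\log(1-\eta^2)$ is exactly the midpoint of $\log(1\pm\eta)$, and the half-width is $\operatorname{arctanh}\eta$.

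The argument holds for each fixed $p$, and the hypotheses are uniform in the parameter class, so the failure bound $\delta$ is uniform as well. For the final sentence of the statement, observe that nothing above used the factorial structure of $Z_Q$. Only its mean and variance entered, and those only through Chebyshev's inequality. For a statistic built with an independent pilot, the same bounds are understood for the unconditional mean and variance, and the argument applies verbatim. There is no real obstacle here; the one point needing care is that $\eta<1$, which is what keeps the logarithm defined on the good event.
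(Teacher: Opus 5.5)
Your argument is correct and is the paper's proof: Chebyshev at threshold $\eta F/2$, combined with the bias bound, puts $Z_Q/F$ in $[1-\eta,1+\eta]$. Then $\tfrac12\log(1-\eta^2)$ is the midpoint of $\log(1\pm\eta)$, with half-width $\operatorname{arctanh}\eta=\beta\varepsilon$. The only difference is the pilot remark: the paper applies the argument conditionally on the pilot and adds the pilot's failure probability separately, which tolerates a pilot-dependent conditional bias; your unconditional reading is also valid, but it requires the stated variance bound to include the variance of the conditional means.
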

\begin{proof}
Chebyshev gives $Z_Q/F\in[1-\eta,1+\eta]$ except with probability
$\delta$. This interval is positive. The midpoint of its logarithms
is $\frac12\log(1-\eta^2)$, and its logarithmic half-width is
$\operatorname{artanh}(\eta)=\beta\eps$. For a pilot, apply this
argument conditionally and account separately for the pilot's failure.
No replacement of $\eps$ by a smaller accuracy occurs.
\end{proof}

\subsection{Extending factorial counts to real powers: the nonnegative transform}
\label{sec:positive-count}
A noninteger power $x^s$ cannot be represented by a finite polynomial
exactly. The Gamma function gives a second extension of factorial counts.
Recall that $\Gamma(j+1)=j!$ for nonnegative integers $j$ and
$\Gamma(x+1)=x\Gamma(x)$. For a real $s\ge1$ and an integer $j\ge0$, define
\begin{equation}\label{eq:gamma-count}
 \G_s(j)=
 \begin{cases}
 \Gamma(j+1)/\Gamma(j+1-s),&j\ge\lceil s\rceil,\\
 0,&j<\lceil s\rceil,
 \end{cases}
 \qquad \g_s(j)=\G_s(j)^{1/s}.
\end{equation}
At an integer $r$, $\G_r(j)=(j)_r$ is the falling factorial.

\begin{lemma}[Mean and deterministic approximation]\label{lem:gamma}
For every $s\ge1$ and integer $j\ge0$,
\begin{equation}\label{eq:gamma-sandwich}
 (j-s)_+\le\g_s(j)\le j,\qquad |\g_s(j)-j|\le s.
\end{equation}
If $N\sim\Poi(\lambda)$, then $\E\G_r(N)=\lambda^r$ for integer $r$.
For noninteger $s=m+\gamma$, $0<\gamma<1$, and $\delta=1-\gamma$,
\begin{equation}\label{eq:gamma-mean}
 \E\G_s(N)=\frac{\lambda^s}{\Gamma(\delta)}
                  \int_0^\lambda t^{\delta-1}e^{-t}\,dt.
\end{equation}
\end{lemma}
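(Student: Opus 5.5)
The plan is to handle the three assertions of Lemma~\ref{lem:gamma} separately: the deterministic sandwich \eqref{eq:gamma-sandwich}, the integer mean, and the noninteger mean \eqref{eq:gamma-mean}.

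\emph{Deterministic sandwich.} For $j<\lceil s\rceil$ we have $\g_s(j)=0$. Since $j\le\lceil s\rceil-1<s$ in this case, $(j-s)_+=0$, and $|0-j|\le s$, so there is nothing to prove. For $j\ge\lceil s\rceil$ I will use log-convexity of $\Gamma$. The function $\phi(x)=\log\Gamma(x+1)$ is convex, with increments $\phi(x)-\phi(x-1)=\log x$. Hence the secant slope over $[j-s,j]$, namely $(\phi(j)-\phi(j-s))/s=\log\g_s(j)$, lies between $\log(j-s)$ and $\log j$.
\begin{itemize}
\item The upper bound comes from comparing with the unit secant $[j-1,j]$. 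Convexity makes secant slopes monotone in both endpoints, and $j-s\le j-1$, so the slope over $[j-s,j]$ is at most the slope over $[j-1,j]$, which is $\log j$.
\item The lower bound comes from comparing with the unit secant $[j-s,j-s+1]$, whose slope is $\log(j-s+1)\ge\log(j-s)$. This requires $j-s\ge0$, which holds because $j\ge\lceil s\rceil$.
\end{itemize}
The monotone-slope comparisons for convex functions are standard, so I would not grind through them. Together the two bounds give $j-s\le\g_s(j)\le j$, which is the sandwich, and $|\g_s(j)-j|\le s$ follows at once.

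\emph{Integer mean.} At an integer $r$, $\G_r(j)=(j)_r$ for every $j$: it vanishes for $j<r$, which matches the zero branch of \eqref{eq:gamma-count}. The identity $\E\G_r(N)=\lambda^r$ is therefore exactly \eqref{main:factorial}.

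\emph{Noninteger mean.} Here $\lceil s\rceil=m+1$. Write
\[
\E\G_s(N)=\sum_{j\ge m+1}e^{-\lambda}\frac{\lambda^j}{\Gamma(j+1-s)}.
\]
Substitute $v=j-m-1\ge0$, so that $j+1-s=v+\delta$. This gives $\lambda^{s}e^{-\lambda}\sum_{v\ge0}\lambda^{v+\delta-1}/\Gamma(v+\delta)$. The remaining step is the classical series for the lower incomplete Gamma function,
\[
\int_0^\lambda t^{\delta-1}e^{-t}\,dt=\Gamma(\delta)\,e^{-\lambda}\sum_{v\ge0}\frac{\lambda^{v+\delta}}{\Gamma(v+\delta+1)}.
\]
The index is shifted by one relative to the Poisson sum, so I must reconcile the two. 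I expect this reconciliation to be the main obstacle, and my first check will be to compare the two sides at small $\lambda$ before committing to the derivation below.
\begin{itemize}
\item I will let $F(\lambda)$ denote the Poisson sum above and $G(\lambda)$ the right side of \eqref{eq:gamma-mean}.
\item Both vanish as $\lambda\to0$, since $s>0$.
\item I will differentiate both in $\lambda$, using $\Gamma(x+1)=x\Gamma(x)$ to telescope the derivative of the Poisson series.
\item I will then show that $F$ and $G$ satisfy the same first-order linear ODE with this common initial value, which identifies them.
\end{itemize}
Should the small-$\lambda$ check show disagreement, I would instead correct the formula to the sum with the shifted lower limit. In either case the termwise manipulations are justified because the series are absolutely convergent, which licenses the interchange of summation with integration and differentiation.
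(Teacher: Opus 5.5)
\textbf{Verdict: genuine gap, confined to the noninteger mean.} Your sandwich argument and the integer case are fine. The secant comparison for the convex function $x\mapsto\log\Gamma(x+1)$ is the same log-convexity idea as the paper's $\psi$-average argument. It even gives $\g_s(j)\ge j-s+1$ for $j\ge\lceil s\rceil$, so the boundary case $j=s$ needs no separate treatment.

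\textbf{The substitution error.} The noninteger mean contains a concrete slip, and that slip creates the ``obstacle'' you describe. With $v=j-m-1$ one has $j+1-s=v+1+\delta$, not $v+\delta$. Also $\lambda^j=\lambda^{s}\lambda^{v+\delta}$, because $m+1=s+\delta$. Your displayed sum $\lambda^s e^{-\lambda}\sum_{v\ge0}\lambda^{v+\delta-1}/\Gamma(v+\delta)$ is really the substitution $v=j-m$ with the sum starting at $j=m$. It therefore contains the spurious term $e^{-\lambda}\lambda^m/\Gamma(\delta)$, which is exactly the $j=m$ value that the truncation in $\G_s$ sets to zero. The correct expression is
\[
 \E\G_s(N)=\lambda^s e^{-\lambda}\sum_{v\ge0}\frac{\lambda^{v+\delta}}{\Gamma(v+\delta+1)}.
\]
This is, term by term, $\lambda^s/\Gamma(\delta)$ times the classical incomplete-Gamma series you quote. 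So \eqref{eq:gamma-mean} holds as stated, and there is no index shift to reconcile.

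\textbf{Why the contingency plan is risky.} With your expression, the small-$\lambda$ check would show disagreement: $\lambda^m/\Gamma(\delta)$ on one side against $\lambda^{m+1}/\Gamma(1+\delta)$ on the other. If you then ``correct the formula,'' you would be changing a true statement. The fix belongs in your substitution, not in \eqref{eq:gamma-mean}.

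\textbf{The ODE uniqueness step.} If you prove the incomplete-Gamma series rather than quote it, the comparison as you phrase it does not close. Both $F$ and $G$ satisfy $y'=(s/\lambda)y+e^{-\lambda}\lambda^m/\Gamma(\delta)$, which is singular at $\lambda=0$. Its homogeneous solutions $c\lambda^s$ also vanish there, so a common initial value $0$ does not identify $F$ and $G$. There are two ways to fix this:
\begin{enumerate}
\item Use that both functions are $O(\lambda^{m+1})=o(\lambda^s)$ as $\lambda\to0$, which forces $c=0$.
\item Follow the paper and divide by $\lambda^s$ first. The function $e^{-\lambda}\lambda^\delta\sum_{\ell\ge0}\lambda^\ell/\Gamma(\ell+1+\delta)$ vanishes at zero. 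By the Gamma recurrence its derivative telescopes to $e^{-\lambda}\lambda^{\delta-1}/\Gamma(\delta)$, and integrating gives the claim directly.
\end{enumerate}
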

\begin{proof}
For $j<\lceil s\rceil$, the sandwich is immediate. Otherwise, write
the logarithm of the Gamma ratio as the integral of the increasing
function $\psi=(\log\Gamma)'$ over $[j+1-s,j+1]$.
The average of an increasing function over a terminal interval decreases
when that interval is lengthened. Since $s\ge1$, this average is at most
the average over $[j,j+1]$, which equals $\log j$.
Convexity of $\log\Gamma$ also gives
$\psi(t)\ge\log(t-1)$ for $t>1$ by comparison with the preceding
unit secant. If $j>s$, the integral average is consequently at least
$\log(j-s)$. The boundary case $j=s$ has a zero lower bound.
Exponentiating proves \eqref{eq:gamma-sandwich}.

For integers, factorial unbiasedness follows by shifting the Poisson
series. For nonintegers, cancellation of $j!$ gives
\[
 \E\G_s(N)=e^{-\lambda}\lambda^{m+1}
       \sum_{\ell\ge0}\frac{\lambda^\ell}{\Gamma(\ell+1+\delta)}.
\]
The function
$e^{-\lambda}\lambda^\delta\sum_{\ell\ge0}
\lambda^\ell/\Gamma(\ell+1+\delta)$ vanishes at zero and has
derivative $e^{-\lambda}\lambda^{\delta-1}/\Gamma(\delta)$.
Termwise differentiation is justified by local uniform convergence;
adjacent terms cancel using the Gamma recurrence. This proves the mean.
\end{proof}

For a noninteger $s$, the integral in \eqref{eq:gamma-mean}, divided by
$\Gamma(\delta)$, is a number in $[0,1]$ that approaches one as
$\lambda$ grows. Thus $\G_s(N)$ is downward biased for $\lambda^s$,
but the bias becomes small for sufficiently frequent symbols. The
pointwise sandwich \eqref{eq:gamma-sandwich} will be particularly useful
at high orders: taking an $s$th root turns a nonlinear statistic into a
quantity differing from the count itself by at most $s$.

\subsection{An analytic count extension with exact order-one cancellation}
\label{ag:sec:extension}
The nonnegative count family $\G_\alpha$ has a cutoff jump when
$\alpha$ crosses an integer. A signed extension removes that jump
and retains the exact factorial members. This section supplies a common
heavy-count construction. The corresponding light-coordinate approximation
and the complete statistical bound are proved in
Sections~\ref{sec:coupled-bias-bound} and~\ref{cu:section}.

The distinction between the next transform and $\G_a$ is important.
The transform $\G_a$ truncates low counts to zero; the analytic transform
below retains them, even when they have a negative sign. This makes its
dependence on the order continuous. For every real $a$ and integer
$j\ge0$, define
\[
 \mathsf H_a(j)=\frac{\Gamma(j+1)}{\Gamma(j+1-a)},
 \qquad \Delta v(j)=v(j+1)-v(j).
\]
The reciprocal function $1/\Gamma(z)$ is entire. In particular its value
is zero at every nonpositive integer, so the ratio is well-defined even
where a literal denominator $\Gamma(j+1-a)$ has a pole.
In particular $\mathsf H_r(j)=(j)_r$ for every nonnegative integer $r$.

\begin{proposition}[Exact analytic count identities]\label{ag:prop:identities}
For $\lambda>0$, let $m_a(\lambda)=\E\mathsf H_a(N)$ with
$N\sim\Poi(\lambda)$. For every real $a$,
\begin{align}
 \Delta^\ell\mathsf H_a(j)&=(a)_\ell\mathsf H_{a-\ell}(j),
       &&\ell\ge0,\label{ag:eq:difference}\\
 m_a(\lambda)&=\lambda^a-
       \frac1{\Gamma(-a)}\int_1^\infty
          e^{-\lambda t}t^{-a-1}\,dt,\label{ag:eq:mean}\\
 \Var\mathsf H_a(N)&=
       \sum_{\ell\ge1}\frac{\lambda^\ell}{\ell!}
                   (a)_\ell^2m_{a-\ell}(\lambda)^2.
       \label{ag:eq:variance}
\end{align}
The variance series converges. At $a=r\in\mathbb N$ it terminates
at $r$, and the mean is exactly $\lambda^r$. For $a>0$,
\begin{equation}\label{ag:eq:bias}
 |m_a(\lambda)-\lambda^a|
 \le \frac{\Gamma(a+1)|\sin(\pi a)|}{\pi}
           e^{-\lambda}\min\{a^{-1},\lambda^{-1}\}.
\end{equation}
\end{proposition}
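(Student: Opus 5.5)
We prove the four assertions in the order stated, since each uses the previous ones.

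\textbf{Difference identity.} We work with reciprocal Gamma values throughout, so poles never appear. For $\ell=1$, write
\[
 \mathsf H_a(j+1)=\frac{(j+1)\Gamma(j+1)}{\Gamma(j+2-a)},\qquad
 \mathsf H_a(j)=\frac{(j+1-a)\Gamma(j+1)}{\Gamma(j+2-a)},
\]
using $1/\Gamma(z)=z/\Gamma(z+1)$, which is valid for all $z$ because $1/\Gamma$ is entire. Subtracting gives $a\Gamma(j+1)/\Gamma(j+2-a)=a\mathsf H_{a-1}(j)$. Induction on $\ell$ then yields \eqref{ag:eq:difference}.

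\textbf{Mean formula.} We expand
\[
 m_a(\lambda)=e^{-\lambda}\sum_j\frac{\lambda^j}{\Gamma(j+1-a)}.
\]
For noninteger $a$ this is a Mittag-Leffler type series, and we use Hankel's representation
\[
 \frac1{\Gamma(z)}=\frac1{2\pi i}\int_{\mathcal H}e^{t}t^{-z}\,dt.
\]
A cleaner route avoids contours. Set $g(\lambda)=m_a(\lambda)-\lambda^a$. Differentiating the Poisson expectation gives
\[
 m_a'(\lambda)=\E\,\Delta\mathsf H_a(N)=a\,m_{a-1}(\lambda).
\]
The candidate right side, $\lambda^a-\Gamma(-a)^{-1}\int_1^\infty e^{-\lambda t}t^{-a-1}dt$, satisfies the same recursion $\partial_\lambda(\cdot)=a(\cdot)_{a-1}$, because differentiation brings down $-t$ and $a/\Gamma(-a)=-1/\Gamma(1-a)$. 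We therefore reduce to a base range of $a$ and verify the formula there directly.

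For $a<0$ the formula is a Beta-integral computation. Write $1/\Gamma(j+1-a)=\frac{1}{\Gamma(-a)\,j!}\int_0^1 s^{j}(1-s)^{-a-1}ds$, sum the Poisson series to get $\Gamma(-a)^{-1}\int_0^1e^{-\lambda(1-s)}(1-s)^{-a-1}ds$, and substitute $u=1-s$. Then split $\int_0^\infty-\int_1^\infty$ after rescaling $t=u$; the full integral yields $\lambda^a$. We then extend to all real $a$ by analytic continuation in $a$, since both sides are entire in $a$ for fixed $\lambda>0$. At integers the term $1/\Gamma(-a)$ vanishes, so the mean is exactly $\lambda^r$.

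\textbf{Variance identity.} We use the Poisson–Charlier (Newton) expansion. For any $v$ of moderate growth and $N\sim\Poi(\lambda)$,
\[
 \Var v(N)=\sum_{\ell\ge1}\frac{\lambda^\ell}{\ell!}\bigl(\E\,\Delta^\ell v(N)\bigr)^2.
\]
This follows from orthonormality of the Charlier polynomials together with $\E[v(N)C_\ell(N)]\propto\E\Delta^\ell v(N)$, which is discrete integration by parts. Equivalently, it is the infinite analogue of Lemma~\ref{lem:factorial-energy-new}. Inserting \eqref{ag:eq:difference} gives \eqref{ag:eq:variance}. The series converges because Parseval applies when $\E\mathsf H_a(N)^2<\infty$, which holds since $|\mathsf H_a(j)|\lesssim (j+1)^a$. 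At integers, $(a)_\ell=0$ for $\ell>r$, so the series terminates.

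\textbf{Bias bound.} This is the main obstacle. Starting from \eqref{ag:eq:mean}, the reflection formula gives
\[
 \frac{1}{|\Gamma(-a)|}=\frac{\Gamma(a+1)|\sin\pi a|}{\pi}.
\]
It remains to bound $\int_1^\infty e^{-\lambda t}t^{-a-1}dt$. Using $t^{-a-1}\le1$ gives the bound $e^{-\lambda}/\lambda$. Using $e^{-\lambda t}\le e^{-\lambda}$ gives $e^{-\lambda}\int_1^\infty t^{-a-1}dt=e^{-\lambda}/a$. Taking the better of the two yields \eqref{ag:eq:bias}.

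The delicate point is justifying the analytic continuation in $a$ near integers, where $1/\Gamma(-a)$ vanishes. We handle it by noting that both sides are locally uniformly convergent entire functions of $a$.
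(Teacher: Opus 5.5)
Your route is the paper's route: the Gamma recurrence gives $\Delta\mathsf H_a$, the Beta integral gives the mean for $a<0$, analytic continuation in $a$ extends it, the Charlier expansion gives the variance, and reflection plus the two elementary tail bounds give \eqref{ag:eq:bias}. The one step that is not justified as written is the variance identity.

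Orthonormality of the Charlier polynomials, together with $\E[v(N)C_\ell(N;\lambda)]=\lambda^\ell\E\Delta^\ell v(N)$, yields only Bessel's inequality, $\Var v(N)\ge\sum_{\ell\ge1}\frac{\lambda^\ell}{\ell!}(\E\Delta^\ell v(N))^2$. Equality, which is what you mean by ``Parseval applies,'' requires the Charlier polynomials to be \emph{complete} in $L^2(\Poi(\lambda))$. Finiteness of $\E\mathsf H_a(N)^2$ does not supply completeness.

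This is not cosmetic. The missing direction, $\Var v(N)\le\sum_{\ell\ge1}\ldots$, is exactly the one used afterwards: the Poincar\'e bound of Lemma~\ref{ag:lem:poincare} bounds $\Var v(N)$ from above by this expansion. Your appeal to Lemma~\ref{lem:factorial-energy-new} as an ``infinite analogue'' needs the same completeness, or some approximation argument, to pass to the limit.

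The paper closes the gap directly. If $v\in L^2(\Poi(\lambda))$ is orthogonal to every $C_\ell$, consider $e^{-\lambda(z-1)}\E[v(N)z^N]$. It is entire by Cauchy--Schwarz and Poisson exponential moments. Its $\ell$th derivative at $z=1$ is $\E[v(N)C_\ell(N;\lambda)]=0$, so it vanishes identically, which forces $v(j)=0$ for all $j$. Alternatively, you may cite density of polynomials in $L^2(\mu)$ for laws $\mu$ with exponential moments.

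Three smaller points.
\begin{itemize}
\item The recursion $m_a'=a\,m_{a-1}$ is correct, but it determines $m_a$ minus the candidate right side only up to a constant in $\lambda$. You never fix that constant; it could be done at $\lambda\downarrow0$, where both sides tend to $1/\Gamma(1-a)$ for $a\ge0$. Since you then use analytic continuation anyway, this detour and the Hankel remark can be dropped.
\item The zeros of $1/\Gamma(-a)$ at integers are not a ``delicate point'' for the continuation, since everything is entire. What does deserve a line is the locally uniform convergence in complex $a$ of $e^{-\lambda}\sum_j\lambda^j/\Gamma(j+1-a)$. This holds because the Gamma ratio grows at most polynomially in $j$, uniformly on compact sets of $a$, which is the justification the paper gives.
\item The bias bound is the easiest step, not the main obstacle.
\end{itemize}
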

\begin{proof}
The Gamma recurrence gives
$\Delta\mathsf H_a=a\mathsf H_{a-1}$; induction proves
\eqref{ag:eq:difference}, including the removable zeros.
For $a<0$ the beta integral gives
\[
 \mathsf H_a(j)=\frac1{\Gamma(-a)}
       \int_0^1t^j(1-t)^{-a-1}\,dt.
\]
Take the Poisson expectation and substitute $u=1-t$. Subtracting
the tail from the complete Gamma integral proves \eqref{ag:eq:mean}
for $a<0$. For fixed $\lambda>0$ both sides are entire in complex
$a$. Indeed, the Gamma ratio grows at most polynomially in $j$ on
each compact set of $a$, so its Poisson expectation series is locally
uniformly convergent; the integral on $[1,\infty)$ is also locally
uniformly convergent. Analytic continuation proves the formula for
all $a$. Its value at $\lambda=0$ is instead read directly as
$m_a(0)=1/\Gamma(1-a)$. Bounding the integral both by
$e^{-\lambda}/a$ and by $e^{-\lambda}/\lambda$, and using reflection,
proves \eqref{ag:eq:bias}.

It remains to prove the variance identity. We use an orthogonal
polynomial basis for the Poisson distribution, in the same way that
Fourier coefficients decompose a squared norm. No approximation is
involved in this step. Define the Charlier polynomials by
\[
 e^{-\lambda t}(1+t)^N
       =\sum_{\ell\ge0}C_\ell(N;\lambda)t^\ell/\ell!.
\]
The joint generating function gives
$\E C_\ell C_q=\one\{\ell=q\}\ell!\lambda^\ell$.
Repeated Poisson size bias gives
\[
 \E[v(N)C_\ell(N;\lambda)]
      =\lambda^\ell\E\Delta^\ell v(N).
\]
These polynomials are complete in $L^2(\Poi(\lambda))$: if $v$ is
orthogonal to all of them, then the entire function
$e^{-\lambda(z-1)}\E[v(N)z^N]$ has every derivative zero at $z=1$.
It therefore vanishes, and its coefficients imply $v(j)=0$ for all
$j$. Entirety follows by Cauchy--Schwarz and Poisson exponential
moments. Thus Parseval gives, whenever $v(N)$ is square integrable,
\begin{equation}\label{ag:eq:charlier}
 \Var v(N)=\sum_{\ell\ge1}\frac{\lambda^\ell}{\ell!}
                      (\E\Delta^\ell v(N))^2.
\end{equation}
The polynomial growth of $\mathsf H_a(j)$ ensures square
integrability. Substitution of \eqref{ag:eq:difference} proves
\eqref{ag:eq:variance}.
\end{proof}

\begin{lemma}[Poisson Poincar\'e from the same count expansion]
\label{ag:lem:poincare}
If $N\sim\Poi(\lambda)$ and $v(N)$ is square integrable, then
\[
 \Var v(N)\le\lambda\E\{v(N+1)-v(N)\}^2.
\]
\end{lemma}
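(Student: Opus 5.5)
The plan is to derive the inequality directly from the Charlier--Parseval identity \eqref{ag:eq:charlier}, which was established in the proof of Proposition~\ref{ag:prop:identities} for every square-integrable $v(N)$. Write $w=\Delta v$, so that $\Delta^\ell v=\Delta^{\ell-1}w$ for $\ell\ge1$. If $\E w(N)^2=\infty$ the claim is trivial. Otherwise $w(N)$ is square integrable and \eqref{ag:eq:charlier} also applies to $w$:
\[
 \E w(N)^2=(\E w(N))^2+\Var w(N)
 =\sum_{q\ge0}\frac{\lambda^q}{q!}(\E\Delta^q w(N))^2 .
\]
Here the $q=0$ term supplies the squared mean.

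Next I reindex the variance expansion of $v$ with $\ell=q+1$:
\[
 \Var v(N)=\sum_{q\ge0}\frac{\lambda^{q+1}}{(q+1)!}(\E\Delta^{q}w(N))^2
 =\lambda\sum_{q\ge0}\frac{1}{q+1}\cdot\frac{\lambda^q}{q!}(\E\Delta^q w(N))^2 .
\]
Since $1/(q+1)\le1$ and every term is nonnegative, this is at most $\lambda\,\E w(N)^2=\lambda\E\{v(N+1)-v(N)\}^2$. That is exactly the claim.

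The only point needing care is that the coefficients $\E\Delta^\ell v(N)$ appearing in the expansion of $v$ coincide with $\E\Delta^{\ell-1}w(N)$, where the latter are computed from the square-integrable function $w$. These are literally the same expectations, since $\Delta^\ell v=\Delta^{\ell-1}w$ pointwise. The only requirement is that each expectation is finite. For $v$ this follows from square integrability of $v(N)$ together with $\E v(N+j)^2=\E[v(N)^2(N)_j]/\lambda^j$, which holds by Poisson size bias. Under the finiteness of $\E w(N)^2$ the quantities for $w$ are finite as well, so the comparison is valid.

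Alternatively, one could cite the standard Chen--Klaassen Poincar\'e inequality. The Parseval route above is self-contained within the paper and is the argument I would present.
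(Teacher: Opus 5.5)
Your argument is the paper's proof: you apply the Charlier--Parseval identity \eqref{ag:eq:charlier} to $\Delta v$, keeping its $q=0$ mean term, and multiply by $\lambda$; after reindexing this equals $\sum_{\ell\ge1}\ell\,\lambda^\ell(\E\Delta^\ell v(N))^2/\ell!$, which dominates the expansion of $\Var v(N)$ term by term.

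One side remark should be corrected. Square integrability of $v(N)$ does not give $\E v(N+j)^2<\infty$: with $v(j)^2=j!\,\lambda^{-j}/(j+1)^2$ one has $\E v(N)^2<\infty$ but $\E v(N+1)^2=\infty$. The finiteness you actually need, of $\E\Delta^\ell v(N)$, follows from the size-bias identity applied to $|v|$ together with Cauchy--Schwarz, namely $\E|v(N+j)|\le(\E v(N)^2)^{1/2}(\E(N)_j^2)^{1/2}/\lambda^j$.
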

\begin{proof}
If the right side is infinite the assertion is immediate. Otherwise
apply the complete Charlier expansion to $\Delta v$. Its squared
norm, multiplied by $\lambda$, equals
\[
 \sum_{\ell\ge1}\ell\frac{\lambda^\ell}{\ell!}
                       (\E\Delta^\ell v(N))^2,
\]
which dominates \eqref{ag:eq:charlier} term by term.
\end{proof}

\paragraph{Removing noise from a known linear term.}
Because $\sum_i p_i=1$, the aggregate of a linear function $cp_i$ is the
known constant $c$. It is wasteful to estimate that constant by
$c\sum_iN_i/n$. The following identity describes exactly how removing
an affine term changes the variance. Only the first term of the Charlier
expansion is affected.
For constants $c,d$, the exact residual formula is
\begin{equation}\label{ag:eq:anchored-energy}
\begin{split}
 \Var\{\mathsf H_a(N)-cN-d\}
   ={}&\lambda\{a m_{a-1}(\lambda)-c\}^2\\
     &+\sum_{\ell\ge2}\frac{\lambda^\ell}{\ell!}
                         (a)_\ell^2m_{a-\ell}(\lambda)^2.
\end{split}
\end{equation}
Here the affine term must be removed globally, or compensated in the
other part of a pilot partition: a random subset of the identity
$\sum_i p_i=1$ is not known.

\paragraph{Why this matters near $\alpha=1$.}
Write $\alpha=1+\beta$. As $\beta$ decreases to zero,
$\mathsf H_{1+\beta}(N)$ approaches $N$, whose aggregate mean is
already known. After removing a suitable linear term, the remaining
fluctuation should be proportional to $\beta$. The next proposition
shows that its variance is proportional to $\beta^2$, uniformly in
small positive $\beta$. This matches the shrinking power-sum accuracy
required before division by $\beta$ in the entropy formula.

\begin{proposition}[Uniform divided-difference noise at order one]
\label{ag:prop:order-one}
Let $0<\beta\le1/2$, $N\sim\Poi(\lambda)$, $\lambda\ge1$, and $B>0$.
Then, for an absolute constant $C$,
\begin{equation}\label{ag:eq:order-one-var}
 \Var\left\{\frac{\mathsf H_{1+\beta}(N)-B^\beta N}{\beta}\right\}
 \le C\lambda(\lambda^{2\beta}+B^{2\beta})
                    \{1+\log^2(\lambda/B)\}.
\end{equation}
The bias of its numerator relative to
$\lambda^{1+\beta}-B^\beta\lambda$ is at most
$C\beta e^{-\lambda}/\lambda$.
\end{proposition}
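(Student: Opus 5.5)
The plan is to apply the anchored Charlier identity \eqref{ag:eq:anchored-energy} with $a=1+\beta$, $c=B^\beta$, $d=0$, and to bound each term after dividing by $\beta^2$. This gives
\[
 \Var\left\{\frac{\mathsf H_{1+\beta}(N)-B^\beta N}{\beta}\right\}
 =\frac{\lambda\{(1+\beta)m_\beta(\lambda)-B^\beta\}^2}{\beta^2}
 +\sum_{\ell\ge2}\frac{\lambda^\ell}{\ell!}\frac{(1+\beta)_\ell^2}{\beta^2}m_{1+\beta-\ell}(\lambda)^2 .
\]
For $\ell\ge2$ we have $(1+\beta)_\ell=(1+\beta)\beta(\beta-1)\cdots(\beta-\ell+2)$. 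So the factor $\beta$ cancels the denominator exactly, and $|(1+\beta)_\ell/\beta|\le 2(\ell-2)!$, since the remaining factors satisfy $|\beta-j|\le j$ for $j\ge1$.

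\textbf{Step 1: the tail series.} For the means $m_{1+\beta-\ell}(\lambda)$ with $\ell\ge2$, the order $a=1+\beta-\ell$ is negative (or at most $-1/2$). I use the beta representation from the proof of Proposition~\ref{ag:prop:identities}:
\[
 m_a(\lambda)=\frac{1}{\Gamma(-a)}\int_0^\lambda e^{-s}\ldots,
 \qquad\text{equivalently}\qquad
 m_a(\lambda)=\E\frac{1}{\Gamma(-a)}\int_0^1 e^{-\lambda u}u^{-a-1}\,du .
\]
This gives $0\le m_a(\lambda)\le\min\{\lambda^a,1/\Gamma(1-a)\}$. Combining $m_{1+\beta-\ell}\le\lambda^{1+\beta-\ell}$ for small $\ell$ with the factorial bound for large $\ell$, the series becomes $\sum_\ell \lambda^\ell\,((\ell-2)!)^2/\ell!\cdot\lambda^{2+2\beta-2\ell}$. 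This is $O(\lambda^{2\beta})$ for $\lambda\ge1$, since the terms decay like $\lambda^{2-\ell}/\ell^2$. Because $\lambda\ge1$, this piece is at most $C\lambda\cdot\lambda^{2\beta}$.

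\textbf{Step 2: the first (anchored) term.} I need
\[
 |(1+\beta)m_\beta(\lambda)-B^\beta|\le C\beta(\lambda^\beta+B^\beta)(1+|\log(\lambda/B)|).
\]
By \eqref{ag:eq:mean} and \eqref{ag:eq:bias} at $a=\beta$, we have $m_\beta(\lambda)=\lambda^\beta+O(\beta e^{-\lambda}/\lambda)$, using $|\sin\pi\beta|\le\pi\beta$ and $\Gamma(1+\beta)\le1$. Then I split
\[
 (1+\beta)m_\beta-B^\beta=(\lambda^\beta-B^\beta)+\beta\lambda^\beta+(1+\beta)O(\beta e^{-\lambda}).
\]
The mean value theorem on $t\mapsto e^{\beta t}$ gives $|\lambda^\beta-B^\beta|\le\beta|\log(\lambda/B)|\max\{\lambda^\beta,B^\beta\}$. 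Squaring and multiplying by $\lambda/\beta^2$ gives the claimed bound.

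\textbf{Step 3: the bias.} The numerator's mean is $m_{1+\beta}(\lambda)-B^\beta\lambda$. By \eqref{ag:eq:bias} with $a=1+\beta$, its deviation from $\lambda^{1+\beta}-B^\beta\lambda$ is at most $\Gamma(2+\beta)|\sin\pi\beta|e^{-\lambda}/(\pi\lambda)\le C\beta e^{-\lambda}/\lambda$.

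\textbf{Main obstacle.} The delicate point is Step 1: getting bounds on $|(1+\beta)_\ell|/\beta$ and on $m_{1+\beta-\ell}$ that are uniform in $\beta\in(0,1/2]$. In particular I must confirm positivity and the $\lambda^{a}$ bound for negative $a$ from the integral representation. The factorial growth of $(\ell-2)!^2/\ell!$ against $\lambda^{2-\ell}$ needs the $\lambda^a$ bound and not merely $1/\Gamma(1-a)$, so I would check that estimate carefully first.
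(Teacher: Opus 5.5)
Your Steps 2 and 3 are correct, and starting from \eqref{ag:eq:anchored-energy} with the exact factor $\beta$ in $(1+\beta)_\ell$ is a sound plan. Step~1 as written fails, however.

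Since $((\ell-2)!)^2/\ell!=(\ell-2)!/(\ell(\ell-1))$, the series you display has $\ell$th term $(1+\beta)^2\frac{(\ell-2)!}{\ell(\ell-1)}\lambda^{2+2\beta-\ell}$. This grows factorially in $\ell$ for every fixed $\lambda$, so the series diverges; your claimed decay $\lambda^{2-\ell}/\ell^2$ has lost a factor $(\ell-2)!$.

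Your closing remark also reverses the roles of the two bounds. The bound $m_a\le\lambda^a$ gives only geometric decay in $\ell$ and can never beat $(\ell-2)!$. It is $m_a\le1/\Gamma(1-a)=1/\Gamma(\ell-\beta)$ that cancels the factorial. The bound $\lambda^a$ is what prevents an exponential loss in $\lambda$: alone, $1/\Gamma(1-a)$ leaves terms of size about $\lambda^\ell\ell^{2\beta-2}/\ell!$, whose sum is exponentially large in $\lambda$.

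An unspecified ``small $\ell$ / large $\ell$'' switch is also delicate. Switching near $\ell\approx\lambda$ still leaves an exponentially large contribution, so you would have to switch near $\ell\approx e\lambda$ and use Stirling. The clean repair uses both bounds at once:
\[
 m_{1+\beta-\ell}(\lambda)^2\le\min\Bigl\{\lambda^{1+\beta-\ell},\frac{1}{\Gamma(\ell-\beta)}\Bigr\}^2\le\frac{\lambda^{1+\beta-\ell}}{\Gamma(\ell-\beta)}.
\]
For $\ell\ge2$ and $0<\beta\le1/2$ one has $\Gamma(\ell-\beta)\ge\frac12(\ell-2)!$. The $\ell$th term is then at most $C\lambda^{1+\beta}(\ell-2)!/\ell!=C\lambda^{1+\beta}/(\ell(\ell-1))$. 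The whole tail is therefore at most $C\lambda^{1+\beta}\le C\lambda^{1+2\beta}$, because $\lambda\ge1$. With this fix your proof is complete.

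Once repaired, your argument takes a genuinely different route from the paper's. The paper does not use the exact expansion. It applies the Poisson Poincar\'e inequality of Lemma~\ref{ag:lem:poincare} to $(\mathsf H_{1+\beta}(j)-B^\beta j)/\beta$. It proves the pointwise increment bound \eqref{ag:eq:order-one-difference} from a digamma-integral estimate $|\mathsf H_\beta(j)-(j+1)^\beta|\le C\beta(j+1)^\beta$ plus the mean value theorem. It then bounds $\E V^q$ and $\E[V^q(\log V)^2]$ for $V=(N+1)/\lambda$, using Poisson moments and a Chernoff lower tail.

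Your route reads the first Charlier coefficient directly off the mean $m_\beta(\lambda)$. The already proved bias bound \eqref{ag:eq:bias} thus replaces both the pointwise Gamma-ratio analysis and the log-moment estimates. It also shows that the $\ell\ge2$ part is only $O(\lambda^{1+\beta})$. The paper's Poincar\'e argument is lossier but needs only pointwise control of increments. That is the form the paper reuses for $\omega\mathsf H_\alpha(j)-\xi^\beta j$ in the proof of Theorem~\ref{nuc:thm:upper}.
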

\begin{proof}
Uniformly for $j\ge0$ and $0\le\beta\le1/2$,
\[
 |\mathsf H_\beta(j)-(j+1)^\beta|
       \le C\beta(j+1)^\beta.
\]
Indeed, the logarithm of their ratio is the integral over
$0\le t\le\beta$ of $\psi(j+1-t)-\log(j+1)$, where
$\psi=\Gamma'/\Gamma$. This integrand is uniformly bounded:
the recurrence $\psi(x+1)=\psi(x)+1/x$, a bound on the compact
interval $[1/2,3/2]$, and comparison of harmonic sums with the
logarithm prove the assertion. Also the mean value theorem gives
\[
 |(j+1)^\beta-B^\beta|
  \le\beta\{(j+1)^\beta+B^\beta\}|\log((j+1)/B)|.
\]
Consequently
\begin{equation}\label{ag:eq:order-one-difference}
 |(1+\beta)\mathsf H_\beta(j)-B^\beta|
  \le C\beta\{(j+1)^\beta+B^\beta\}
                         \{1+|\log((j+1)/B)|\}.
\end{equation}
Apply Lemma~\ref{ag:lem:poincare} and
$\Delta\mathsf H_{1+\beta}=(1+\beta)\mathsf H_\beta$.
For $V=(N+1)/\lambda$, uniformly over $0\le q\le1$ and
$\lambda\ge1$,
\[
 \E V^q+\E[V^q(\log V)^2]\le C.
\]
On $V\ge1$ this follows from
$V^q(\log V)^2\le C(1+V^3)$ and the first three Poisson moments.
On $1/2\le V\le1$ it is immediate. On $V<1/2$, use
$|\log V|\le\log\lambda$ and the Poisson Chernoff bound
$\Prob(N<\lambda/2)\le e^{-c\lambda}$; the bounded range
$1\le\lambda\le4$ is immediate. Expand
$\log((N+1)/B)=\log(\lambda/B)+\log V$ in
\eqref{ag:eq:order-one-difference} to obtain
\eqref{ag:eq:order-one-var}. Finally
$\Gamma(2+\beta)|\sin(\pi\beta)|/\pi\le C\beta$;
\eqref{ag:eq:bias} proves the bias assertion.
\end{proof}

In probability units, with $a=1+\beta$, $\lambda=np$ and $B=nb$,
Proposition~\ref{ag:prop:order-one} gives
\[
 \Var\{n^{-a}\mathsf H_a(N)-b^\beta N/n\}
 \le C\beta^2 n^{-2a}\lambda
       (\lambda^{2\beta}+B^{2\beta})\{1+\log^2(\lambda/B)\}.
\]
Thus the common family preserves the full $\beta^2$ cancellation;
an order-independent pilot-tail estimate alone would not do so.

\paragraph{Controlling the negative values.}
The signed transform will be used on coordinates that the pilot declares
frequent. Nevertheless their main count can occasionally be small. We
must bound the resulting negative contribution, not assume it is absent.
The following lemma supplies an exponentially small bound.

\begin{lemma}[The signed low-count correction]
\label{ag:lem:negative}
Let $a=m+\gamma>1$, $m=\lfloor a\rfloor$, $0<\gamma<1$,
$h=\gamma(1-\gamma)$, and $\lambda\ge2(m+1)$. Then
\[
 \E\bigl[|\mathsf H_a(N)|\one\{N\le m-1\}\bigr]
       \le Ch e^{-\lambda}\lambda^{m-1}.
\]
Every negative value is included in this range; at integer orders
there are no negative values.
\end{lemma}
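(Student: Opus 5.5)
We compute the values $\mathsf H_a(j)$ explicitly for $0\le j\le m-1$ and then bound the Poisson expectation term by term. For such $j$ we have $j+1-a=j+1-m-\gamma\in(-(m-1)-\gamma,\,-\gamma)$, so the denominator $\Gamma(j+1-a)$ is evaluated at a negative noninteger. The reflection formula gives
\[
 \frac{1}{\Gamma(j+1-a)}=\frac{\sin(\pi(j+1-a))\,\Gamma(a-j)}{\pi}.
\]
Since $|\sin(\pi(j+1-a))|=|\sin(\pi\gamma)|\le\pi\gamma(1-\gamma)\cdot C$, this yields
\[
 |\mathsf H_a(j)|\le C h\, j!\,\Gamma(a-j).
\]
The sine bound uses $\sin(\pi\gamma)\le\pi\min\{\gamma,1-\gamma\}\le2\pi h$. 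For $j\le m-1$ we also have $a-j\ge1+\gamma$. Hence
\[
 j!\,\Gamma(a-j)\le \Gamma(a+1)\,\frac{j!\,\Gamma(a-j)}{\Gamma(a+1)}\le\Gamma(a+1)\,\binom{a}{j}^{-1}
\]
up to a constant.

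The expectation is then
\[
 \E\bigl[|\mathsf H_a(N)|\one\{N\le m-1\}\bigr]\le Ch\,e^{-\lambda}\sum_{j=0}^{m-1}\Gamma(a-j)\lambda^j,
\]
since the $j!$ cancels against the Poisson weight $\lambda^j/j!$. The key step is to show that this sum is dominated by its last term, $j=m-1$, which equals $\Gamma(1+\gamma)\lambda^{m-1}\le\lambda^{m-1}$. The ratio of consecutive terms, from $j$ to $j+1$, is $\lambda/(a-j-1)$. Because $a-j-1\le a-1<m$ and $\lambda\ge2(m+1)$, this ratio is at least $2$. Going down from $j=m-1$, the terms therefore decrease geometrically with ratio at most $1/2$, so the sum is at most $2\Gamma(1+\gamma)\lambda^{m-1}\le2\lambda^{m-1}$. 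This gives the claimed bound $Che^{-\lambda}\lambda^{m-1}$ with an absolute constant $C$.

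Next we verify that every negative value lies in this range. For $j\ge m$ the argument satisfies $j+1-a\ge1-\gamma>0$, so the Gamma ratio is positive. Negative values can therefore occur only for $j\le m-1$. At integer $a=r$, the transform reduces to the falling factorial $(j)_r\ge0$, which vanishes for $j<r$. So no negative values arise at integer orders.

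The main point requiring care is that the constant is uniform in $m$, which may grow with $k$. This uniformity comes precisely from the cancellation of $j!$ and from the geometric domination guaranteed by $\lambda\ge2(m+1)$. An earlier concern, that $\Gamma(a-j)$ could be large for small $j$, does not arise in the end, because those terms are damped by the smaller powers of $\lambda$.
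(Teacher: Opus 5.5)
Your proof is correct and is essentially the paper's argument. The reflection formula gives exactly the value $|1/\Gamma(j+1-a)|=\gamma\prod_{\ell=1}^{m-j-1}(\ell+\gamma)/\Gamma(1-\gamma)$ that the paper obtains from the downward reciprocal-Gamma recurrence, and you extract $h$ via $|\sin\pi\gamma|\le2\pi h$ where the paper uses $\gamma/\Gamma(1-\gamma)=h/\Gamma(2-\gamma)$; from there both arguments cancel $j!$ against the Poisson weight and dominate the sum geometrically by its $j=m-1$ term using $\lambda\ge2(m+1)$ (the paper first bounds the product by $(m-j)!$ and uses ratio $(m+1)/\lambda$, while you keep $\Gamma(a-j)$ and use the exact ratio $\lambda/(a-j-1)$, and your $\binom{a}{j}^{-1}$ detour is never used and can be dropped).
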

\begin{proof}
For $j\le m-1$, the reciprocal-Gamma recurrence gives
\[
 \frac1{|\Gamma(j+1-a)|}
 =\frac{\gamma}{\Gamma(1-\gamma)}
       \prod_{\ell=1}^{m-j-1}(\ell+\gamma)
 \le Ch(m-j)!.
\]
Here $\gamma/\Gamma(1-\gamma)=h/\Gamma(2-\gamma)$ and
$\Gamma$ is bounded away from zero on $[1,2]$. Taking the
Poisson expectation cancels the numerator $j!$. Moreover
\[
 \sum_{j=0}^{m-1}\lambda^j(m-j)!
   =\lambda^{m-1}\sum_{\ell=1}^m\frac{\ell!}{\lambda^{\ell-1}}
   \le2\lambda^{m-1},
\]
because successive terms have ratio at most $(m+1)/\lambda\le1/2$.
The denominator of $\mathsf H_a(j)$ is positive for $j\ge m$,
which proves the sign assertion.
\end{proof}

\paragraph{When a floor removes the distinction between the transforms.}
Some estimators replace all sufficiently small count values by a fixed
positive floor. If the floor is at least the order, the signed and
nonnegative transforms give exactly the same result, as the next lemma
shows. This is a deterministic identity, with no probabilistic assumptions.

\begin{lemma}[Exact agreement after a count floor]\label{ag:lem:floor}
Let $a\ge1$ and $b\ge a$. For every integer $j\ge0$,
\[
 \max\{\mathsf H_a(j),b^a\}=\max\{\G_a(j),b^a\}.
\]
\end{lemma}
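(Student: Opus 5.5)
The plan is a case split on $j$ relative to $\lceil a\rceil$, using only the definitions of $\G_a$ and $\mathsf H_a$. No probabilistic input is needed.

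\emph{High counts.} If $j\ge\lceil a\rceil$, then \eqref{eq:gamma-count} defines $\G_a(j)=\Gamma(j+1)/\Gamma(j+1-a)$. Here $j+1-a\ge1>0$, so this is exactly $\mathsf H_a(j)$, and the two maxima agree trivially.

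\emph{Low counts.} If $j\le\lceil a\rceil-1$, then $\G_a(j)=0$, so the right side equals $b^a$. It then suffices to prove $\mathsf H_a(j)\le b^a$, and I will in fact show $|\mathsf H_a(j)|\le a^a\le b^a$.

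\emph{Integer order.} If $a=r$ is an integer, then $\mathsf H_r(j)=(j)_r=0$ for $j<r$, and the claim is immediate.

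\emph{Noninteger order.} Write $a=m+\gamma$ with $m=\lfloor a\rfloor$ and $\delta=1-\gamma$. Then $j\le m$, and $z=j+1-a=\delta-(m-j)$.
\begin{itemize}
\item By the Gamma recurrence applied $m-j$ times,
\[
 \frac1{\Gamma(z)}=\frac{\prod_{\ell=0}^{m-j-1}(z+\ell)}{\Gamma(\delta)}.
\]
\item Since $0<\delta<1$, we have $\Gamma(\delta)\ge1$, because $\Gamma\ge1$ on $(0,1]$.
\item Each factor satisfies $|z+\ell|\le a-j-1+\delta\le a$.
\item Hence
\[
 |\mathsf H_a(j)|\le j!\prod_{\ell=0}^{m-j-1}|z+\ell|\le a^{j}\cdot a^{m-j}=a^m\le a^a,
\]
using $j!\le j^j\le a^j$ (as $j\le m\le a$) and $a\ge1$.
\end{itemize}
The boundary case $j=m$ has an empty product and gives $\mathsf H_a(m)=m!/\Gamma(\delta)\le m!\le a^a$.

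Finally, $a^a\le b^a$ because $b\ge a\ge1$. Hence $\max\{\mathsf H_a(j),b^a\}=b^a=\max\{\G_a(j),b^a\}$.

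The only step needing care is the reciprocal-Gamma bound at small $j$. The potential obstacle is sign and magnitude control of $1/\Gamma$ at negative noninteger arguments, which the recurrence down to $(0,1)$ handles uniformly. Note that only an upper bound on $\mathsf H_a(j)$ is needed, so negative values cause no difficulty.
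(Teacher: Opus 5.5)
Your proof is correct and is essentially the paper's argument: the same split at $\lceil a\rceil$, the same Gamma-recurrence expression for $1/\Gamma(j+1-a)$ over $\Gamma(\delta)=\Gamma(1-\gamma)\ge1$, and the same conclusion $|\mathsf H_a(j)|\le a^a\le b^a$. Two differences are minor: you only assert $\Gamma\ge1$ on $(0,1]$, which the paper justifies by Jensen ($\Gamma(1-\gamma)=\E T^{-\gamma}\ge(\E T)^{-\gamma}=1$ for $T\sim\operatorname{Exp}(1)$), and you bound $j!\prod_\ell|z+\ell|$ by $a^j\cdot a^{m-j}$ where the paper uses $\prod_{\ell=j+1}^m(a-\ell)\le(m-j)!$ and $j!(m-j)!\le m!\le a^a$.
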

\begin{proof}
At integer orders the two count families agree. Otherwise write
$a=m+\gamma$, $0<\gamma<1$. They agree when $j\ge m+1$.
For $0\le j\le m$, the Gamma recurrence gives
\[
 |\mathsf H_a(j)|
 =\frac{j!}{\Gamma(1-\gamma)}
             \prod_{\ell=j+1}^m(a-\ell)
 \le j!(m-j)!\le m!\le a^a\le b^a.
\]
The bound $\Gamma(1-\gamma)\ge1$ follows, for example, by
Jensen's inequality applied to $T^{-\gamma}$ with
$T\sim\operatorname{Exp}(1)$. The truncated count is zero in
this range, proving the claim.
\end{proof}

Above the original cutoff, $\mathsf H_a$ equals $\G_a$ and retains
its pointwise lower bound $(j-a)_+^a$. Its additional value at $j=m$
is nonnegative; Lemma~\ref{ag:lem:negative} controls the remaining
signed correction. These identities therefore extend the heavy-count
framework. The subsequent sections construct and calibrate the corresponding
polynomial estimator for rare symbols.

\subsection{The likelihood-overlap identity used in the lower bounds}
For probability laws $P,Q$, with likelihood ratio $L=dP/dQ$, define
$\chi^2(P,Q)=\E_Q(L-1)^2$. Total variation obeys
$\TV(P,Q)\le\tfrac12\sqrt{\chi^2(P,Q)}$. Thus a small second moment
of a mixture likelihood ratio shows that the mixture is difficult to
distinguish from the reference law. The next calculation gives that
second moment explicitly for Poisson observations. In the iid application,
the coordinate index $e$ is a symbol and normally $\nu_e=n$.

\label{sec:overlap-proof}
\begin{lemma}[Exponential overlap and hidden locations]\label{lem:poisson-overlap}
Let independent $N_e\sim\Poi(\nu_e\theta_e^*)$ be the reference
experiment, with $\theta_e^*>0$. For alternatives $\theta,\phi$,
their likelihood ratios satisfy
\begin{equation}\label{eq:overlap}
 \E_*L_\theta L_\phi
 =\exp\left\{\sum_e\nu_e
       \frac{(\theta_e-\theta_e^*)(\phi_e-\theta_e^*)}{\theta_e^*}
        \right\}.
\end{equation}
Suppose $m$ alternatives have disjoint perturbation supports and
$\sum_e\nu_e(\theta_e^{(s)}-\theta_e^*)^2/\theta_e^*\le J$.
Write $P_s$ for the law of alternative $s$ and
$\overline P=m^{-1}\sum_{s=1}^{m}P_s$. Their uniform mixture satisfies
\begin{equation}\label{eq:mixture}
 \chi^2(\overline P,P_*)\le(e^J-1)/m.
\end{equation}
\end{lemma}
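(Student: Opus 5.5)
We prove the overlap identity by direct computation with Poisson likelihoods. Then we deduce the mixture bound by expanding the second moment of the averaged likelihood ratio.

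\textbf{The overlap identity.} For a single coordinate with reference mean $\mu^*=\nu_e\theta_e^*$ and alternative mean $\mu=\nu_e\theta_e$, the likelihood ratio of $N_e=j$ is
\[
 \ell_\mu(j)=e^{-(\mu-\mu^*)}(\mu/\mu^*)^j .
\]
By independence across coordinates, $L_\theta=\prod_e\ell_{\nu_e\theta_e}(N_e)$, and the expectation of $L_\theta L_\phi$ factorizes over $e$. Write $\rho=\nu_e\phi_e$. For one coordinate, the Poisson generating function $\E_* z^{N}=e^{\mu^*(z-1)}$ with $z=\mu\rho/(\mu^*)^2$ gives
\[
 \E_*\ell_\mu\ell_\rho
 =\exp\{-(\mu-\mu^*)-(\rho-\mu^*)+\mu\rho/\mu^*-\mu^*\}.
\]
The exponent simplifies to $(\mu-\mu^*)(\rho-\mu^*)/\mu^*$. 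Substituting $\mu=\nu_e\theta_e$, $\rho=\nu_e\phi_e$ and $\mu^*=\nu_e\theta_e^*$, this equals $\nu_e(\theta_e-\theta_e^*)(\phi_e-\theta_e^*)/\theta_e^*$. Taking the product over $e$ proves \eqref{eq:overlap}. If some $\theta_e=0$, the formula still holds, since $0^j$ is read as $\one\{j=0\}$ and the generating-function computation is unchanged.

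\textbf{The mixture bound.} The likelihood ratio of $\overline P$ against $P_*$ is the average $m^{-1}\sum_s L_s$, where $L_s=L_{\theta^{(s)}}$. Since $\E_*L_s=1$, expanding the square gives
\[
 \chi^2(\overline P,P_*)=m^{-2}\sum_{s,t}\E_*L_sL_t-1 .
\]
By \eqref{eq:overlap}, for $s\ne t$ every summand in the exponent contains a factor $(\theta_e^{(s)}-\theta_e^*)(\theta_e^{(t)}-\theta_e^*)$. This factor vanishes because the two perturbation supports are disjoint, so $\E_*L_sL_t=1$. For $s=t$, the exponent is $\sum_e\nu_e(\theta_e^{(s)}-\theta_e^*)^2/\theta_e^*\le J$, so $\E_*L_s^2\le e^J$. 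Therefore
\[
 \chi^2(\overline P,P_*)\le m^{-2}\{m(m-1)+me^J\}-1=(e^J-1)/m .
\]

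\textbf{Where care is needed.} The computation is routine. The only points requiring attention are that the alternatives need not be positive, so zero means must be handled by the convention above, and that the cross terms equal exactly one. The cancellation of the cross terms is what makes the bound decay like $1/m$.
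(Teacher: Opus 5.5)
Your proof is correct and follows the paper's argument essentially step for step. You use the same single-coordinate Poisson likelihood ratio (with the zero-mean convention), the generating function evaluated at the product ratio, and factorization over coordinates. You then expand the squared mixture likelihood, with off-diagonal overlaps equal to one by disjointness and diagonal overlaps bounded by $e^J$.
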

\begin{proof}
For one coordinate, the likelihood ratio is
\[
 L_\theta(N)=\exp\{-\nu(\theta-\theta^*)\}
                   (\theta/\theta^*)^N.
\]
The formula includes $\theta=0$ by continuity, with $0^0=1$.
Multiplying two such ratios and using
$\E_*t^N=\exp\{\nu\theta^*(t-1)\}$ gives
\[
 \E_*L_\theta L_\phi
 =\exp\left\{-\nu(\theta+\phi-2\theta^*)
            +\nu\theta^*(\theta\phi/(\theta^*)^2-1)\right\}
 =\exp\left\{\nu\frac{(\theta-\theta^*)(\phi-\theta^*)}{\theta^*}\right\}.
\]
Independence multiplies the coordinate overlaps, proving
\eqref{eq:overlap}. Off-diagonal overlaps of alternatives
equal one by disjointness. Expanding the squared mixture likelihood
and subtracting one proves \eqref{eq:mixture}.
\end{proof}

The same kernel gives the iid moment cancellation. For a finite signed
measure $\omega$ supported on a bounded subset of $[0,\infty)$ and $u_*>0$,
termwise expansion of \eqref{eq:overlap} gives exactly
\begin{equation}\label{eq:moment-expansion}
 \sum_{j\ge0}
 \frac{\bigl[\int e^{-\lambda u}(\lambda u)^j/j!\,d\omega(u)\bigr]^2}
      {e^{-\lambda u_*}(\lambda u_*)^j/j!}
 =\sum_{\ell\ge0}\frac{\lambda^\ell}{\ell!u_*^\ell}
       \left[\int(u-u_*)^\ell\,d\omega(u)\right]^2.
\end{equation}
Bounded support justifies all exchanges by absolute domination by the
exponential series. For instance, if $\omega=\mu_+-\mu_-$ and the probability measures
$\mu_+,\mu_-$ have equal moments through degree $L$, then
$\int(u-u_*)^\ell\,d\omega(u)=0$ for $0\le\ell\le L$.
The right side starts at degree $L+1$. The left side measures a weighted
squared difference between their mixed Poisson count probabilities.
This is the exact reason moment matching hides differences from the data.
Thus the hidden-location calculation and iid moment matching are
evaluations of the same independent-count likelihood kernel.

\clearpage
\subsection{The order-two example}
\label{sec:collision-example}

Imagine first that the number of observations is Poisson with mean $n$.
Then the counts $N_i$ of the symbols are independent and
$N_i\sim\Poi(np_i)$. Section~\ref{sec:count-foundations} explains this
reduction and the return to a fixed sample size.

\subsubsection{An unbiased estimator of the collision probability}
At $\alpha=2$, $P_2(p)=\sum_i p_i^2$ is the probability that two
independent observations are equal. Since
$\E[N_i(N_i-1)]=(np_i)^2$, the estimator
\[
 Z=\frac{1}{n^2}\sum_{i=1}^k N_i(N_i-1)
\]
satisfies $\E Z=P_2(p)$. The product $N_i(N_i-1)$ counts ordered pairs
of distinct observations of symbol $i$.

Why not use $\sum_i(N_i/n)^2$? Because
$\E N_i^2=(np_i)^2+np_i$, its expectation is $P_2+1/n$.
Subtracting $N_i$ removes that bias exactly.

\subsubsection{Its exact variance and a uniform sample bound}
The falling-factorial identity
\[
 (N)_2^2=(N)_4+4(N)_3+2(N)_2
\]
gives, for a Poisson count with mean $\lambda$,
$\Var((N)_2)=4\lambda^3+2\lambda^2$. Independence therefore yields
\begin{equation}\label{intro:collision-variance}
 \Var Z=\frac{4P_3(p)}n+\frac{2P_2(p)}{n^2}.
\end{equation}
Since $\max_i p_i\le\sqrt{P_2}$,
$P_3\le (\max_i p_i)P_2\le P_2^{3/2}$.
Combining this with $P_2\ge1/k$ gives
\begin{equation}\label{intro:collision-relative}
 \frac{\Var Z}{P_2^2}
 \le\frac{4}{n\sqrt{P_2}}+\frac{2}{n^2P_2}
 \le\frac{4\sqrt{k}}n+\frac{2k}{n^2}.
\end{equation}
Fix $\eta=1-e^{-\varepsilon}\in(0,1)$. If $n=C_\varepsilon\sqrt{k}$,
Chebyshev's inequality implies
\[
 \Prob\{|Z-P_2|>\eta P_2\}
 \le\frac{4/C_\varepsilon+2/C_\varepsilon^2}{\eta^2}.
\]
Choose $C_\varepsilon$ to make this bound at most $1/4$.
On the complementary event,
\[
 |\log Z-\log P_2|
 \le\max\{-\log(1-\eta),\log(1+\eta)\}=\varepsilon.
\]
Thus $-\log Z$ estimates $H_2$ to accuracy $\varepsilon$.
Poissonization can be implemented from a fixed pool of a constant multiple
of $n$ observations, with exponentially small overflow probability.
This proves $S_2(k,\varepsilon,1/3)=O_\varepsilon(\sqrt{k})$ for large $k$.
The hidden-coordinate argument in Theorem~\ref{thm:iid-collision-new}
proves the matching lower bound.

\paragraph{What this example teaches.}
The estimator does not estimate each $p_i$ accurately. It estimates a sum
directly, and its variance is compared to the square of that sum.
For a general integer order, replace $(N)_2$ by $(N)_r$.
For a noninteger order, no finite polynomial equals $x^\alpha$;
the next step is polynomial approximation with simultaneous variance
control. That is the central additional difficulty.

\clearpage
\section{A direct proof for every integer order}
\label{tutorial:integer}

Before constructing a polynomial approximation for noninteger orders, it
is useful to solve the integer case directly. The estimator is particularly
simple: count ordered collisions. The subtlety is in the proof when the
integer order grows with the alphabet size.

Throughout this section, $r\ge2$ is an integer,
$P_r(p)=\sum_i p_i^r$, and $H_r(p)=-(r-1)^{-1}\log P_r(p)$, as in
\eqref{main:renyi}. Under Poissonization, the counts are independent with
$N_i\sim\Poi(np_i)$. Define
\begin{equation}\label{tutorial:raw-estimator}
 \widehat P_r=n^{-r}\sum_{i=1}^k(N_i)_r.
\end{equation}
It is nonnegative and unbiased: $\E\widehat P_r=P_r(p)$.
Here $(N_i)_r$ counts ordered selections of $r$ occurrences of symbol $i$.
The estimator does not need to know which symbols are rare or frequent.
The partition used below is only a device for analyzing its lower tail.

\subsection{Why a constant relative-variance bound is too restrictive}
For fixed $r$, one usually proves that $\widehat P_r/P_r$ is close to
one. When $r$ grows, that requirement becomes unnecessarily strong.
The exact condition needed for entropy accuracy $\varepsilon$ is
\begin{equation}\label{tutorial:integer-band}
 e^{-\varepsilon(r-1)}P_r
 \le\widehat P_r\le e^{\varepsilon(r-1)}P_r.
\end{equation}
Indeed, taking logarithms and dividing by $r-1$ gives an error at most
$\varepsilon$. The allowed multiplicative interval widens with $r$.

Here is a concrete reason not to insist on bounded relative variance.
Fix $D>0$, put $a=k^{-1+1/r}$, and consider
\[
 p_1=a,\qquad p_i=\frac{1-a}{k-1}\quad(i>1),
 \qquad n=Dr/a=Drk^{1-1/r}.
\]
Since $a\ge1/k$, the remaining probabilities are at most $1/k$.
Consequently $a^r\le P_r(p)\le2a^r$. The $j=1$ term of the exact
factorial variance formula gives
\begin{equation}\label{tutorial:variance-obstruction}
 \frac{\Var(\widehat P_r)}{P_r(p)^2}
 \ge\frac{r^2a^{2r-1}}{4na^{2r}}
 =\frac{r}{4D}.
\end{equation}
Thus the relative variance diverges along any sequence $r\to\infty$
with fixed $D$. This does not rule out the entropy guarantee
\eqref{tutorial:integer-band}. The proof below uses the full interval:
a second-moment bound controls the upper tail, while positivity and a
weighted Jensen inequality control the lower tail.

\subsection{The estimator and the uniform guarantee}
\begin{theorem}[Raw collision estimator, uniformly over integer orders]
\label{tutorial:integer-raw}
Fix $0<\varepsilon\le1/2$. For every $k\ge2$ and integer $r\ge2$, let
\[
 D=D_\varepsilon=
 \frac{100e^{2\varepsilon}}{\varepsilon}
       \left(\frac{4096}{\varepsilon^2}+1\right),
 \qquad n=Drk^{1-1/r}.
\]
The estimator \eqref{tutorial:raw-estimator}, based on independent
$N_i\sim\Poi(np_i)$, satisfies \eqref{tutorial:integer-band} with
probability at least $11/12$, uniformly over $p\in\Delta_k$.
Therefore $-(r-1)^{-1}\log\widehat P_r$ estimates $H_r(p)$ within
$\varepsilon$ on this event. Define the estimate arbitrarily when
$\widehat P_r=0$.
Using $\lceil2n\rceil$ ordinary iid observations instead gives success
probability at least $2/3$. The constant $D_\varepsilon$ is independent
of both $r$ and $k$; its displayed value is chosen for convenience,
not numerical efficiency.
\end{theorem}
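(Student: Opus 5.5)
The plan is to split the coordinates at the threshold $np_i\ge Kr$, for a large constant $K=K_\varepsilon$. Write $A=\{i:np_i\ge Kr\}$ for the heavy symbols and $A^c$ for the light ones. Put $\widehat P_r=Z_A+Z_{A^c}$ and $P_r=P_A+P_{A^c}$ accordingly. The two sides of \eqref{tutorial:integer-band} are handled separately, with failure at most $1/24$ for each.

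\textbf{Upper tail.} Here only an additive bound is needed, because the interval allows a factor $e^{\varepsilon(r-1)}\ge1+\varepsilon(r-1)$. Lemma~\ref{lem:factorial-energy-new} with $Q=x^r$ gives
\[
 \Var\widehat P_r=\sum_{j=1}^r\binom rj^2 j!\, n^{-j}P_{2r-j}.
\]
Bounding $P_{2r-j}\le P_r^{(2r-j)/r}$ gives the relative bound $\sum_j\binom rj^2 j!(nP_r^{1/r})^{-j}$. Since $P_r^{1/r}\ge k^{-1+1/r}$, we have $nP_r^{1/r}\ge Dr$. Using $\binom rj^2j!\le r^{2j}/j!$, the relative variance is at most $\sum_j (r/D)^j/j!\le e^{r/D}-1$, which is of order $r/D$ for $r\le D$. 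Chebyshev with deviation $\varepsilon(r-1)P_r/2$ then gives failure $O(1/(D\varepsilon^2 r))$, which is small.

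For large $r$ this must be checked more carefully, since $e^{r/D}$ is no longer of order $r/D$. I would instead run Chebyshev with deviation $(e^{\varepsilon(r-1)}-1)P_r$. The ratio $(e^{r/D}-1)/(e^{\varepsilon(r-1)}-1)^2$ is small uniformly in $r$ provided $D>1/(2\varepsilon)$, which the constant $D_\varepsilon$ comfortably satisfies.

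\textbf{Lower tail.} This is the main obstacle, since the variance bound alone fails there, as \eqref{tutorial:variance-obstruction} shows. Positivity gives $\widehat P_r\ge Z_A$ and $\widehat P_r\ge Z_{A^c}$, so it suffices that one of the two parts reaches the needed fraction of $P_r$.

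Case $P_A\ge P_r/2$. For heavy $i$, Poisson concentration gives $N_i\ge np_i(1-c)$ with high probability. Then $(N_i)_r\ge (N_i-r)^r\ge(np_i)^r(1-c-1/K)^r$, which is $\ge e^{-\varepsilon(r-1)/2}(np_i)^r$ once $c,1/K\ll\varepsilon$, for $r\ge2$. Uniformity over many heavy coordinates is not available, so I would use a weighted Jensen argument. Let $w_i=p_i^r/P_A$. Apply Jensen to $\log$, giving $\log\sum_i w_i(N_i)_r/(np_i)^r\ge\sum_i w_i\log((N_i)_r/(np_i)^r)$, or equivalently use Markov on the weighted deficit $\sum_i w_i[1-(N_i)_r/(np_i)^r]_+$. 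Each term has expected deficit $O(r/\sqrt{Kr})$ relative to its allowance, so Markov gives $Z_A\ge e^{-\varepsilon(r-1)}\cdot 2P_A\ge e^{-\varepsilon(r-1)}P_r$ with small failure. The factor $2$ is absorbed because $e^{-\varepsilon(r-1)/2}\le 1/2$ is not required: we only need $e^{-\varepsilon(r-1)/2}P_A\ge e^{-\varepsilon(r-1)}P_r$, i.e. $e^{\varepsilon(r-1)/2}\ge2$. That holds only for large $r$; for small $r$ I would use Chebyshev on $Z_A$ directly, since the relative variance of the heavy part is $O(1/K)$.

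Case $P_{A^c}\ge P_r/2$. On light coordinates, $np_i<Kr$ gives $P_{2r-j}(A^c)\le (Kr/n)^{r-j}P_{A^c}$. Substituting into the variance formula yields relative variance at most $\sum_j\binom rj^2 j!(Kr)^{r-j}/(n^rP_{A^c})$. Using $P_{A^c}\ge P_r/2\ge k^{1-r}/2$ and $n=Drk^{1-1/r}$, this is at most $\sum_j\binom rj^2j!\,2K^{r-j}r^{-j}D^{-r}$. That is exponentially small in $r$ once $D\gg K$, and Chebyshev then suffices.

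\textbf{Assembly.} Combining the two tails gives success probability $11/12$ for the Poissonized estimator. The fixed-sample statement then follows from the pool coupling of Section~\ref{sec:count-foundations}: the overflow term is at most $e^{-(2\log2-1)n}\le1/4$, since $n\ge Dr\sqrt k\ge D$ is large.
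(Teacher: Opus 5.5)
There is a genuine gap in your lower-tail reduction. Your upper-tail argument and your light-part variance computation are correct and match the paper. The problem is that you bound $\widehat P_r$ below by $\max\{Z_A,Z_{A^c}\}$, which discards one part entirely. Even if the retained part were estimated exactly, this certifies only $\max\{P_A,P_{A^c}\}\ge P_r/2$. The band requires $e^{-\varepsilon(r-1)}P_r$, and $e^{-\varepsilon(r-1)}>1/2$ whenever $\varepsilon(r-1)<\log 2$. This covers $r=2$ for every admissible $\varepsilon\le1/2$, and every $r<1+\log2/\varepsilon$ when $\varepsilon$ is small.

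A concrete instance: take $r=2$, $p_1=k^{-1/2}$, and all other coordinates equal. Since $D\ge K$, coordinate $1$ is heavy ($np_1=2D$), and the rest are light for large $k$. Then $P_A\approx P_{A^c}\approx P_2/2$. Your remedy for small $r$, Chebyshev on $Z_A$, cannot help, because the loss comes from discarding $Z_{A^c}$, not from fluctuations of $Z_A$. The light case has the same defect, unacknowledged: Chebyshev there gives only $Z_{A^c}\ge(1-c)P_{A^c}\ge(1-c)P_r/2$.

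The repair is to control both parts on one event and add them.
\begin{itemize}
\item Normalize the light variance by $P_r^2$ rather than $P_{A^c}^2$. Using $P_{A^c}\le P_r$ and $n^rP_r\ge(Dr)^r$, your own computation gives $\Var Z_{A^c}/P_r^2\le((K+1)/D)^r$ with no case hypothesis.
\item Chebyshev then yields the \emph{additive} bound $Z_{A^c}\ge P_{A^c}-dP_r$, with $d=e^{-\eta}(1-e^{-\eta})$ and $\eta=\varepsilon(r-1)/2$.
\item Combined with $Z_A\ge e^{-\eta}P_A$, this gives $\widehat P_r\ge e^{-\eta}(P_A+P_{A^c})-dP_r=e^{-2\eta}P_r$.
\end{itemize}
This is exactly the paper's Steps 4--6. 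Alternatively, for $r\le C/\varepsilon$ you could apply Chebyshev to all of $\widehat P_r$, since $e^{r/D}-1=O(r/D)$ there. You would then keep your max argument only when $e^{\varepsilon(r-1)/2}$ comfortably exceeds $2$.

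Separately, your heavy-part mechanism needs tightening.
\begin{itemize}
\item For $r\gg K$ and $\lambda_i$ near $Kr$, one has $(N_i)_r/\lambda_i^r\approx e^{-r/(2K)}$. So the raw deficit $[1-(N_i)_r/\lambda_i^r]_+$ is close to one. Markov against the allowance $1-e^{-\eta}$, also close to one, then gives nothing.
\item The logarithmic Jensen route breaks whenever some heavy $N_i<r$. A union bound over up to $n/(Kr)$ heavy symbols cannot exclude this as $k$ grows.
\end{itemize}
You have two ways out. One is to apply Markov to the $w$-weight of coordinates with $(N_i)_r<e^{-s}\lambda_i^r$, for an intermediate level $s<\eta$. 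The other is the paper's route: apply Jensen to the convex map $z\mapsto(z_+)^r$ at the single weighted average $Z=\sum_iw_iN_i/\lambda_i$, whose variance is at most $1/T$, together with $(x)_r\ge(x-r)_+^r$.
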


\begin{proof}
\textbf{Step 1: obtain a variance bound, allowing it to grow with $r$.}
Write $P=P_r(p)$ and $K_r=k^{1-1/r}$. Jensen's inequality gives
$P\ge k^{1-r}=K_r^{-r}$. Applying
Lemma~\ref{lem:factorial-energy-new} to $Q(x)=x^r$ yields
\[
 \Var(\widehat P_r)
 =\sum_{j=1}^r\binom rj^2j!\,n^{-j}P_{2r-j}(p).
\]
For $q\ge r$, monotonicity of vector norms gives
$P_q\le P^{q/r}$. Moreover
$\binom rj^2j!=(r)_j^2/j!\le r^{2j}/j!$. Hence
\begin{equation}\label{tutorial:integer-relative-variance}
 \frac{\Var(\widehat P_r)}{P^2}
 \le\sum_{j=1}^r\frac1{j!}
          \left(\frac{r^2}{nP^{1/r}}\right)^j
 \le e^{r/D}-1,
\end{equation}
where $nP^{1/r}\ge Dr$. We will compare this exponential growth to
the exponential tolerance in \eqref{tutorial:integer-band}.

\medskip
\noindent\textbf{Step 2: control the upper tail at the actual tolerance.}
Set $y=\varepsilon(r-1)$. For $u,y\ge0$,
$e^u-1\le ue^u$ and $e^y-1\ge ye^{y/2}$; the latter follows from
$2\sinh(y/2)\ge y$. Chebyshev's inequality therefore gives
\begin{align}
 \Pr\{\widehat P_r>e^yP\}
 &\le\frac{e^{r/D}-1}{(e^y-1)^2}
 \le\frac{r}{Dy^2}e^{r/D-y}
 \le\frac{2}{D\varepsilon^2}
 \le\frac1{100}.\label{tutorial:integer-upper-tail}
\end{align}
Indeed, $D\ge2/\varepsilon$ and $r\le2(r-1)$ imply $r/D\le y$,
while $r/(r-1)^2\le2$. This is why a relative variance growing like
$e^{r/D}$ is harmless for the upper tail.

\medskip
\noindent\textbf{Step 3: separate rare and frequent coordinates in the proof.}
Put $\lambda_i=np_i$ and define
\[
 T=\frac{4096r}{\varepsilon^2},\qquad
 \mathcal L=\{i:\lambda_i\le T\},\qquad
 \mathcal H=\{i:\lambda_i>T\}.
\]
Let $L=\sum_{i\in\mathcal L}p_i^r$ and
$H=\sum_{i\in\mathcal H}p_i^r$, so that $L+H=P$.
Correspondingly write $\widehat P_r=F+J$, where
$F=n^{-r}\sum_{i\in\mathcal L}(N_i)_r$ and
$J=n^{-r}\sum_{i\in\mathcal H}(N_i)_r$.
We control the two contributions by different inequalities.

\medskip
\noindent\textbf{Step 4: the rare contribution has a small additive error.}
For $N\sim\Poi(\lambda)$, the factorial product identity gives
\begin{align*}
 \E(N)_r^2
 &=\sum_{j=0}^r\binom rj^2j!\lambda^{2r-j}
 \le\sum_{j=0}^r\binom rj r^j\lambda^{2r-j}
 =\lambda^r(\lambda+r)^r.
\end{align*}
The inequality uses $\binom rj j!=(r)_j\le r^j$.
Independence and $\lambda_i\le T$ on $\mathcal L$ now imply
\begin{equation}\label{tutorial:integer-light-variance}
 \E F=L,\qquad
 \frac{\Var F}{P^2}
 \le\frac{(T+r)^rL}{n^rP^2}
 \le\left(\frac{T+r}{Dr}\right)^r=A^r,
 \qquad A=\frac{4096/\varepsilon^2+1}{D}.
\end{equation}
The second inequality uses $L\le P$ and $n^rP\ge(Dr)^r$.
Set $\eta=\varepsilon(r-1)/2$ and
$d=e^{-\eta}(1-e^{-\eta})$. Because
$1-e^{-\eta}\ge\eta e^{-\eta}$, another application of Chebyshev gives
\begin{align*}
 \Pr\{F<L-dP\}
 &\le\frac{A^r}{d^2}
 \le\frac{4(Ae^{2\varepsilon})^r}{\varepsilon^2(r-1)^2}
 \le\frac4{10000}.
\end{align*}
For the middle inequality, $d^2\ge\eta^2e^{-4\eta}$ and
$4\eta=2\varepsilon(r-1)\le2\varepsilon r$.
For the last one, $Ae^{2\varepsilon}=\varepsilon/100$ and $r\ge2$.

\medskip
\noindent\textbf{Step 5: a weighted average controls the frequent contribution.}
If $H=0$, then $J=0$ and the claimed bound below is automatic.
Otherwise define weights $w_i=p_i^r/H$ for $i\in\mathcal H$, and put
\[
 Z=\sum_{i\in\mathcal H}w_i\frac{N_i}{\lambda_i}.
\]
The weights are nonnegative and sum to one. Thus
\[
 \E Z=1,\qquad
 \Var Z=\sum_{i\in\mathcal H}\frac{w_i^2}{\lambda_i}
 \le\frac1T.
\]
Except on an event of probability at most $256/(\varepsilon^2T)$,
we have $Z\ge1-\varepsilon/16$. For any integer $x\ge0$,
$(x)_r\ge(x-r)_+^r$. Since $z\mapsto(z_+)^r$ is convex, Jensen's
inequality then gives
\begin{align*}
 \frac JH
 &\ge\sum_{i\in\mathcal H}w_i
        \left(\frac{N_i-r}{\lambda_i}\right)_+^r
 \ge\left(Z-r\sum_{i\in\mathcal H}\frac{w_i}{\lambda_i}\right)_+^r\\
 &\ge(1-\varepsilon/8)^r
 \ge e^{-\varepsilon r/4}\ge e^{-\eta}.
\end{align*}
Here $r/T\le\varepsilon/16$,
$\log(1-\varepsilon/8)\ge-\varepsilon/4$ for
$0<\varepsilon\le1/2$, and $r\ge2$. The exceptional probability is
at most $256/(\varepsilon^2T)=1/(16r)\le1/32$.
Notice that we did not require every frequent count to concentrate:
one weighted average was sufficient.

\medskip
\noindent\textbf{Step 6: combine the lower-tail estimates.}
On the two successful events,
\[
 \widehat P_r=F+J\ge L-dP+e^{-\eta}H
 \ge(e^{-\eta}-d)P=e^{-2\eta}P.
\]
The middle inequality uses $L+H=P$ and $L\ge e^{-\eta}L$.
Since $2\eta=\varepsilon(r-1)$, this is the required lower bound.
Combining it with \eqref{tutorial:integer-upper-tail}, the total failure
probability is at most $1/100+4/10000+1/32<1/12$.

\medskip
\noindent\textbf{Step 7: return to a fixed number of iid observations.}
Take $m=\lceil2n\rceil$ observations, and independently draw
$M\sim\Poi(n)$. If $M\le m$, use the first $M$ observations to form
the counts in \eqref{tutorial:raw-estimator}; otherwise return an arbitrary
answer. Coupling to an unlimited iid sequence adds failure probability
at most
\[
 \Pr\{M>m\}\le\Pr\{M>2n\}\le e^{-n/3}<1/12.
\]
The Poisson Chernoff bound gives the middle inequality; the displayed
choice of $D_\varepsilon$ ensures the last one. Total failure is below
$1/6$, and hence below $1/3$, proving the fixed-sample assertion.
\end{proof}

\paragraph{What this direct argument establishes.}
For every integer $r\ge2$, the same raw collision estimator uses
$O_\varepsilon(rk^{1-1/r})$ samples. This is the relevant upper rate
below the high-order threshold $r\approx\log k$; the lower bound and the
$k\log k$ high-order estimator are proved separately. The argument above
does not claim that raw collisions attain the saturation rate when
$r\gg\log k$. 

\clearpage
\section{Two simpler bounds: large orders and a hidden large coordinate}
\label{sec:endpoints}

Before entering the noninteger approximation argument, we prove the
large-order upper bound and a lower bound that applies to every
$\alpha\ge3/2$. These proofs use only the preceding count identities.
They explain both the integer-type scale and its saturation.

\subsection{Why the upper bound saturates at high orders}
For large $\alpha$, the norm $\|p\|_\alpha$ is driven by the largest
probabilities. At sample size $Ck\log k$, all empirical coordinates
can be controlled to error proportional to $p_i+1/k$. The additive
$1/k$ term is harmless because $\|p\|_\alpha\ge k^{1/\alpha-1}$.
The following proof makes this observation precise and explains why the
Gamma-transform order is capped even when the target order is very large.

\begin{proposition}[The high-order and min-entropy upper bound]\label{prop:iid-high}
For iid $p\in\Delta_k$, real $\alpha\ge\max\{2,\log k\}$,
$L=\log(4k/\delta)$, $s=\min\{\alpha,L\}$, and $n=CkL$, use
\[
 \widehat P=\sum_i[\g_s(N_i)/n]^\alpha,
 \qquad N_i\sim\Poi(np_i).
\]
Its logarithm estimates iid R\'enyi entropy with fixed error and
confidence using $O(k\log k)$ samples. The same construction with
$s=L$ and a maximum in place of the sum estimates min-entropy.
\end{proposition}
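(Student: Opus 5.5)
We prove Proposition~\ref{prop:iid-high} by showing that, on one high-probability event, every coordinate estimate $\g_s(N_i)/n$ approximates $p_i$ to within a small relative error plus a small multiple of $1/k$. We then push this through the $\ell_\alpha$ norm.

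\textbf{Step 1 (simultaneous count bounds).} Take $n=CkL$ with $L=\log(4k/\delta)$. By Poisson Chernoff bounds and a union bound over the $k$ coordinates, with probability at least $1-\delta/2$ we have, for every $i$,
\[
 |N_i-np_i|\le \eta\, np_i+\eta\, n/k .
\]
Here $\eta$ is a small constant depending on $\varepsilon$. The tail of $N_i$ at level $np_i+\eta n/k$ is at most
\[
 \exp(-c\,\eta^2 n/k)=\exp(-c\eta^2 CL),
\]
which is at most $\delta/(4k)$ once $C=C_\varepsilon$ is large. This holds both when $np_i\lesssim n/k$ (additive regime) and when $np_i\gg n/k$ (relative regime, using $\eta^2 np_i\ge\eta^2 n/k$).

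\textbf{Step 2 (Gamma transform).} Lemma~\ref{lem:gamma} gives $|\g_s(N_i)-N_i|\le s\le L$. Since $n/k=CL$, this shift is at most $(1/C)(n/k)$ and is absorbed into the additive term. So on the event, writing $x_i=\g_s(N_i)/n\ge0$,
\[
 |x_i-p_i|\le \eta' p_i+\eta'/k,\qquad \eta'=\eta+1/C .
\]
The cap $s\le L$ is exactly what makes this shift harmless. With $s=\alpha$ uncapped, the error $\alpha/n$ could exceed $1/k$.

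\textbf{Step 3 (norm comparison).} Minkowski's inequality in $\ell_\alpha$ gives
\[
 \bigl|\|x\|_\alpha-\|p\|_\alpha\bigr|\le\eta'\|p\|_\alpha+\eta'\|\mathbf 1/k\|_\alpha ,
 \qquad \|\mathbf 1/k\|_\alpha=k^{1/\alpha-1}\le\|p\|_\alpha ,
\]
the last inequality by \eqref{intro:range}. Hence $\|x\|_\alpha/\|p\|_\alpha\in[1-2\eta',1+2\eta']$. Since $\widehat P=\|x\|_\alpha^\alpha$,
\[
 \frac{\log\widehat P-\log P_\alpha}{\alpha-1}=\frac{\alpha}{\alpha-1}\log\frac{\|x\|_\alpha}{\|p\|_\alpha},
\]
which is at most $2\cdot 3\eta'$ in absolute value because $\alpha\ge2$. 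Choosing $\eta'\le\varepsilon/6$ gives $|-\log\widehat P/(\alpha-1)-H_\alpha|\le\varepsilon$.

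\textbf{Step 4 (min-entropy and bookkeeping).} For min-entropy, the same coordinatewise bound gives
\[
 |\max_i x_i-\max_i p_i|\le\eta'\max_i p_i+\eta'/k\le 2\eta'\max_i p_i ,
\]
using $\max_i p_i\ge1/k$. The log error is therefore $O(\eta')$.

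Fixed failure probability means $\delta$ is constant, so $n=O(k\log k)$. Poissonization is removed by the fixed-pool coupling of Section~\ref{sec:count-foundations}.

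The main obstacle is Step 1. It requires a single Chernoff statement that covers uniformly both the tiny-$p_i$ additive regime and the large-$p_i$ relative regime, with constants independent of $\alpha$. I would handle it with the Bernstein form $\exp\{-t^2/(2(\lambda+t/3))\}$ at $t=\eta(\lambda+n/k)$. This gives exponent $\gtrsim\eta^2 n/k$ in both regimes at once.
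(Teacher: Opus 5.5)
Your proposal is correct and follows the paper's proof essentially step for step. You use a Poisson Bernstein bound with a union bound to get $|N_i/n-p_i|\le\eta(p_i+1/k)$ for all $i$, then the shift $|\g_s(N_i)-N_i|\le s\le L=n/(Ck)$ from Lemma~\ref{lem:gamma}, then the $\ell_\alpha$ (respectively $\ell_\infty$) triangle inequality with $\norm p_\alpha\ge k^{1/\alpha-1}$ and $\alpha/(\alpha-1)\le2$. The only cosmetic difference is that you apply Bernstein directly at deviation $t=\eta(np_i+n/k)$, whereas the paper uses the threshold $\sqrt{2np_iL}+2L/3$ and converts it via $\sqrt{p_i/k}\le(p_i+1/k)/2$.
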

\begin{proof}
\noindent\textbf{Step 1: control all counts.}
For $N_i\sim\Poi(np_i)$, a Poisson Bernstein bound gives
\[
 \Prob\{|N_i-np_i|>\sqrt{2np_iL}+2L/3\}\le2e^{-L}.
\]
Since $n=CkL$, division by $n$ and the inequality
$\sqrt{p_i/k}\le(p_i+1/k)/2$ show that, for sufficiently large
$C=C(\eta)$, the right-hand deviation is at most
$\eta(p_i+1/k)$. A union bound over $k$ symbols gives,
with failure probability at most $2ke^{-L}\le\delta/2$,
$|N_i/n-p_i|\le\eta(p_i+1/k)$ for all $i$, after fixing $C$ large.
\noindent\textbf{Step 2: account for the count transform.}
By Lemma~\ref{lem:gamma}, $s/n\le1/(Ck)\le\eta/k$ and thus
\[
 |\g_s(N_i)/n-p_i|\le\eta p_i+2\eta/k.
\]
\noindent\textbf{Step 3: estimate the norm, then take its logarithm.}
Write $\widehat v_i=\g_s(N_i)/n$. By the triangle inequality,
\[
 |\|\widehat v\|_\alpha-\|p\|_\alpha|
 \le\|\widehat v-p\|_\alpha
 \le\eta\|p\|_\alpha+2\eta k^{1/\alpha-1}.
\]
The $\ell_\alpha$ triangle inequality and
$\norm p_\alpha\ge k^{1/\alpha-1}$ imply relative norm error at
most $3\eta$. For $\eta\le1/12$, logarithmic norm error is at
most $6\eta$ and entropy error is at most $12\eta$.
The $\ell_\infty$ argument is identical, using $\norm p_\infty\ge1/k$.
Choose $\eta\le\min\{1/12,\varepsilon/12\}$.
For min-entropy use $-\log\max_i\widehat v_i$; for finite order use
$-\log\widehat P/(\alpha-1)$. The entropy factor
$\alpha/(\alpha-1)$ is at most two. Taking a Poisson sample from a
fixed pool of a constant multiple of $n$ observations adds only an
exponentially small overflow probability, absorbed in the unused
failure budget. Thus a fixed sample size $O_{\varepsilon,\delta}(k\log k)$
suffices at any fixed confidence.
\end{proof}

\subsection{One hidden-coordinate calculation for the iid collision scale}
\label{sec:iid-common-lower}
This lower bound hides one enlarged coordinate among several candidates.
The entropy changes substantially, but the data reveal too little about
which candidate was changed. The reservoir consists of the remaining
coordinates; changing its total mass keeps every vector normalized.
The proof is uniform in the order. For noninteger orders it gives the
collision obstruction only; the additional approximation obstruction
requires the random-prior construction later.

\begin{theorem}[Order-uniform collision and saturation lower bound]
\label{thm:iid-collision-new}
Fix $\eps>0$. For all sufficiently large $k$, uniformly over every
finite real $\alpha\ge3/2$, iid R\'enyi entropy estimation with
accuracy $\eps$ and failure probability at most $1/3$ requires
\[
 \Omega_\eps\!\left(\ell k^{1-1/\ell}\right),
 \qquad \ell=\min\{\alpha,\log k\}.
\]
The same $\Omega_\eps(k\log k)$ conclusion holds for min-entropy.
\end{theorem}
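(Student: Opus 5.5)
We will prove the bound by Le Cam's two-point method against a uniform mixture, using Lemma~\ref{lem:poisson-overlap}. Work in the Poisson experiment with mean $n$; a fixed-sample estimator with $n$ observations yields a Poisson estimator with mean $2n$ up to an exponentially small overflow, so lower bounds transfer. Put $\ell=\min\{\alpha,\log k\}$ and set $a=c\,k^{-1+1/\ell}$, with $c=c_\eps$ a large constant chosen later (for $\ell=\log k$ this gives $a\asymp c/k$). Split the alphabet into $m=\lfloor k/2\rfloor$ candidate coordinates and a reservoir of the remaining $k-m$ coordinates. The reference $\theta^*$ puts mass $1/k$ on each candidate and spreads the rest uniformly over the reservoir. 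Alternative $s$ raises candidate $s$ to $a$ and lowers every reservoir coordinate by a common amount, so the vector still sums to one. These perturbations are not disjoint, since all alternatives share the reservoir change. We handle this by letting the reference itself carry the reservoir change: the reservoir deficit $a-1/k$ is the same for every $s$, so all alternatives agree on the reservoir and the reference simply uses that common reservoir law. Each alternative then differs from the reference only at coordinate $s$.

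\textbf{Overlap bound.} The single-coordinate $\chi^2$ contribution is $J=n(a-1/k)^2k\le nka^2$. By \eqref{eq:mixture}, $\chi^2(\overline P,P_*)\le(e^{nka^2}-1)/m$, which is at most $1/9$ as long as $nka^2\le\log(1+m/9)\approx\log k$. This gives indistinguishability whenever $n\le c'\,\log k/(ka^2)$. We want this threshold to be at least $\ell k^{1-1/\ell}$. At $\ell=\log k$ we have $ka^2\asymp c^2/k$, so $n\lesssim k\log k/c^2$, matching $k\log k$. For smaller $\ell$, however, the chi-square threshold is $\log k\cdot k^{1-2/\ell}/c^2$, which is the wrong scale.

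\textbf{Main obstacle: the cruder bound.} The naive $\chi^2$ bound fails, so the plan is to switch to a direct hidden-coordinate argument. The enlarged coordinate has mean count $na$. If $na\le c_1\ell$ with $c_1$ small, a Poisson count of mean at most $c_1\ell$ exceeds $\ell$ with probability about $e^{-\ell\log(1/c_1)}$. In that regime the data are essentially matched by a reference whose candidates are also uniform, and the problem becomes detecting whether some count exceeds a typical level. This requires a truncated-$\chi^2$ (second-moment after conditioning) argument rather than Lemma~\ref{lem:poisson-overlap} as stated.

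\textbf{Entropy separation and failure accounting.} Concretely, truncate the likelihood ratio on the event that candidate counts stay at most $\ell$. Each alternative then has truncated overlap about $e^{-na}(1+na)^{\ell}$-type terms. Choosing $a=c\,k^{-1+1/\ell}$ makes $P_\alpha$ jump from about $k^{1-\alpha}$ to at least $c^\alpha k^{-\alpha+\alpha/\ell}\ge c^\alpha k^{1-\alpha}$ (using $\alpha/\ell\ge1$). The resulting entropy gap $\alpha\log c/(\alpha-1)$ is at least $2\eps$ once $c=e^{2\eps}$. For min-entropy the gap is $\log c$ directly. Then $n\le c_2\ell/a\asymp_\eps \ell k^{1-1/\ell}$ keeps $\TV<1/3$. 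The expected hardest step is showing that this truncation error and the removed tail are negligible uniformly in $\alpha$; I would attempt it with the Poisson Chernoff bound at level $\ell\le\log k$.
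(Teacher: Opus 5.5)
Your route differs from the paper's and can be completed. As written, however, it has a gap at exactly the step you flag as hardest: the truncated overlap you state is wrong, and the computation that makes the argument work is missing.

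\textbf{The truncated overlap.} Write $\lambda_1=na=c_2\ell$, $\lambda_0=n/k$, and $T=\lfloor\ell\rfloor$. The truncated diagonal term is not of the form $e^{-na}(1+na)^{\ell}$. It is
\[
 \E_{*}\bigl[L_s^2\one\{N_s\le T\}\bigr]
 =e^{\lambda_0-2\lambda_1}\sum_{j=0}^{T}\frac{(\lambda_1^2/\lambda_0)^j}{j!},
 \qquad \frac{\lambda_1}{\lambda_0}=ck^{1/\ell}.
\]
The likelihood-ratio factor $(\lambda_1/\lambda_0)^{2j}$, which your expression drops, is the whole point. Put $x=\lambda_1^2/\lambda_0=cc_2\ell k^{1/\ell}$.

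When $x\ge T$, the terms increase in $j$, so the sum is at most
\[
 (T+1)(ex/T)^T\le(T+1)(2ecc_2)^Tk^{T/\ell}.
\]
This is of order $k$, the same order as $m$, not $e^{O(\ell)}$. It becomes negligible after dividing by $m\asymp k$ only because of two choices:
\begin{enumerate}
\item the truncation level is at most $\ell$, so $k^{T/\ell}\le k$; truncating at $2\ell$, or at $\log k$ for bounded $\alpha$, gives $k^2$ or worse;
\item $c_2$ is small compared with $1/c$.
\end{enumerate}

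The complementary regime $x<T$ occurs when $\ell$ is within a constant factor of $\log k$. There the crude bound $e^T\le e^\ell\le k$ is useless. You must use $\sum_{j\le T}x^j/j!\le e^x$ and check that $e^x\le k^{1/2}$ once $cc_2e^2\le1/2$; write $\tau=\log k/\ell$ and split at $\tau=2$.

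\textbf{Remaining steps you also need.}
\begin{enumerate}
\item The off-diagonal truncated overlaps are at most one. They factor by independence, because each $L_s$ depends only on $N_s$ once the reservoir shift is placed in the reference.
\item The tail events are small under both laws:
\[
 \Pr\{\Poi(c_2\ell)>\ell\}\le(ec_2)^2,\qquad
 k\Pr\{\Poi(\lambda_0)>\ell\}\le(ec_2/c)^2,
\]
the second because $k^{1-(\lfloor\ell\rfloor+1)/\ell}\le1$.
\item The total variation between the actual uniform null and your modified reference must be bounded separately. Its $\chi^2$ is about $2na^2=o(1)$.
\end{enumerate}
Without these, the claim $\TV<1/3$ at $n\asymp\ell k^{1-1/\ell}$ is unsupported.

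\textbf{Comparison with the paper.} The paper avoids truncation entirely. It uses only $m=\lfloor e^{\ell/64}\rfloor$ candidates, all already at the collision level $q=k^{-1+1/\ell}$ under the null, and raises one to $Bq$ with fixed $B=e^{2\eps+2}$. The per-candidate likelihood ratio is then bounded, and the plain overlap identity gives $\E_0L_s^2=e^{u+v}$ with $u=n(B-1)^2q$. The shared reservoir is accounted for exactly through the off-diagonal factor $e^v$, with $v\to0$. Because $\log m\asymp\ell$, the mixture bound $(e^u-1)/m$ yields the $\ell/q$ scale directly.

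The price is that the null's power sum contains $mq^\alpha$. This is harmless because $m+1\le2e^{\alpha/64}$ is dominated by $B^\alpha$. Your construction keeps the null exactly uniform, so entropy separation is immediate. It pays with an unbounded likelihood ratio $ck^{1/\ell}$, which is why the narrow truncation window $[c_2\ell,\ell]$ must be handled as described above.
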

\begin{proof}
\noindent\textbf{Step 1: construct the null and alternatives.}
Set $m=\lfloor e^{\ell/64}\rfloor$, $q=k^{-1+1/\ell}$,
and $B=e^{2\eps+2}$. The null distribution gives probability $q$
to each of $m$ distinguished coordinates and distributes its
remaining mass uniformly among the other $k-m$ coordinates.
Alternative $s\le m$ changes only the $s$th distinguished
probability to $Bq$ and reduces the reservoir total by $(B-1)q$.
All distributions are well defined for large $k$, uniformly in
$\alpha$: $mq\le k^{1/64-1/3}\to0$ and $q\le k^{-1/3}$.

\medskip\noindent\textbf{Step 2: verify entropy separation.}
The reservoir probabilities under the null are at most $1/k$
because $q\ge1/k$. Also $q^\alpha\ge k^{1-\alpha}$ because
$\alpha\ge\ell$. Thus
\[
 P_\alpha(p^{(0)})\le(m+1)q^\alpha,
 \qquad P_\alpha(p^{(s)})\ge B^\alpha q^\alpha.
\]
Since $m+1\le2e^{\alpha/64}$, the entropy separation is at least
\[
 \frac{\alpha\log B-\log(m+1)}{\alpha-1}
 \ge\frac{\alpha(2\eps+2-1/64)-\log2}{\alpha-1}
 >2\eps.
\]
For min-entropy the null and alternative maxima are $q$ and $Bq$,
so their separation is $\log B>2\eps$; here use $\ell=\log k$.

\medskip\noindent\textbf{Step 3: show that the observation laws are close.}
Consider independent Poisson symbol counts with total mean $n$.
Relative to the null, the diagonal and off-diagonal likelihood
overlaps in Lemma~\ref{lem:poisson-overlap} are
\[
 \E_0 L_s^2=e^{u+v},\quad
 \E_0 L_sL_t=e^v\quad(s\ne t),
 \qquad
 u=n(B-1)^2q,\quad
 v=\frac{n(B-1)^2q^2}{1-mq}.
\]
The off-diagonal term accounts exactly for the shared reservoir;
it is not set to one. Therefore the uniform mixture obeys
\[
 \chi^2(\overline P,P_0)
 =e^v\frac{e^u-1}{m}+e^v-1.
\]
Put $y=\log m$. If $n\le c(1+y)/q$, then
$u\le\kappa(1+y)$, where $\kappa=c(B-1)^2$.
For $0<\kappa\le1/2$,
\[
 \frac{e^{\kappa(1+y)}-1}{e^y}
 \le\kappa e^\kappa(1+y)e^{-(1-\kappa)y}\le4\kappa.
\]
Moreover $v\le\kappa(1+y)q/(1-mq)\to0$ uniformly over the
orders, since $y\le\log k/64$ and $q\le k^{-1/3}$.
Choose $c$ sufficiently small, then $k$ sufficiently large, to
make the total variation at most $\frac12\sqrt{\chi^2}<1/8$.
For any test between a null law and an alternative mixture, the sum
of its two error probabilities is at least $1-\TV$. An entropy estimator
accurate to $\varepsilon$ would provide such a test because the target
intervals are disjoint. Thus its worst-case failure probability is at
least $(1-1/8)/2=7/16$, contradicting a uniform bound of $1/3$.

\medskip\noindent\textbf{Step 4: read off the sample scale.}
Since $1+\log\lfloor e^{\ell/64}\rfloor\asymp\ell$ for
$\ell\ge3/2$, the Poisson lower scale is
$\ell k^{1-1/\ell}$. A fixed-sample estimator using $s$ observations
can be run on a $\Poi(2s)$ sample except with probability at most
$e^{-s/4}$; this transfers the lower scale to fixed sample sizes.
More explicitly, apply the comparison at a fixed small constant
multiple of the diverging lower scale and use monotonicity to
cover all smaller fixed sample sizes.
\end{proof}

\clearpage
\section{Choosing the light-symbol polynomial and its accuracy margins}
\label{ped:upper-calibration}

The elementary count identities explain how to estimate a polynomial. We
must now choose the polynomial. This is the most technical deterministic
part of the i.i.d. proof. Its output is a polynomial whose expectation lies
strictly inside the required multiplicative accuracy interval. The
probabilistic section will show that the observed statistic stays close
to this expectation.

\paragraph{The four steps in this section.}
First, solve a two-variable optimization problem exactly. Second, use that
solution to choose the sample size and two probability cutoffs. Third,
construct polynomial approximations that behave correctly near both
cutoffs. Fourth, add the pointwise bounds over the alphabet. Keeping these
steps separate distinguishes an algebraic identity from a statistical
sample-complexity guarantee.

\paragraph{Notation and the target.}
Throughout, $\alpha>1$ is noninteger, $\beta=\alpha-1$,
$m=\lfloor\alpha\rfloor$, $\gamma=\alpha-m$, $\delta=1-\gamma$, and
$h=\gamma\delta$. Thus $\gamma$ and $\delta$ are the distances to the
adjacent integers. They are not error probabilities. Write
$P=\sum_i p_i^\alpha$ and $P_*=k^{1-\alpha}\le P$.
The full logarithmic tolerance is $E=2\varepsilon\beta$ and
$\kappa=e^{-E}$. We seek a corrected mean $\mu$ satisfying
$\kappa P<\mu<P$, with explicit positive gaps from both endpoints.
These gaps will later absorb sampling noise and pilot-selection errors.

\begin{center}
\begin{tabular}{@{}p{0.16\textwidth}p{0.77\textwidth}@{}}
\toprule
Quantity & Role in the construction\\\midrule
$a$ & Lower probability cutoff; the aggregate contribution below $a$ is paid by a deterministic correction.\\
$b$ & Upper end of the polynomial-approximation interval for light symbols.\\
$L$ & Degree scale of the polynomial for light symbols.\\
$n$ & Mean number of observations used for the main Poisson counts.\\
$E_0<E$ & Tolerance used by the deterministic approximation; $E-E_0$ is reserved for subsequent errors.\\
$X\ge1$ & Dimensionless factor multiplying the integer-type sample scale $\alpha k^{1-1/\alpha}$.\\
$u$ & A small positive regularization parameter that provides additional room for fluctuations.\\
\bottomrule
\end{tabular}
\end{center}
The symbol $Q^2$ below is a scalar appearing in the root equation.
Later $Q(p)$ denotes a polynomial; the argument $p$ distinguishes these
uses. Likewise, the profile $u_s(x)$ is a function and is distinct from
the regularization parameter $u$.

\subsection{How the polynomial becomes an observable estimator}
\label{tutorial:upper-guide}\label{tutorial:upper}

Before constructing the profiles, it is useful to see what they are
supposed to accomplish. The unknown quantity is
$P_\alpha(p)=\sum_i p_i^\alpha$. A polynomial
$Q(x)=\sum_jq_jx^j$ is useful because its expectation can be reproduced
from counts: replacing $p_i^j$ by $(N_i)_j/n^j$ gives an unbiased
estimate of $Q(p_i)$. The difficulty is therefore to approximate the
fractional power using integer powers, while keeping their statistical
fluctuations under control.

\paragraph{1. Why use the two adjacent integers?}
Write $\alpha=m+\gamma=m+1-\delta$, where $0<\gamma,\delta<1$.
Consider, first, two illustrative profiles
\[
 F_-(x)=x^\gamma R_-(x),\qquad
 F_+(x)=x^\delta R_+(x),
\]
where $R_-$ and $R_+$ are ordinary polynomials. If
$R_-(x)=\sum_{j=0}^{m-1}a_jx^j$ and
$R_+(x)=\sum_{j=0}^{d_+}b_jx^j$, then direct expansion gives
\begin{align}
 p^\alpha F_-(a/p)
   &=\sum_{j=0}^{m-1}a_j a^{\gamma+j}p^{m-j},
   \label{tutorial:upper-reverse}\\
 p^\alpha F_+(p/b)
   &=\sum_{j=0}^{d_+}b_j b^{-\delta-j}p^{m+1+j}.
   \label{tutorial:upper-forward}
\end{align}
For the first identity, the exponent of $p$ is
$\alpha-\gamma-j=m-j$; for the second it is
$\alpha+\delta+j=m+1+j$. Thus the first block supplies degrees
$1,\ldots,m$ and the second supplies degrees $m+1,m+2,\ldots$.
The fractional exponents cancel exactly. The estimator will use
ordinary integer-order falling factorials, not a hypothetical
unbiased statistic for a fractional collision.

These two identities explain the construction, but the actual
polynomial needs an additional transition between the blocks.
Moreover, the reverse profile in the full construction may have a
polynomial part of degree $m$, which also produces a constant term.
This is intentional: for $m=1$, the near-one profile gives an affine
polynomial with a nonzero constant. The statistic therefore retains
$q_0(N_i)_0=q_0$, including for unseen symbols. The known alphabet
bound makes these zero-count contributions observable.

\paragraph{2. The actual coupled polynomial.}
The expression used in the proof is the polynomial from
\eqref{eq:coupled-single-polynomial}, here called $Q_0$ to distinguish
it from its final rescaling:
\begin{equation}\label{tutorial:upper-actual-polynomial}
\begin{split}
 Q_0(p)=p^\alpha\bigl[&A'u_\gamma(a/p)\{1-J_1(p/b)\}\\
 &+B'v_{\delta,\omega_0}(p/b)
   +\omega_0(1-B')J_\delta(p/b)\bigr].
\end{split}
\end{equation}
Here $u_\gamma(x)=x^\gamma R_-(x)$ is the reverse profile;
$J_\delta(x)=x^\delta R_\delta(x)$ and
$v_{\delta,\omega_0}(x)=x^\delta V_{\delta,\omega_0}(x)$
have ordinary polynomial factors. Also $J_1$ is an ordinary
polynomial. Hence every term in
\eqref{tutorial:upper-actual-polynomial} is an ordinary polynomial,
with its value at $p=0$ defined by that polynomial extension.
Multiplication by $1-J_1$ can create degrees above $m$; the useful
separation in \eqref{tutorial:upper-reverse}--\eqref{tutorial:upper-forward}
describes the starting blocks, not a claim that the full coupled
polynomial has disjoint degree supports.

The profiles and their constants will be constructed below. Their
roles can already be stated. The reverse profile controls the lower
end of $[a,b]$. The forward profile controls its upper end. The factor
$1-J_1$ turns off the reverse contribution as $p$ increases, while the
last term connects the polynomial to the interior value
$\omega_0p^\alpha$. The coefficients $A'=Ae^{-M\gamma}$ and $B'=Be^{-M\delta}$
balance the profiles; $A,B$ are the envelope weights in
\eqref{oc:definitions}, and $M$ is a fixed damping constant.
Their detailed normalization determines the sample scale, so merely
knowing that a polynomial approximation exists would not suffice.

\paragraph{3. The algorithm uses counts, not the unknown probabilities.}
The numbers $a$ and $b$ are deterministic design parameters. They are
not thresholds at which the algorithm can inspect $p_i$.
Instead, it uses an independent pilot count to select a branch.
The following steps specify the construction for noninteger
$1<\alpha<\log k$; the integer and high-order cases were treated above.

\begin{enumerate}[leftmargin=*,label=\textbf{Step \arabic*.}]
\item \textbf{Choose deterministic parameters.}
The inputs are $k,\alpha,\varepsilon$ and an i.i.d. sample from $p$.
Use Lemma~\ref{upcal:parameters} to calculate the root $X$ and
$E_0,u,L,n,a,b$. In particular,
\[
 n=D\alpha k^{1-1/\alpha}X,\qquad
 L=\lceil\alpha(C_L+c_L\log X)\rceil,\qquad
 b=c_bL/n,\qquad
 a=\lambda\frac{bhz^{1/h}}{m^2L^2},
\]
where $G_\gamma(z)=e^{-E_0}$ and $h=\gamma\delta$.
All constants in this rule are fixed in terms of $\varepsilon$ alone.
Construct $Q_0$ above with $\omega_0=e^{-E_0/8}$, and set
\[
 s=(1+c_uu)^{-1},\qquad Q=sQ_0=\sum_{j=0}^dq_jp^j,
 \qquad \omega=s\omega_0.
\]
These operations depend on the known inputs, not on $p$.

\item \textbf{Form independent main and pilot counts.}
Use Poisson splitting to obtain independent
$N_i\sim\Poi(np_i)$ and $Y_i\sim\Poi(c_pnp_i)$.
Select the light branch using the observable indicator
\[
 I_i=\mathbf1\{Y_i\le c_pB\alpha\}.
\]
The constants $c_p$ and $B$ are part of the construction. Their
threshold is fixed by $\alpha$; it is not a test of whether the
unknown probability lies in $[a,b]$.

\item \textbf{Compute the two candidate contributions.}
For each of the $k$ symbols, calculate
\[
 \widehat Q_i=\sum_{j=0}^dq_j\frac{(N_i)_j}{n^j},\qquad
 \widehat G_i=\omega n^{-\alpha}\mathsf H_\alpha(N_i),\qquad
 \mathsf H_\alpha(j)=\frac{\Gamma(j+1)}{\Gamma(j+1-\alpha)}.
\]
Use the entire reciprocal-Gamma convention at its zeros.
The statistic $\mathsf H_\alpha$ is signed at some low counts;
it is not the nonnegative truncated Gamma statistic.
Its low-count values are included in the proof.

\item \textbf{Combine the contributions and the known affine total.}
Put $\chi=\min\{\alpha-1,1\}$, $P_*=k^{1-\alpha}$, and
$E_a=\chi d_\varepsilon C_0^m(ka)^{\alpha-1}$, and return the
power-sum statistic
\begin{equation}\label{tutorial:upper-observable}
 \widehat P=2E_aP_*+q_1+
 \sum_{i=1}^k\left[
 I_i\left(\widehat Q_i-q_1N_i/n\right)
 +(1-I_i)\left(\widehat G_i-q_1N_i/n\right)\right].
\end{equation}
For $\widehat P>0$, the entropy output is
$-\log\widehat P/(\alpha-1)-\varepsilon$; define it arbitrarily
otherwise. The proof establishes positivity on the success event.
A fixed sample pool of size $\lceil4(1+c_p)n\rceil$ implements the
Poisson splitting; the exponentially small overflow probability is
absorbed into the stated failure probability.
\end{enumerate}

\paragraph{4. Why subtract the linear term?}
Let $T_i=I_i\widehat Q_i+(1-I_i)\widehat G_i$.
After omitting the deterministic repair $2E_aP_*$,
\eqref{tutorial:upper-observable} equals
\[
 q_1+\sum_i(T_i-q_1N_i/n)
 =\sum_iT_i+q_1\left(1-\sum_iN_i/n\right).
\]
Because $\sum_i p_i=1$, the added term has mean zero. Its variance is
$q_1^2/n$, since $\sum_iN_i\sim\Poi(n)$. This is a control variate:
we replace the noisy estimate of a known linear functional by its
exact value. The proof bounds the variance of the resulting residual;
it does not assume that adding any centered term must reduce variance.

Near order one, write $\alpha=1+\beta$. For any known $t>0$,
\begin{equation}\label{tutorial:upper-nearone-identity}
 P_{1+\beta}(p)=t^\beta+
 \beta\sum_i\frac{p_i^{1+\beta}-t^\beta p_i}{\beta}.
\end{equation}
For $p_i>0$, the quotient tends to $p_i\log(p_i/t)$ as
$\beta\downarrow0$, and its value at $p_i=0$ is zero.
For the specific scale $t=b/N_{\rm ker}^2$, where $N_{\rm ker}\asymp L$
is the kernel degree parameter, the near-one construction has
$q_1=t^\beta+O_\varepsilon(\beta t^\beta)$.
Subtracting the affine part therefore exposes a residual of order
$\beta$. The detailed proof retains this factor in bias and its square
in variance, matching the interval width
$(1-e^{-2\varepsilon\beta})P=O_\varepsilon(\beta P)$.

\paragraph{5. What do the margins pay for?}
Write $P=P_\alpha(p)$ and $\kappa=e^{-2\varepsilon(\alpha-1)}$.
The analysis uses
\begin{equation}\label{tutorial:upper-margin}
 \mathcal M=E_aP_*+\kappa uP+\chi\kappa\mathcal B,
 \qquad \mathcal B=\sum_i p_i^\alpha\min(Lnp_i,1).
\end{equation}
This quantity depends on $p$ and is used only in the proof.
The baseline $E_aP_*$ absorbs the aggregate error from $p_i<a$;
the deterministic correction uses part of this allowance, while the
upper approximation leaves room for that positive correction.
The term $\kappa uP$ is an additional interior reserve supplied by the
root calibration and final rescaling. The term $\chi\kappa\mathcal B$
provides count-dependent room that controls higher-degree fluctuations
and pilot errors. The variance calculations use their sum, rather than
assign every error to exactly one term.

The proof first constructs an ideal mixed quantity in
$[\kappa P+c_\varepsilon\mathcal M,
P-c_\varepsilon\mathcal M]$. It then shows that the observed statistic
is close to this quantity: conditional polynomial fluctuations are
small relative to $\mathcal M$, genuine heavy counts concentrate,
and the total contribution of pilot mistakes is small.
These bounds imply $\kappa P\le\widehat P\le P$ with high probability.
Thus the deterministic construction, the pilot, and the noise bounds
have separate roles, and each is needed before taking the logarithm.

\subsection{Exact overlap normalization at the original accuracy}
\label{oc:section}

The purpose of the next lemma is to combine two endpoint approximations
without wasting any of the requested entropy accuracy. Think of $u$ and
$v$ temporarily as two numbers in $[0,1]$, representing the normalized
values of those approximations. Their interaction is described by the
weighted product $u^\delta v^\gamma$. We choose weights $A,B$ so that
$Au+Bv$ has maximum exactly one when that product is at most $z$.
The same weights give an explicit minimum when the product is at least
$z$. The ratio between these two extrema determines how large an
approximation interval can be tolerated. This scalar calculation alone
is not an estimator or a variance bound.

\begin{lemma}[Exact two-arm envelope]\label{oc:envelope}
Let $0<\gamma<1$, $\delta=1-\gamma$, $h=\gamma\delta$, and $0<z<1$.
Set
\begin{equation}\label{oc:definitions}
 A=\frac{1-z^{1/\gamma}}{1-z^{1/h}},\qquad
 B=\frac{1-z^{1/\delta}}{1-z^{1/h}},\qquad
 G_\gamma(z)=\frac{zA^\delta B^\gamma}
                   {\delta^\delta\gamma^\gamma}.
\end{equation}
Then
\begin{align}
 \max_{\substack{0\le u,v\le1\\u^\delta v^\gamma\le z}}
       (Au+Bv)&=1,\label{oc:maximum}\\
 \min_{\substack{0\le u,v\le1\\u^\delta v^\gamma\ge z}}
       (Au+Bv)&=G_\gamma(z).\label{oc:minimum}
\end{align}
Moreover, $G_\gamma$ is a continuous strictly increasing bijection
from $(0,1)$ onto $(0,1)$.
\end{lemma}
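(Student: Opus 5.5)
The maximum is computed by a corner argument, the minimum by a tangency (Lagrange/AM--GM) computation, and the bijection claim by a substitution that makes monotonicity visible.

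\textbf{The maximum.} The objective $Au+Bv$ is linear and the feasible set $\{u^\delta v^\gamma\le z\}\cap[0,1]^2$ is compact. Its complement in the square, $\{u^\delta v^\gamma>z\}$, is convex, because $u^\delta v^\gamma$ is concave (geometric mean). Hence the maximum of the linear function on the feasible set lies on the boundary of the square or on the curve $u^\delta v^\gamma=z$. Along the curve, $v=(z/u^\delta)^{1/\gamma}$ is convex in $u$, so $Au+Bv$ restricted to the curve is convex and its maximum is at an endpoint. The only candidates are therefore the two corners of the curve with the square, $(1,z^{1/\gamma})$ and $(z^{1/\delta},1)$, together with the edges $u=1$ and $v=1$ (where the constraint caps the other variable).

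At these corners one checks
\[
 A+Bz^{1/\gamma}=Az^{1/\delta}+B=1.
\]
This uses the identities $1/\gamma+1/\delta=1/h$ and $z^{1/\gamma}z^{1/\delta}=z^{1/h}$. For instance, $(1-z^{1/\gamma})+(1-z^{1/\delta})z^{1/\gamma}=1-z^{1/h}$. This is exactly why $A$ and $B$ are defined as in \eqref{oc:definitions}. The vertex $(1,1)$ is infeasible, so the value $1$ is attained and is the maximum.

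\textbf{The minimum.} The objective is increasing in both variables, so the minimum over $\{u^\delta v^\gamma\ge z\}$ lies on the curve $u^\delta v^\gamma=z$. Weighted AM--GM gives
\[
 Au+Bv=\delta\frac{Au}{\delta}+\gamma\frac{Bv}{\gamma}\ge(Au/\delta)^\delta(Bv/\gamma)^\gamma=G_\gamma(z),
\]
with equality when $Au/\delta=Bv/\gamma$. It remains to check that this tangency point $(u,v)$ lies in $[0,1]^2$; otherwise the minimum moves to an edge and the formula would fail.

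This feasibility check is the main obstacle. Solving gives $u=\delta G_\gamma(z)/A$, and the requirement $u\le1$ must be verified. I would try to show $G\le\min\{A/\delta,B/\gamma\}$, using the maximum just proved: the tangency point's value equals $G\le1$, since the tangent line cannot exceed the chord value. Concretely, the supporting line at the tangency point passes between the corners $(1,z^{1/\gamma})$ and $(z^{1/\delta},1)$, and convexity of the curve forces the tangency parameter between those corners.

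\textbf{The bijection.} Continuity of $G_\gamma$ is clear. For the limits, $A,B\to1$ as $z\to0$, so $G\to0$. As $z\to1$, $A\to\delta$ and $B\to\gamma$ (by L'H\^opital, e.g.\ $A\to h/\gamma=\delta$), so $G\to\delta^\delta\gamma^\gamma/(\delta^\delta\gamma^\gamma)=1$.

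For strict monotonicity, substitute $z=t^h$. Then
\[
 A=\frac{1-t^\delta}{1-t},\qquad B=\frac{1-t^\gamma}{1-t}.
\]
Differentiating $\log G=h\log t+\delta\log(1-t^\delta)+\gamma\log(1-t^\gamma)-\log(1-t)+\mathrm{const}$ and showing the derivative is positive reduces to an elementary inequality in $t\in(0,1)$. This is the routine check I would carry out last.
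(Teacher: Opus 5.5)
Your maximum argument is essentially the paper's: for each $u$ take the largest admissible $v$, use convexity of $Au+Bz^{1/\gamma}u^{-\delta/\gamma}$ on $[z^{1/\delta},1]$, and check both corner values via $1/\gamma+1/\delta=1/h$. The edges $u=0$ and $v=0$, which you omit, are dominated because $A,B>0$.

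For the minimum you use the same weighted AM--GM as the paper, but you close the feasibility check differently. The paper gets $u_*=\delta G_\gamma(z)/A<1$ and $v_*=\gamma G_\gamma(z)/B<1$ from two facts:
\begin{itemize}
\item $G_\gamma(z)<1$, which it obtains only as a by-product of the monotonicity step;
\item $A\ge\delta$ and $B\ge\gamma$, from concavity of $1-e^{-x}$.
\end{itemize}
Your first phrasing ($G\le1$, hence $G\le\min\{A/\delta,B/\gamma\}$) silently needs exactly those two inequalities. Keep your second phrasing instead. The function $\phi(u)=Au+Bz^{1/\gamma}u^{-\delta/\gamma}$ is strictly convex on $(0,\infty)$ with $\phi(z^{1/\delta})=\phi(1)=1$. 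AM--GM identifies the tangency point as its unique global minimizer, so it lies strictly between the two corners. This argument is self-contained, makes the minimum independent of the bijection, and gives $G_\gamma(z)<1$ for free.

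The genuine gap is strict monotonicity, which you defer as routine. It is the only substantive part of the lemma, and positivity of your derivative is not termwise evident. With $t=e^{-s}$ your derivative is
\[
 t\frac{d}{dt}\log G=h-\frac{\delta^2}{e^{\delta s}-1}-\frac{\gamma^2}{e^{\gamma s}-1}+\frac{1}{e^s-1}.
\]
Because $x\mapsto x/(e^x-1)$ is decreasing, the two negative terms together exceed $1/(e^s-1)$, so only the constant $h$ can compensate.

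The missing idea is the identity $x/(e^x-1)=J(x/2)-x/2$, where $J(y)=y\coth y$. Since $\tfrac12(1-\delta^2-\gamma^2)=h$, the constant cancels exactly and
\[
 t\frac{d}{dt}\log G=\frac{1}{s}\Bigl[J(s/2)-\delta J(\delta s/2)-\gamma J(\gamma s/2)\Bigr]>0.
\]
Positivity follows from $\delta+\gamma=1$ and the strict increase of $J$, which is equivalent to $\sinh 2y>2y$. This is precisely the paper's hyperbolic form of $G_\gamma$ and its derivative formula for $-\log G_\gamma(e^{-hs})$. With this step inserted, your limits $G_\gamma\to0$ as $z\to0$ and $G_\gamma\to1$ as $z\to1$, which are correct, finish the bijection claim.
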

\begin{proof}
\noindent\textbf{Step 1: compute the maximum.}
For fixed $u$, the largest admissible $v$ in
\eqref{oc:maximum} is
$\min\{1,z^{1/\gamma}u^{-\delta/\gamma}\}$.
On $0\le u\le z^{1/\delta}$ the maximum is at the right endpoint;
on $z^{1/\delta}\le u\le1$ the resulting objective is convex in $u$.
Both boundary values equal one, since
$Az^{1/\delta}+B=A+Bz^{1/\gamma}=1$.

\medskip\noindent\textbf{Step 2: show that the desired overlap level can be chosen uniquely.}
To verify monotonicity, write $z=e^{-ht}$, $t>0$, and
$F_\gamma(t)=-\log G_\gamma(e^{-ht})$.
An exact cancellation gives
\begin{equation}\label{oc:hyperbolic}
 G_\gamma(e^{-ht})=
 \frac{[\sinh(\delta t/2)/\delta]^\delta
       [\sinh(\gamma t/2)/\gamma]^\gamma}{\sinh(t/2)}.
\end{equation}
Let $J(x)=x\coth x$ for $x>0$ and $J(0)=1$.
Its derivatives satisfy
\begin{equation}\label{oc:J}
 0<J'(x)=\coth x-\frac{x}{\sinh^2x}<1,
 \qquad
 J''(x)=\frac{2(x\coth x-1)}{\sinh^2x}>0.
\end{equation}
The first strict inequalities follow respectively from
$\sinh x\cosh x>x$ and $1-e^{-2x}<2x$.
For $q=t/2$, differentiation of \eqref{oc:hyperbolic} yields
\begin{equation}\label{oc:derivative}
 F_\gamma'(t)=
 \frac{J(q)-\delta J(\delta q)-\gamma J(\gamma q)}t>0.
\end{equation}
Also $F_\gamma(0+)=0$ and $F_\gamma(\infty)=\infty$.
This proves the asserted bijection and, in particular, $G_\gamma(z)<1$.

\medskip\noindent\textbf{Step 3: compute the minimum.}
Weighted AM--GM, applied to $Au/\delta$ and $Bv/\gamma$, gives
\[
 Au+Bv\ge
 \left(\frac{Au}{\delta}\right)^\delta
 \left(\frac{Bv}{\gamma}\right)^\gamma
 \ge\frac{zA^\delta B^\gamma}{\delta^\delta\gamma^\gamma}.
\]
This is the lower bound in \eqref{oc:minimum}.
Equality occurs at
$u_* =\delta G_\gamma(z)/A$ and
$v_* =\gamma G_\gamma(z)/B$.
Concavity of $1-e^{-x}$ gives $A\ge\delta$ and $B\ge\gamma$,
so these two coordinates lie in $(0,1)$; their weighted product is $z$.
\end{proof}

\paragraph{Why the next estimate is needed.}
If $\gamma$ is very close to zero or one, then $h$ is small. A bounded
error in a logarithm can become a very large multiplicative error after
division by $h$. The next lemma therefore tracks the correction on the
scale $h$, uniformly as the order approaches an integer.

\begin{lemma}[Uniform overlap correction]\label{oc:correction}
Write
\[
 H(\gamma)=-\gamma\log\gamma-\delta\log\delta,
 \qquad \ell(t)=-\log(1-e^{-t}),
 \qquad
 D_\gamma(t)=\delta\ell(\delta t)+\gamma\ell(\gamma t)-\ell(t).
\]
For every $t>0$,
\begin{equation}\label{oc:Fidentity}
 F_\gamma(t)=ht-H(\gamma)+D_\gamma(t),
 \qquad
 2[J(t/4)-1]\le\frac{F_\gamma(t)}h\le t.
\end{equation}
For every fixed $\eta>0$, uniformly over $t\ge\eta$ and
$0<\gamma<1$,
\begin{equation}\label{oc:Destimate}
 \frac{D_\gamma(t)}h
 =\log\left(1+\frac1{ht}\right)+O_\eta(1).
\end{equation}
\end{lemma}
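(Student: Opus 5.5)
The plan is to use the substitution $z=e^{-ht}$, under which all three assertions become elementary statements about the one-variable functions $\ell$ and $J$. Nothing beyond Lemma~\ref{oc:envelope} is needed, specifically \eqref{oc:hyperbolic}, \eqref{oc:J} and \eqref{oc:derivative}.

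\textbf{The identity.} Put $z=e^{-ht}$. Then $z^{1/\gamma}=e^{-\delta t}$, $z^{1/\delta}=e^{-\gamma t}$ and $z^{1/h}=e^{-t}$. The weights in \eqref{oc:definitions} become
\[
 A=\frac{1-e^{-\delta t}}{1-e^{-t}},\qquad B=\frac{1-e^{-\gamma t}}{1-e^{-t}},
\]
so that $-\log A=\ell(\delta t)-\ell(t)$ and $-\log B=\ell(\gamma t)-\ell(t)$. Taking $-\log$ of $G_\gamma=zA^\delta B^\gamma/(\delta^\delta\gamma^\gamma)$ and using $\delta+\gamma=1$ gives
\[
 F_\gamma(t)=ht+\delta\ell(\delta t)+\gamma\ell(\gamma t)-\ell(t)+\delta\log\delta+\gamma\log\gamma ,
\]
which is the first part of \eqref{oc:Fidentity}.

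\textbf{The two-sided bound.} Here I work from \eqref{oc:derivative} together with $F_\gamma(0+)=0$, and then integrate. Put $g=J-1$. By \eqref{oc:J}, $g$ is convex, increasing, satisfies $g(0)=0$, and has $0<g'<1$. With $q=t/2$ the numerator of \eqref{oc:derivative} splits as
\[
 \delta[g(q)-g(\delta q)]+\gamma[g(q)-g(\gamma q)].
\]
\emph{Upper bound.} Since $g'<1$, the two brackets are at most $\gamma q$ and $\delta q$. Hence $tF_\gamma'\le2hq=ht$, so $F_\gamma'\le h$, and integrating gives $F_\gamma/h\le t$.

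\emph{Lower bound.} Both the formula and the statement are symmetric in $(\gamma,\delta)$, so I may assume $\delta\ge1/2$. The second bracket is nonnegative because $g$ is increasing, so I drop it. Convexity bounds the first bracket below by $\gamma q\,g'(\delta q)\ge\gamma q\,g'(t/4)$. This gives
\[
 F_\gamma'(t)\ge\tfrac{h}{2}\,g'(t/4),
\]
and integrating from $0$ yields $F_\gamma(t)/h\ge2g(t/4)=2[J(t/4)-1]$.

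\textbf{The estimate \eqref{oc:Destimate}.} This is the only delicate step, because the error term must stay $O_\eta(1)$ after division by the possibly tiny $h$. By symmetry again take $\gamma\le1/2$, so $h\asymp\gamma$ and $\delta t\ge\eta/2$. I split
\[
 D_\gamma(t)=\gamma\ell(\gamma t)-\gamma\ell(t)+\delta[\ell(\delta t)-\ell(t)].
\]
\emph{The two easy terms.} Because $t\ge\eta$, the term $\gamma\ell(t)$ is $O_\eta(\gamma)$. For the last term, the mean value theorem and $|\ell'(x)|=1/(e^x-1)$ bound $|\ell(\delta t)-\ell(t)|$ by $\gamma t/(e^{\delta t}-1)$. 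Since $\delta\ge1/2$ and $t\ge\eta$, the ratio $t/(e^{\delta t}-1)$ is $O_\eta(1)$, so this term is $O_\eta(\gamma)$ as well. Both therefore contribute $O_\eta(1)$ after dividing by $h$.

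\emph{The main term.} What remains is $\gamma\ell(\gamma t)/h=\ell(\gamma t)/\delta$. I use the uniform estimate $\ell(x)=\log(1+1/x)+O(1)$ for all $x>0$: near $0$ both sides equal $-\log x+O(1)$, and for large $x$ both sides are $O(1)$. Then I make three replacements, each costing $O_\eta(1)$:
\begin{enumerate}[label=(\roman*)]
\item Replace $1/\delta$ by $1$. This changes the main term by $(\gamma/\delta)\ell(\gamma t)$, which is $O(\gamma\log(1+1/(\gamma\eta)))$ and hence bounded uniformly in $\gamma$, since $\gamma\log(1/\gamma)$ is bounded.
\item Replace $\ell(\gamma t)$ by $\log(1+1/(\gamma t))$, using the uniform estimate above.
\item Replace $\gamma t$ by $ht=\delta\gamma t$. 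Since $\delta\in[1/2,1]$, this changes the logarithm by at most $\log 2$.
\end{enumerate}
Collecting the three terms proves \eqref{oc:Destimate}.

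The main point of care is uniformity as $\gamma\to0$ or $\gamma\to1$: every error term must carry an explicit factor $\gamma$, or be of the form $\gamma\log(1/\gamma)$, before division by $h$. The reduction to $\gamma\le1/2$ by symmetry is what makes this bookkeeping possible.
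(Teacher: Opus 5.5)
Your proposal is correct and follows essentially the same route as the paper. You obtain the identity by the substitution $z=e^{-ht}$, and the bounds $\tfrac12 J'(t/4)\le F_\gamma'(t)/h\le 1$ from the numerator of the derivative formula (your convexity argument for $g=J-1$, applied to the larger of $\gamma,\delta$, is the paper's integral-of-$J'$ argument). For $\gamma\le 1/2$ you use the same decomposition $D_\gamma/h=[\ell(\delta t)-\ell(t)]/\gamma+[\ell(\gamma t)-\ell(t)]/\delta$, handled by the same four ingredients:
\begin{itemize}
\item the mean value theorem;
\item boundedness of $\gamma\log(1+c/\gamma)$;
\item the uniform comparison $\ell(x)=\log(1+1/x)+O(1)$;
\item the factor-two change from $\gamma t$ to $ht$.
\end{itemize}
The only difference is cosmetic: where you write $O_\eta(1)$, the paper records explicit constants such as $\log 2$ and $2\ell(\eta)+\log(1+2/\eta)$.
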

\begin{proof}
The identity follows directly from \eqref{oc:definitions}.
The numerator in \eqref{oc:derivative} equals
\[
 \delta\int_{\delta q}^q J'(s)\,ds+
 \gamma\int_{\gamma q}^q J'(s)\,ds.
\]
It is at most $2hq$ by \eqref{oc:J}.
Choosing the larger of $\delta,\gamma$, its corresponding integral is
at least $hqJ'(q/2)$. Thus
$J'(t/4)/2\le F_\gamma'(t)/h\le1$;
integration proves \eqref{oc:Fidentity}.

Let $a=\min\{\gamma,\delta\}$ and $b=1-a\ge1/2$.
The exact identity
\begin{equation}\label{oc:Ddifference}
 \frac{D_\gamma(t)}h=
 \frac{\ell(bt)-\ell(t)}a+
 \frac{\ell(at)-\ell(t)}b
\end{equation}
has first term between zero and
$t/(e^{bt}-1)\le1/b\le2$.
The second term differs from $\ell(at)$ by
$(a/b)\ell(at)-(1/b)\ell(t)$.
For $t\ge\eta$, the latter difference is bounded in absolute value by
$2\ell(\eta)+\log(1+2/\eta)$: use
$\ell(x)\le\log(1+1/x)$ and the increasing function
$a\mapsto a\log(1+c/a)$.
Finally,
\[
 \left|\ell(x)-\log(1+1/x)\right|\le\log2,
 \qquad
 \left|\log(1+1/(at))-\log(1+1/(ht))\right|\le\log2.
\]
The first bound follows from
$x/(1+x)\le1-e^{-x}\le\min\{x,1\}$, and the second from $h=ab$.
These estimates prove \eqref{oc:Destimate}.
\end{proof}

\paragraph{From overlap to an allowable interval size.}
We choose $z_*$ so that the minimum of the linear combination equals
$e^{-E}$. The relevant interval length will involve $z_*^{-1/h}$.
The following calculation retains both its exponential factor and its
prefactor; dropping the prefactor would lose uniformity near order one.

\begin{theorem}[Capacity with the exact accuracy]\label{oc:capacity}
Fix $\varepsilon>0$. Let $\alpha>1$ be noninteger, and define
\[
 \beta=\alpha-1,\quad
 \gamma=\alpha-\lfloor\alpha\rfloor,\quad
 \delta=1-\gamma,\quad h=\gamma\delta,\quad
 E=2\varepsilon\beta,\quad
 \Psi=\gamma^{1+1/\delta}\delta^{1+1/\gamma}.
\]
The unique solution $z_*\in(0,1)$ of $G_\gamma(z_*)=e^{-E}$ satisfies
\begin{equation}\label{oc:capacity-evaluation}
 z_*^{-1/h}
 =\Theta_\varepsilon\!\left(
 \frac\beta\alpha
 (\gamma^\gamma\delta^\delta)^{-1/h}e^{E/h}\right)
 =\Theta_\varepsilon\!\left(
 \frac{\beta h}{\alpha\Psi}e^{E/h}\right).
\end{equation}
All constants are uniform over noninteger $\alpha>1$.
\end{theorem}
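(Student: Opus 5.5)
The plan is to pass to the hyperbolic parametrization of Lemma~\ref{oc:envelope} and read off $z_*$ from the additive decomposition in Lemma~\ref{oc:correction}.

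\textbf{Reduction to a scalar equation.} Write $z=e^{-ht}$ with $t>0$. Then $G_\gamma(z)=e^{-F_\gamma(t)}$, and by Lemma~\ref{oc:envelope} the map $t\mapsto F_\gamma(t)$ is a strictly increasing bijection of $(0,\infty)$. Hence $z_*=e^{-ht_*}$, where $t_*$ is the unique solution of $F_\gamma(t_*)=E$, and $z_*^{-1/h}=e^{t_*}$. The identity \eqref{oc:Fidentity} gives
\[
 t_*=\frac{E+H(\gamma)-D_\gamma(t_*)}{h},
 \qquad\text{so}\qquad
 e^{t_*}=e^{E/h}\,e^{H(\gamma)/h}\,e^{-D_\gamma(t_*)/h}.
\]
Since $e^{H(\gamma)/h}=(\gamma^\gamma\delta^\delta)^{-1/h}=\gamma^{-1/\delta}\delta^{-1/\gamma}=h/\Psi$, the second equality in \eqref{oc:capacity-evaluation} is pure algebra. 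It therefore remains to prove
\[
 e^{-D_\gamma(t_*)/h}=\Theta_\varepsilon(\beta/\alpha).
\]

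\textbf{A uniform lower bound on $t_*$.} To invoke \eqref{oc:Destimate} we need $t_*\ge\eta$ for some $\eta=\eta(\varepsilon)$. The upper bound $F_\gamma/h\le t$ gives $t_*\ge E/h=2\varepsilon\beta/h$. If $1<\alpha<2$, then $\beta=\gamma$ and $h=\gamma\delta$, so $E/h=2\varepsilon/\delta\ge2\varepsilon$. If $\alpha>2$, then $\beta\ge1$ and $h\le1/4$, so $E/h\ge8\varepsilon$. Thus $\eta=2\varepsilon$ works uniformly. Now \eqref{oc:Destimate} yields
\[
 e^{-D_\gamma(t_*)/h}=\Theta_\varepsilon\!\left(\frac{ht_*}{1+ht_*}\right)=\Theta_\varepsilon\bigl(\min\{ht_*,1\}\bigr).
\]

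\textbf{Comparing $\min\{ht_*,1\}$ with $\beta/\alpha$.} Note that $\beta/\alpha\asymp\min\{\beta,1\}$, so it suffices to show $\min\{ht_*,1\}\asymp_\varepsilon\min\{\beta,1\}$.

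For the lower bound, $ht_*\ge E=2\varepsilon\beta$ gives $\min\{ht_*,1\}\ge\min\{2\varepsilon,1\}\min\{\beta,1\}$.

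For the upper bound, only the case $\beta<1$ matters, i.e.\ $1<\alpha<2$ with $h=\beta\delta\le\beta$. Use the lower bound $F_\gamma/h\ge2[J(t/4)-1]$ from \eqref{oc:Fidentity}, together with the elementary inequality $J(x)-1\ge c\min\{x,x^2\}$. This follows from $x\coth x-1\sim x^2/3$ near zero and $\sim x$ at infinity, with $J$ increasing.
\begin{itemize}
\item If $t_*\ge4$, then $E\ge c'ht_*$, so $ht_*\le C\varepsilon\beta$.
\item If $t_*<4$, then $ht_*<4h\le4\beta$.
\end{itemize}
In both cases $\min\{ht_*,1\}\le C_\varepsilon\min\{\beta,1\}$. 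Combining the three displays proves the first equality of \eqref{oc:capacity-evaluation}, with all constants depending only on $\varepsilon$.

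\textbf{Main obstacle.} The delicate point is uniformity as $\gamma\to0$ or $\gamma\to1$, where $h\to0$ and the prefactor $e^{-D/h}$ could a priori degenerate. This is handled entirely by \eqref{oc:Destimate}, whose $O_\eta(1)$ is uniform in $\gamma$. That is why the first step is to secure a $\gamma$-independent floor $t_*\ge2\varepsilon$. The remaining work is to check that $ht_*$ never exceeds $O_\varepsilon(\beta)$ near order one, which is what retains the prefactor $\beta/\alpha$ rather than a constant.
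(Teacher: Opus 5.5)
Your proposal is correct and follows the paper's proof essentially step for step: the same substitution $z=e^{-ht}$, the same uniform floor $t_*\ge2\varepsilon$, and the same use of \eqref{oc:Fidentity} and \eqref{oc:Destimate}. The only difference is minor: the paper obtains the two-sided comparison $c_\varepsilon ht_*\le E\le ht_*$ for every order at once, using convexity of $J$ (the secant slope $[J(t/4)-1]/t$ is increasing) together with $t_*\ge2\varepsilon$, whereas you compare the capped quantities $\min\{ht_*,1\}$ and $\min\{\beta,1\}$ through $J(x)-1\ge c\min\{x,x^2\}$ and a case split, which works equally well.
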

\begin{proof}
Write $z_*=e^{-ht_*}$, so $F_\gamma(t_*)=E$.
Since $\beta\ge\gamma$,
$E/h=2\varepsilon\beta/(\gamma\delta)\ge2\varepsilon$.
Equation \eqref{oc:Fidentity} therefore gives $t_*\ge2\varepsilon$.
Convexity of $J$ implies that $[J(t/4)-1]/t$ is increasing. Consequently,
with $c_\varepsilon=[J(\varepsilon/2)-1]/\varepsilon>0$,
\[
 c_\varepsilon ht_*\le E\le ht_*.
\]
Equation \eqref{oc:Destimate}, with $\eta=2\varepsilon$, now gives
\[
 D_\gamma(t_*)/h=\log(1+1/E)+O_\varepsilon(1),
 \qquad
 e^{-D_\gamma(t_*)/h}
 =\Theta_\varepsilon\!\left(\frac{E}{1+E}\right)
 =\Theta_\varepsilon\!\left(\frac\beta\alpha\right).
\]
Exponentiating
$t_*=E/h+H(\gamma)/h-D_\gamma(t_*)/h$
proves the first equality in \eqref{oc:capacity-evaluation}.
The second uses
$(\gamma^\gamma\delta^\delta)^{1/h}=\Psi/h$.
\end{proof}

\paragraph{How the scalar calculation predicts the sample scale.}
At this point we have solved only a two-number optimization problem.
The following calculation explains which cutoffs a polynomial should
achieve. It is conditional: the existence and the shape of that polynomial
are proved later in the section.

\paragraph{Conditional polynomial-capacity calibration.}
The algebraic consequence for a common polynomial construction is as
follows. Suppose its lower cutoff and endpoint capacity obey
\begin{equation}\label{oc:conditional-capacity}
 a\asymp_\varepsilon\frac\beta{\alpha k},\qquad
 \frac ba\asymp_\varepsilon
 \frac{m^2L^2}{h z_*^{1/h}},\qquad m\asymp\alpha.
\end{equation}
Here $m$ and $L$ denote the two endpoint degree scales; these relations
are assumptions about the construction, not conclusions of the scalar
lemmas. Then \eqref{oc:capacity-evaluation} gives, with both factors
$\beta/\alpha$ retained,
\begin{align}
 b&\asymp_\varepsilon
 \underbrace{\frac\beta{\alpha k}}_{\text{lower cutoff}}
 \frac{m^2L^2}h
 \underbrace{\frac{\beta h}{\alpha\Psi}e^{E/h}}_
             {\text{overlap normalization}}\notag\\
 &=\frac{m^2}{\alpha^2}\frac{\beta^2L^2}{k\Psi}e^{E/h}
 \asymp_\varepsilon\frac{\beta^2L^2}{k\Psi}e^{E/h}.
 \label{oc:b-scale}
\end{align}
If also $b\asymp L/n$, the resulting deterministic balance is
\begin{equation}\label{oc:n-scale}
 n\asymp_\varepsilon\frac{k\Psi}{\beta^2L}e^{-E/h}.
\end{equation}
Putting $K=\log k$, $n=\alpha k^{1-1/\alpha}x$, and
$L\asymp\alpha(1+\log x)$, with $x\ge1$, rewrites this balance as
\begin{equation}\label{oc:root-balance}
 x(1+\log x)\asymp_\varepsilon Q^2,
 \qquad
 Q^2=\frac{\Psi}{\alpha^2\beta^2}
       \exp\left(\frac K\alpha-\frac Eh\right).
\end{equation}
Thus the scalar calculation produces the original-accuracy exponential
factor and the two near-one normalization factors in the root.
It does not prove \eqref{oc:conditional-capacity} for actual polynomial
profiles, nor supply the stochastic regularizer by itself. Those tasks are
verified in Sections~\ref{upcal:section}--\ref{cu:section}.

\subsection{One parameter rule for the regularized upper bound}
\label{upcal:section}

The deterministic balance above does not yet account for random noise.
The root definition below includes a further term that supplies room for
that noise. It specifies one collection of parameters for every
noninteger order. The proof now checks that these numerical choices have
the properties required by the polynomial construction and its later
variance analysis. In particular, increasing the final sample multiplier
$D$ makes the lower cutoff $a$ sufficiently small without changing the
requested accuracy $\varepsilon$.

\begin{lemma}[Regularized-root parameter rule]\label{upcal:parameters}
Fix $0<\varepsilon<1/2$. For a noninteger $1<\alpha<\log k$, set
\begin{gather*}
 \beta=\alpha-1,\quad m=\lfloor\alpha\rfloor,\quad
 \gamma=\alpha-m,\quad\delta=1-\gamma,\quad h=\gamma\delta,\quad
 E=2\varepsilon\beta,\\
 \Psi=\gamma^{1+1/\delta}\delta^{1+1/\gamma},\qquad
 Q^2=\frac{\Psi}{\alpha^2\beta^2}
       \exp\{(\log k)/\alpha-E/h\},\\
 X=\inf\left\{x\ge1:\frac{\alpha h}{\beta}x^\alpha
       \left[1+\log\frac{x(1+\log x)}{Q^2}\right]_+\ge1\right\},
 \qquad v=\frac\beta\alpha X^{-\alpha}.
\end{gather*}
The constants in the following definitions are design parameters. They
will be fixed in an order stated below, using $\varepsilon$ only; they do
not depend on $k$ or $\alpha$. Fix positive constants
$C_h,c_b,\lambda,c_L,u_0$, $C_L\ge1$, and
$0<c_\omega<1$. For $D\ge1$, define
\begin{gather}
 \Delta=\min\{v+C_hh,E/2\},\qquad E_0=E-\Delta,
 \qquad u=\min\{v+h,u_0\},\qquad \omega=e^{-c_\omega E_0},
 \label{upcal:width}\\
 n=D\alpha k^{1-1/\alpha}X,\qquad
 L=\lceil\alpha(C_L+c_L\log X)\rceil,\qquad
 b=\frac{c_bL}{n},\qquad
 a=\lambda\frac{b h z^{1/h}}{m^2L^2},
 \label{upcal:cutoffs}
\end{gather}
where $z\in(0,1)$ is the unique solution of $G_\gamma(z)=e^{-E_0}$.
Put
\begin{equation}\label{upcal:slack}
 \sigma=1+\log\frac{X(1+\log X)}{Q^2}-\frac vh,
 \qquad
 w=\sigma+\frac{(v-E/2)_+}{h}.
\end{equation}
Then $\sigma\ge0$, and $\sigma=0$ if $X>1$.
There are constants $r_-,r_+>0$, independent of $k,\alpha,D$ and
depending only on the other fixed constants and $\varepsilon$, such that
\begin{equation}\label{upcal:actual-cutoff}
 \frac{r_-}{D}e^{-w}\le
 \frac{a}{\beta/(\alpha k)}
 \le\frac{r_+}{D}e^{-w}.
\end{equation}
In addition,
\begin{equation}\label{upcal:convolution}
 \frac{am^2L^2}{b}=\lambda h z^{1/h}\le\lambda/4.
\end{equation}
Thus any fixed requirement $a\le c_a\beta/(\alpha k)$ holds by
increasing $D$. Fixed choices of $C_L$ and then $D$ also ensure
$a\le b\le1$ and $nb\ge2B\alpha$ for any prescribed fixed pilot
constant $B$.
\end{lemma}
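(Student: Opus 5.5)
The plan is to reduce everything to the capacity evaluation of Theorem~\ref{oc:capacity}, applied at the reduced tolerance $E_0$. The remaining work is to read off how the ratio $a/(\beta/(\alpha k))$ depends on the root $X$.

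\textbf{Step 1: the slack $\sigma$.} Write $\Phi(x)=\frac{\alpha h}{\beta}x^\alpha[1+\log\frac{x(1+\log x)}{Q^2}]_+$, which is continuous and nondecreasing. If $X>1$, then $\Phi(X)=1$ by continuity and the infimum definition. The bracket is then positive and equals $\frac{\beta}{\alpha h}X^{-\alpha}=v/h$, so $\sigma=0$. If $X=1$, then $\Phi(1)\ge1$, which forces the bracket to be positive and at least $v/h$, so $\sigma\ge0$.

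\textbf{Step 2: capacity at $E_0$.} Since $\Delta\le E/2$, we have $E_0=2\varepsilon'\beta$ with $\varepsilon'\in[\varepsilon/2,\varepsilon]$. Theorem~\ref{oc:capacity} therefore applies with constants depending only on $\varepsilon$; I would re-check that its proof is uniform over $\varepsilon'$ in this compact range. It gives
\[
 hz^{1/h}=\Theta_\varepsilon\!\left(\tfrac{\alpha\Psi}{\beta}e^{-E_0/h}\right).
\]
Substituting $b=c_bL/n$ and $n=D\alpha k^{1-1/\alpha}X$ gives
\[
 \frac{a}{\beta/(\alpha k)}=\frac{\lambda c_b\,hz^{1/h}\,k^{1/\alpha}}{D\,X\,m^2L\,\beta}
 \asymp_\varepsilon\frac{\Psi k^{1/\alpha}e^{-E/h}e^{\Delta/h}}{D\,\alpha^2\beta^2\,X(1+\log X)}.
\]
Here I use $m\asymp\alpha$, which holds since $\alpha>1$. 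I also use $L\asymp\alpha(1+\log X)$, which holds since $C_L\ge1$, $\log X\ge0$ and $\alpha(C_L+c_L\log X)\ge1$. The right side equals $\frac{Q^2}{DX(1+\log X)}e^{\Delta/h}=D^{-1}\exp\{1-\sigma+(\Delta-v)/h\}$, using the definition of $\sigma$.

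\textbf{Step 3: the case analysis on $\Delta$.} If $\Delta=v+C_hh$, then $v\le E/2$, so $w=\sigma$ and the exponent is $1+C_h-w$. If $\Delta=E/2<v+C_hh$, there are two subcases. When $v\le E/2$, the term $(E/2-v)/h$ lies in $[0,C_h)$ and again $w=\sigma$. When $v>E/2$, we have $(\Delta-v)/h=-(v-E/2)_+/h$, so the exponent is $1-w$. In all cases the exponent equals $-w$ plus a quantity in $[1-C_h,1+C_h]$. This yields \eqref{upcal:actual-cutoff}, with $r_\pm$ absorbing the $\Theta_\varepsilon$ constants, $\lambda$, $c_b$, $c_L$, $C_L$ and $e^{1\pm C_h}$.

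\textbf{Step 4: the remaining claims.} The identity \eqref{upcal:convolution} is the definition of $a$. Its bound follows from $h\le1/4$ and $z<1$. Then $a\le\lambda b/(4m^2L^2)\le b$ once $\lambda\le4$. Next, $nb=c_bL\ge c_bC_L\alpha\ge2B\alpha$ after fixing $C_L\ge 2B/c_b$. For $b\le1$, note that $L/n\lesssim(1+\log X)/(DXk^{1-1/\alpha})$ and $(1+\log X)/X\le1$. Since $\alpha<\log k$, we have $k^{1-1/\alpha}\ge k/e$, so $b\le1$ for large $k$ and $D\ge1$, with constants fixed in the order $C_L$, then $D$.

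\textbf{Main obstacle.} The delicate point is Step~2: the $\Theta_\varepsilon$ in Theorem~\ref{oc:capacity} must hold uniformly when $E$ is replaced by $E_0$. In particular the prefactor $\beta/\alpha$, coming from $E_0/(1+E_0)$, must keep the same order, which it does because $E_0\asymp E$. No $e^{O(E/h)}$ discrepancy may be introduced. This is exactly why the exponent $e^{-E_0/h}$ is kept exact and only $\Delta/h$ is tracked through the case analysis.
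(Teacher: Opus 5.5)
Your proposal follows the paper's proof step for step: the slack identity from the root, Theorem~\ref{oc:capacity} applied at $E_0$ with accuracy in $[\varepsilon/2,\varepsilon]$, the substitution leading to $D^{-1}\exp\{1-\sigma+(\Delta-v)/h\}$, and your case split, which is the paper's identity $v-\Delta=\max\{-C_hh,v-E/2\}$. Steps 1--3 are correct. You should add the one-line remark $w\ge\sigma\ge0$. It is what turns the upper bound in \eqref{upcal:actual-cutoff} into ``$a\le c_a\beta/(\alpha k)$ by increasing $D$''.

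The genuine error is in Step 4, in the argument for $b\le1$. For $\alpha<\log k$ one has $(\log k)/\alpha>1$, so
\[
 k^{1-1/\alpha}=k\,e^{-(\log k)/\alpha}<k/e .
\]
The inequality you invoke is therefore reversed. Near $\alpha=1$ the factor $k^{1-1/\alpha}$ can even be close to $1$, so the ``large $k$'' conclusion does not follow from your reasoning. The repair is the paper's, and it is why the lemma says ``$C_L$ and then $D$''. Use only $k^{1-1/\alpha}\ge1$, $\lceil\alpha(C_L+c_L\log X)\rceil\le\alpha(C_L+1+c_L\log X)$, and $(1+\log X)/X\le1$. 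These give $b\le c_b(C_L+1+c_L)/D$, and then you take $D\ge c_b(C_L+1+c_L)$.

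A similar small misstep occurs for $a\le b$. The lemma fixes $\lambda$ arbitrarily and attributes this inequality to the choice of $C_L$, so you should not impose $\lambda\le4$. Since $m^2L^2\ge L^2\ge C_L^2$, your bound $a\le\lambda b/(4m^2L^2)$ already gives $a\le b$ once $C_L\ge\sqrt{\lambda}/2$.
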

\begin{proof}
\noindent\textbf{Step 1: extract the identity supplied by the root.}
The root-defining expression is continuous and strictly increasing
once its positive part is nonzero. Hence the inequality at its infimum
is an equality unless $X=1$. This proves the assertions about $\sigma$
and the exact identity
\begin{equation}\label{upcal:root-identity}
 \frac{Q^2}{X}=(1+\log X)e^{1-v/h-\sigma}.
\end{equation}
\medskip\noindent\textbf{Step 2: evaluate the actual lower cutoff.}
Since $E/2\le E_0\le E$, Theorem~\ref{oc:capacity} applies with
accuracy in $[\varepsilon/2,\varepsilon]$, uniformly on that compact
interval. It gives
\[
 z^{1/h}\asymp_\varepsilon
       \frac{\alpha\Psi}{\beta h}e^{-E_0/h}.
\]
Substitution into \eqref{upcal:cutoffs}, followed by
\eqref{upcal:root-identity}, yields
\begin{equation}\label{upcal:cutoff-arithmetic}
 \frac{a}{\beta/(\alpha k)}
 \asymp_\varepsilon
 \frac{\lambda c_b}{D}
 \frac{\alpha^3(1+\log X)}{m^2L}
 e^{1-\sigma-(v-\Delta)/h}.
\end{equation}
The degree factor is between
$1/\max(C_L+1,c_L)$ and $4/\min(C_L,c_L)$.
Also $v-\Delta=\max\{-C_hh,v-E/2\}$, so
\[
 e^{-w}\le e^{-\sigma-(v-\Delta)/h}\le e^{C_h}e^{-w}.
\]
These bounds prove \eqref{upcal:actual-cutoff}.
\medskip\noindent\textbf{Step 3: verify the remaining scale restrictions.}
Equation \eqref{upcal:convolution} is immediate from the definition.
Choose $C_L$ so that $\lambda/(4m^2L^2)\le1$ and
$c_bC_L\ge2B$. Finally,
\[
 b\le\frac{c_b}{D}
       \frac{C_L+1+c_L\log X}{X}
 \le\frac{c_b(C_L+1+c_L)}D,
\]
which proves the remaining restrictions after increasing $D$.
\end{proof}

\paragraph{A uniform lower bound for the error budget.}
The actual value of $a$ can be smaller than its nominal scale. A later
variance calculation is easier to express using a baseline depending
only on $D$ and the order. The next lemma shows that any discrepancy
between that nominal baseline and the actual one is paid by the
regularization margin $\kappa uP$. It also checks that the reserved
logarithmic width produces a multiplicative gap of that size.

\begin{lemma}[Actual baseline and the reserved width]
\label{upcal:baseline}
Use Lemma~\ref{upcal:parameters} with $\alpha\ge3/2$.
Let $C_0\ge1$, $0<d_\varepsilon\le1$, and define
\begin{gather*}
 \kappa=e^{-E},\qquad
 u_* =\min\{\varepsilon/2,1/3,u_0\},\qquad
 \rho_-=r_-/3,\quad\rho_+=r_+,\\
 E_{\rm base}=\min\{\beta,1\}\,d_\varepsilon C_0^m
                      (\rho_-/D)^\beta,\qquad
 E_{\rm actual}=\min\{\beta,1\}\,d_\varepsilon C_0^m(ka)^\beta.
\end{gather*}
If $D\ge\rho_-C_0^2e^{2\varepsilon}$, then
\begin{equation}\label{upcal:baseline-transfer}
 E_{\rm base}\le E_{\rm actual}+\kappa u/u_*.
\end{equation}
Consequently, for $P_*=k^{1-\alpha}$, $P=\sum_i p_i^\alpha$,
and any $\mathcal B\ge0$,
\begin{align}
 E_{\rm base}P_*+\kappa uP+\kappa\mathcal B
 \le(1+1/u_*)
       \{E_{\rm actual}P_*+\kappa uP+\kappa\mathcal B\}.
 \label{upcal:margin-transfer}
\end{align}
Furthermore, for a fixed $c>0$ depending only on the stated constants,
\begin{equation}\label{upcal:reserve}
 \Delta\ge cu,\qquad e^{-E_0}-e^{-E}\ge c\kappa u.
\end{equation}
\end{lemma}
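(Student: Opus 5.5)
The plan is to derive the three assertions in order from the two-sided cutoff estimate \eqref{upcal:actual-cutoff} of Lemma~\ref{upcal:parameters}. Two elementary facts about $\alpha\ge3/2$ do most of the work: $\beta/\alpha\ge1/3$, and $m\le\alpha=\beta+1$. The key preliminary claim is the unconditional bound $E_{\rm base}\le\kappa$. For $\beta\ge1$ use $C_0^m\le C_0^{1+\beta}$ and $D\ge\rho_-C_0^2e^{2\varepsilon}$. These give $C_0^m(\rho_-/D)^\beta\le C_0^{1+\beta}C_0^{-2\beta}e^{-2\varepsilon\beta}=C_0^{1-\beta}\kappa\le\kappa$. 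For $1/2\le\beta<1$ we have $m=1$, so the same substitution gives $C_0^{1-2\beta}\kappa\le\kappa$. In both cases $\min\{\beta,1\}d_\varepsilon\le1$.

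For \eqref{upcal:baseline-transfer} I would split into two cases according to $u$. \emph{Case $u\ge u_*$.} Then $\kappa u/u_*\ge\kappa\ge E_{\rm base}$ by the claim above. \emph{Case $u<u_*$.} Then $u=v+h<u_*\le\varepsilon/2$, so $v<\varepsilon/2\le\varepsilon\beta=E/2$. Hence $(v-E/2)_+=0$ and $w=\sigma$. If $X=1$, then $v=\beta/\alpha\ge1/3\ge u_*$, which is impossible. So $X>1$, and Lemma~\ref{upcal:parameters} gives $\sigma=0$, i.e.\ $w=0$. Now the lower half of \eqref{upcal:actual-cutoff} yields $ka\ge(r_-/D)\beta/\alpha$. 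Therefore
\[
 (ka)^\beta\ge\left(\frac{\rho_-}{D}\right)^\beta\left(\frac{3\beta}{\alpha}\right)^\beta\ge\left(\frac{\rho_-}{D}\right)^\beta,
\]
so $E_{\rm actual}\ge E_{\rm base}$. This case analysis, specifically excluding $X=1$ when $u$ is small, is the step I expect to need the most care. The point is that $w$ must vanish exactly in the regime where the regularization term cannot pay.

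The margin inequality \eqref{upcal:margin-transfer} should then follow at once. Multiply \eqref{upcal:baseline-transfer} by $P_*$ and use $P_*\le P$ from \eqref{intro:range}. This gives $E_{\rm base}P_*\le E_{\rm actual}P_*+\kappa uP/u_*$. Adding $\kappa uP+\kappa\mathcal B$ to both sides and bounding crudely by $(1+1/u_*)$ times the bracket gives the claim. For \eqref{upcal:reserve}, first bound the two arms of $\Delta$ separately. One has $v+C_hh\ge\min\{1,C_h\}(v+h)\ge\min\{1,C_h\}u$. Also $E/2=\varepsilon\beta\ge\varepsilon/2\ge(\varepsilon/(2u_0))u$, because $u\le u_0$. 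Hence $\Delta\ge cu$ with $c=\min\{1,C_h,\varepsilon/(2u_0)\}$. Finally, $e^{-E_0}-e^{-E}=\kappa(e^{\Delta}-1)\ge\kappa\Delta\ge c\kappa u$.
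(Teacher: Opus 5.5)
Your proposal is correct and follows essentially the same route as the paper. Your case split on $u\ge u_*$ versus $u<u_*$ is the contrapositive of the paper's indicator bound $\mathbf 1\{w>0\}\le u/u_*$. The preliminary bound $E_{\rm base}\le\kappa$ (which the paper gets in one line from $m\le 2\beta$, equivalent to your two subcases), the use of $P_*\le P$, and the estimate $\Delta\ge cu$ combined with $e^{\Delta}-1\ge\Delta$ all match the paper's argument.
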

\begin{proof}
Here $\beta\ge1/2$, $\beta/\alpha\ge1/3$, and $m\le2\beta$.
Thus $ka\le\rho_+/D$, and $w=0$ implies $ka\ge\rho_-/D$.
The condition on $D$ gives $E_{\rm base}\le\kappa$.
If $w>0$, then either $\sigma>0$, which forces $X=1$ and
$v\ge1/3$, or $v>E/2=\varepsilon\beta\ge\varepsilon/2$.
Equivalently, the single bound
\[
 \mathbf1\{w>0\}\le u/u_*
\]
holds. Combining it with the cutoff lower bound when $w=0$ proves
\eqref{upcal:baseline-transfer}; using $P_*\le P$ proves
\eqref{upcal:margin-transfer}.
Finally,
\[
 \Delta\ge
 \min\{\min(1,C_h)(v+h),\varepsilon/2\}\ge cu.
\]
Since $e^{-E_0}-e^{-E}=\kappa(e^\Delta-1)$, this proves
\eqref{upcal:reserve}.
\end{proof}

\paragraph{Choice of the baseline constant and count multiplier.}
The factor $\min\{\beta,1\}$ is between $1/2$ and one in
Lemma~\ref{upcal:baseline}; it also makes the actual baseline vanish
at the order-one endpoint. The prefactor $d_\varepsilon$ can be fixed
after $C_0$, for example at a constant multiple of
$c_\varepsilon/C_0^2$, to fit the deterministic baseline bound.
This changes only fixed factors in the noise normalization.

There is an important arithmetic point if $D$ is chosen last.
For a low degree $1\le j\le m$, write $\theta=j/\alpha$ and
$\eta=\alpha-j$. A coefficient bound with $ka\le\rho_+/D$,
combined with the factorial second-moment estimate and normalization
by $(E_{\rm base}+\kappa uP/P_*)^2$, gives the exact power
\begin{equation}\label{upcal:D-power}
 D^{-2\eta-j+\beta(2-\theta)}
       =D^{-2+j/\alpha}\le D^{-1}.
\end{equation}
Here $u\ge\min\{1/3,u_0\}X^{-\alpha}$ absorbs the corresponding
$X$ powers. The remaining baseline factors are bounded by
$4d_\varepsilon^{-2}C_0^{-m}e^{C\alpha}$ after the factorial
coefficient sums are evaluated. For the high-degree block the analogous
factor is
\[
 \frac{D^{-\alpha}}{E_{\rm base}}
 \le2d_\varepsilon^{-1}D^{-1}C_0^{-m}\rho_-^{-\beta}.
\]
Thus one first fixes profile and pilot constants, chooses $C_L$ and
$c_L$ for the degree bounds, determines $\rho_-,\rho_+$, and chooses
$C_0$ to absorb the exponential coefficient costs. One then fixes
$d_\varepsilon$ and chooses $D$ last. Substituting $c_a\asymp D^{-1}$
into a cruder bound and subsequently choosing $D$ in terms of $c_a$
would lose this arithmetic. The displayed calculations are normalization identities. The coefficient
and conditional-variance estimates that use them are proved in
Sections~\ref{sec:coupled-bias-bound} and~\ref{sec:conditional-count}.

\paragraph{Why order one requires a separate calculation.}
As $\alpha\downarrow1$, the target power sum approaches the known value
$\sum_i p_i=1$, while the admissible multiplicative error is proportional
to $\beta=\alpha-1$. Bounds that merely stay finite are insufficient:
the relevant approximation errors must vanish at least linearly in
$\beta$. The next lemma records precisely that cancellation for the
same parameter rule.

\begin{lemma}[Order-one calibration of the same parameters]
\label{upcal:near-one}
Use Lemma~\ref{upcal:parameters} with $1<\alpha\le3/2$.
Let $N\asymp L$, with fixed comparison constants, and set
$t=b/N^2$. For the degree-one reverse profile, let
$\tau=c(1+\beta)a/\beta$ with fixed $c>0$.
Then
\begin{equation}\label{upcal:near-one-scales}
 \frac{k}{nL}\le C_\varepsilon/D,\qquad
 kt\le C_\varepsilon/D,\qquad
 c_\varepsilon\le\tau/t\le C_\varepsilon.
\end{equation}
The exact amplitudes $A,B$ of \eqref{oc:definitions} satisfy
$B\le C_\varepsilon\beta$ and $1-A\le C_\varepsilon\beta$.
In particular, $A-\omega=O_\varepsilon(\beta)$, and
\begin{equation}\label{upcal:near-one-margins}
 \min\{1-\omega,\omega-\kappa\}
       \ge c_{\varepsilon,c_\omega}\beta.
\end{equation}
All constants may depend on the fixed profile constants in
Lemma~\ref{upcal:parameters}.
\end{lemma}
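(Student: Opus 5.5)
The plan is to work entirely from the explicit formulas of Lemma~\ref{upcal:parameters} and the root asymptotics \eqref{main:nearone-rate}, specialized to $1<\alpha\le3/2$. There $m=1$, $\gamma=\beta$, $\delta=1-\beta\ge1/2$, and $h=\beta(1-\beta)\asymp\beta$. Consequently $E/h=2\varepsilon/(1-\beta)\in[2\varepsilon,4\varepsilon]$ and $\Psi=\beta^{1+1/\delta}\delta^{1+1/\beta}\asymp\beta^{1+1/\delta}e^{-1}$.

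\textbf{Step 1: the root and the cutoffs.} Since $Q^2\asymp_\varepsilon \beta^{1/\delta-1}k^{1/\alpha}\ge\beta^{\beta/(1-\beta)}k^{1/\alpha}$ and $\beta^{\beta/(1-\beta)}\ge c$, we have $Q^2\asymp k^{1/\alpha}$ uniformly. The root criterion then gives $X(1+\log X)\asymp Q^2$, because $\alpha h/\beta\asymp1$ and $X^\alpha$ is huge, so the positive part crosses one at $\log$-bounded distance. Hence $X\asymp k^{1/\alpha}/\log k$ and $\log X\asymp\log k$, so $L\asymp\log k$. Then $nL\asymp D k^{1-1/\alpha}X\log k\asymp Dk$, which gives $k/(nL)\le C_\varepsilon/D$. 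Next, $t=b/N^2\asymp c_b/(nL)$, so $kt\le C_\varepsilon/D$. For $\tau/t$, I will use the definition $a=\lambda bhz^{1/h}/L^2$ with $m=1$. This gives $\tau/t\asymp (a/\beta)(N^2/b)\asymp hz^{1/h}/\beta\asymp z^{1/h}$, so it suffices that $z^{1/h}\asymp_\varepsilon1$. Theorem~\ref{oc:capacity} at accuracy $E_0\in[E/2,E]$ gives $z^{-1/h}\asymp (\beta h/(\alpha\Psi))e^{E_0/h}$. The key check is that $\beta h/\Psi\asymp \beta^2/\beta^{1+1/\delta}=\beta^{1-1/\delta}=\beta^{-\beta/(1-\beta)}\asymp1$, while $e^{E_0/h}\in[e^{\varepsilon},e^{4\varepsilon}]$. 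This uniform boundedness as $\beta\downarrow0$ is the main obstacle, and it rests on $\beta^\beta\to1$.

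\textbf{Step 2: the amplitudes.} With $z$ bounded away from $0$ and $1$ in the sense $z^{1/h}\in[c,1-c]$, write $z^{1/\gamma}=(z^{1/h})^{\delta}$ and $z^{1/\delta}=(z^{1/h})^{\gamma}=(z^{1/h})^{\beta}$. Then
\[
B=\frac{1-z^{1/\delta}}{1-z^{1/h}}\le\frac{\beta\log(1/z^{1/h})}{c}\le C_\varepsilon\beta,
\]
using $1-y^\beta\le\beta\log(1/y)$. Similarly
\[
1-A=\frac{z^{1/\gamma}-z^{1/h}}{1-z^{1/h}},
\]
and $z^{1/\gamma}-z^{1/h}=y^{1-\beta}-y$ with $y=z^{1/h}$. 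This is at most $y\bigl(y^{-\beta}-1\bigr)\le C\beta\log(1/y)$, so $1-A\le C_\varepsilon\beta$.

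\textbf{Step 3: the margins.} We have $\omega=e^{-c_\omega E_0}$ with $E_0\in[\varepsilon\beta,2\varepsilon\beta]$. Hence $1-\omega\ge c_\omega\varepsilon\beta/2$, using $\beta\le1/2$ so the exponent is bounded. Also $\omega-\kappa=e^{-c_\omega E_0}-e^{-E}\ge e^{-E}\bigl(e^{E-c_\omega E_0}-1\bigr)$. Since $c_\omega<1$ and $E_0\le E$, we get $E-c_\omega E_0\ge(1-c_\omega)E=2(1-c_\omega)\varepsilon\beta$, and therefore $\omega-\kappa\ge c_{\varepsilon,c_\omega}\beta$. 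Finally, $A-\omega=(A-1)+(1-\omega)$, where $1-\omega\le 2c_\omega\varepsilon\beta$, so $A-\omega=O_\varepsilon(\beta)$. All constants depend only on $\varepsilon$ and the fixed profile constants.
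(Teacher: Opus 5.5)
Your proposal is correct in substance and follows the paper's outline: the explicit cancellation giving $Q^2\asymp_\varepsilon k^{1/\alpha}$, the root to control $k/(nL)$ and $kt$, the exact relation $\tau/t\asymp z^{1/h}$, then the amplitudes and the $\omega$ margins. Your Step~3 is exactly what is needed.

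There is one unjustified inequality. From Theorem~\ref{oc:capacity} you only get $z^{1/h}\ge c_\varepsilon$. The upper bound $z^{1/h}<1$ is trivial, and a $\Theta_\varepsilon$ statement gives no bound of the form $z^{1/h}\le 1-c$. Yet in Step~2 you divide by $1-z^{1/h}\ge c$. The repair is one line, in either of two ways:
\begin{itemize}
\item Since $A\ge\delta$ and $B\ge\gamma$, one has $z\le G_\gamma(z)=e^{-E_0}$. Hence $z^{1/h}\le e^{-E_0/h}\le e^{-\varepsilon}$.
\item Your bounds already read $B,\,1-A\le\beta\log(1/y)/(1-y)$ with $y=z^{1/h}$. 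Since $\log(1/y)/(1-y)$ is decreasing on $(0,1)$, the lower bound $y\ge c_\varepsilon$ alone suffices.
\end{itemize}

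The paper obtains the bounds on $z^{1/h}$ more directly. Write $z=e^{-ht_*}$ with $F_\gamma(t_*)=E_0$, and note $E_0/h\in[\varepsilon,4\varepsilon]$. The elementary two-sided bound \eqref{oc:Fidentity} then gives $\varepsilon\le t_*\le T_\varepsilon$. So $z^{1/h}=e^{-t_*}$ lies in a fixed compact subinterval of $(0,1)$, on both sides at once.

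This bypasses the capacity theorem and its correction term $D_\gamma$. The prefactor $\beta h/\Psi\asymp\beta^{-\beta/\delta}$, which you identified as the main obstacle, never has to be examined for this step. It also yields the exact forms $B=(1-e^{-\beta t_*})/(1-e^{-t_*})$ and $1-A=e^{-\delta t_*}B\le B$.

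Likewise, for $k/(nL)$ the paper uses only the one-sided consequence $Q^2\le eX(1+\log X)$ of the root identity. That inequality holds for every $k$, including when $X=1$. Your two-sided asymptotic $X\asymp k^{1/\alpha}/\log k$ needs $k$ large, which is harmless here but more than the lemma requires. Your route's modest advantage is that it reuses a theorem already needed for the general calibration, at the cost of extra bookkeeping.
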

\begin{proof}
Here $m=1$, $\gamma=\beta$, and $\delta=1-\beta\ge1/2$.
Cancellation of the near-one factors gives
\begin{equation}\label{upcal:near-one-Q}
 \frac{Q^2}{k^{1/\alpha}}
 =\frac{\beta^{\beta/\delta}\delta^{1+1/\beta}}{\alpha^2}
       e^{-2\varepsilon/\delta}\asymp_\varepsilon1.
\end{equation}
The logarithm is uniformly bounded, using
$\beta|\log\beta|\le1/e$ and
$\log(1-\beta)\ge-\beta/(1-\beta)$.
Equation~\eqref{upcal:root-identity} gives
$Q^2\le eX(1+\log X)$, including $X=1$. Therefore
\[
 k^{1/\alpha}\le C_\varepsilon X(1+\log X),
 \qquad
 \frac{k}{nL}\le
 \frac{C_\varepsilon(1+\log X)}{D\alpha L}
 \le C_\varepsilon/D.
\]
This also proves the bound for $kt$.

Write $z=e^{-ht_*}$, so $F_\gamma(t_*)=E_0$.
Since $E_0/h\in[\varepsilon,4\varepsilon]$,
\eqref{oc:Fidentity} yields
\[
 \varepsilon\le t_*\le T_\varepsilon,
 \qquad J(T_\varepsilon/4)=1+2\varepsilon.
\]
In particular $z^{1/h}$ lies in a fixed positive compact interval.
Equation \eqref{upcal:cutoffs} gives the exact relation
\[
 \frac\tau t
 =c\lambda\frac{N^2}{L^2}\alpha\delta z^{1/h}.
\]
Since $3/4\le\alpha\delta=1-\beta^2\le1$, this proves the
remaining assertion in \eqref{upcal:near-one-scales}.
Finally,
\[
 B=\frac{1-e^{-\beta t_*}}{1-e^{-t_*}}
       \le C_\varepsilon\beta,
 \qquad 1-A=e^{-\delta t_*}B\le C_\varepsilon\beta.
\]
The estimates for $\omega$ follow from
$\varepsilon\beta\le E_0\le2\varepsilon\beta$ and
$0<c_\omega<1$.
\end{proof}

\paragraph{Correctly centered localized bias.}
Suppose the common coupled polynomial satisfies
$\sup_{0\le p\le b}|Q(p)-\omega p^\alpha|
\le C_\varepsilon\beta t^\alpha$ in the interval of
Lemma~\ref{upcal:near-one}. Then its aggregate light-domain bias,
divided by $P_*=k^{1-\alpha}$, is at most
\[
 C_\varepsilon\beta(kt)^\alpha
       \le C_\varepsilon\beta/D.
\]
Increasing $D$ places this inside the original-accuracy margins in
\eqref{upcal:near-one-margins}, without changing $\varepsilon$ in the
root. This assertion uses the target $\omega P_\alpha$.

\paragraph{The remaining compact parameter range.}
The ratio $E/h$ distinguishes a long overlap interval from one with
bounded width. When it is bounded and $\alpha\ge3/2$, all order
parameters remain in a compact set. Ordinary fixed-constant bounds are
then sufficient. This observation simplifies later proofs without
introducing a different estimator.

\begin{lemma}[Bounded interpolation width]
\label{upcal:bounded-width}
Fix $0<M<\infty$. Within the same parameter rule, suppose
$\alpha\ge3/2$ and $E/h\le M$. Then, with constants depending on
$\varepsilon,M$ and the fixed profile parameters,
\begin{equation}\label{upcal:bounded-width-conclusions}
 Q^2\asymp_{\varepsilon,M}k^{1/\alpha},\qquad
 \frac{k}{nL}\le C_{\varepsilon,M}/D,
 \qquad kt\le C_{\varepsilon,M}/D,
 \qquad u\ge c_{\varepsilon,M}>0,
\end{equation}
where $t=b/N^2$ and $N\asymp L$.
\end{lemma}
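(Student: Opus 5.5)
The plan is to reduce all four conclusions to two scalar facts: in this regime $h$ is bounded below, and $X$ is confined to a window around $k^{1/\alpha}/\log k$. The bound on $h$ comes from the hypothesis. Since $\alpha\ge3/2$ gives $\beta\ge1/2$ and $E=2\varepsilon\beta$, the assumption $E/h\le M$ yields
\[
 h\ge E/M\ge\varepsilon/M,
\]
so $\gamma,\delta$ both lie in a fixed compact subinterval of $(0,1)$. It follows that $\Psi=\gamma^{1+1/\delta}\delta^{1+1/\gamma}$ is bounded above and below by constants depending on $\varepsilon,M$, and that $e^{-E/h}\in[e^{-M},1]$. The one factor that is not of fixed size is $\alpha^{-2}\beta^{-2}$ in $Q^2$, because $\alpha$ may grow up to $\log k$. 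I will therefore not try to prove the first relation with constants independent of $\alpha$. Instead I will carry out the comparisons below with the prefactor $\Psi/(\alpha^2\beta^2)$ kept explicit, since that is exactly what is needed for the bounds on $k/(nL)$ and $kt$. When $\alpha$ is itself bounded, which is the regime in which the first relation is used, the displayed $Q^2\asymp k^{1/\alpha}$ holds literally.

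For the bound on $k/(nL)$, I will use the root identity \eqref{upcal:root-identity}, which gives $Q^2\le eX(1+\log X)$, valid also when $X=1$. Since $n=D\alpha k^{1-1/\alpha}X$ and $L\ge c_L\alpha(1+\log X)/C$, we get
\[
 \frac{k}{nL}\le\frac{Ck^{1/\alpha}}{D\alpha^2X(1+\log X)}
 \le\frac{Ce\,k^{1/\alpha}}{D\alpha^2Q^2}
 =\frac{Ce\,\beta^2e^{E/h}}{D\,\Psi}\le\frac{C_{\varepsilon,M}\beta^2}{D}.
\]
Here the exact prefactor of $Q^2$ cancels $\alpha^2$. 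A leftover factor $\beta^2$ survives, and I expect this to be the main obstacle. It must be absorbed by using the fact that $L$ carries an additional factor $\alpha\ge\beta$ beyond $1+\log X$ whenever $\log X$ is small relative to $C_L$. More concretely, the surplus $\alpha$ in $L=\lceil\alpha(C_L+c_L\log X)\rceil$ can be combined with $X\ge Q^2/(e(1+\log X))$, which is large when $k^{1/\alpha}/\alpha^2$ is large. If that route fails, I will fall back on proving the weaker statement with $\beta^2$ retained, and check that this suffices where the lemma is applied, since the high-degree normalizations \eqref{upcal:D-power} carry compensating powers of $\beta$.

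The bound on $kt$ then follows immediately: $t=b/N^2=c_bL/(nN^2)$ with $N\asymp L$ gives $kt\le C\,k/(nL)$.

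The last claim, $u\ge c_{\varepsilon,M}$, follows from $u=\min\{v+h,u_0\}\ge\min\{h,u_0\}\ge\min\{\varepsilon/M,u_0\}$, which uses only the bound on $h$ from the first step. All constants are traced through $\varepsilon$, $M$ and the fixed profile constants, in the order fixed in Lemma~\ref{upcal:parameters}, with $D$ chosen last.
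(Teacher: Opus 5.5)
There is a genuine gap. It comes from one missed observation: the hypothesis $E/h\le M$ already bounds the order. Since $h=\gamma\delta\le1/4$, you have $M\ge E/h\ge4E=8\varepsilon\beta$, so
\[
 \tfrac12\le\beta\le\frac{M}{8\varepsilon},
\]
and $\alpha$ lies in a fixed compact interval. Your claim that $\alpha$ may grow up to $\log k$ in this regime is therefore false. That false claim is what leads you to two shortfalls. First, you decline to prove the first relation $Q^2\asymp_{\varepsilon,M}k^{1/\alpha}$ as stated. Second, you stop at $k/(nL)\le C_{\varepsilon,M}\beta^2/D$ with $\beta^2$ unabsorbed. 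As written, your proposal establishes neither of these two conclusions of the lemma.

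Your proposed remedies would not have closed the gap. There is no surplus factor of $\alpha$ in $L=\lceil\alpha(C_L+c_L\log X)\rceil$: you already spent it to produce the $\alpha^2$ that cancels the prefactor $\alpha^{-2}$ in $Q^2$. Moreover, the root identity \eqref{upcal:root-identity} makes $X(1+\log X)$ comparable to $Q^2$, so your chain of inequalities is essentially sharp. The factor $\beta^2$ cannot be removed by squeezing more out of $L$ or $X$; it can only be controlled by bounding $\beta$.

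The fallback of keeping $\beta^2$ and relying on compensating powers in \eqref{upcal:D-power} changes the statement rather than proving it. With $\beta\le M/(8\varepsilon)$ in hand, everything closes at once:
\begin{itemize}
\item $\Psi e^{-E/h}/(\alpha^2\beta^2)$ lies in a fixed positive compact interval, which gives the first relation.
\item Your displayed chain yields $k/(nL)\le C_{\varepsilon,M}/D$, since $\beta^2\le(M/8\varepsilon)^2$ is absorbed into the constant.
\end{itemize}
Your lower bound $h\ge\varepsilon/M$, your reduction of $kt$ to $k/(nL)$, and your bound $u\ge\min\{\varepsilon/M,u_0\}$ are correct and match the paper's argument.
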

\begin{proof}
The hypothesis and $h\le1/4$ imply
\[
 1/2\le\beta\le M/(8\varepsilon),\qquad
 h\ge E/M\ge\varepsilon/M.
\]
Thus $\alpha$ is bounded above and both $\gamma,\delta$ are bounded
away from zero. The positive factor
$\Psi e^{-E/h}/(\alpha^2\beta^2)$ lies in a fixed positive compact
interval, proving the first assertion.
Equation \eqref{upcal:root-identity}, including its clipped-root
inequality, gives $Q^2\le eX(1+\log X)$.
The proof of \eqref{upcal:near-one-scales} then applies verbatim to
$k/(nL)$ and $kt$.
Finally, $u=\min\{v+h,u_0\}\ge\min\{\varepsilon/M,u_0\}>0$.
This verifies additional margins of the same construction; it does not
select a different estimator.
\end{proof}

\subsection{Constructing the polynomial explicitly}
\label{sec:coupled-bias-bound}

\paragraph{What the construction must accomplish.}
We need a polynomial in $p$, even though $p^\alpha$ itself is not a
polynomial. The identity $\alpha=m+\gamma=m+1-\delta$ suggests two
representations: multiply a function of $a/p$ behaving like
$(a/p)^\gamma$ by $p^\alpha$, or multiply a function of $p/b$
behaving like $(p/b)^\delta$ by $p^\alpha$. The resulting powers of
$p$ are integers. The first term controls the lower endpoint and the
second controls the upper endpoint. Smooth polynomial cutoffs join them.

We give one polynomial construction. All constants below may depend on the
fixed accuracy $0<\varepsilon<1/2$, but not on the alphabet size or the order.
Write
\[
 \beta=\alpha-1,\quad m=\lfloor\alpha\rfloor,\quad
 \gamma=\alpha-m,\quad\delta=1-\gamma,\quad h=\gamma\delta,
 \qquad E=2\varepsilon\beta.
\]
For the deterministic bound first suppose $\alpha\ge3/2$ and
$E/2\le E_0\le E$. Put $\kappa_0=e^{-E_0}$ and $\omega=e^{-E_0/8}$.
For $0<z<1$ define
\begin{align*}
 A(z)&=\frac{1-z^{1/\gamma}}{1-z^{1/h}},&
 B(z)&=\frac{1-z^{1/\delta}}{1-z^{1/h}},&
 G_\gamma(z)&=\frac{zA(z)^\delta B(z)^\gamma}
 {\delta^\delta\gamma^\gamma}.
\end{align*}
Choose $z$ from $G_\gamma(z)=\kappa_0$. In the following proof we use
$z\le\kappa_0$, $A,B\le1$, and $A,B\ge1-z$; these also follow directly
from these formulas and the monotonicity properties of $G_\gamma$.
Fix
\[
 q=e^{-\varepsilon/16},\qquad q_L\in(q,1).
\]
In particular $q>\kappa_0$ and $q\omega>\kappa_0$ with a fixed relative
margin, because $E_0\ge\varepsilon/2$.

\paragraph{The elementary building block.}
Let $U_j$ be the Chebyshev polynomial of the second kind, defined by
$U_0(t)=1$, $U_1(t)=2t$, and
$U_{j+1}(t)=2tU_j(t)-U_{j-1}(t)$.
The kernel below is concentrated near $x=0$, on the scale $N^{-2}$.
The normalized integral $J_s$ is a polynomial times $x^s$ that rises
from zero to one. The variable $y=N^2x$ measures position relative to
the concentration scale. These facts allow us to build endpoint
approximations while controlling both their size and their coefficients.

\begin{lemma}[Elementary endpoint-kernel estimates]
\label{lem:coupled-elementary-kernel}
For $N\ge2$ and $d\in\{2,6\}$, set
\[
 K_{N,d}(x)=\left[U_{N-1}(\sqrt{1-x})/N\right]^{2d},
 \quad y=N^2x,\quad I_s=\int_0^1x^{s-1}K_{N,d}(x)\,dx.
\]
For $0<s\le1$, the following estimates are uniform in $N,s$:
\begin{gather}
 0\le K_{N,d}(x)\le\min(1,y^{-d}),\qquad
 e^{-Cy}\le K_{N,d}(y/N^2)\le e^{-cy}\quad(y\le y_0),
 \label{eq:coupled-elementary-kernel-local}\\
 1-K_{N,d}(y/N^2)\le Cy\quad(y\le1),\qquad
 \sup_{y\ge y_0}K_{N,d}(y/N^2)\le\rho<1,
 \label{eq:coupled-elementary-kernel-gap}\\
 sN^{2s}I_s=\exp(O(s)),\qquad
 J_s(x):=I_s^{-1}\int_0^x t^{s-1}K_{N,d}(t)\,dt
       =y^s\exp(O(s))\quad(y\le1),
 \label{eq:coupled-elementary-integral}\\
 1-J_s(x)\le Cs y^{s-d}\quad(y\ge1),\qquad
 J_s(x)\ge e^{-Cs}\quad(y\ge1),
 \label{eq:coupled-elementary-tail}\\
 \max_{0\le x\le1}x^sK_{N,d}(x)
       =N^{-2s}s^s\exp(O(s)).
 \label{eq:coupled-elementary-sharp}
\end{gather}
The derivative of the normalized kernel density with respect to
$s\in[1/2,1]$ is bounded in $L^1$; in particular
$0\le J_\delta-J_1\le C(1-\delta)$ for $1/2\le\delta\le1$.
The kernel and the polynomial $J_s(x)/x^s$ have coefficient sums at
most $C^{N+1}$.
\end{lemma}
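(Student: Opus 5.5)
Everything reduces to the trigonometric substitution $\sqrt{1-x}=\cos\phi$, equivalently $x=\sin^2\phi$. Then $U_{N-1}(\cos\phi)=\sin(N\phi)/\sin\phi$, so
\[
 K_{N,d}(x)=\left(\frac{\sin N\phi}{N\sin\phi}\right)^{2d},
\]
a Fej\'er/Jackson-type kernel. For $0\le\phi\le\pi/2$ we have $|\sin N\phi|\le\min(1,N\sin\phi)$, which gives the bound $K\le\min(1,(N^2\sin^2\phi)^{-d})=\min(1,y^{-d})$. For the local estimates in the regime $y\le y_0$, I would use $N\phi\asymp\sqrt y$. The ratio $\sin(N\phi)/(N\sin\phi)$ equals $1-(N^2-1)\phi^2/6+O(N^4\phi^4)$. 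This yields $1-K\le Cy$ for $y\le1$, and the two-sided exponential bounds $e^{-Cy}\le K\le e^{-cy}$ once $y_0$ is fixed small enough that the ratio stays in $[1/2,1]$. For the gap estimate at $y\ge y_0$, I would use that $|\sin t|/t$ is bounded away from one for $t\ge\sqrt{y_0}/2$. Together with $\sin\phi\le\phi$, this gives $|\sin N\phi|/(N\sin\phi)\le\rho^{1/(2d)}$, uniformly in $N\ge2$. That case needs a small separate check, since $\sin\phi$ and $\phi$ are comparable there.

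For the integral estimates I would change variables to $y=N^2x$:
\[
 N^{2s}I_s=\int_0^{N^2}y^{s-1}K(y/N^2)\,dy.
\]
The region $y\le1$ contributes $s^{-1}e^{O(s)}$, because $K=e^{O(y)}=e^{O(1)}$ there. Strictly this gives only $e^{O(1)}$, so to get the sharper $e^{O(s)}$ I would write $\int_0^1 y^{s-1}(1-K)\,dy\le C\int_0^1 y^s\,dy\le C$, which is $O(1)$ against $1/s$, i.e. a relative factor $1+O(s)$. The region $y\ge1$ contributes at most $\int_1^\infty y^{s-1-d}\,dy\le 1/(d-1)=O(1)$, again relative $O(s)$ after multiplying by $s$. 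This proves $sN^{2s}I_s=e^{O(s)}$.

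The same splitting handles $J_s$. For $y\le1$, $\int_0^x t^{s-1}K\,dt=N^{-2s}s^{-1}y^se^{O(s)}$ by the identical argument, so $J_s=y^se^{O(s)}$. For $y\ge1$, $1-J_s$ is $I_s^{-1}$ times the tail integral, which is at most $N^{-2s}y^{s-d}/(d-s)$, hence $1-J_s\le Csy^{s-d}$. The bound $J_s\ge e^{-Cs}$ for $y\ge1$ follows from $J_s(1/N^2)=e^{O(s)}$ and monotonicity of $J_s$.

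For the sharp maximum I would write $x^sK=N^{-2s}y^sK(y/N^2)$. On $y\le y_0$ this is at most $y^se^{-cy}\le(s/c)^s e^{-s}=s^se^{O(s)}$, attained up to $e^{O(s)}$ at $y=s$, where the lower bound $K\ge e^{-Cy}$ gives the matching value. On $y\ge y_0$ it is bounded by $y^{s-d}\rho$-type bounds. This is the delicate spot: $s^s$ can be as small as $e^{-1/e}$, so any $O(1)$ bound is automatically $s^se^{O(s)}$ only after checking the upper bound is $\le s^s e^{Cs}$. Since $s^s\ge e^{-1/e}$, an upper bound $C'$ is $s^se^{O(1)}$, but I need $e^{O(s)}$ for small $s$. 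I would obtain this by using $y^s\min(\rho,y^{-d})$ for $y\ge y_0$: its supremum is $\le\max(y_1^s\rho,\,y_1^{s-d})$ with $y_1=\rho^{-1/d}$, which equals $\rho y_1^s=\rho e^{O(s)}$. Then $\rho<1$ must be compared with $s^s\to1$ as $s\to 0$; this requires $\rho\le s^s e^{Cs}$, which holds because $s^s\ge 1-s/e\cdot\log$ terms. This is where I expect to spend the most care, possibly by shrinking $y_0$ so the local region captures the maximizer uniformly.

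Finally, the derivative in $s$ of $y^{s-1}K/I_s$ on $[1/2,1]$ is the density times $\log x-\partial_s\log I_s$. Both terms are integrable by the same tail bounds, with the $\log$ weight absorbed by $y^{-d}$, $d\ge2$. This gives the $L^1$ bound and $0\le J_\delta-J_1\le C(1-\delta)$; the sign comes from the monotone likelihood ratio in $s$. For the coefficient sums, $U_{N-1}(\sqrt{1-x})$ is a polynomial in $x$ of degree $\lfloor(N-1)/2\rfloor$ with coefficient sum at most $C^N$, via the explicit binomial expansion of $U$. The $2d$-th power keeps the bound at $C^{N+1}$. Integrating termwise against $t^{s-1}$ only divides coefficients by $j+s\ge s$, while $I_s^{-1}\le CsN^{2s}$ and $x^{-s}$ cancel the $s$ and $N^{2s}$ factors.
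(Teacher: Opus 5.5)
Your route is essentially the paper's: the substitution $x=\sin^2\phi$, Taylor expansion of $\sin(N\phi)/(N\sin\phi)$, splitting the rescaled integrals at $y=1$, a local/gap split for the sharp maximum, and a covariance with $\log y$ for the $s$-derivative. However, the gap estimate fails as written. With $t=N\phi$,
\[
 \frac{|\sin N\phi|}{N\sin\phi}=\frac{|\sin t|}{t}\cdot\frac{\phi}{\sin\phi}.
\]
The inequality $\sin\phi\le\phi$ makes the second factor \emph{at least} one (it reaches $\pi/2$), so it works against you. The bound $\sup_{t\ge t_0}|\sin t|/t<1$ therefore gives, by itself, no bound below one uniformly in $N$. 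The difficulty is exactly small $N$, where $\phi$ ranges up to $\pi/2$ on $y\in[y_0,N^2]$. Comparability of $\phi$ and $\sin\phi$ up to a constant cannot absorb that loss.

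The paper resolves this in three pieces:
\begin{itemize}
\item the $y^{-d}$ tail beyond a large $R$;
\item uniform convergence to $\sin\sqrt y/\sqrt y$ on $[y_0,R]$ as $N\to\infty$;
\item for the finitely many remaining $N$, continuity together with the strict inequality $|\sin N\phi|<N\sin\phi$ for $0<\phi\le\pi/2$, $N\ge2$.
\end{itemize}
A uniform fix in your own variables also works. Since $N\ge2$, you have $\phi\le t/2$, and $\phi/\sin\phi$ is increasing. So for $t\le\pi$ the quotient is at most $\frac{|\sin t|}{t}\cdot\frac{t/2}{\sin(t/2)}=\cos(t/2)\le\cos(\sqrt{y_0}/2)$, because $t\ge N\sin\phi=\sqrt y\ge\sqrt{y_0}$. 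For $t>\pi$ it is below $\frac\pi2\cdot\frac1\pi=\frac12$.

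Three smaller repairs are needed as well.
\begin{itemize}
\item Your lower bound $sN^{2s}I_s\ge1-Cs$, and its analogue for $J_s$ on $y\le1$, is $e^{O(s)}$ only for small $s$. For $s$ bounded below you need a separate positive constant, e.g.\ from $K\ge e^{-Cy}$ on $[0,y_0]$. The paper instead extends the exponential lower bound to $y\le1$ via $N\phi\le N\arcsin(1/N)\le\pi/3$ and applies Jensen under $sy^{s-1}\,dy$, which covers all $s$ at once. Similarly, evaluate the sharp maximum at $y=c_0s\le y_0$ rather than at $y=s$.
\item In the $s$-derivative, $\log x$ and $\partial_s\log I_s$ each have size about $2\log N$, so bounding them separately loses uniformity. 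Only their centered difference $\log y-\mathbb{E}\log y$ is uniformly integrable under the rescaled density.
\item $U_{N-1}(\sqrt{1-x})$ is not a polynomial in $x$ for even $N$; only its square is. This suffices because $2d$ is even.
\end{itemize}

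The step you expected to be delicate is not. You have $\rho^{1-s/d}\le\sqrt\rho<1$, while $s^se^{Cs}\ge1$ for $s\ge e^{-C}$ and $s^s\to1$ as $s\downarrow0$. Choosing $C$ large therefore settles it.
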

\begin{proof}
Put $\theta=\arcsin\sqrt x$. The elementary identity
\[
 U_{N-1}(\cos\theta)=\frac{\sin(N\theta)}{\sin\theta}
\]
gives the tail bound because $N\sin\theta=\sqrt y$.
The modulus of the quotient divided by $N$ is at most one: it is the
modulus of an average of $N$ unit complex numbers from the geometric
sum for $\sin(N\theta)/\sin\theta$. This inequality is strict for
$0<\theta\le\pi/2$ when $N\ge2$.
For $y$ sufficiently small, Taylor's formula with its uniform remainder
gives
\[
 \frac{\sin(N\theta)}{N\sin\theta}
 =1-\frac{(N^2-1)\theta^2}{6}+O(N^4\theta^4)
 =1-\frac{1-N^{-2}}6y+O(y^2).
\]
Since $1-N^{-2}\in[3/4,1)$, this proves the two local exponential
bounds and the linear deficit. The lower exponential bound also holds
up to $y=1$ after changing its constant: on that interval
$N\theta\le N\arcsin(1/N)\le\pi/3$, so the sine quotient is bounded
away from zero.

For the gap, first use $K\le y^{-d}$ beyond a fixed large $R$.
On $[y_0,R]$, the quotient converges uniformly as $N\to\infty$ to
$\sin\sqrt y/\sqrt y$, whose modulus is strictly less than one.
For each of the finitely many remaining $N$, continuity and the strict
geometric-sum inequality give the same conclusion on its nonempty
scaled interval. Taking their maximum proves a uniform $\rho<1$.

After rescaling, let
$M_s=N^{2s}I_s=\int_0^{N^2}y^{s-1}K_{N,d}(y/N^2)\,dy$.
The upper bounds give
\[
 sM_s\le1+s\int_1^\infty y^{s-1-d}\,dy\le1+Cs.
\]
For the lower bound, integrate $e^{-Cy}$ on $[0,1]$ and use Jensen's
inequality under the probability density $sy^{s-1}$:
\[
 sM_s\ge s\int_0^1y^{s-1}e^{-Cy}\,dy
 \ge\exp[-Cs/(s+1)]\ge e^{-Cs}.
\]
This is why the normalization error is $O(s)$ rather than a fixed
constant. The same argument after rescaling $[0,y]$ proves the
truncated-integral estimate. Integrating the inverse-power bound from
$y$ to infinity proves \eqref{eq:coupled-elementary-tail}.

For the sharp maximum, evaluate at $y=c_0s$ for a fixed small $c_0$
to obtain the lower bound. On the local interval, the upper bound is
at most $\max_{y\ge0}y^se^{-cy}=(s/(ce))^s$.
Outside that interval the uniform gap and inverse-power tail imply
$\sup y^sK\le\rho'<1$ for all sufficiently small $s$.
This follows by first fixing a large tail cutoff and then decreasing
$s$ on the remaining compact interval. Since $s^s\to1$, this region
cannot violate the asserted upper bound. For $s$ bounded below, the
trivial uniform bound on $y^sK$ is absorbed into $e^{Cs}$.
The same argument applies to any kernel with a local exponential
bound, a uniform gap away from zero, and a tail $Cy^{-a}$ for fixed
$a>1$; its constants may depend on the constants in those bounds.

For the exponent derivative, the rescaled normalized density is
$y^{s-1}K(y/N^2)/M_s$. Its mean absolute logarithm is uniformly bounded
for $s\in[1/2,1]$: integrate $y^{-1/2}|\log y|$ below one and use the
inverse-power tail above one. Differentiating a normalized integral is
a covariance with $\log y$, giving the asserted uniform derivative
bound. The sign follows because the likelihood ratio for the smaller
exponent is decreasing in $y$.

Finally, $U_{n+1}(t)=2tU_n(t)-U_{n-1}(t)$ bounds coefficient sums by an
exponential in $N$. Taking a fixed even power and replacing its even
monomials in $\sqrt{1-x}$ by powers of $1-x$ preserves such a bound.
For integration, the $j$th coefficient is divided by $j+s$ and then
multiplied by $I_s^{-1}\le CsN^{2s}$; the factor $s$ cancels the only
possible singularity at $j=0$. Polynomial factors in $N$ are absorbed
into $C^{N+1}$.
\end{proof}

\paragraph{The two endpoint profiles.}
A \emph{profile} here is simply a function taking values in $[0,1]$.
The low profile must rise with the correct fractional power near zero,
remain above a fixed positive level after that transition, and retain
a definite deficit below one away from its maximum. The high profile
has a similar local rise but approaches a prescribed level $\omega$
in its tail. Both are a fractional power multiplied by an ordinary
polynomial. These properties will make their combination a polynomial
in the original probability $p$.

The low profile $u_s$ is a normalized raised-floor Chebyshev profile of
ordinary polynomial degree at most $m$ after division by $x^s$.
Let $\ell\asymp m$ denote its endpoint resolution. A fixed floor parameter
strictly between $q_L$ and one gives the following bounds, with constants
depending only on $\varepsilon$:
\begin{align}
 0\le u_s\le1,\qquad
 u_s(x)&\le e^{Cs}(\ell^2x/s)^s,\label{eq:coupled-low-upper}\\
 u_s(x)&\ge\min\{q,e^{-Cs}(\ell^2x/s)^s\},\label{eq:coupled-low-lower}\\
 1-u_s(x)&\ge c\min(\ell^2x,1)
 \quad(s\le s_0,\ \ell^2x\ge Cs).\label{eq:coupled-low-deficit}
\end{align}
Its coefficient sum is at most $C^{m+1}$. These are fixed-floor bounds:
the floor is not allowed to tend to one with $\alpha$.

For completeness the degree-one completion has no boundary ambiguity.
Choose the unique $c_s\in(s/(1+s),1)$ satisfying
\[
 (1-c_s)c_s^s(1+s)^{1+s}/s^s=q_L,
\]
and set
\[
 u_s(x)=\frac{x^s(1-c_sx)}{a_s},\qquad
 a_s=\frac{s^s}{c_s^s(1+s)^{1+s}}.
\]
Its maximum is attained at $s/[c_s(1+s)]<1$, so $a_s$ is the exact
normalizer. In particular $c_s\to1-q_L>0$ as $s\downarrow0$.
Continuity of the unique solution makes $c_s$ uniformly bounded above
and away from zero for $0<s\le1/2$, with constants depending only on
$\varepsilon$. Thus the common near-one tangent-scale comparison is
valid for this variable $c_s$ as well.
The finitely many degrees below the Chebyshev construction are covered
by this same completion; their fixed resolution ratios are absorbed in
$e^{\pm Cs}$. For larger degrees one may use
$K=[U_{\ell-1}(\sqrt{1-x})/\ell]^4$, its normalized integral $J_s$,
and
\[
 u_s=x^sK/a_s+f_LJ_s(1-K),\qquad
 a_s=\max_x\frac{x^sK}{1-f_LJ_s(1-K)},\quad q_L<f_L<1.
\]
Its degree is at most $4(\ell-1)$, so $\ell=\lfloor m/4\rfloor+1$
is admissible. We verify all three low-profile properties explicitly.
Write $d_L=1-f_L>0$ and $D_s=1-f_LJ_s(1-K)$.
Then
\[
 \frac{K}{D_s}\le\frac{K}{d_L+f_LK}.
\]
The right side has local exponential decay with a positive
$d_L$-dependent constant, a uniform gap away from zero, and a
$y^{-2}$ tail. Lemma~\ref{lem:coupled-elementary-kernel} therefore gives
$a_s\le\ell^{-2s}s^s e^{C_\varepsilon s}$.
Since $D_s\le1$, the ordinary sharp maximum gives the reverse bound.
The defining maximum proves $u_s\le1$ exactly.
For $y=\ell^2x\le s$, its first term is
$(y/s)^s e^{O_\varepsilon(s)}$ and its added term is at most $Cy^{s+1}$;
relative to $(y/s)^s$ the latter is $O(s)$. This proves the sharp
bounds there; the global upper follows from $u_s\le1$.

If $y\ge C_fs$, sharp normalization gives $x^s/a_s\ge1$ after
increasing $C_f$. Hence
\[
 u_s\ge K+f_LJ_s(1-K)
 \ge f_L-(1-J_s)(1-K).
\]
For $y\le1$, the product on the right is at most
$Cs\,y(1+|\log y|)\le Cs$; for $y\ge1$ it is again at most $Cs$.
Thus $u_s\ge q$ for all sufficiently small $s$ on this range.
For the other $s$, take $C_q$ sufficiently large and use
$J_s\to1$, $K\to0$ uniformly in the scaled tail to get $u_s\ge q$
when $y\ge C_qs$.
On $s\le y\le C_qs$, if $s\le y_0/C_q$, the local kernel lower bound
makes the first term at least $(y/s)^s e^{-C_\varepsilon s}$.
Otherwise $s$ is bounded below, and $u_s\ge c_\varepsilon>0$ on
$y\ge s$: both $J_s$ and $x^s/a_s$ are bounded below there, so their
weighted contributions through $K$ and $1-K$ give this assertion.
Increasing $C_\varepsilon$ absorbs $(y/s)^s\le C_q$ on this compact
range. This proves \eqref{eq:coupled-low-lower}.

For the deficit, the ordinary sharp normalizer $a_s^{(0)}$ is at most
$a_s$, so
\[
 u_s\le x^sK/a_s^{(0)}+f_L(1-K).
\]
On $C_fs\le y\le y_0$, the excess of the first term over $K$ is bounded
by $K\{\exp(Cs+s\log(y/s))-1\}$.
Choose $C_f$ so that $s[C+\log(y/s)]\le\eta y$ with $\eta$ small
relative to $d_L$. The excess is then at most
$d_L(1-K)/2$, giving a deficit at least $c_\varepsilon y$.
On $y\ge y_0$, that excess tends uniformly to zero as $s\downarrow0$:
use the uniform gap on compact intervals and the inverse-power tail,
or the bound $Cs(1+|\log s|)\sqrt K$.
Since $1-K$ is bounded below there, sufficiently small $s$ gives a
fixed deficit. This proves \eqref{eq:coupled-low-deficit}.
The coefficient statement follows from the final part of
Lemma~\ref{lem:coupled-elementary-kernel} and
$a_s^{-1}\le C\ell^2$.

For the degree-one completion, $c_s$ stays in a fixed positive compact
interval as $s\downarrow0$, and
$a_s=s^s\exp(O_\varepsilon(s))$.
The same local estimates follow directly from $1-c_sx$.
Past its interior maximum the profile is decreasing to $q_L>q$,
so the floor argument is immediate; away from zero its limit is
$1-(1-q_L)x$, giving the asserted small-$s$ deficit by the same
$s\log(x/s)$ comparison. The finitely many resolution factors below
$\ell=2$ only alter $C_\varepsilon$.

For the high profile choose $N\asymp L$, where $L$ is the light-polynomial
budget, and let
\[
 R(x)=\left[U_{N-1}(\sqrt{1-x})/N\right]^6,\quad K=R^2,
 \quad J_s(x)=\frac{\int_0^x t^{s-1}K(t)\,dt}
 {\int_0^1 t^{s-1}K(t)\,dt}.
\]
With a fixed $f\in(0,1)$ sufficiently close to one, put
\begin{align}
 W(x)&=f+(1-f)(1-R(x))^2,\nonumber\\
 a_{s,\omega}&=\sup_x
 \frac{x^sK(x)}{1-\omega J_s(x)(1-K(x))W(x)},\nonumber\\
 v_{s,\omega}(x)&=x^sK(x)/a_{s,\omega}
              +\omega J_s(x)(1-K(x))W(x).\label{eq:coupled-high-definition}
\end{align}
The value of a ratio at a zero of $K$ is its continuous extension.
One has, uniformly for $0<\omega\le1$,
\begin{gather}
 a_{s,\omega}=N^{-2s}s^s\exp(O_\varepsilon(s)),\qquad
 0\le v_{s,\omega}\le1,\label{eq:coupled-high-normalization}\\
 e^{-Cs}(y/s)^s\le v_{s,\omega}(x)\le e^{Cs}(y/s)^s
 \quad(y=N^2x\le s),\label{eq:coupled-high-sharp}\\
 v_{s,\omega}(x)\ge c\omega\quad(y\ge s),\qquad
 v_{s,\omega}(x)\ge q\omega\quad(y\ge C_qs),\label{eq:coupled-high-floors}\\
 |v_{s,\omega}(x)-\omega J_s(x)|
 \le C\min\{y^s,y^{-3}\}.\label{eq:coupled-high-localization}
\end{gather}
The upper sharp bound extends to all $y$, using $v\le1$.
The coefficient sum of $v_{s,\omega}/x^s$ is at most $C^{L+1}$.
We choose the fixed degree constant large enough that $N^2\ge2C_q$.

Here is a normalization argument retaining the factor $s$ in the
exponent. For $\omega=1$ the denominator is at least
\[
 K+(1-f)(1-K)(2R-R^2),
\]
so its kernel quotient is bounded by
\[
 \frac{R}{R+(1-f)(1-R^2)(2-R)}.
\]
This has local exponential decay, a uniform gap away from zero, and a
$y^{-3}$ tail. The sharp maximum is therefore
$N^{-2s}s^s\exp(O_\varepsilon(s))$. For general $\omega$, the normalizer
lies between the ordinary sharp normalizer and this one.
Exact boundedness follows from the definition. The polynomial degree
is at most $18(N-1)$ after division by $x^s$.
The difference in \eqref{eq:coupled-high-localization} equals
\[
 x^sK/a_{s,\omega}-\omega J_sK
 -\omega(1-f)J_s(1-K)(2R-R^2),
\]
which proves its two tail bounds using $K\le y^{-6}$ and $R\le y^{-3}$.
The floor $q\omega$ follows, for small $s$ and $y\ge C_fs$, from
$x^s/a_{s,\omega}\ge1$ and
\[
 v_{s,\omega}\ge\omega[K+J_s(1-K)W]\ge\omega(1-(1-f)-Cs).
\]
For the other $s$, increase $C_q$ and use the integral and kernel tails.
In addition $J_s\ge q$ for $y\ge C_qs$, after enlarging $C_q$.
Indeed $J_s\ge y^s e^{-Cs}$ for $y\le1$, while
$1-J_s\le Cs y^{s-6}$ for $y\ge1$.

The following distinction is necessary. A global bound
$v_{s,\omega}\ge\min\{q\omega,e^{-Cs}(y/s)^s\}$ is asserted only
when $\omega$ is bounded below by a fixed positive constant. In that
case it follows by the same compact-interval argument as for the low
profile. For arbitrary $\omega\downarrow0$ we use only
\eqref{eq:coupled-high-sharp} and \eqref{eq:coupled-high-floors}.
A claim of the stronger global envelope uniformly down to $\omega=0$
would be false at interior kernel zeros.

\paragraph{Combining the profiles.}
The scalar envelope determines the weights $A,B$ and the product scale
$z$. Small exponential adjustments $A',B'$ leave a usable strict gap
below one. The factor $1-J_1$ turns off the reverse profile as $p$
increases; the remaining terms approach $\omega p^\alpha$.
The proposition gives two-sided bounds for every $p\in[a,b]$ and a
separate bound below $a$. These are deterministic statements about
functions, before any data are sampled.

\begin{proposition}[One coupled deterministic bound]
\label{prop:coupled-deterministic-bound}
There are fixed $M,\lambda,C_\varepsilon,c_\varepsilon>0$ with
the following property. Set $A'=Ae^{-M\gamma}$, $B'=Be^{-M\delta}$,
and choose $0<a<b\le1$ by the exact capacity equation
\begin{equation}
 a=\lambda\frac{b h z^{1/h}}{m^2L^2}.
 \label{eq:coupled-exact-capacity}
\end{equation}
Define one polynomial by
\begin{equation}
 Q(p)=p^\alpha\left[
 A'u_\gamma(a/p)(1-J_1(p/b))
 +B'v_{\delta,\omega}(p/b)
 +\omega(1-B')J_\delta(p/b)\right].
 \label{eq:coupled-single-polynomial}
\end{equation}
The expression extends polynomially to $p=0$: the reverse degree is at
most $m$, and $\alpha+\delta=m+1$. For $a\le p\le b$, write
$Y=N^2p/b$ and $Z=\ell^2a/p$. Then
\begin{align}
 \frac{Q(p)}{p^\alpha}
 &\ge\kappa_0e^{-C_\varepsilon h}
       +c_\varepsilon\kappa_0\min(Y,1),
 \label{eq:coupled-global-lower}\\
 \frac{Q(p)}{p^\alpha}
 &\le1-c_\varepsilon\min\{h+Y+\min(Z,1),1\}.
 \label{eq:coupled-global-upper}
\end{align}
Below $a$, its absolute contribution and the target are bounded by
$C_\varepsilon^{m+1}a^\alpha$ per coordinate. The constants do not
change with the fractional part or with $m$.
\end{proposition}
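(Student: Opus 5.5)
The plan is to work in the scaled variables $Y=N^2p/b$ and $Z=\ell^2a/p$, divide by $p^\alpha$, and treat $Q(p)/p^\alpha$ as a combination of the two profile values. This reduces the problem to the two-arm envelope of Lemma~\ref{oc:envelope}. First I dispose of polynomiality. By \eqref{tutorial:upper-reverse}, $p^\alpha u_\gamma(a/p)$ is a polynomial of degree at most $m$, because $u_\gamma(x)=x^\gamma R_-(x)$ with $\deg R_-\le m$. By \eqref{tutorial:upper-forward}, $p^\alpha v_{\delta,\omega}(p/b)$ and $p^\alpha J_\delta(p/b)$ are polynomials, since $\alpha+\delta=m+1$. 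Finally $J_1$ is a polynomial.

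For $a\le p\le b$, I set $U=u_\gamma(a/p)$ and $V=v_{\delta,\omega}(p/b)$. The key observation is that the capacity equation \eqref{eq:coupled-exact-capacity} ties the sharp profile bounds to the envelope. By \eqref{eq:coupled-low-upper} and \eqref{eq:coupled-high-sharp}, up to factors $e^{O(\gamma)}$ and $e^{O(\delta)}$, we have $U\lesssim (Z/\gamma)^\gamma$ and $V\lesssim (Y/\delta)^\delta$. Their weighted product $U^\delta V^\gamma$ is then controlled by
\[
 \gamma^{-h}\delta^{-h}\,(\ell^2 N^2 a/b)^h ,
\]
which equals $\lambda^h z$ up to $e^{O(h)}$ factors, because $\ell\asymp m$ and $N\asymp L$. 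Choosing $\lambda$ small therefore puts $(U,V)$ in the region $u^\delta v^\gamma\le z e^{-c h}$ of \eqref{oc:maximum}.

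For the upper bound I split the range of $p$. Where $J_1(p/b)$ is negligible ($Y$ small), the maximum identity gives $AU+BV\le1$. The damping $A'=Ae^{-M\gamma}$, $B'=Be^{-M\delta}$ then yields a deficit of order $h$, and the extra term $\omega(1-B')J_\delta$ is $O(Y^\delta)$ by \eqref{eq:coupled-elementary-integral}. It is absorbed by the deficit \eqref{eq:coupled-low-deficit} or by the linear gap $\min(Y,1)$, with the $Z$ contribution supplying $\min(Z,1)$. For larger $Y$, I use \eqref{eq:coupled-high-localization} and $J_\delta\le1$ to get $Q/p^\alpha\le \omega+C\min\{Y^\delta,Y^{-3}\}$ plus reverse leftovers killed by $1-J_1$. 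Then $\omega=e^{-E_0/8}<1$ gives a fixed deficit.

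For the lower bound, \eqref{oc:minimum} applies whenever the product is at least $z$. In the intermediate regime I use the lower sharp bounds \eqref{eq:coupled-low-lower} and \eqref{eq:coupled-high-sharp} together with the floors $q,q\omega>\kappa_0$. This gives $AU+BV\ge G_\gamma(z)e^{-Ch}=\kappa_0e^{-Ch}$, with the $c\kappa_0\min(Y,1)$ gain coming from $J_\delta$ and $J_1$ growth. Below $a$, the coefficient sums $C^{m+1}$ and $|a/p|\ge1$ on the reverse polynomial bound each term by $C^{m+1}a^\alpha$.

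\textbf{Main obstacle.} The hardest step is the transition zone where both profiles and the cutoff $1-J_1(p/b)$ are active. There I must show that the errors $e^{\pm C\gamma}$ and $e^{\pm C\delta}$ and the $J_\delta-J_1$ discrepancy (Lemma~\ref{lem:coupled-elementary-kernel}) all cost only $O(h)$ in the exponent. This is what keeps the constants uniform as $\gamma\to0$ or $\gamma\to1$. To get it, I first choose $M$ larger than all such $C$, and then choose $\lambda$.
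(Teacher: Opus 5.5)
Your first step is correct and matches the paper's: \eqref{eq:coupled-low-upper}, \eqref{eq:coupled-high-sharp} and the capacity equation give $U^\delta V^\gamma\le e^{O(h)}\lambda^hz$, so a small fixed $\lambda$ yields $AU+BV\le1$. The gaps are in everything after that.

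\textbf{Upper bound.} Damping gives $A'U+B'V\le e^{-M\min(\gamma,\delta)}$, but the term $\omega(1-B')J_\delta$ cannot be absorbed into that deficit.
\begin{itemize}
\item For small $\gamma$, $1-B'$ is of order one and $J_\delta\ge c\min(Y,1)$. So this term exceeds the damping deficit $\approx M\gamma$ as soon as $\omega\min(Y,1)\gg M\gamma$.
\item For small $\delta$, $Y^\delta$ is close to one and $1-B'$ is comparable to the mass the damping removes. Only $(1-B')(1-\omega)\asymp\delta$ survives, which gives the $h$ term but neither the $Y$ term nor the $\min(Z,1)$ term.
\end{itemize}
Your two regimes ($J_1$ negligible, i.e.\ $Y\ll h$; and $Y$ large, where \eqref{eq:coupled-high-localization} is effective) leave the range $h\lesssim Y\lesssim1$ uncovered. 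Also, the $\min(Y,1)$ gap in \eqref{eq:coupled-global-upper} is something to be proved, not a margin available for absorbing errors. What is missing is a case split on the fractional part:
\begin{itemize}
\item For small $\gamma$, use $0\le J_\delta-J_1\le C\gamma$ to write $A'u(1-H)+\omega(1-B')J\le u(1-H)+\omega H+C\gamma$. This is a convex combination of $u$ and $\omega$. The product constraint then forces $u\le e^{-\varepsilon/4}$ once $Y\ge\gamma$. For $Y<\gamma$ with $u$ near one, it instead forces $v\le e^{-c/\gamma}$, and $J\le Cv$.
\item For small $\delta$, you need the high-profile deficit $1-v\ge c\min(Y,1)$ on $Y\ge C\delta$. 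This is not among the listed properties of $v_{s,\omega}$ and must be proved. On $Y<C\delta$, split on $u\le c_1\delta$ (the deficit comes from $(1-B')(1-\omega)$) versus $u>c_1\delta$ (then $v\le e^{-\varepsilon/4}$ and $A'\le e^{-M\gamma}$ is tiny).
\end{itemize}

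\textbf{Lower bound.} Three ingredients are missing.
\begin{itemize}
\item The quantity to bound is $T=A'u(1-H)+B'v+\omega(1-B')J$, not $AU+BV$. The cutoff removes $A'uH$, and this must be repaid by $\omega(1-B')J$. The paper uses $\omega(1-B')\ge1-B$, which follows from $1-B=Az^{1/\delta}\le e^{-E_0/\delta}$, concavity of $\log$, and $M\ge1$. This gives $T\ge A'u+B'v-(A'u-1+B)_+H$, and the product constraint then bounds the loss by $C\kappa_0h$. You name this as the obstacle but give no mechanism for it.
\item The lower envelope $v\ge\min\{q\omega,e^{-C\delta}(Y/\delta)^\delta\}$ holds only for $\omega$ bounded below. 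When $E_0$ is large, $\omega\to0$, \eqref{eq:coupled-high-sharp} covers only $Y\le\delta$, and on $\delta\le Y\le C_q\delta$ you have merely $v\ge c\omega$. Your AM--GM step therefore needs the separate fact that for $E_0\ge E_*$ a saturated arm alone exceeds $2\kappa_0$.
\item The additive gain $c\kappa_0\min(Y,1)$ cannot come from the growth of $J_\delta$ and $J_1$. For small $\delta$ the coefficient $\omega(1-B')$ is only $O(\delta)$, and for small $\gamma$ that term is spent repaying the cutoff loss. The paper instead reserves a fraction of the high arm, $B''=B'e^{-\eta\delta}$, with $(B'-B'')v\ge c\kappa_0Y$ on $Y\le C_q\delta$. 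It runs AM--GM with $B''$ in place of $B'$, and uses $T\ge q\omega\ge(1+c)\kappa_0$ on $Y\ge C_q\delta$.
\end{itemize}
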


\begin{proof}
\noindent\textbf{Step 1: reduce the interaction to the scalar envelope.}
Write $u=u_\gamma(a/p)$, $v=v_{\delta,\omega}(p/b)$, $H=J_1(p/b)$,
$J=J_\delta(p/b)$, and $T=Q(p)/p^\alpha$.
The degree ratios $\ell/m$ and $N/L$ are bounded above and below by
fixed positive constants. Thus
$ZY=\lambda(\ell/m)^2(N/L)^2h z^{1/h}$.
A sufficiently small fixed $\lambda=\lambda_\varepsilon$ gives
\begin{equation}
 u^\delta v^\gamma\le z,\qquad
 u\le e^{C\gamma}z^{1/\delta}(\delta/Y)^\gamma.
 \label{eq:coupled-product}
\end{equation}
In the central interval $Z\le\gamma$, $Y\le\delta$, the reverse
inequality holds up to $e^{-Ch}$ in the product.
Maximizing a linear functional under $u^\delta v^\gamma\le z$ gives
$Au+Bv\le1$: the two endpoint values are exactly one, and the
functional restricted to the product curve is convex.

\medskip\noindent\textbf{Step 2: establish a strict upper gap.}
The proof separates fractional parts close to zero, close to one, and
away from both endpoints. This is only an analysis of the same polynomial;
its definition does not change. Fix a sufficiently small fractional
threshold $s_0=s_0(\varepsilon)$ and then choose $M$ sufficiently large.
If $\gamma\le s_0$ and $Y\ge\gamma$, \eqref{eq:coupled-product} gives
$u\le e^{-\varepsilon/4}$ after decreasing $s_0$, since
$E_0\ge\varepsilon/2$. Also $J-H\le C\gamma$. Hence
\[
 T\le\max(e^{-\varepsilon/4},e^{-\varepsilon/16})
      +C\gamma+e^{-M(1-s_0)},
\]
which has a fixed deficit. If $Y<\gamma$, choose a fixed
$r\in(\kappa_{\max},1)$, where $\kappa_{\max}=e^{-\varepsilon/2}$.
When $u\le r$, the bounds $v,J\le C\gamma$ give a fixed deficit.
When $u>r$, the product constraint gives $v\le e^{-c/\gamma}$;
on this range $J\le Cv$. Therefore
\[
 T\le e^{-M\gamma}u+Ce^{-c/\gamma}
 \le1-c[(1-u)+\gamma].
\]
The low-profile deficit then supplies $\min(Z,1)$; when
$Z<C\gamma$ it is already controlled by $\gamma$.

For small $\delta$, the high profile has
\begin{equation}
 1-v\ge c_\varepsilon\min(Y,1)
 \quad(Y\ge C_\varepsilon\delta).
 \label{eq:coupled-high-deficit}
\end{equation}
To verify this, bound $v$ by the ordinary sharp bump plus
$\omega(1-K)$. On $C\delta\le Y\le y_0$, its excess over $K$ is
at most an arbitrarily small multiple of $Y$, while
$(1-\omega)(1-K)\ge c_\varepsilon Y$.
Outside this neighborhood the sharp bump converges uniformly to $K$
as $\delta\downarrow0$ and $1-K$ is bounded below. This proves
\eqref{eq:coupled-high-deficit}.
On $Y\ge C_\varepsilon\delta$, the low term is at most
$Ce^{-\varepsilon/(2\delta)}$, while
\[
 B'v+\omega(1-B')J
 \le1-B'c_\varepsilon\min(Y,1)-(1-B')(1-\omega).
\]
The exponential term is absorbed. Since $u\ge c\min(Z,1)$ for
$\gamma\ge1/2$, the same margin controls $\min(Z,1)$.
On $Y<C_\varepsilon\delta$, if $u\le c_1\delta$ then
$1-B'\ge1-e^{-M\delta}\ge\delta/2$ supplies a linear deficit.
If $u>c_1\delta$, the product constraint gives
$v\le e^{-\varepsilon/4}$, and the low amplitude $A'\le e^{-M\gamma}$
gives a fixed deficit. Finally, when both fractional parts are bounded
below, $T\le e^{-Ms_0}+e^{-Ms_0}+\omega$ gives a fixed deficit.
This proves \eqref{eq:coupled-global-upper} for $\alpha\ge3/2$.

\medskip\noindent\textbf{Step 3: control the loss introduced by the cutoff.}
We next record the cancellation needed for the lower bound. Since
$1-B=Az^{1/\delta}\le e^{-E_0/\delta}$, concavity of the logarithm gives
\[
 \log\frac{1-B'}{1-B}
 \ge(1-e^{-M\delta})E_0/\delta\ge(1-e^{-M})E_0.
\]
Take $M\ge1$, so that $\omega(1-B')\ge1-B$. Consequently
\begin{equation}
 T\ge A'u+B'v-(A'u-1+B)_+H,
 \qquad (A'u-1+B)_+H\le C_\varepsilon\kappa_0 h.
 \label{eq:coupled-cutoff-loss}
\end{equation}
For the second inequality, use \eqref{eq:coupled-product} and
$1-B=Az^{1/\delta}$. If $Y\le\delta$, put $r=Y/\delta$ and use
$H\le CY$ together with
$\max_{0\le r\le1}(r^\delta-r)=\gamma\delta^{\delta/\gamma}\le\gamma$.
For $Y\ge\delta$ the loss is at most $C\gamma(1-B)$.
When $\delta\ge1/2$ this is $O(\kappa_0h)$; otherwise use
$(1-B)/\kappa_0\le\kappa_0^{\gamma/\delta}
 \le C_\varepsilon\delta$.

\medskip\noindent\textbf{Step 4: reserve a term that supplies the local lower gap.}
Reserve a fixed small fraction of the high arm by putting
$B''=B'e^{-\eta\delta}$, with fixed $\eta>0$.
On $Y\le C_q\delta$ and bounded $E_0$, the sharp and compact-interval
profile bounds give
\begin{equation}
 (B'-B'')v\ge c_\varepsilon\kappa_0Y.
 \label{eq:coupled-high-reserve}
\end{equation}
For $Y\le\delta$ this follows from $\delta v\ge cY$;
for $\delta\le Y\le C_q\delta$ use the fixed positive floor,
with $\omega$ bounded below. On $Y\ge C_q\delta$,
$J\ge q$ and $v\ge q\omega$, so
\begin{equation}
 T\ge q\omega\ge(1+c_\varepsilon)\kappa_0.
 \label{eq:coupled-outer-lower}
\end{equation}

\medskip\noindent\textbf{Step 5: prove the main lower bound uniformly in the order.}
It remains to justify the base term without assuming false global
sharp bounds as $\omega\downarrow0$.
Choose a fixed sufficiently large $E_*=E_*(\varepsilon)$.
If $E_0\ge E_*$, a saturated low profile ($Z\ge\gamma$) supplies
$A'u(1-H)\ge c_\varepsilon$, because then $Y\le\delta$ and
$1-H$ is bounded below. A saturated high profile ($Y\ge\delta$)
supplies $B'v\ge c_\varepsilon\omega$. Choose $E_*$ so both quantities
are at least $2\kappa_0$. These cases give the base and local gain.
In the remaining central interval both sharp lower bounds apply;
weighted AM--GM, after the reserve, gives
$A'u+B''v\ge\kappa_0e^{-Ch}$.
The reserve is at least $c_\varepsilon\kappa_0Y$ there.

If $E_0\le E_*$, then $m$ is bounded and $\omega$ is bounded below.
Set
\[
 U=e^{-C\gamma}(Z/\gamma)^\gamma,\qquad
 V=e^{-C\delta}(Y/\delta)^\delta.
\]
On $Y\le C_q\delta$, the valid global comparisons are
$u\ge\min(q,U)$ and $v\ge\min(q\omega,V)$.
If a saturated low term satisfies $A'q\ge\kappa_0$, the base sum is
already at least $\kappa_0$. Otherwise $A'q<\kappa_0$ forces
$\gamma\ge c_\varepsilon$: as $\gamma\downarrow0$, $A'\to1$
uniformly for $z\le\kappa_{\max}$, whereas $q>\kappa_{\max}$.
Since $p\ge a$, $Z\le\ell^2$ and $m$ is bounded. Thus $U\le C_\varepsilon$,
and replacement of $U$ by its floor costs only
$\delta\log C_\varepsilon=O_\varepsilon(h)$ in weighted AM--GM.
The same argument for $B''q\omega<\kappa_0$ forces
$\delta\ge c_\varepsilon$, because $B''\to1$ uniformly as
$\delta\downarrow0$ and $q\omega/\kappa_0>1+c_\varepsilon$.
On the bounded $Y$ interval, $V\le C_\varepsilon$, so its floor costs
$\gamma\log C_\varepsilon=O_\varepsilon(h)$.
If its saturated term already exceeds $\kappa_0$ no comparison is
needed. In every case
\[
 A'u+B''v\ge\kappa_0e^{-C_\varepsilon h}.
\]

For sufficiently small $h$, subtracting the loss in
\eqref{eq:coupled-cutoff-loss} merely changes the constant in the
exponential; retain the reserve separately. If $h$ is bounded below,
$1-H$ is bounded below on $Y\le C_q$ by our choice $N^2\ge2C_q$.
For an explicit uniform justification, integrate the scaled kernel over
$[C_q,2C_q]$. As $N\to\infty$ it converges uniformly there to
$(\sin\sqrt y/\sqrt y)^{12}$, whose integral is positive.
The finitely many remaining admissible $N$ also have positive integrals,
and the normalized total integral is uniformly bounded. This proves a
fixed positive tail bound.
Thus the base part of $T$ is at least a fixed positive multiple of
$A'u+B''v$, again absorbed by $e^{-C_\varepsilon h}$; the reserve is
unchanged. Together with \eqref{eq:coupled-outer-lower} this proves
\eqref{eq:coupled-global-lower}. The proof used the same polynomial
and the same parameters throughout.

\medskip\noindent\textbf{Step 6: bound the portion below the lower cutoff.}
Finally, below $p=a$ the reverse polynomial has coefficients at most
$C^{m+1}a^{\alpha-j}$ for $0\le j\le m$. Since $p/b\le a/b$,
$0\le1-H\le1$, and the high profiles are bounded, its absolute value
and $p^\alpha$ are at most $C^{m+1}a^\alpha$. This proves the final
assertion.
\end{proof}

\subsection{From pointwise approximation to a power-sum estimate}
\paragraph{Why a deterministic correction is possible.}
There can be as many as $k$ probabilities below $a$, so their errors
must be added. The upper bound on the approximation has deliberately
left a positive gap on the other coordinates. The following calculation
shows that this gap is large enough to pay for a known correction that
covers all the below-cutoff errors. This is why the estimator includes
a deterministic additive term instead of requiring an accurate
approximation at every arbitrarily small probability.

\paragraph{Summed affine baseline.}
Let $P_*=k^{1-\alpha}$, $a_0=ka\le c_\varepsilon$, and
$P=\sum_i p_i^\alpha$. For $\alpha\ge2$, at least half the mass is on
$p_i\ge1/(2k)\ge a$, and hence
\[
 \sum_{p_i\ge a}p_i^\alpha\min(\ell^2a/p_i,1)
 \ge2^{1-\alpha}\min(\ell^2ka,1/2)P_*.
\]
For $3/2\le\alpha<2$, put $s=kp$ and $\theta=N^2/(bk)$.
The weighted sum of $Z+Y$ has pointwise factor
\[
 \ell^2a_0s^{\beta-1}+\theta s^{\beta+1}
 \ge(\ell^2a_0)^{(\beta+1)/2}\theta^{(1-\beta)/2}
 \ge c_\varepsilon a_0^\beta,
\]
where the last step uses capacity, $\beta=\gamma\ge1/2$, and
$z\ge\kappa_0/2\ge e^{-2\varepsilon\beta}/2$.
If $Z+Y\ge1$, truncation still gives the same conclusion on $s\ge1/2$.
Consequently the upper margin pays
\[
 E_aP_*,\qquad E_a=d_\varepsilon C_0^m(ka)^\beta,
\]
with sufficiently small fixed $d_\varepsilon>0$ and sufficiently small
$ka$, and it dominates $C^{m+1}ka^\alpha$.
This justifies a deterministic correction of size $2E_aP_*$ while
retaining fixed portions of the upper margins. For a common formula
including $\alpha\downarrow1$, replace $E_a$ by
$\min(\beta,1)d_\varepsilon C_0^m(ka)^\beta$.

\paragraph{The order-one divided difference.}
Near $\alpha=1$, the reverse profile has a particularly transparent
interpretation: it is a tangent line to $p\mapsto p^{1+\beta}$.
The exact tangent identity below exposes a factor $\beta$ in every
nonlinear error. The known linear aggregate $\sum_i p_i=1$ can
therefore be removed before estimating the remainder.

For $\alpha=1+\beta\le3/2$, use the same degree-one profile, whose
exact reverse identity is
\[
 p^\alpha u_\beta(a/p)=(1+\beta)\tau^\beta p
 -\beta\tau^{1+\beta}=:T_\tau(p),\qquad
 \tau=c_\beta(1+\beta)a/\beta.
\]
Suppose the common capacity rule gives $\tau/t$ in a fixed positive
compact interval, where $t=b/N^2$, and gives
$A'=1+O_\varepsilon(\beta)$ and $B'=O_\varepsilon(\beta)$.
Then the very same polynomial satisfies
\begin{equation}
 \sup_{0\le p\le b}|Q(p)-\omega p^\alpha|
 \le C_\varepsilon\beta t^\alpha.
 \label{eq:coupled-near-one-bias}
\end{equation}
Subtract $\omega[T_\tau(1-H)+p^\alpha J_\delta]$.
Its tangent-cutoff error is $O(\beta t^\alpha)$.
The remaining terms are $(A'-\omega)T_\tau(1-H)$ and
$B'p^\alpha(v_{\delta,\omega}-\omega J_\delta)$;
localization bounds each by $O_\varepsilon(\beta t^\alpha)$.
This includes the interval $p<a$.
If $Q=\sum_jq_jp^j$, then
\[
 q_0=-A'\beta\tau^{1+\beta},\qquad
 q_1=A'[(1+\beta)\tau^\beta+
       \beta\tau^{1+\beta}J_1'(0)/b],\qquad
 |q_1-t^\beta|\le C_\varepsilon\beta t^\beta.
\]
For $j\ge2$, coefficientwise cancellation at $\beta=0$ gives
$|q_j|\le\beta C_\varepsilon^{L+1}b^{\alpha-j}$.
The known constant and linear aggregates are therefore removed before
the common factorial-variance calculation.

\paragraph{What the deterministic proposition supplies to the statistical theorem.}
For any light/heavy assignment with light coordinates at most $b$,
replace heavy coordinates by a mean centered at $\omega p_i^\alpha$.
The proposition gives the baseline and local margins after paying the
below-cutoff error. Rescaling both parts by $(1+c_u u)^{-1}$ supplies an
additional upper margin of order $uP$. If
$E-E_0\ge C_{\rm bias}h+c_{\rm reg}u$, its lower counterpart
follows directly from \eqref{eq:coupled-global-lower}.
When this inequality is not available, one must use the same
polynomial's localization toward $\omega p^\alpha$ and the exact root
calibration to bound the aggregate error. The present proposition does
not by itself establish that final calibration, the pilot errors, or
the sampling variance; these are separate hypotheses to be verified
for the single common estimator.

\paragraph{The deterministic output used in the sampling proof.}
The next statement collects all preceding estimates into the precise
form needed later. The sets $\mathcal L$ and $\mathcal H$ form an
arbitrary partition of the coordinates into light and heavy groups.
The only restriction is $p_i\le b$ for light coordinates. Thus the
statement can later be applied after conditioning on an independent
pilot sample, provided that pilot classification has succeeded.
The quantity $\mathcal M$ is an explicit error budget: it is a sum of
a below-cutoff baseline, a regularization gap, and a local gap.

\begin{corollary}[Bias transfer for the single corrected expression]
\label{cor:coupled-common-bias-transfer}
Let $\kappa=e^{-E}$, $\chi=\min(\beta,1)$,
\[
 \mathfrak u=\min\{h+(\beta/\alpha)X^{-\alpha},u_0\},\quad
 r=(1+c_u\mathfrak u)^{-1},\quad
 E_a=\chi d_\varepsilon C_0^m(ka)^\beta.
\]
For any assignment to light and heavy coordinates, with every light
coordinate at most $b$, define the oracle mean
\[
 \mu=r\left[\sum_{i\in\mathcal L}Q(p_i)
             +\omega\sum_{i\in\mathcal H}p_i^\alpha\right]
            +2E_aP_*.
\]
Use the same $Q$, radius rule, and coefficients for all orders. Put
\[
 B_p=\sum_i p_i^\alpha\min(N^2p_i/b,1),\qquad
 \mathcal M=E_aP_*+\kappa\mathfrak uP+\chi\kappa B_p.
\]
Suppose the root calibration has established the following quantitative
facts, with constants depending only on $\varepsilon$:
\begin{enumerate}
\item $ka\le c_\varepsilon/D$, and $E/2\le E_0\le E$.
\item For $\alpha\ge3/2$ with $E/h$ above a fixed sufficiently large
threshold,
$E-E_0\ge C_{\rm bias}h+c_{\rm reg}\mathfrak u$, where
$C_{\rm bias}$ is chosen after the deterministic proposition and
$c_{\rm reg}>0$ is a fixed, possibly small, accuracy-dependent constant.
\item If $\alpha\ge3/2$ and $E/h$ is bounded by that threshold, then
$\alpha$ is bounded, $\gamma,\delta$ are bounded below, and
$kt\le C_\varepsilon/D$ for $t=b/N^2$.
\item For $1<\alpha\le3/2$, one has $kt\le C_\varepsilon/D$,
$\tau/t$ in a fixed positive compact interval,
$A'=1+O_\varepsilon(\beta)$, $B'=O_\varepsilon(\beta)$, and
$\mathfrak u=\Theta_\varepsilon(\beta)$.
\end{enumerate}
Then, choosing $c_u,d_\varepsilon>0$ sufficiently small and subsequently
$D$ sufficiently large, one has
\[
 \boxed{\kappa P+c_\varepsilon\mathcal M
        \le\mu\le P-c_\varepsilon\mathcal M.}
\]
The enumerated statements are calibration hypotheses, not consequences
of polynomial approximation alone.
\end{corollary}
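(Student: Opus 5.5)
The plan is to prove the two inequalities coordinatewise and then sum. Fix an assignment $\mathcal L\cup\mathcal H$ with $p_i\le b$ on $\mathcal L$, and split $\mathcal L$ further into $\mathcal L_{<}=\{p_i<a\}$ and $\mathcal L_{\ge}=\{a\le p_i\le b\}$. I would treat the three ranges $\alpha\ge3/2$ with large $E/h$ (hypothesis 2), $\alpha\ge3/2$ with bounded $E/h$ (hypothesis 3), and $1<\alpha\le3/2$ (hypothesis 4) separately. The estimator $Q$, the rescaling $r$ and the baseline $2E_aP_*$ stay identical throughout, so only the bookkeeping changes.

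\textbf{Upper bound.} On $\mathcal L_{\ge}$ I would use \eqref{eq:coupled-global-upper}. This gives
\[
Q(p_i)\le p_i^\alpha-c_\varepsilon p_i^\alpha\min\{h+Y+\min(Z,1),1\}.
\]
On $\mathcal H$ we have $\omega p_i^\alpha\le p_i^\alpha-(1-\omega)p_i^\alpha$, and $1-\omega\ge c(h+\min(Y,1))$-type slack is available because $p_i>b$ forces $\min(Y,1)=1$ in $B_p$. Summing, the local terms $\sum p_i^\alpha\min(Y,1)$ yield $c\,B_p$, which dominates $\chi\kappa B_p$ since $\chi\kappa\le1$. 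The terms $\min(Z,1)$ together with $Y$ yield, via the summed affine baseline paragraph, a fixed multiple of $(ka)^\beta d_\varepsilon^{-1}E_aP_*$. For $d_\varepsilon$ small and $ka\le c_\varepsilon/D$ this pays three things: the error $C^{m+1}ka^\alpha P_*/P_*$ from $\mathcal L_<$ (Proposition~\ref{prop:coupled-deterministic-bound}, last assertion), the added $2E_aP_*$, and a further $c\,E_aP_*$. Finally, the factor $r=(1+c_u\mathfrak u)^{-1}\le1-c\,\mathfrak u$ supplies $c\,\mathfrak uP\ge c\kappa\mathfrak uP$. Near order one the $h$ and $(1-\omega)$ slacks are of order $\beta$, matching $\chi$. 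In that regime I would use \eqref{eq:coupled-near-one-bias} instead, with $\kappa\mathfrak u P\asymp\beta P$ and the margin \eqref{upcal:near-one-margins}. The aggregate bias $C_\varepsilon\beta(kt)^\alpha P_*\le C_\varepsilon\beta P_*/D$ is then absorbed by taking $D$ large.

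\textbf{Lower bound.} On $\mathcal L_{\ge}$, \eqref{eq:coupled-global-lower} gives
\[
Q\ge\kappa_0e^{-C_\varepsilon h}p^\alpha+c\kappa_0p^\alpha\min(Y,1),
\]
and $\omega\ge\kappa_0$ holds on $\mathcal H$. In the large-$E/h$ range, hypothesis 2 gives
\[
\kappa_0e^{-C_\varepsilon h}\ge\kappa e^{E-E_0-C_\varepsilon h}\ge\kappa e^{c_{\rm reg}\mathfrak u}.
\]
I would choose $C_{\rm bias}\ge C_\varepsilon$ and $c_u<c_{\rm reg}/2$, so that after multiplying by $r$ a surplus $c\kappa\mathfrak uP$ remains. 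The local term gives $c\kappa B_p$, and the $\mathcal L_<$ loss is paid by $2E_aP_*-E_aP_*$. In the bounded-width range, hypothesis 3 and Lemma~\ref{upcal:bounded-width} make everything compact with $u\ge c$. There I would use the localization \eqref{eq:coupled-high-localization} toward $\omega p^\alpha$ instead of \eqref{eq:coupled-global-lower}: the aggregate deviation is at most $C(kt)^\alpha P_*\le CP_*/D$, while $\omega-\kappa$ and $\mathfrak u$ are bounded below. The near-one range is handled the same way with the $O(\beta)$ bias.

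\textbf{Main obstacle.} I expect the hardest step to be the lower bound in the large-$E/h$ regime. There the $e^{-C_\varepsilon h}$ loss and the regularizer must be beaten simultaneously by $E-E_0$, without shrinking $\varepsilon$, and the constants must be fixed in the right order: profile constants, then $C_{\rm bias}$, then $c_u$ and $d_\varepsilon$, and $D$ last. I would also check that the reserve from the large-$E/h$ step survives multiplication by $r$ uniformly in $m$.
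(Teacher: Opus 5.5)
Your three-regime plan is the paper's own. It uses the global bounds \eqref{eq:coupled-global-lower}--\eqref{eq:coupled-global-upper} with $r\kappa_0e^{-C_\varepsilon h}\ge\kappa(1+c\mathfrak u)$ when $E/h$ is large, localization toward $\omega p^\alpha$ when $E/h$ is bounded, and \eqref{eq:coupled-near-one-bias} near order one. The bounded-width step, however, does not work as written.

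You assert that the aggregate localization error is at most $C(kt)^\alpha P_*$. The pointwise estimate available on $[a,b]$ is $|Q(p)-\omega p^\alpha|\le Cp^\alpha\min\{1,(t/p)^3\}$. It comes from $T-\omega=A'u(1-H)+\omega(J-1)+B'(v-\omega J)$, the $y^{-3}$ tail in \eqref{eq:coupled-high-localization}, and the tails of $1-J_1$ and $1-J_\delta$. This is $O(t^\alpha)$ per coordinate only when $\alpha\le3$. For $\alpha>3$ the bound $p^{\alpha-3}t^3$ grows to $N^{2\alpha-6}t^\alpha$ at $p=b$. The bounded-width range ($\beta\le M/(8\varepsilon)$, with $M$ a large threshold) contains such orders.

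For example, take $\lfloor1/b\rfloor$ light coordinates equal to $b$. The pointwise bounds then sum to about $b^{\alpha-1}N^{-6}$, which exceeds $kt^\alpha=(kt)^\alpha P_*$ by the factor $N^{2\alpha-6}/(kb)$. In this range $N\asymp\log k$ and $kb\asymp(\log k)^2/D$, so that factor is unbounded once $\alpha>4$.

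The remedy is to measure the error against $P$ rather than $P_*$. For $\nu=\min(\alpha,2)$, H\"older's inequality gives
\[
 \sum_i p_i^\alpha\min\{1,(t/p_i)^\nu\}
 \le t^\nu\sum_i p_i^{\alpha-\nu}
 \le(kt)^\nu(P/P_*)^{-\nu/\alpha}P\le(kt)^\nu P,
\]
which is uniformly small relative to $P$ as $D$ grows. That is enough, because in this range $\mathcal M\le C_\varepsilon P$ and $r\omega$ keeps fixed gaps from $\kappa$ and $1$. It is precisely the estimate the paper uses.

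Second, in the large-$E/h$ lower bound you use only $\omega\ge\kappa_0$ on heavy coordinates, which after rescaling yields $r\omega\ge\kappa(1+c\mathfrak u)$. That pays $\kappa P_{\mathcal H}$ and the heavy share of $c\kappa\mathfrak uP$. It does not pay the heavy share of $c\chi\kappa B_p$, which can be as large as $c\chi\kappa P_{\mathcal H}$ with $\chi\ge1/2$ while $\mathfrak u$ is tiny. If every coordinate is heavy, nothing else pays it.

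You need the fixed relative margin instead. Since $\omega=e^{-E_0/8}$ and $E_0\le E$, you get $\omega/\kappa\ge e^{7E/8}\ge e^{7\varepsilon/8}$ for $\beta\ge1/2$, so $r\omega\ge(1+c_\varepsilon)\kappa$ once $c_u$ is small.

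Relatedly, the corollary constrains only light coordinates, so a heavy coordinate need not exceed $b$, and you may not assume $\min(N^2p_i/b,1)=1$ there. On the upper side this is harmless. There $1-r\omega\ge1-e^{-\varepsilon/16}$ is a fixed constant that pays both the heavy part of $B_p$ and the heavy coordinates' share of the summed baseline. With these two repairs your argument coincides with the paper's.
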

\begin{proof}
The upper baseline bounds and $ka$ small imply that $2E_aP_*$ pays the
below-cutoff error and consumes only a fixed part of the upper baseline.
For $\alpha\ge3/2$ under the second calibration inequality, apply
\eqref{eq:coupled-global-lower}--\eqref{eq:coupled-global-upper}.
Heavy coordinates have mean $\omega p_i^\alpha$ with a fixed relative
margin from both endpoints; their gaps pay any omitted $B_p$ or baseline
terms. Rescaling supplies an upper gap proportional to $\mathfrak uP$.
The calibration inequality makes
$r\kappa_0e^{-C_\varepsilon h}\ge\kappa(1+c\mathfrak u)$,
so the corresponding lower gap is retained. Since $\chi\ge1/2$, the
result is the stated margin.

To verify the other calibration range with the same polynomial, observe
on $a\le p\le b$ that
\[
 |Q(p)-\omega p^\alpha|
 \le Cp^\alpha\min\{1,(t/p)^3\}.
\]
This follows directly from
$T-\omega=A'u(1-H)+\omega(J-1)+B'(v-\omega J)$.
For $\nu=\min(\alpha,2)$, H\"older's inequality bounds its aggregate,
divided by $P$, by $C(kt)^\nu(P/P_*)^{-\nu/\alpha}$.
The below-$a$ contribution is at most $C^{m+1}(ka)^\alpha P_*$.
Under the third calibration statement these errors tend uniformly to
zero as $D$ increases. In that range $\alpha$ lies in a fixed compact
interval and $\beta\ge1/2$, so the center $r\omega$ remains a fixed
positive distance from both $\kappa$ and one after $c_u$ is chosen small.
Choose $d_\varepsilon$ small and then $D$ large. Since
$\mathcal M\le C_\varepsilon P$, these fixed gaps imply the claimed
inequality.

Finally, under the fourth calibration statement,
\eqref{eq:coupled-near-one-bias} bounds the aggregate error by
$C_\varepsilon\beta(kt)^\alpha P_*$.
The center $r\omega$ is separated from both $\kappa$ and one by fixed
positive multiples of $\beta$, after choosing $c_u$ small.
The correction is at most $C d_\varepsilon\beta P$ because $m=1$.
Taking $d_\varepsilon$ small and $D$ large leaves gaps of order $\beta P$.
Here $\mathcal M\le C_\varepsilon\beta P$; the factor $\chi$ in its
last term is essential. This gives the same asserted inequality without
changing the estimator or its parameter rule.
\end{proof}

\paragraph{What has now been proved, and what remains.}
Once the listed calibration conditions are checked, the corrected
oracle mean is at distance at least $c_\varepsilon\mathcal M$ from
both endpoints of $[\kappa P,P]$. The next section replaces that
oracle mean by observable count statistics. It must show that
polynomial-count fluctuations, heavy-count errors, and pilot mistakes
jointly use less than this available distance. No sample-complexity
conclusion follows from the deterministic bounds alone.

\clearpage
\section{From the polynomial approximation to a sampling guarantee}
\label{sec:tutorial-sampling}

The preceding section constructed a polynomial whose values are close
 enough to $p^\alpha$ on the range where small probabilities matter.
We now replace its monomials by factorial statistics and prove that the
resulting random quantity stays inside the required power-sum interval.
There are two different statements to establish:
\begin{enumerate}
\item the deterministic approximation has room on both sides of the
interval $[e^{-2\varepsilon(\alpha-1)}P_\alpha(p),P_\alpha(p)]$;
\item count fluctuations and mistakes made by the pilot consume less
than that room with sufficiently high probability.
\end{enumerate}
A small approximation error alone does not establish the second statement.
Large polynomial coefficients can amplify count noise. This is why we
first bound the coefficients of the actual polynomial, rather than
replace it by an unspecified polynomial of the same degree.

\paragraph{The objects used throughout this section.}
Write $\alpha=m+\gamma$, where $m=\lfloor\alpha\rfloor$ and
$0<\gamma<1$, and put $\beta=\alpha-1$ and $\delta=1-\gamma$.
The previous section has specified a sample scale $n$, a polynomial
parameter $L$, and probability cutoffs $0<a<b$. The actual polynomial
has degree $d\le C_dL$ and is written $Q(p)=\sum_{j=0}^d q_jp^j$.
For every symbol we generate two independent counts,
\[
 N_i\sim\Poi(np_i),\qquad Y_i\sim\Poi(c_pnp_i),
\]
where $N_i$ is used to estimate the contribution and $Y_i$ is used only
to select the branch. The indicator
$I_i=\mathbf1\{Y_i\le c_pB\alpha\}$ selects the polynomial branch.
The Gamma branch is used when $I_i=0$. All counts are independent
across symbols as well. Here $B,c_p$ are fixed constants selected in
the proof, and not unknown features of the distribution.

\paragraph{Why the proof is split into several parts.}
For $\alpha\ge3/2$, the approximation has a margin that controls the
conditional variance of the polynomial, pilot errors, and the Gamma
branch separately. For $\alpha$ close to one, the desired interval
shrinks at rate $\beta=\alpha-1$; the proof must therefore retain a
factor $\beta$ in bias and a factor $\beta^2$ in variance.
The estimator is unchanged, but its analysis uses the cancellation
$\sum_i p_i=1$ explicitly. After these estimates are established,
Theorem~\ref{cu:theorem} assembles them into one uniform upper bound.

\paragraph{A useful identity to keep in mind.}
If $\widehat Q_i=\sum_jq_j(N_i)_j/n^j$, then
$\E\widehat Q_i=Q(p_i)$. Moreover, the sum of the linear terms can be
replaced by the known number $q_1$:
\[
 q_1+\sum_i\left(\widehat Q_i-q_1N_i/n\right)
 =\sum_i\widehat Q_i+q_1\left(1-\sum_iN_i/n\right).
\]
The added term has mean zero. Its role is particularly important near
order one, where the target is almost the known linear functional
$\sum_i p_i$. We call this replacement the \emph{affine anchor}.

\subsection{Conditional count noise for the common polynomial}
\label{sec:conditional-count}
\label{sec:common-conditional-count}

This section evaluates the count variance for one polynomial and one pilot
rule. The deterministic approximation margins are hypotheses of
Theorem~\ref{thm:cc-count}; they are not consequences of a variance bound.
Throughout, logarithms are natural and $(w)_j=w(w-1)\cdots(w-j+1)$.

\paragraph{Step 1: control the coefficients before controlling variance.}
The kernels used in the approximation have many coefficients, but their
coefficients decay factorially after the relevant endpoint scale is
removed. This permits summation of the variances of all monomials.
The first lemma records that decay; the next transports it to the
coupled polynomial actually used by the estimator. As usual, $[z^j]P$
denotes the coefficient of $z^j$ in a polynomial $P$.

\begin{lemma}[Endpoint coefficients]\label{lem:cc-endpoint}
For $N\ge2$, put
\[
 B_N(z)=\frac{1-T_N(1-2z)}{2N^2z},\qquad K_N(z)=B_N(z)^2.
\]
The singularity at zero is removable, $K_N(0)=1$, and
\[
 |[z^\ell]K_N(z)|\le\frac{(16N^2)^\ell}{(2\ell)!}.
 \tag{CC1}\label{eq:cc-kernel-coeff}
\]
The same type of estimate, with a different absolute constant, holds for
any fixed power of $K_N$ and for a fixed finite number of products of
its normalized integral profiles.

In particular, the low endpoint profiles used in the coupled construction
have polynomial part $P$ of degree at most $m$ satisfying
\[
 |[z^\ell]P|\le C m^{2\gamma}
                \frac{(Cm^2)^\ell}{(\ell!)^2},
 \qquad 0\le\ell\le m,\quad 0<\gamma<1.
 \tag{CC2}\label{eq:cc-low-profile-coeff}
\]
For $Q_-(p)=p^{m+\gamma}F(a/p)$, this implies
\[
 |[p^{m-\ell}]Q_-|
 \le C m^{2\gamma}a^{\gamma+\ell}
                 \frac{(Cm^2)^\ell}{(\ell!)^2}.
 \tag{CC3}\label{eq:cc-reverse-coeff}
\]
Constants may depend on the fixed profile parameters, but not on $m$, the
fractional part, $a$, or the alphabet size.
\end{lemma}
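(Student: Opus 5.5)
The plan is to obtain \eqref{eq:cc-kernel-coeff} from an explicit hypergeometric expansion of $T_N(1-2z)$, and then carry the factorial decay through squaring, integration, normalization and substitution. Set $1-2z=\cos\theta$, so that $z=\sin^2(\theta/2)$. Then $1-T_N(1-2z)=2\sin^2(N\theta/2)$, which vanishes to first order at $z=0$. This shows the singularity is removable. I will use the classical expansion
\[
 T_N(1-2z)=\sum_{j=0}^N(-1)^j\frac{N}{N+j}\binom{N+j}{2j}(4z)^j .
\]
I would check it quickly via the $j=1$ term $-2N^2z$, which matches $T_N'(1)=N^2$, or derive it from the Gauss hypergeometric form ${}_2F_1(-N,N;\tfrac12;z)$. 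The key simplification is
\[
 \frac{N}{N+j}\binom{N+j}{2j}=\frac{N^2\prod_{i=1}^{j-1}(N^2-i^2)}{(2j)!}\le\frac{N^{2j}}{(2j)!}.
\]
Dividing by $2N^2z$ and taking $j=\ell+1$ gives
\[
 |[z^\ell]B_N|\le\frac{2(4N^2)^\ell}{(2\ell+2)!}\le\frac{(4N^2)^\ell}{(2\ell)!},
\]
and $B_N(0)=1$, so $K_N(0)=1$.

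Squaring is a Cauchy product. Using
\[
 \sum_i\frac{1}{(2i)!\,(2\ell-2i)!}\le\frac{1}{(2\ell)!}\sum_r\binom{2\ell}{r}=\frac{2^{2\ell}}{(2\ell)!},
\]
I get exactly $(16N^2)^\ell/(2\ell)!$, which is \eqref{eq:cc-kernel-coeff}. The same convolution argument handles any fixed power $K_N^q$, and the constant $16$ grows to $4^{q}\cdot 4$. For the normalized integral $J_s(x)=I_s^{-1}\int_0^x t^{s-1}K(t)\,dt$, the coefficient of $x^{j+s}$ is $[t^j]K$ divided by $j+s$ and multiplied by $I_s^{-1}$. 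Lemma~\ref{lem:coupled-elementary-kernel} gives $I_s^{-1}\le CsN^{2s}$, and $s/(j+s)\le1$, so the factorial decay survives with an extra factor $CN^{2s}$. Factors $1-K$, $W$, and finite products are again convolutions with fixed numbers of factors. Each such product only multiplies the base constant, because $(2i)!(2\ell-2i)!\ge(2\ell)!/4^\ell$.

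For \eqref{eq:cc-low-profile-coeff}, I would treat the two forms of the low profile separately. For the degree-one completion, the polynomial part $(1-c_sx)/a_s$ has explicit coefficients, and $a_s^{-1}=s^{-s}e^{O(s)}\le C$. For the Chebyshev form $u_s=x^sK/a_s+f_LJ_s(1-K)$ with resolution $\ell_{\rm res}\asymp m$, I divide by $x^s$. The first term contributes $a_s^{-1}[x^\ell]K$. The normalization estimate $a_s\ge\ell_{\rm res}^{-2s}s^se^{-Cs}$ from the profile construction gives $a_s^{-1}\le C\ell_{\rm res}^{2s}\le Cm^{2\gamma}$, where $s=\gamma$ and $s^{-s}\le e^{1/e}$. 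The second term's $J_s/x^s$ carries the factor $I_s^{-1}\le CsN^{2s}$ from the previous paragraph, and this is again $\le Cm^{2\gamma}$. Combining with $(2\ell)!\ge(\ell!)^2$ and $\ell_{\rm res}^2\le Cm^2$ gives the bound $Cm^{2\gamma}(Cm^2)^\ell/(\ell!)^2$.

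Finally, \eqref{eq:cc-reverse-coeff} is pure substitution. With $F(x)=x^\gamma P(x)$, I get
\[
 p^{m+\gamma}F(a/p)=\sum_\ell[z^\ell]P\,a^{\gamma+\ell}p^{m-\ell},
\]
so the coefficient bound transfers verbatim. I expect the main obstacle to be the second paragraph's bookkeeping. The profiles mix $K$, $J_s$ and $1-K$, and contain the normalizer $a_s$. The risk is that the $N^{2s}$ or $m^{2\gamma}$ prefactor and the constants from a fixed number of convolutions might drift into something $m$-dependent beyond $m^{2\gamma}$. I would control this by keeping $s$ and $N^{2s}$ explicit and applying only the single convolution inequality above at each step.
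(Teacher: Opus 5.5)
Your proposal is correct and follows the paper's proof essentially step by step. It uses the same explicit coefficient formula $[z^\ell]B_N=(-1)^\ell 2^{2\ell+1}\prod_{r\le\ell}(N^2-r^2)/(2\ell+2)!$, which you obtain from ${}_2F_1(-N,N;\tfrac12;z)$ rather than from the Chebyshev recurrence, then a Cauchy product giving exactly $(16N^2)^\ell/(2\ell)!$, the normalization $(sI_s)^{-1}=N^{2s}e^{O(s)}$ for the integral profiles, the bound $a_s^{-1}\le Cm^{2\gamma}$ plus the separate degree-one completion for \eqref{eq:cc-low-profile-coeff}, and direct substitution for \eqref{eq:cc-reverse-coeff}. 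The only cosmetic difference is that the paper keeps the sharper $(2\ell+1)!$ denominator with the exact identity $\sum_i 1/((2i+1)!(2\ell-2i+1)!)=2^{2\ell+1}/(2\ell+2)!$, while your cruder $(2\ell)!$ bound combined with the full binomial sum reaches the same constant $16$.
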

\begin{proof}
The Chebyshev recurrence, or induction in that recurrence, gives
\[
 [z^\ell]B_N(z)=(-1)^\ell
 \frac{2^{2\ell+1}}{(2\ell+2)!}
 \prod_{r=1}^{\ell}(N^2-r^2),\qquad0\le\ell<N.
\]
Thus $|[z^\ell]B_N|\le(4N^2)^\ell/(2\ell+1)!$.
Convolution and the identity
\[
 \sum_{i=0}^{\ell}\frac1{(2i+1)!(2\ell-2i+1)!}
 =\frac{2^{2\ell+1}}{(2\ell+2)!}
 \le\frac{4^\ell}{(2\ell)!}
\]
prove \eqref{eq:cc-kernel-coeff}. Each additional fixed product changes
only the constant, by the even binomial convolution.

For a normalized integral profile
\[
 J_s(z)=I_s^{-1}\int_0^z u^{s-1}K(u)\,du,
 \qquad I_s=\int_0^1u^{s-1}K(u)\,du,
\]
its polynomial part has coefficients $k_\ell/[I_s(\ell+s)]$.
The endpoint normalization $(sI_s)^{-1}=N^{2s}e^{O(s)}$ therefore
preserves \eqref{eq:cc-kernel-coeff}, up to the prefactor $CN^{2s}$.
For the actual raised-floor low profile, let $f\in(0,1)$ be its fixed
floor parameter and define
\[
 A_{s,f}=\max_{0\le z\le1}
 \frac{z^sK(z)}{1-fJ_s(z)(1-K(z))},\qquad
 \frac{F_s(z)}{z^s}=\frac{K(z)}{A_{s,f}}
             +f\frac{J_s(z)}{z^s}(1-K(z)).
\]
Its proved normalization is
$A_{s,f}=N^{-2s}[s/(1-f)]^se^{O_f(s)}$.
Because $s/(1-f)\le1/(1-f)$ and $s^{-s}\le e^{1/e}$,
one has $A_{s,f}^{-1}\le C_fN^{2s}$ and
$N^{2s}\le C_fA_{s,f}^{-1}$.
The kernel term therefore has coefficient bound
$C_fN^{2s}(CN^2)^\ell/(2\ell)!$.
For the product with $1-K$, convolve the integral-polynomial estimate
with \eqref{eq:cc-kernel-coeff}; the even binomial convolution only
changes the base constant. The added product has zero constant
coefficient, so the leading coefficient remains exactly
$A_{s,f}^{-1}$. This proves
\eqref{eq:cc-low-profile-coeff} when $N$ is a fixed fraction of $m$.
For the finitely many smaller available degrees, use the actual
fixed-floor degree-one profile. For the prescribed $q_L\in(0,1)$,
let $c_s\in(s/(1+s),1)$ be the unique solution of
\[
 (1-c_s)c_s^s(1+s)^{1+s}/s^s=q_L.
\]
The left side decreases continuously from one to zero on this
interval. Its stationary point $s/[c_s(1+s)]$ lies in $(0,1)$,
so its exact normalization is
\[
 F_s(z)=z^s(1-c_sz)/A_s,\qquad
 A_s=\frac{s^s}{c_s^s(1+s)^{1+s}}.
\]
Both polynomial coefficients are bounded by $4e^{1/e}$, since
$c_s\le1$, $s^{-s}\le e^{1/e}$, and $(1+s)^{1+s}\le4$.
This is the same prescribed-floor degree-one profile as in the
approximation construction; no different profile is substituted.
Substitution of $a/p$ proves
\eqref{eq:cc-reverse-coeff}.
\end{proof}

\begin{lemma}[Coefficients of the coupled polynomial]
\label{lem:cc-coupled-coeff}
Write $\alpha=m+\gamma$, $\delta=1-\gamma$, and consider the actual
coupled polynomial
\[
 Q(p)=A'Q_-(p)\{1-J_1(p/b)\}
       +p^\alpha\{B'v_\delta(p/b)+\omega(1-B')J_\delta(p/b)\}.
 \tag{CC4}\label{eq:cc-coupled-Q}
\]
Assume that its amplitudes are bounded, that $a\le b$, $nb=c_bL$,
that all high profiles have degree $O(L)$ and endpoint scale $L^2$,
and that
\[
 Z=am^2L^2/b\le Z_0.
 \tag{CC5}\label{eq:cc-Z}
\]
Here $c_b,Z_0$ and the profile constants are fixed. Then, writing
$Q=\sum_{j=0}^dq_jp^j$, $d\le C_dL$, one has
\begin{align}
 |q_j|&\le C m^{2\gamma}a^{\alpha-j}
    \frac{(Cm^2)^{m-j}}{((m-j)!)^2},&&1\le j\le m,
                                                        \tag{CC6}\label{eq:cc-low-coeff}\\
 |n^{-\delta}q_{m+1}|&\le C(L/\delta)^\delta,
                                                        \tag{CC7}\label{eq:cc-leading-high}\\
 |q_j|&\le e^{CL}b^{\alpha-j},&&m+1\le j\le d.
                                                        \tag{CC8}\label{eq:cc-high-coeff}
\end{align}
These bounds hold for the omega-tail profile with
$\omega=e^{-c_0E_0}$ and attenuated amplitudes
$A'=Ae^{-M\gamma}$, $B'=Be^{-M\delta}$.
\end{lemma}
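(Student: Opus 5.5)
The plan is to expand \eqref{eq:cc-coupled-Q} term by term and track which monomials of $p$ each piece produces. I treat the reverse block $A'Q_-(p)\{1-J_1(p/b)\}$ and the forward block $p^\alpha\{B'v_\delta(p/b)+\omega(1-B')J_\delta(p/b)\}$ separately and add the resulting coefficient bounds.

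\textbf{Forward block.} Here $p^\alpha(p/b)^\delta=b^{-\delta}p^{m+1}$, so writing $v_\delta(x)=x^\delta V(x)$ and $J_\delta(x)=x^\delta R(x)$ gives
\[
 [p^{m+1+j}]=b^{-\delta-j}\bigl(B'[x^j]V+\omega(1-B')[x^j]R\bigr).
\]
This block contributes only to degrees $\ge m+1$. The last part of Lemma~\ref{lem:coupled-elementary-kernel}, together with the bound $a_{\delta,\omega}^{-1}\le CN^{2\delta}$ from \eqref{eq:coupled-high-normalization}, bounds the coefficient sums by $C^{L+1}$. Since $b\le1$ and $j\le d=O(L)$, the factor $b^{-\delta-j}=b^{\alpha-(m+1+j)}$ is exactly the required power. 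Together with $C^{L+1}\le e^{CL}$, this gives \eqref{eq:cc-high-coeff} for the forward contribution.

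For \eqref{eq:cc-leading-high} I use the precise constant term rather than a crude bound. The constant coefficients of $V$ and of $R$ are $K(0)/a_{\delta,\omega}$ and $1/(\delta I_\delta)$. Both are $N^{2\delta}\delta^{-\delta}e^{O(\delta)}$ by \eqref{eq:coupled-high-normalization} and \eqref{eq:coupled-elementary-integral}. Hence
\[
 |q_{m+1}|\le C b^{-\delta}(N^2/\delta)^\delta
\]
plus the reverse contribution at degree $m+1$, which I handle below. Multiplying by $n^{-\delta}$ and using $nb=c_bL$ and $N\asymp L$ gives $n^{-\delta}b^{-\delta}N^{2\delta}\le C L^{-\delta}L^{2\delta}=CL^\delta$, hence \eqref{eq:cc-leading-high}.

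\textbf{Reverse block.} By \eqref{eq:cc-reverse-coeff}, $Q_-$ has coefficients $c_{m-\ell}$ on $p^{m-\ell}$ with the bound stated there. Multiplying by $1-J_1(p/b)=1-\sum_{r\ge1}\rho_r(p/b)^r$, where $\sum|\rho_r|\le C^{N+1}$, yields degrees $j\le m$ and degrees above $m$. Two things must be checked.

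\emph{(i) Degrees $1\le j\le m$.} The contribution is $c_j$ plus the convolution $\sum_{r\ge1}\rho_r b^{-r}c_{j-r}$. I bound this convolution using the finer estimate $|\rho_r|\le(CN^2)^r/(2r)!$ from \eqref{eq:cc-kernel-coeff}, which transfers to $J_1$. Each term, relative to the target $a^{\alpha-j}(Cm^2)^{m-j}/((m-j)!)^2$, picks up a factor
\[
 \frac{(CN^2a/b)^r\,((m-j)!)^2}{(2r)!\,((m-j+r)!)^2 (Cm^2)^{r}}\cdot(Cm^2)^{2r}\ \lesssim\ \frac{(C'Z)^r}{(2r)!},
\]
after using $m^2/((m-j+r)^2)$-type ratios and $N\asymp L$. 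By \eqref{eq:cc-Z} this sums to a constant, which gives \eqref{eq:cc-low-coeff}.

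\emph{(ii) Degrees above $m$.} Each such coefficient is of the form $c_{m-\ell}\rho_r b^{-r}$ with $r>\ell$. I rewrite it as $b^{\alpha-j}$ times
\[
 a^{\gamma+\ell}b^{-\gamma-\ell}(Cm^2)^\ell/(\ell!)^2\cdot C^{N}.
\]
Since $a\le b$ and $(Cm^2)^\ell/(\ell!)^2\le e^{Cm}\le e^{CL}$, this is bounded by $e^{CL}b^{\alpha-j}$, as required for \eqref{eq:cc-high-coeff}. This bound also covers the reverse contribution at $j=m+1$ needed in \eqref{eq:cc-leading-high}. That contribution is $\lesssim a^{\gamma}b^{-1}C^N$-type terms, and they need to be compared with $n^{\delta}(L/\delta)^\delta b^{-\delta}$. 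I will use \eqref{eq:cc-Z} in the form $a\le Z_0 b/(m^2L^2)$, together with the decay of $\rho_1$, to show the degree-$(m+1)$ reverse term is $O(b^{-\delta}a^\gamma b^{-\gamma}N^2)\le O(b^{-\delta}N^{2\delta})$.

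\textbf{Main obstacle.} I expect the delicate step to be (i) and this last comparison. The factorial decay in \eqref{eq:cc-kernel-coeff} must beat the growth $(N^2/b)^r$ using only \eqref{eq:cc-Z}, with constants uniform in $m$ and $\gamma$. The prefactor $m^{2\gamma}$ must also survive the convolution unchanged. I would try first to reduce (i) to a generating-function majorant comparison, namely $\sum_r (C'Z)^r/(2r)!\le\cosh\sqrt{C'Z_0}$.
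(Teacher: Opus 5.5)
Your proposal is correct and is essentially the paper's proof. The pure high arm starts at degree $m+1$, and its leading coefficient is read off from the sharp normalizations. The cross term $Q_-J_1$ at degrees $\le m$ is controlled by a factorial convolution majorant in $Z$; your $(2r)!$ decay and $(m-j)!\le(m-j+r)!$ play the role of the paper's $(t!)^2$ decay and $(\ell+t)!\ge\ell!\,t!$. Finally, \eqref{eq:cc-high-coeff} follows by rescaling $p=bz$ with $(a/b)^{\gamma+\ell}\le1$.

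When you write out the degree-$(m+1)$ cross term, two details need fixing. First, keep the factor $m^{2\gamma}$ from \eqref{eq:cc-reverse-coeff}; it cancels against the $m^2$ in \eqref{eq:cc-Z}, giving $n^{-\delta}$ times that coefficient at most $CZ_0^\gamma L^\delta$. Second, sum over all $r\ge1$ rather than only $r=1$, using the same majorant as in your step (i). Also, the comparison quantity should be $n^{\delta}(L/\delta)^\delta$, without the extra factor $b^{-\delta}$.
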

\begin{proof}
The coefficient bound for the normalized cutoff is
\[
 J_1(z)=\sum_{t\ge1}v_tz^t,
 \qquad |v_t|\le(CL^2)^t/(t!)^2.
\]
For a coefficient of degree $m-\ell$ in the cross term $Q_-J_1$, use
$(\ell+t)!\ge\ell!t!$ to bound its absolute value by
\[
 C m^{2\gamma}a^{\gamma+\ell}
 \frac{(Cm^2)^\ell}{(\ell!)^2}
 \sum_{t\ge1}\frac{(CZ)^t}{(t!)^4}.
\]
The last sum is bounded under \eqref{eq:cc-Z}; the pure high arm starts
at degree $m+1$. This proves \eqref{eq:cc-low-coeff}.

The cross coefficient at degree $m+1$, expressed in count units, is at
most
\[
 C\left(\frac{am^2n}{\gamma}\right)^\gamma
 \frac{L^2}{nb}
 \sum_{\ell\ge0}\frac{(CZ)^\ell}
                         {(\ell!)^2((\ell+1)!)^2}
 \le C L^\delta(Z/\gamma)^\gamma
 \le C(L/\delta)^\delta.
\]
We used $am^2n=c_bZ/L$ and $\gamma^{-\gamma}\le e^{1/e}$.
The pure high arm has the same leading bound by its sharp
normalization. This proves \eqref{eq:cc-leading-high}.
Rescaling $p=bz$ gives \eqref{eq:cc-high-coeff}: the reverse coefficients
contribute factors $(a/b)^{\gamma+\ell}\le1$, and all polynomial
coefficient sums are at most $C^L$.

For the omega-tail normalization, the denominator in the defining
maximum lies between the ordinary bump denominator and the localized
unit-tail denominator. Its normalizer therefore lies between the
corresponding two normalizers, both equal to
$N^{-2s}s^se^{O(s)}$. The added term
$\omega J_s(1-K)W$ is a fixed product of the same kernel and integral
polynomials. It preserves all the bounds just proved. Bounded scalar
amplitudes and their attenuation also preserve these bounds.
\end{proof}

\paragraph{Step 2: use the pilot to suppress excessive polynomial noise.}
The polynomial is designed for small probabilities. Without a pilot,
its factorial moments could be large at a frequent symbol. Independence
of $Y$ and $N$ lets us multiply the factorial second moment by the small
probability that a frequent symbol is incorrectly assigned to this
branch. In the following lemma $j$ is an integer, as required by the
falling factorial.

\begin{lemma}[A pilot bound for factorial moments]\label{lem:cc-pilot}
Fix $B,c_p>0$. Let $N\sim\operatorname{Poi}(\lambda)$ and
$Y\sim\operatorname{Poi}(c_p\lambda)$ be independent, and put
$I=\mathbf1\{Y\le c_pB\alpha\}$. For $1\le j\le\alpha$,
\[
 \mathbb E[I(N)_j^2]\le(C_{B,c_p}\alpha)^j\lambda^j.
 \tag{CC9}\label{eq:cc-pilot-moment}
\]
\end{lemma}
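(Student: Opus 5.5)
By independence of $N$ and $Y$,
\[
 \E[I(N)_j^2]=\Prob\{Y\le c_pB\alpha\}\,\E(N)_j^2 ,
\]
so the plan is to bound the two factors separately and split on the size of $\lambda$.

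\textbf{Second factorial moment.} From the product identity in Lemma~\ref{lem:factorial-energy-new} (or the bound in Step~4 of Theorem~\ref{tutorial:integer-raw}),
\[
 \E(N)_j^2=\sum_{\ell=0}^j\binom j\ell^2\ell!\,\lambda^{2j-\ell}
 \le\lambda^j(\lambda+j)^j .
\]

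\textbf{Small $\lambda$.} If $\lambda\le C_1\alpha$, with $C_1=2eB+2$ say, drop the pilot indicator. Since $j\le\alpha$, this gives
\[
 \lambda^j(\lambda+j)^j\le\lambda^j\bigl((C_1+1)\alpha\bigr)^j ,
\]
which is the claimed form with $C_{B,c_p}\ge C_1+1$.

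\textbf{Large $\lambda$.} If $\lambda>C_1\alpha$, the threshold $c_pB\alpha$ lies well below the pilot mean $c_p\lambda$. The Poisson Chernoff lower tail gives
\[
 \Prob\{Y\le x\}\le e^{-c_p\lambda}(ec_p\lambda/x)^x,\qquad x=c_pB\alpha\le c_p\lambda .
\]
Put $\theta=B\alpha/\lambda<1/(2e)$. The exponent is then $-c_p\lambda\{1-\theta\log(e/\theta)\}$. The bracket is at least $1/2$ once $C_1$ is large enough in terms of $B$, so
\[
 \Prob\{Y\le c_pB\alpha\}\le e^{-c_p\lambda/2}.
\]
Also $\lambda+j\le2\lambda$, so it remains to show
\[
 e^{-c_p\lambda/2}(2\lambda)^j\le(C\alpha)^j .
\]
The left side is at most $\max_{\lambda>0}e^{-c_p\lambda/2}(2\lambda)^j=(4j/(ec_p))^j\le(4\alpha/(ec_p))^j$. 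This gives the bound with $C_{B,c_p}=\max\{C_1+1,4/(ec_p)\}$.

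\textbf{Main obstacle.} The only delicate point is uniformity in $\alpha$ and $j$. The Chernoff exponent must be linear in $\lambda$ with a constant independent of $\alpha$, which holds because both the threshold and $C_1\alpha$ scale with $\alpha$. The maximization over $\lambda$ then costs only $j^j\le\alpha^j$, which is absorbed into the power $j$.
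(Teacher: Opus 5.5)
Your proof is correct and follows the paper's argument: you use the same bound $\E(N)_j^2\le\lambda^j(\lambda+j)^j$, drop the pilot when $\lambda=O(\alpha)$, and for larger $\lambda$ combine an exponential pilot tail with a supremum over $\lambda$ that costs only $(Cj)^j\le(C\alpha)^j$. The only difference is cosmetic: the paper splits at $2B\alpha$ and uses $\Prob(I=1)\le e^{-c_p\lambda/8}$, while you split at $(2eB+2)\alpha$ so that the explicit Chernoff form gives the exponent $c_p\lambda/2$.
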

\begin{proof}
The factorial identity gives
\[
 \mathbb E(N)_j^2
 =\sum_{r=0}^j\binom jr^2r!\lambda^{2j-r}
 \le\lambda^j(\lambda+j)^j.
\]
For $\lambda\le2B\alpha$ this proves the assertion. For larger means,
$\mathbb P(I=1)\le e^{-c_p\lambda/8}$, and
$\sup_{\lambda\ge0}e^{-c_p\lambda/8}(\lambda+j)^j
\le(C_{c_p}j)^j$. Independence finishes the proof.
\end{proof}

\paragraph{Step 3: measure noise against the available margin.}
Let $P=\sum_i p_i^\alpha$ and $P_*=k^{1-\alpha}$. The inequality
$P\ge P_*$ follows from convexity, so $A=P/P_*\ge1$ measures how far
the power sum is above its smallest possible value.
The margin below has three terms. The first is the deterministic
correction for probabilities below $a$. The second is the room
reserved by the scalar-root calibration. The third rewards coordinates
whose expected counts are large enough to provide additional room.
We need a variance bound proportional to the square of this entire
margin, uniformly over the unknown $p$.

Conditioning on the pilot makes all selectors $I_i$ fixed; it does not
remove count noise in $N_i$. The theorem bounds that conditional noise.
Pilot randomness itself is handled later, so this theorem should not
be read as a bound on the full variance of the hybrid estimator.

\begin{theorem}[Conditional count bound with the actual cutoff]
\label{thm:cc-count}
Fix $\varepsilon>0$ and $\alpha\ge3/2$ noninteger. Set
\[
 P_*=k^{1-\alpha},\quad P=\sum_i p_i^\alpha,\quad
 \kappa=e^{-2\varepsilon(\alpha-1)},\quad
 n=D\alpha k^{1-1/\alpha}X,
\]
\[
 L=\lceil\alpha(C_L+c_L\log X)\rceil,\qquad nb=c_bL,
 \qquad u\ge c_uX^{-\alpha},\quad X\ge1.
\]
Assume the coefficients satisfy
\eqref{eq:cc-low-coeff}--\eqref{eq:cc-high-coeff}, $nb\ge2B\alpha$,
and that fixed positive $\rho_-,\rho_+$ obey $ka\le\rho_+/D$.
Let
\[
 E_b=\chi_\alpha d_\varepsilon C_0^m(\rho_-/D)^{\alpha-1},\qquad
 E_a=\chi_\alpha d_\varepsilon C_0^m(ka)^{\alpha-1},\quad
 \chi_\alpha=\min\{\alpha-1,1\},
\]
where $1/2\le\chi_\alpha\le1$, and assume the calibrated baseline comparison
\[
 E_b\le E_a+C_\varepsilon\kappa u.
 \tag{CC10}\label{eq:cc-baseline-comparison}
\]
Define the actual margin
\[
 \mathcal M=E_aP_*+\kappa uP+\kappa\mathcal B,
 \qquad
 \mathcal B=\sum_i p_i^\alpha\min(Lnp_i,1).
 \tag{CC11}\label{eq:cc-margin}
\]
For independent counts $N_i\sim\operatorname{Poi}(np_i)$ and
$Y_i\sim\operatorname{Poi}(c_pnp_i)$, let
\[
 I_i=\mathbf1\{Y_i\le c_pB\alpha\},\qquad
 \widehat Q=\sum_iI_i\sum_{j=0}^dq_j(N_i)_j/n^j.
\]
For any fixed $\eta>0$, the constants can be chosen so that
\[
 \mathbb E\operatorname{Var}(\widehat Q\mid Y)
       \le\eta\mathcal M^2.
 \tag{CC12}\label{eq:cc-main-variance}
\]
A valid order of choices is: fix the profiles, pilot constants, $c_b$
and $C_L$; choose $c_L$ sufficiently small; determine the calibration
constants $\rho_-,\rho_+$; choose $C_0$ sufficiently large; choose the
fixed positive baseline prefactor $d_\varepsilon$; and choose $D$ last.
In particular, the proof permits the actual cutoff to be proportional
to $D^{-1}$ or arbitrarily smaller.
\end{theorem}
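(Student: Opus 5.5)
Conditional on the pilot $Y$ the selectors are fixed, so independence across coordinates and Lemma~\ref{lem:factorial-energy-new} give
\[
 \operatorname{Var}(\widehat Q\mid Y)=\sum_i I_i\operatorname{Var}\Bigl(\sum_j q_j(N_i)_j/n^j\Bigr).
\]
The plan is not to use the derivative form directly. Instead I would split the coefficient vector into the low block $1\le j\le m$ and the high block $m+1\le j\le d$, and apply Cauchy--Schwarz with geometric weights, so that
\[
 \operatorname{Var}\le 2\Bigl(\sum_{j\le m}|q_j|\,\|(N)_j\|\,n^{-j}\Bigr)^2+2\Bigl(\sum_{j>m}\cdots\Bigr)^2 .
\]
Taking expectation over the pilot, Lemma~\ref{lem:cc-pilot} bounds $\E[I_i(N_i)_j^2]$ by $(C\alpha)^j\lambda_i^j$ for $j\le\alpha$. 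For $j>m$ I would use $\lambda_i^j(\lambda_i+j)^j$ on $\lambda_i\le 2B\alpha$ and the pilot tail otherwise. Thus each coordinate contributes at most a sum over $j$ of $q_j^2(C\alpha)^j p_i^j n^{-j}$, up to the weight factors.

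\textbf{Low block.} Insert \eqref{eq:cc-low-coeff}, with $\ell=m-j$. The term for degree $j$ is then
\[
 C m^{4\gamma}a^{2(\alpha-j)}(Cm^2)^{2\ell}/(\ell!)^4\,(C\alpha)^j p_i^jn^{-j}.
\]
Summing over $i$ requires $\sum_ip_i^j$, which I bound by H\"older. For $j\le\alpha$, $\sum_i p_i^j\le k^{1-j/\alpha}P^{j/\alpha}$. Writing $P=AP_*$, each term becomes
\[
 P_*^2A^{j/\alpha}(ka)^{2(\alpha-j)}(k^{1/\alpha}/n)^{j}\cdots .
\]
Now I substitute $n=D\alpha k^{1-1/\alpha}X$ and $ka\le\rho_+/D$. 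This produces exactly the power $D^{-2+j/\alpha}$ recorded in \eqref{upcal:D-power}, multiplied by $X^{-j}$ and by $e^{C\alpha}$ from the factorial sums. Dividing by $\mathcal M^2$, I would use two lower bounds on the margin:
\begin{itemize}
\item $\mathcal M\ge E_aP_*$, which by \eqref{eq:cc-baseline-comparison} is essentially at least $E_bP_*-C\kappa uP$;
\item $\mathcal M\ge\kappa uP$, with $u\ge c_uX^{-\alpha}$.
\end{itemize}
Interpolating the two as $E_b^{2-\theta}(uA)^{\theta}$ with $\theta=j/\alpha$ absorbs both $A^{j/\alpha}$ and $X^{-j}$. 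The remaining factor is
\[
 e^{C\alpha}C_0^{-m(2-\theta)}d_\varepsilon^{-2}D^{-1}.
\]
Choosing $C_0$ large kills $e^{C\alpha}$, since $m\asymp\alpha$, and choosing $D$ last makes the whole term $\le\eta/4$.

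\textbf{High block.} For $j\ge m+1$, \eqref{eq:cc-high-coeff} gives $|q_j|n^{-j}\le e^{CL}b^{\alpha-j}n^{-j}=e^{CL}n^{-\alpha}(c_bL)^{\alpha-j}$. The relevant count moment is bounded, after the pilot, by $(C\lambda_i+Cj)^{j}$ with $\lambda_i\lesssim\alpha\le L$. So the coordinate contribution is $e^{C'L}n^{-2\alpha}\lambda_i^{m+1}$, times $L^{O(j)}$ that is absorbed into $e^{CL}$. The leading degree $m+1$ is treated separately via \eqref{eq:cc-leading-high}, which yields $p_i^{\alpha+\delta}\cdot(L/\delta)^{2\delta}$-type bounds.

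Such contributions are compared with the third margin term $\kappa\mathcal B$. When $Lnp_i\le1$, $\mathcal B_i=Lnp_i^{\alpha+1}$, and the variance contains at least one extra factor $np_i$, matching $\mathcal B_i^2/\mathcal B$ only up to $e^{CL}$. Here I expect the \textbf{main obstacle}: the factor $e^{CL}=X^{C\alpha c_L}e^{C\alpha C_L}$ must be beaten. The plan is to choose $c_L$ small so that $X^{C\alpha c_L}$ is dominated by $u^{-1}\gtrsim X^{\alpha}$, and to let $C_0^m$ in $E_b$ absorb $e^{C\alpha C_L}$. The leftover $D^{-\alpha}/E_b\le 2d_\varepsilon^{-1}D^{-1}C_0^{-m}\rho_-^{-\beta}$ is then small.

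Coordinates with $np_i$ of order $\alpha$ need $\sum_i n^{-2\alpha}\lambda_i^{2\alpha}\lesssim P\cdot\max$ terms. These are controlled by $\kappa\mathcal B\ge\kappa\sum_{Lnp_i\ge1}p_i^\alpha$ and $\mathcal M^2\ge\kappa^2\mathcal B\,uP$. Combining the two blocks with $\eta/2$ each and respecting the stated order of constants gives \eqref{eq:cc-main-variance}.
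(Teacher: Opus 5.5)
Your low block ($1\le j\le m$) is the paper's argument: power means for $P_j$, the exact exponent $D^{-2+j/\alpha}$, and the interpolation $\sup_A A^\theta/(E_b+\kappa uA)^2\le E_b^{\theta-2}(\kappa u)^{-\theta}$ with $X^{-j}u^{-\theta}\le c_u^{-1}$, followed by choosing $C_0$ and then $D$. The gap is in the high block $j\ge m+1$, where the comparison you sketch does not close, for two concrete reasons.

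\emph{First}, the leading degree does not carry an extra factor $np_i$. As $\lambda=np_i\to0$, its count-unit variance is of order $\lambda^{m+1}=\lambda^{\alpha}\lambda^{\delta}$, and $\delta$ may be tiny. By contrast $\mathcal B_i=L\lambda\,p_i^\alpha$. So $\operatorname{Var}_i/\mathcal B_i\to\infty$, and the smallest means cannot be charged to $\kappa\mathcal B$ at all.

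\emph{Second}, the mechanism you propose for beating $e^{CL}$ runs the wrong way. The hypothesis $u\ge c_uX^{-\alpha}$ gives $u^{-1}\le X^\alpha/c_u$, so dividing by $u$ is a cost, not a gain. The only decay in $X$ is $n^{-\alpha}P_*^{-1}=(D\alpha X)^{-\alpha}$, which comes from the sample size. Suppose you normalize the $e^{CL}$-bearing part against $\mathcal M^2\ge\kappa^2\mathcal B\,uP$, as in your last paragraph. Then that decay is consumed by $u^{-1}$, and you are left with roughly $e^{C'L}(c_bL/\alpha)^\alpha D^{-\alpha}$, which contains $X^{C'c_L\alpha}$. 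This is unbounded for every fixed $c_L>0$ in an admissible regime: $u\asymp X^{-\alpha}$ with $X\to\infty$. That regime occurs for orders at distance of order $1/\log k$ from an integer, where the root gives $v\asymp h\asymp X^{-\alpha}$ and $X^\alpha\asymp\log k$.

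Relatedly, $L^{O(j)}$ is not $e^{O(L)}$. Factorial growth is tamed only after normalizing by $H^j=(nb)^j=(c_bL)^j$. The factor $r!\approx\alpha^\alpha$ at small means is cancelled by the $\alpha^{-\alpha}$ inside $n^{-\alpha}$, not by $e^{CL}$.

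What is missing is the split at $\tau=e^{-C_\tau L}$, with each piece charged to the right cross term of $\mathcal M_b^2$.

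\emph{Tiny means, $np_i\le\tau$.} Here (CC7) gives the degree-$(m+1)$ term with no $e^{CL}$ at all, because $(L/\delta)^{2\delta}\tau^\delta\le C$. Degrees $j\ge m+2$ carry a full extra factor $\lambda\le\tau$, which kills their $e^{CL}$. The total is $e^{C\alpha}r!\lambda^\alpha$ in count units, i.e.\ $e^{C\alpha}(DX)^{-\alpha}A$ after division by $P_*^2$. This piece is compared with $4E_bP_*\cdot\kappa uP$. There $X^{-\alpha}$ cancels $u^{-1}$, and $D^{-\alpha}/E_b$ supplies $D^{-1}C_0^{-m}\rho_-^{-\beta}$.

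\emph{Larger means, $np_i>\tau$.} Medium means are handled directly via $H^{2\alpha-2j}\mathbb E(N)_j^2\le H^\alpha\lambda^\alpha(1+d/H)^j$, and large means via the pilot tail. The bound is $e^{CL}(c_bL/(D\alpha X))^\alpha A_>$. It is compared with the other cross term $4E_bP_*\cdot\kappa\mathcal B$, using $\mathcal B\ge\tau P_*A_>$. Since no $u$ enters this comparison, the whole $X^{-\alpha}$ is available. It absorbs $e^{C'L}(L/\alpha)^\alpha\le e^{C''\alpha}X^{\alpha/2}$ (with $\tau^{-1}$ folded into $e^{C'L}$) once $c_L$ is small, and $C_0^m$ then absorbs $e^{C''\alpha}$.

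Your remark on $D^{-\alpha}/E_b$ is the right ingredient. However, $E_b$ must be paired with $\mathcal B$ for the $e^{CL}$-bearing part, and with $uP$ only for the $e^{CL}$-free tiny-mean part.
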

\begin{proof}
\noindent\textbf{3a. Introduce a comparison margin.}
The actual cutoff $a$ can be arbitrarily small near an integer, which
makes direct normalization inconvenient. The baseline comparison
replaces it by a controlled reference quantity without changing the
margin by more than a constant factor.
Write $A=P/P_*\ge1$ and
$\mathcal M_b=P_*(E_b+\kappa uA)+\kappa\mathcal B$.
Equation \eqref{eq:cc-baseline-comparison} gives
$\mathcal M_b\le(1+C_\varepsilon)\mathcal M$.

\medskip\noindent\textbf{3b. Bound degrees below the real order.}
These degrees satisfy $j<\alpha$. Their sums $P_j=\sum_i p_i^j$
can be bounded by $P$ through the power-mean inequality, and their
coefficients contain positive powers of the small cutoff $a$.
For $j\le m$, put $\ell=m-j$, $\xi=\alpha-j$, and
$\theta=j/\alpha\in(0,1)$. Power means give
$P_j\le k^{1-j}A^\theta$. Lemma~\ref{lem:cc-pilot} and
\eqref{eq:cc-low-coeff} bound the degree-$j$ conditional variance,
divided by $P_*^2$, by
\[
 C m^{4\gamma}\frac{(Cm^2)^{2\ell}}{(\ell!)^4}
 (\rho_+/D)^{2\xi}
 \left(\frac C{DX}\right)^j
 k^{j/\alpha-1}A^\theta.
 \tag{CC13}\label{eq:cc-degree-var}
\]
For $0<\theta<1$,
\[
 \sup_{A>0}\frac{A^\theta}{(E_b+\kappa uA)^2}
 \le E_b^{\theta-2}(\kappa u)^{-\theta}.
\]
Also $X^{-j}u^{-\theta}\le c_u^{-1}$ and
$k^{j/\alpha-1}\le1$. Retaining the sample multiplier exactly yields
\[
 D^{-2\xi-j+(\alpha-1)(2-\theta)}
       =D^{-2+j/\alpha}\le D^{-1}.
 \tag{CC14}\label{eq:cc-D-power}
\]
All powers of the fixed $\rho_-,\rho_+$ and of $\kappa^{-1}$ are
$e^{O_\varepsilon(\alpha)}$; the baseline contributes
$\chi_\alpha^{\theta-2}d_\varepsilon^{\theta-2}\le4d_\varepsilon^{-2}$ and
$C_0^{m(\theta-2)}\le C_0^{-m}$.
Finally,
\[
 \frac{m^{2\ell}}{(\ell!)^2}
 \le e^{2\ell}\binom m\ell^2.
\]
Indeed $(m)_\ell\ge(m/e)^\ell$, by the geometric mean of the largest
$\ell$ numbers in $\{1,\ldots,m\}$. Taking square roots in
\eqref{eq:cc-degree-var}, summing by the $L^2$ triangle inequality,
and using the binomial theorem costs only $e^{C\alpha}$.
Since $m\ge\alpha/2$, choosing $C_0$ sufficiently large gives
\[
 \mathbb E\operatorname{Var}(\widehat Q_{\le m}\mid Y)
       /\mathcal M_b^2
 \le C_{d_\varepsilon}D^{-1}e^{-c\alpha}.
 \tag{CC15}\label{eq:cc-low-var-final}
\]
The degree zero is conditionally deterministic.

\medskip\noindent\textbf{3c. Bound degrees above the real order.}
For $j>\alpha$, the previous power-mean comparison is no longer the
useful direction. We instead split expected counts into very small
and larger means. The leading degree receives a sharp bound; the
remaining degrees are controlled using the pilot and a geometric
series. This split explains why the coefficient of degree $m+1$ was
bounded separately in the preceding lemma.
For the high block put $r=m+1$, $H=nb$, and
$\tau=e^{-C_\tau L}$, with $C_\tau$ fixed sufficiently large. Define
$A_>=P_*^{-1}\sum_{np_i>\tau}p_i^\alpha$.
We claim
\[
 \frac{\mathbb E\operatorname{Var}(\widehat Q_{\ge r}\mid Y)}{P_*^2}
 \le\frac{e^{C\alpha}}{(DX)^\alpha}A
  +e^{CL}\left(\frac{c_bL}{D\alpha X}\right)^\alpha A_>.
 \tag{CC16}\label{eq:cc-high-var}
\]
For $\lambda=np\le\tau$, use
$\mathbb E(N)_j^2\le j!\lambda^je^{2\sqrt{j\lambda}}$.
The leading coefficient \eqref{eq:cc-leading-high} is harmless because
$(L/\delta)^{2\delta}\tau^\delta\le C$.
For $j\ge r+1$, use $j!\le(r+1)!d^{j-r-1}$ and
\eqref{eq:cc-high-coeff}. The $L^2$ sum is a geometric series with
ratio at most $\sqrt{d\lambda}/H$; increasing $C_L$ makes it bounded.
Its square is at most $e^{CL}r!\lambda^{r+1}\le r!\lambda^\alpha$
by the choice of $C_\tau$. Thus the small-mean part is at most
$e^{C\alpha}r!\lambda^\alpha$ in count units. Summing and using
$r!\le e^{C\alpha}\alpha^\alpha$ proves the first term of
\eqref{eq:cc-high-var}.

For $\tau<\lambda\le H$ and $j\ge r>\alpha$,
\[
 H^{2\alpha-2j}\mathbb E(N)_j^2
 \le H^\alpha\lambda^\alpha(1+d/H)^j.
\]
All coefficient and summation costs are $e^{CL}$.
For $\lambda>H\ge2B\alpha$, the pilot probability is at most
$e^{-c_p\lambda/8}$ and
$\mathbb E(N)_j^2\le\lambda^{2j}e^{j^2/\lambda}$.
After division by $H^\alpha\lambda^\alpha$, the extra factor is
\[
 e^{-c_p\lambda/8+j^2/\lambda}(\lambda/H)^{2j-\alpha}
 \le e^{CL},
\]
because $j\le C_dL$ and the supremum of
$C\log z-c'z$ on $z\ge1$ is finite. This proves
\eqref{eq:cc-high-var}.

\medskip\noindent\textbf{3d. Compare both bounds with the margin.}
A variance estimate is useful only after its ratio to $\mathcal M^2$
is small. The following calculation makes the dependence on the final
sample multiplier $D$ explicit. The constants in the construction are
chosen before $D$; a negative power of $D$ therefore genuinely permits
us to make the noise small.
To normalize its first term, use
$\sup_{A>0}A/(E_b+\kappa uA)^2\le(4E_b\kappa u)^{-1}$.
Since $X^\alpha u\ge c_u$, its bound is a fixed multiple of
\[
 d_\varepsilon^{-1}D^{-1}C_0^{-m}\rho_-^{-(\alpha-1)}e^{C\alpha}.
\]
For the second term, $\mathcal B/P_*\ge\tau A_>$; hence its bound is
\[
 d_\varepsilon^{-1}D^{-1}C_0^{-m}\rho_-^{-(\alpha-1)}
 e^{C\alpha+C'L}\left(\frac{c_bL}{\alpha X}\right)^\alpha.
\]
Writing $s=\log X$, one has
\[
 \log(L/\alpha)\le\log(C_L+1)+\frac{c_L}{C_L+1}s.
\]
Choose $c_L[C'+1/(C_L+1)]\le1/2$. The last bound becomes
$d_\varepsilon^{-1}D^{-1}C_0^{-m}\rho_-^{-(\alpha-1)}
e^{C''\alpha}X^{-\alpha/2}$.
The stated choice of $C_0$ absorbs every fixed exponential in
$\alpha$, and $D$ then makes the remaining fixed prefactors as small
as required. Combining the two blocks and the comparison of
$\mathcal M_b$ with $\mathcal M$ proves
\eqref{eq:cc-main-variance}.
\end{proof}

\paragraph{Step 4: justify replacing the random linear aggregate.}
The total main-sample count has mean $n$ and variance $n$. Thus the
anchor introduced at the start of this section is centered and its
variance can be computed exactly. This calculation prevents the
anchor from becoming an unaccounted source of noise.

\begin{lemma}[The common affine anchor]\label{lem:cc-anchor}
Under the low coefficient bound,
\[
 |q_1|\le C^m(ka)^{\alpha-1}P_*.
\]
Consequently, replacing the random global affine aggregate by its known
value changes the unanchored estimator by
$q_1(1-\sum_iN_i/n)$, whose expectation is zero and whose variance is
$q_1^2/n$. Relative to the actual margin \eqref{eq:cc-margin}, this is
at most $C_{d_\varepsilon}\mathcal M^2/n$ after fixing $C_0$ sufficiently
large. This estimate is independent of how small $a$ becomes.
\end{lemma}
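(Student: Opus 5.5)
The plan has two parts: bound $q_1$ from the coefficient estimate \eqref{eq:cc-low-coeff}, then compute the anchor exactly and compare its variance with $\mathcal M^2$.

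\textbf{Coefficient bound.} Apply \eqref{eq:cc-low-coeff} at $j=1$. This gives $\ell=m-1$ and
\[
 |q_1|\le Cm^{2\gamma}a^{\alpha-1}\frac{(Cm^2)^{m-1}}{((m-1)!)^2}.
\]
Since $(m-1)!\ge((m-1)/e)^{m-1}$, the ratio $m^{2(m-1)}/((m-1)!)^2$ is at most $e^{Cm}$. The factor $m^{2\gamma}\le m^2$ is also absorbed into $C^m$. Writing $a^{\alpha-1}=(ka)^{\alpha-1}k^{1-\alpha}=(ka)^{\alpha-1}P_*$ then gives
\[
 |q_1|\le C^m(ka)^{\alpha-1}P_*.
\]
For $m=1$ the bound holds directly from the degree-one profile coefficients recorded in Lemma~\ref{lem:cc-endpoint}.

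\textbf{Exact anchor identity.} Next I use the displayed identity
\[
 q_1+\sum_i(\widehat Q_i-q_1N_i/n)=\sum_i\widehat Q_i+q_1\Bigl(1-\sum_iN_i/n\Bigr).
\]
The same rearrangement holds with the hybrid terms $T_i$, because the subtraction of $q_1N_i/n$ is applied to every coordinate, whichever branch it is on. Under Poissonization, $\sum_iN_i\sim\Poi(n)$, since $\sum_ip_i=1$. Hence the added term has mean $q_1(1-n/n)=0$ and variance $q_1^2\Var(\sum_iN_i)/n^2=q_1^2/n$. This part is exact and uses no property of the pilot.

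\textbf{Comparison with the margin.} This is the only step with any subtlety, because $a$ may be arbitrarily small. The device is to compare with $E_aP_*$ itself rather than with a nominal baseline. By definition, $E_a=\chi_\alpha d_\varepsilon C_0^m(ka)^{\alpha-1}$ with $\chi_\alpha\ge1/2$, and $\mathcal M\ge E_aP_*$. Therefore
\[
 |q_1|\le \frac{2}{d_\varepsilon}\Bigl(\frac{C}{C_0}\Bigr)^m E_aP_*\le \frac{2}{d_\varepsilon}\mathcal M
\]
once $C_0\ge C$. Squaring gives
\[
 \frac{q_1^2}{n}\le \frac{4d_\varepsilon^{-2}\mathcal M^2}{n}=C_{d_\varepsilon}\frac{\mathcal M^2}{n}.
\]
The same power $(ka)^{\alpha-1}$ appears in both $q_1$ and $E_a$, so it cancels, and the estimate does not depend on how small $a$ becomes. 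The main thing to verify is that the constant $C$ from \eqref{eq:cc-low-coeff}, after the factorial simplification, is fixed before $C_0$. That ordering is consistent with the choices stated in Theorem~\ref{thm:cc-count}.
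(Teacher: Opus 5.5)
Your proposal is correct and matches the paper's proof essentially step for step. You set $j=1$ in \eqref{eq:cc-low-coeff}, absorb the factorial ratio into $C^m$, and use $\sum_iN_i\sim\Poi(n)$ for the exact mean and variance. You then compare with $\mathcal M\ge E_aP_*\ge\tfrac12 d_\varepsilon C_0^m(ka)^{\alpha-1}P_*$, so the powers of $ka$ cancel. Your separate remark about $m=1$ is unnecessary: the $j=1$ case of \eqref{eq:cc-low-coeff} already covers it with $(m-1)!=1$.
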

\begin{proof}
Set $j=1$ in \eqref{eq:cc-low-coeff};
$m^{2\gamma}(Cm^2)^{m-1}/((m-1)!)^2\le C'^m$.
Since $a^{\alpha-1}=(ka)^{\alpha-1}P_*$, the first assertion follows.
The total count has law $\operatorname{Poi}(n)$, proving the variance
identity. Finally $\mathcal M\ge\tfrac12d_\varepsilon C_0^m
(ka)^{\alpha-1}P_*$.
\end{proof}

\paragraph{What changes when $\alpha$ approaches one?}
The logarithmic target divides by $\beta=\alpha-1$. A variance estimate
that is merely $O(P^2)$ would therefore be insufficient. Subtracting
the nearly linear part exposes coefficients of order $\beta$ in all
remaining terms. The next lemma states the resulting variance bound;
Section~\ref{nuc:section} verifies its coefficient hypotheses for the
actual polynomial and controls the entire hybrid estimator.

\begin{lemma}[The same polynomial near order one]
\label{lem:cc-near-one}
Suppose $\alpha=1+\beta$, $0<\beta\le1/2$, and set $t=b/N_0^2$,
where $N_0$ is comparable to $L$. Assume the actual coupled polynomial
has the coefficient cancellation
\[
 kt\le C_\varepsilon,\qquad
 |q_1-t^\beta|\le C_\varepsilon\beta t^\beta,\qquad
 |q_j|\le\beta e^{CL}b^{\alpha-j}\quad(j\ge2).
 \tag{CC17}\label{eq:cc-near-one-coeff}
\]
With the same pilot and $L,n,b$ as above, the conditional variance of
the light residual after subtracting $t^\beta N_i/n$ is at most
$\eta\beta^2P^2$, for fixed sufficiently small $c_L$, sufficiently large
$D$, and sufficiently large $k$. The same conclusion holds when the
single anchor used at every order is $q_1$.
\end{lemma}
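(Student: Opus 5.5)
We bound the conditional variance of the light residual coordinate by coordinate, using the factorial energy identity of Lemma~\ref{lem:factorial-energy-new} in its monomial form, after conditioning on the pilot $Y$. The light residual for coordinate $i$ is
\[
 I_i\Bigl[(q_1-t^\beta)N_i/n+\sum_{j\ge2}q_j(N_i)_j/n^j\Bigr],
\]
since the constant $q_0$ is conditionally deterministic. We split it into a degree-one part and a high part of degrees $j\ge2$. The $L^2$ triangle inequality bounds the square root of the variance by the sum of the square roots of the two variances. The version with anchor $q_1$ simply deletes the degree-one part, so it suffices to treat the $t^\beta$ anchor.

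\textbf{Degree-one part.} Its conditional variance is at most $(q_1-t^\beta)^2\sum_ip_i/n\le C_\varepsilon\beta^2t^{2\beta}/n$. We then compare this with $\beta^2P^2$. Since $P\ge P_*=k^{-\beta}$, it suffices that $(kt)^{2\beta}/n\to0$. This holds because $kt\le C_\varepsilon$ and $n\ge k^{1-1/\alpha}\to\infty$ uniformly for $\alpha\le3/2$, once $k$ is large.

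\textbf{Degrees $j\ge2$.} Every coefficient carries the factor $\beta$ by \eqref{eq:cc-near-one-coeff}, so we normalize by $\beta^2$ and repeat the high-block argument of Theorem~\ref{thm:cc-count}, step 3c, with $r=2>\alpha$. Write $H=nb=c_bL$ and $\lambda_i=np_i$, and split into three ranges.
\begin{enumerate}
\item For $\lambda\le H$, the bound $H^{2\alpha-2j}\E(N)_j^2\le H^\alpha\lambda^\alpha(1+d/H)^j$ and the coefficient bound $b^{\alpha-j}$ give a per-coordinate variance of at most $\beta^2e^{CL}n^{-2\alpha}H^\alpha\lambda^\alpha$.
\item For very small $\lambda\le\tau=e^{-C_\tau L}$, we instead use $\E(N)_j^2\le j!\lambda^je^{2\sqrt{j\lambda}}$. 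This yields $\beta^2e^{CL}\lambda^2$, which is at most $\beta^2\lambda^\alpha$ by the choice of $C_\tau$.
\item For $\lambda>H\ge2B\alpha$, the pilot factor $e^{-c_p\lambda/8}$ of Lemma~\ref{lem:cc-pilot} absorbs $(\lambda/H)^{2j}e^{j^2/\lambda}$ at cost $e^{CL}$.
\end{enumerate}
Summing over coordinates gives a bound of the form
\[
 \beta^2P\,n^{-\alpha}e^{CL}\bigl[1+H^\alpha\one\{\text{some }\lambda_i>\tau\}\bigr].
\]

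\textbf{Comparison with $\beta^2P^2$.} We need $e^{CL}L^\alpha/(n^\alpha P)$ to be small. Using $P\ge k^{-\beta}$, this quantity is at most $e^{CL}L^\alpha(k/(nL))^\alpha L^{-\alpha}k^{-1}\cdot L^{\alpha}$. By Lemma~\ref{upcal:near-one}, $k/(nL)\le C_\varepsilon/D$, and $L\asymp\alpha(1+\log X)\asymp\log k$ in this range. The remaining factor is therefore $e^{C'L}/k\cdot D^{-\alpha}$. Taking $c_L$ small makes $e^{C'L}\le k^{1/2}$, since $L\le C_L+c_L\log X\cdot\alpha$ and $\log X\le\log k$. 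The whole ratio is then $O(D^{-1}k^{-1/2})$, which is at most $\eta$ for $D$ and $k$ large.

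\textbf{Main obstacle.} The delicate step is exactly this balance: the exponential coefficient cost $e^{CL}$ must be beaten by a power of $k$. This only works because $c_L$ is chosen small before $D$. It also requires checking that the coordinates with $\lambda_i\le\tau$ really absorb the degree-two term without the $H^\alpha$ loss.
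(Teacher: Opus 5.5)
Your argument is correct and is essentially the paper's proof: the degree-one residual costs $C_\varepsilon\beta^2t^{2\beta}/n$, and the degrees $j\ge2>\alpha$ are handled by the medium- and large-mean part of \eqref{eq:cc-high-var}, normalized by $\beta^2P^2$ with $P\ge P_*=k^{-\beta}$, with $c_L$ small so that $e^{CL}$ is beaten by the available power of $k$. Two differences from the paper: it skips your small-mean split, which is unnecessary because $H^{2\alpha-2j}\E(N)_j^2\le H^\alpha\lambda^\alpha(1+d/H)^j$ holds for every $\lambda\le H$ once $j\ge\alpha$; and it treats the $q_1$ anchor through the centered global term $(q_1-t^\beta)(1-\sum_iN_i/n)$ rather than by deleting the linear part. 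One step needs repair: $n\ge k^{1-1/\alpha}$ does not tend to infinity uniformly as $\alpha\downarrow1$, so use instead $n\ge Dk/(C_\varepsilon L)$, which is of order $Dk/\log k$, from Lemma~\ref{upcal:near-one}, which you already invoke later.
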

\begin{proof}
The linear residual contributes at most
$C_\varepsilon\beta^2t^{2\beta}/n
\le C_\varepsilon\beta^2P^2/n$.
Every remaining degree satisfies $j\ge2>\alpha$, so the medium- and
large-mean argument in \eqref{eq:cc-high-var} applies at every mean,
without a small-mean split. It bounds the variance, divided by
$\beta^2P^2$, by
\[
 e^{CL}\left(\frac{c_bL}{D\alpha X}\right)^\alpha.
\]
The same logarithmic estimate used in the proof of
Theorem~\ref{thm:cc-count} makes this uniformly small.
The degree zero is conditionally deterministic.
Changing the global anchor from $t^\beta$ to $q_1$ adds the centered
quantity $(q_1-t^\beta)(1-\sum_iN_i/n)$, whose variance is at most
$C_\varepsilon\beta^2P^2/n$ by
\eqref{eq:cc-near-one-coeff}.
\end{proof}

\paragraph{Scope of the bound.}
The coefficient assumptions above are preserved by the omega-tail
modification and the common global anchor. The calibrated identity
$am^2L^2/b=\lambda h z^{1/h}\le\lambda/4$ verifies
\eqref{eq:cc-Z}, and the actual-cutoff comparison verifies
\eqref{eq:cc-baseline-comparison}. The section therefore supplies the
conditional count-variance component of the common proof. The
remaining pilot and heavy-count estimates are proved next; the mean
margin, including both saturation intervals, is combined with these
estimates in Theorem~\ref{cu:theorem}.

\subsection{Pilot errors and the analytic heavy count}
\label{sec:pilot-count}
\label{sec:cc-pilot-heavy}

There are two possible pilot errors. A \emph{false-heavy} assignment
uses the Gamma statistic at a symbol with a small true mean. A
\emph{false-light} assignment uses the polynomial outside its intended
interval. Neither error can simply be ruled out symbol by symbol:
a union bound would introduce an unnecessary dependence on $k$.
Instead we bound their summed contribution in expectation, then use
one probability bound for that aggregate.

\paragraph{False-heavy assignments.}
The analytic statistic can be signed at low counts. Consequently we
bound its absolute first moment and its second moment, rather than
assume positivity. The pilot upper tail provides enough powers of
$\lambda$ to offset even the nonzero value at count zero.

\begin{lemma}[False-heavy moments for the analytic count]
\label{lem:cc-false-heavy}
For real $\alpha\ge3/2$, define
\[
 H_\alpha(j)=\frac{\Gamma(j+1)}{\Gamma(j+1-\alpha)},
 \qquad j=0,1,\ldots,
\]
using the entire reciprocal-Gamma value at integer orders.
Given any fixed $c_*>0$, fixed constants $B,c_p$ can be chosen so
that, for independent $N\sim\operatorname{Poi}(\lambda)$ and
$Y\sim\operatorname{Poi}(c_p\lambda)$, with
$J=\mathbf1\{Y>c_pB\alpha\}$, and every
$0\le\lambda\le B\alpha/2$,
\begin{align}
 \mathbb E[J|H_\alpha(N)|]
   &\le e^{-c_*\alpha}\lambda^\alpha\min(\lambda,1),
                                                       \tag{CC18}\label{eq:cc-false-first}\\
 \mathbb E[JH_\alpha(N)^2]
   &\le e^{-c_*\alpha}\alpha^\alpha\lambda^\alpha
                         \min(\lambda,1).
                                                       \tag{CC19}\label{eq:cc-false-second}
\end{align}
\end{lemma}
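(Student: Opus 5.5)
By independence of $N$ and $Y$, both moments factor:
\[
 \E[J|H_\alpha(N)|]=\Prob\{Y>c_pB\alpha\}\,\E|H_\alpha(N)|,
 \qquad
 \E[JH_\alpha(N)^2]=\Prob\{Y>c_pB\alpha\}\,\E H_\alpha(N)^2 .
\]
The plan is to bound the pilot tail by a Chernoff estimate that carries many powers of $\lambda$. Each count moment is then bounded by a crude estimate growing at most like $\lambda^{O(1)}$ times $e^{O(\alpha)}\alpha^{O(\alpha)}$. The excess powers of $\lambda$ supplied by the tail pay for $\lambda^\alpha\min(\lambda,1)$ and for the factor $e^{-c_*\alpha}$.

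\textbf{Pilot tail.} With $\mu=c_p\lambda\le c_pB\alpha/2$ and threshold $t=c_pB\alpha\ge2\mu$, the Poisson Chernoff bound gives
\[
 \Prob\{Y>t\}\le e^{-\mu}\Bigl(\frac{e\mu}{t}\Bigr)^{t}
 \le\Bigl(\frac{e\lambda}{B\alpha}\Bigr)^{c_pB\alpha}.
\]
I will choose $c_pB$ large, for instance $c_pB\ge4$, so that the exponent $t$ exceeds $2\alpha+2$ with room to spare. The ratio $\lambda/(B\alpha)\le1/2$, so the factor $(e/2)^{t}$ must be checked against this. It is safer to first fix $c_p$, then take $B$ large and use $(e\lambda/(B\alpha))^{t}\le(\lambda/\alpha)^{t}(e/B)^{t}$. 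This bound supplies $(e/B)^{c_pB\alpha}$, which is far smaller than $e^{-c_*\alpha}$ times any $C^\alpha$ once $B$ is large.

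\textbf{Count moments.} Lemma~\ref{ag:lem:floor} gives $|H_\alpha(j)|\le m!\le\alpha^\alpha$ for $j\le m$. Lemma~\ref{lem:gamma} gives $0\le H_\alpha(j)\le j^\alpha$ for $j\ge m+1$. Hence
\[
 \E|H_\alpha(N)|\le\alpha^\alpha+\E N^\alpha,
 \qquad
 \E H_\alpha(N)^2\le\alpha^{2\alpha}+\E N^{2\alpha}.
\]
For $\lambda\le B\alpha/2$, Poisson moments satisfy $\E N^{s}\le(C(s+\lambda))^{s}$, which is at most $(CB\alpha)^{2\alpha}$. The count moments are therefore at most $e^{C_B\alpha}\alpha^{2\alpha}$.

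\textbf{Combining.} The product must be at most $e^{-c_*\alpha}\lambda^\alpha\min(\lambda,1)$. For $\lambda\le\alpha$, write the tail bound as $(\lambda/\alpha)^{t}(e/B)^{t}$ with $t\ge\alpha+1$. Since $\lambda/\alpha\le1$ and $t\ge2\alpha+1$, this yields $\lambda^{\alpha}\min(\lambda,1)\,\alpha^{-t}\cdot(\lambda/\alpha)^{t-\alpha-1}$ up to harmless factors. The $\alpha^{-t}$ absorbs the $\alpha^{2\alpha}$ from the moments, and $(e/B)^t$ absorbs $e^{C_B\alpha+c_*\alpha}$.

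This is the main obstacle. The constant $C_B$ in the moment bound grows with $B$, but only like $\alpha\log B$, whereas the gain is $c_pB\alpha\log(B/e)$, so a large $B$ wins.

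For $\alpha<\lambda\le B\alpha/2$, $\lambda^\alpha\min(\lambda,1)\ge\alpha^\alpha$ and the tail is $(e/2)^{-\dots}$-small. Here I would instead use the sharper $\Prob\{Y>2\mu\}\le e^{-c\mu}$ together with the threshold ratio. If that is too weak near $\lambda=B\alpha/2$, I would shrink the admissible range to $\lambda\le B\alpha/K$ for a large $K$ and adjust the statement's constant accordingly. The $\alpha^\alpha$ factor in \eqref{eq:cc-false-second} gives the extra slack needed for the second moment.
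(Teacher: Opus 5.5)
Your outline is the same as the paper's: factor by independence, bound the pilot with a Poisson Chernoff estimate, use the pointwise bounds $|H_\alpha(j)|\le m!\le\alpha^\alpha$ for $j\le m$ and $0\le H_\alpha(j)\le j^\alpha$ for $j\ge m+1$, and split at $\lambda=\alpha$. Two steps, however, do not close as written.

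\textbf{The range $\alpha<\lambda\le B\alpha/2$.} You leave this range unproved. Your fallback of shrinking it to $\lambda\le B\alpha/K$ proves a different statement. The lemma, and its uses in Corollary~\ref{cor:cc-false-heavy} and Theorem~\ref{cu:theorem}, fix the pilot threshold $c_pB\alpha$ against the mean range $\lambda\le B\alpha/2$. No shrinking is needed: the difficulty came from discarding the factor $e^{-\mu}$ in Chernoff. With $t=c_pB\alpha$, the map $\mu\mapsto e^{-\mu}(e\mu/t)^t$ is increasing on $[0,t]$. Hence for every $\mu=c_p\lambda\le t/2$,
\[
\Prob\{Y>c_pB\alpha\}\le e^{-(\log 2-1/2)c_pB\alpha},
\]
uniformly on the whole range. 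There the normalized moments are at most $e^{C\alpha}$ and $e^{C\alpha}(CB)^\alpha$, and this tail beats them once $c_p$ is fixed and $B$ is large.

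\textbf{The first moment.} Your single bound $e^{C_B\alpha}\alpha^{2\alpha}$ for both count moments is too lossy for \eqref{eq:cc-false-first}, whose right side carries no $\alpha^\alpha$. The lemma must hold uniformly in unbounded $\alpha$ with $B,c_p$ fixed. Near $\lambda=\alpha$ the pilot supplies only $e^{-c\alpha}$ with a fixed rate, so it cannot absorb an extra $\alpha^\alpha$.

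Your accounting ``$\alpha^{-t}$ absorbs $\alpha^{2\alpha}$'' is valid only for $\lambda\le1$. For $1<\lambda\le\alpha$ one has $(\lambda/\alpha)^t=\lambda^\alpha\alpha^{-\alpha}(\lambda/\alpha)^{t-\alpha}$, so only $\alpha^{-\alpha}$ is available. At $\lambda=\alpha$ your first-moment ratio is $(e/B)^{c_pB\alpha}e^{(C_B+c_*)\alpha}\alpha^\alpha$, which tends to infinity with $\alpha$.

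The cure is already in your own Poisson moment estimate: keep the two moments separate, as in \eqref{eq:cc-H-moments}, namely
\[
\E|H_\alpha(N)|\le\alpha^\alpha+\E N^\alpha\le e^{C\alpha}(\lambda+\alpha)^\alpha,
\qquad
\E H_\alpha(N)^2\le e^{C\alpha}(\lambda+\alpha)^{2\alpha}.
\]
Then for $\lambda<\alpha$, divide both products by $\lambda^\alpha$ and by $\alpha^\alpha\lambda^\alpha$ respectively. Each is at most $e^{C\alpha}(e/B)^{c_pB\alpha}(\lambda/\alpha)^{(c_pB-1)\alpha}$, and the last factor is at most $\min(\lambda,1)$ once $c_pB\ge3$. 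For $\lambda\ge\alpha$, the uniform tail bound above finishes the proof.
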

\begin{proof}
Write $\alpha=m+\gamma$. For $j\le m$, reciprocal-Gamma recurrence
and $\Gamma(1-\gamma)\ge1$ give
\[
 |H_\alpha(j)|\le j!(m-j)!/\Gamma(1-\gamma)
                 \le m!\le\alpha^\alpha.
\]
At integer orders the same bound follows from the factorial statistic.
For $j\ge m+1$, $0\le H_\alpha(j)=G_\alpha(j)\le j^\alpha$.
For any $q\ge1$, the bound
$x^q\le(q/(et))^qe^{tx}$, with $t=q/(\lambda+q)\le1$,
and $e^t-1\le t+t^2$, gives
$\mathbb EN^q\le e^{Cq}(\lambda+q)^q$.
Therefore
\[
 \mathbb E|H_\alpha(N)|\le e^{C\alpha}(\lambda+\alpha)^\alpha,
 \qquad
 \mathbb EH_\alpha(N)^2\le e^{C\alpha}(\lambda+\alpha)^{2\alpha}.
 \tag{CC20}\label{eq:cc-H-moments}
\]
For $\alpha\le\lambda\le B\alpha/2$, the pilot upper tail is at most
$\exp\{-c_pB\alpha(\log2-1/2)\}$.
After division by the power scales in
\eqref{eq:cc-false-first}--\eqref{eq:cc-false-second}, the other
factors in \eqref{eq:cc-H-moments} are bounded by
$e^{C\alpha}(CB)^\alpha$. Increasing $B,c_p$ proves the assertions
on this range.
For $\lambda<\alpha$, set $r=\lambda/\alpha<1$. The pilot probability
is at most $(er/B)^{c_pB\alpha}$, and both normalized moments are
bounded by
\[
 e^{C\alpha}(e/B)^{c_pB\alpha}r^{(c_pB-1)\alpha}.
\]
Choose $c_pB\ge3$. The last power is at most $\min(\lambda,1)$,
and the preceding factor is at most $e^{-c_*\alpha}$ after the
fixed constants are increased. At $\lambda=0$, the pilot never
selects the heavy branch.
\end{proof}

\begin{corollary}[Aggregated false-heavy error]\label{cor:cc-false-heavy}
For independent coordinates, restrict the wrongly selected heavy
statistic to $np_i\le B\alpha/2$. Its absolute expected value is at most
$e^{-c_*\alpha}\mathcal B$, and its unconditional variance, including
pilot randomness, is at most
\[
 e^{-c_*\alpha}(\alpha/n)^\alpha P.
\]
Under the calibration in Theorem~\ref{thm:cc-count}, this variance is
an arbitrarily small fixed multiple of $\mathcal M^2$ after the stated
choices of constants.
\end{corollary}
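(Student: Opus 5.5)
The plan is to sum Lemma~\ref{lem:cc-false-heavy} over the coordinates with $\lambda_i=np_i\le B\alpha/2$. The statistic in question is $S=\sum_{i:\lambda_i\le B\alpha/2}J_i\,\omega n^{-\alpha}\mathsf H_\alpha(N_i)$, with $J_i=1-I_i$ and $\omega\le1$. The triangle inequality and \eqref{eq:cc-false-first} give
\[
 \E|S|\le\sum_i n^{-\alpha}e^{-c_*\alpha}\lambda_i^\alpha\min(\lambda_i,1)
 =e^{-c_*\alpha}\sum_i p_i^\alpha\min(np_i,1).
\]
Since $L\ge1$, we have $\min(np_i,1)\le\min(Lnp_i,1)$, so this is at most $e^{-c_*\alpha}\mathcal B$.

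For the variance, note that the coordinates $(N_i,Y_i)$ are independent across $i$. Hence $\Var S=\sum_i\Var(\cdot)\le\sum_i\E[J_i\,n^{-2\alpha}\mathsf H_\alpha(N_i)^2]$. Applying \eqref{eq:cc-false-second} and $\min(\lambda_i,1)\le1$ gives
\[
 \Var S\le e^{-c_*\alpha}\alpha^\alpha n^{-\alpha}\sum_i p_i^\alpha=e^{-c_*\alpha}(\alpha/n)^\alpha P.
\]
This variance is unconditional: it already includes the pilot randomness, because $J_i$ sits inside the expectation.

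It remains to compare this with $\mathcal M^2$, and this is the only step that needs care. Use $\mathcal M\ge\kappa uP$ together with $\mathcal M\ge E_aP_*$. Then $P\,\mathcal M^{-2}\le \mathcal M^{-1}(\kappa u)^{-1}$, and since $\mathcal M\ge\kappa uP\ge\kappa uP_*$, we get $P/\mathcal M^2\le(\kappa u)^{-2}P_*^{-1}$. Inserting $P_*^{-1}=k^{\alpha-1}$ and $n=D\alpha k^{1-1/\alpha}X$ gives
\[
 \frac{\Var S}{\mathcal M^2}\le e^{-c_*\alpha}\kappa^{-2}u^{-2}\,\frac{k^{\alpha-1}}{(Dk^{1-1/\alpha}X)^\alpha}
 =e^{-c_*\alpha}\kappa^{-2}u^{-2}D^{-\alpha}X^{-\alpha}.
\]
Here $k^{\alpha-1}/k^{\alpha-1}=1$, so the alphabet size cancels exactly. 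Now use $u\ge c_uX^{-\alpha}$ (as in Theorem~\ref{thm:cc-count}), which gives $u^{-2}X^{-\alpha}\le c_u^{-2}X^{\alpha}$. This leaves a factor $X^\alpha$ that is not absorbed, so this route is too crude.

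Instead, use the AM--GM form already employed in step 3d of Theorem~\ref{thm:cc-count}: $\sup_A A/(E_b+\kappa uA)^2\le(4E_b\kappa u)^{-1}$, with $A=P/P_*$ and the comparison \eqref{eq:cc-baseline-comparison}. This yields
\[
 \frac{\Var S}{\mathcal M^2}\le C\,e^{-c_*\alpha}\frac{k^{\alpha-1}}{n^{\alpha}}\alpha^\alpha\frac{1}{E_b\kappa u}\le C\,e^{-c_*\alpha}\frac{1}{D^\alpha X^\alpha\,E_b\,\kappa u}.
\]
Then $X^\alpha u\ge c_u$ and $E_b^{-1}\le 2d_\varepsilon^{-1}C_0^{-m}(D/\rho_-)^{\alpha-1}$. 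Together these leave
\[
 C\,d_\varepsilon^{-1}c_u^{-1}\kappa^{-1}\rho_-^{1-\alpha}C_0^{-m}D^{-1}e^{-c_*\alpha}.
\]
Since $\kappa^{-1}=e^{2\varepsilon\beta}$ and $\rho_-^{1-\alpha}$ are both $e^{O_\varepsilon(\alpha)}$, choosing $c_*$ (or $C_0$) large absorbs these exponentials in $\alpha$. Choosing $D$ last then makes the whole ratio smaller than any prescribed $\eta$. This is the same order of choices as in Theorem~\ref{thm:cc-count}.
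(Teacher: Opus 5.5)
Your proposal is correct and follows the paper's proof. You sum \eqref{eq:cc-false-first} and \eqref{eq:cc-false-second} over the independent coordinates, and the resulting variance, at most $e^{-c_*\alpha}A/(DX)^\alpha$ in units of $P_*^2$, is normalized exactly as the first term of \eqref{eq:cc-high-var}: via $\sup_A A/(E_b+\kappa uA)^2\le(4E_b\kappa u)^{-1}$, the baseline comparison \eqref{eq:cc-baseline-comparison}, $X^\alpha u\ge c_u$, and the exact $D^{-1}$ power. Of your two options, absorb the $e^{O_\varepsilon(\alpha)}$ factors with $C_0$ (which the paper chooses after $\rho_-$ is fixed) rather than with $c_*$, and you can drop the abandoned crude route entirely.
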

\begin{proof}
Sum \eqref{eq:cc-false-first} after multiplying by $n^{-\alpha}$;
$\min(np_i,1)\le\min(Lnp_i,1)$. Independence and second moments give
the variance bound from \eqref{eq:cc-false-second}. Its normalization
is exactly the first high-degree term in
\eqref{eq:cc-high-var}, and uses the same baseline comparison and
exact $D^{-1}$ calculation.
\end{proof}

\paragraph{False-light assignments.}
A degree-$d$ polynomial may grow rapidly for $p>b$. However, the
probability that the pilot calls such a symbol light decays
exponentially in $np$. Exponential decay dominates polynomial growth
once the fixed pilot multiplier is large enough.

\begin{lemma}[False-light mean beyond the interval]
\label{lem:cc-false-light}
Suppose the complete polynomial has degree $d\le C_dL$ and rescaled
coefficient sum bounded by $e^{CL}b^\alpha$. Let $nb=c_bL\ge2B\alpha$.
For any prescribed fixed $c_*>0$, choosing the fixed pilot multiplier
$c_p$ sufficiently large gives, for all $p\ge b$,
\[
 \mathbb P\{\operatorname{Poi}(c_pnp)\le c_pB\alpha\}|Q(p)|
      \le e^{-c_*L}p^\alpha.
 \tag{CC21}\label{eq:cc-false-light}
\]
\end{lemma}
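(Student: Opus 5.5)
We bound the two factors separately and let the Poisson lower tail absorb the polynomial growth. For the polynomial, write $x=p/b\ge1$ and $Q(p)=\sum_{j\le d}q_jb^j x^j$. The rescaled coefficient-sum hypothesis gives
\[
 |Q(p)|\le e^{CL}b^\alpha x^d\le e^{CL}b^\alpha x^{C_dL}.
\]
Dividing by $p^\alpha=b^\alpha x^\alpha$, it therefore suffices to show
\[
 \Prob\{\Poi(c_pnp)\le c_pB\alpha\}\le e^{-(C+c_*)L}x^{\alpha-C_dL},
\]
and since $\alpha\ge0$ it is enough to bound the left side by $e^{-(C+c_*)L}x^{-C_dL}$.

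For the tail, put $\mu=c_pnp=c_pnbx=c_pc_bLx$. The threshold is $c_pB\alpha\le c_pnb/2=\mu/(2x)\le\mu/2$, using the hypothesis $nb\ge2B\alpha$. The Chernoff bound for the Poisson lower tail at half the mean gives
\[
 \Prob\{\Poi(\mu)\le\mu/2\}\le\exp\{-(1-\log 2)\mu/2\}=\exp\{-c_0c_pc_bLx\},
 \qquad c_0=(1-\log2)/2>0.
\]

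It remains to compare $c_0c_pc_bLx$ with $(C+c_*)L+C_dL\log x$, uniformly over $x\ge1$. Dividing by $L$, we need
\[
 c_0c_pc_b\,x\ge C+c_*+C_d\log x\qquad(x\ge1).
\]
Since $\log x\le x-1$, this holds as soon as $c_0c_pc_b\ge C+c_*+C_d$. The constants $c_b$, $C_d$ and $C$ are fixed before $c_p$, so taking $c_p$ large enough gives \eqref{eq:cc-false-light}.

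The one point to watch is the order of constants. The threshold $c_pB\alpha$ scales with $c_p$, so enlarging $c_p$ does not move the threshold relative to the mean: their ratio stays at most $1/2$ because of $nb\ge2B\alpha$. Enlarging $c_p$ also does not change $B$, $c_b$ or the polynomial, so the choice is legitimate and all constants remain independent of $k$ and $\alpha$.
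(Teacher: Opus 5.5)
Your proposal is correct and is essentially the paper's own argument: bound $|Q(p)|/p^\alpha$ by $e^{CL}(p/b)^{d-\alpha}$, bound the pilot lower tail by an exponential in $c_p c_b L\,(p/b)$ (using $nb\ge 2B\alpha$ to keep the threshold at most half the mean), and take $c_p$ large so that the linear decay dominates $C_d L\log(p/b)$ uniformly over $p\ge b$. You also make explicit two things the paper leaves implicit, namely the Chernoff constant $(1-\log 2)/2$ and the sufficient condition $c_0c_pc_b\ge C+c_*+C_d$, where the paper only says the supremum of the exponent tends to minus infinity as $c_p$ grows.
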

\begin{proof}
Put $\lambda=np$ and $z=\lambda/(nb)\ge1$. The polynomial bound and
pilot lower tail give a ratio to $p^\alpha$ at most
\[
 e^{CL-c_p\lambda/8}
           \{z^{-\alpha}+z^{d-\alpha}\}.
\]
Its logarithm is bounded by
$C'L+C_dL\log z-c_pc_bLz/8$.
The supremum over $z\ge1$ tends to minus infinity, after division by
$L$, as $c_p$ increases. This proves the claim.
\end{proof}

\paragraph{The genuine heavy contribution.}
For a frequent symbol, the Gamma statistic behaves approximately like
the $\alpha$th power of its count. It is more efficient to control the
$\ell_\alpha$ norm of the entire count vector than to require an
accurate estimate of every coordinate. The next proof uses the
triangle inequality for that norm. This is also why it has no
alphabet-sized union bound.

\begin{lemma}[One high-mean concentration bound]\label{lem:cc-heavy-norm}
For every fixed $0<t<1/8$ and $0<\rho<1$, there is $B=B(t,\rho)$
such that, for every real $\alpha\ge3/2$ and independent Poisson
counts of means $\lambda_i\ge B\alpha$,
\[
 \mathbb P\left\{
 (1-2t)^\alpha\sum_i\lambda_i^\alpha
 \le\sum_iH_\alpha(N_i)
 \le(1+2t)^\alpha\sum_i\lambda_i^\alpha
 \right\}\ge1-\rho.
 \tag{CC22}\label{eq:cc-heavy-norm}
\]
The bound also holds conditionally for any subset selected by independent
pilot counts, provided all means in that subset satisfy the displayed
lower bound.
\end{lemma}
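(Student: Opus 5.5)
The plan is to compare $\sum_i H_\alpha(N_i)$ with the $\alpha$th power of the $\ell_\alpha$ norm of the count vector, and then to control that norm through the triangle inequality, as the paragraph before the lemma suggests. Write $\lambda=(\lambda_i)$, $\|\lambda\|_\alpha=(\sum_i\lambda_i^\alpha)^{1/\alpha}$, and $S=\sum_iH_\alpha(N_i)$. The target \eqref{eq:cc-heavy-norm} is equivalent to $(1-2t)\|\lambda\|_\alpha\le S^{1/\alpha}\le(1+2t)\|\lambda\|_\alpha$. Since $H_\alpha$ is signed only at counts at most $m-1$, I first split each coordinate on the event $N_i\ge\lceil\alpha\rceil$.

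On that event $H_\alpha(N_i)=\G_\alpha(N_i)$, and Lemma~\ref{lem:gamma} gives $(N_i-\alpha)_+\le\g_\alpha(N_i)\le N_i$. This yields the deterministic sandwich
\[
 \|(N-\alpha)_+\|_\alpha^\alpha-\Sigma_-\;\le\;S\;\le\;\|N\|_\alpha^\alpha+\Sigma_+,
\]
where $\Sigma_\pm$ collect the low-count values. The additional value at $j=m$ is nonnegative and at most $m!\le\alpha^\alpha$. Lemma~\ref{ag:lem:negative} bounds the negative part, since $\lambda_i\ge B\alpha\ge2(m+1)$. In expectation both are at most $\sum_i e^{-c\lambda_i}\alpha^\alpha\lambda_i^{m}\le e^{-cB\alpha}\sum_i\lambda_i^\alpha$, once $B$ is large. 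Markov's inequality then makes $\Sigma_\pm\le\tfrac t2\|\lambda\|_\alpha^\alpha$ except with probability $\rho/3$.

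Next, the triangle inequality in $\ell_\alpha$ gives $\|N\|_\alpha\le\|\lambda\|_\alpha+\|N-\lambda\|_\alpha$ and $\|(N-\alpha)_+\|_\alpha\ge\|\lambda\|_\alpha-\|N-\lambda\|_\alpha-\alpha\|\mathbf1_{\rm supp}\|_\alpha$. The last term is absorbed because $\alpha\le\lambda_i/B$, so it is at most $\|\lambda\|_\alpha/B$. The crux is therefore to show $\|N-\lambda\|_\alpha\le t\|\lambda\|_\alpha$ with probability $1-\rho/3$, uniformly in $\alpha\ge3/2$. I would bound $\E\sum_i|N_i-\lambda_i|^\alpha$ using Poisson central moments, $\E|N-\lambda|^q\le (Cq\lambda)^{q/2}+(Cq)^q$ for $q\ge2$ (and Jensen for $q<2$). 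With $\lambda_i\ge B\alpha$ this is at most $(C/\sqrt B)^{\alpha}\lambda_i^\alpha$. Markov's inequality at level $t^\alpha\|\lambda\|_\alpha^\alpha$ then gives failure probability $(C/(t\sqrt B))^\alpha\le(C/(t\sqrt B))^{3/2}$, which is below $\rho/3$ for $B$ large. This uniformity in $\alpha$ is the main obstacle. The moment bound must hold with constants of the form $C^\alpha$ and no larger factorial growth, which is where the moment-generating-function estimate from the proof of Lemma~\ref{lem:cc-false-heavy} would be reused.

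Combining these pieces, outside an event of probability $\rho$, $S^{1/\alpha}$ lies within a factor $(1\pm(t+1/B+t/2))$ of $\|\lambda\|_\alpha$. This uses $(x\pm y)^{1/\alpha}$ comparisons with $y\le\tfrac t2 x$ and $\alpha\ge3/2$. Taking $B\ge 2/t$ gives the factors $(1\pm2t)$. For the conditional assertion, independent pilot counts determine the subset without affecting the main counts $N_i$. The argument therefore applies verbatim conditionally on $Y$, to any subset all of whose means exceed $B\alpha$.
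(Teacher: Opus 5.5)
Your decomposition is the paper's: the sandwich $(j-\alpha)_+\le\g_\alpha(j)\le j$ with the $\ell_\alpha$ triangle inequality, Poisson central moments of order $\alpha$ with only $C^\alpha$ growth, and Markov's inequality for the signed low-count values. The gap is in the final combination on the lower side. You secure only $\Sigma_-\le\frac t2\|\lambda\|_\alpha^\alpha$ and then invoke a comparison ``with $y\le\frac t2x$''. However, $x=\|(N-\alpha)_+\|_\alpha^\alpha$ is only known to be at least $(1-\frac32t)^\alpha\|\lambda\|_\alpha^\alpha$, so what you actually have is $y\le\frac t2(1-\frac32t)^{-\alpha}x$. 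The lower target $(1-2t)^\alpha\|\lambda\|_\alpha^\alpha$ and the available room $\{(1-\frac32t)^\alpha-(1-2t)^\alpha\}\|\lambda\|_\alpha^\alpha$ are exponentially small in $\alpha$. Your tolerance, by contrast, is a fixed fraction of $\|\lambda\|_\alpha^\alpha$. Once $(1-\frac32t)^\alpha<\frac t2$, your lower bound $x-y$ is even negative. So the argument as written proves the lower inequality only for bounded $\alpha$, whereas uniformity over all $\alpha\ge3/2$ (the lemma is used up to $\alpha\approx\log k$) is its whole content. The upper side is fine, since there the room is at least $\frac32t\|\lambda\|_\alpha^\alpha$.

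The repair is to apply Markov at a threshold proportional to the gap, as the paper does with $t(1-2t)^\alpha S$. The mean value theorem gives $(1-\frac32t)^\alpha-(1-2t)^\alpha\ge\frac{\alpha t}{2}(1-2t)^{\alpha-1}\ge\frac t2(1-2t)^\alpha$. It therefore suffices that $\Sigma_++\Sigma_-\le\frac t2(1-2t)^\alpha\|\lambda\|_\alpha^\alpha$. An expectation bound $Ce^{-B\alpha}\|\lambda\|_\alpha^\alpha$ makes the failure probability at most $\frac{2C}{t}\{e^{-B}/(1-2t)\}^\alpha$, which is uniformly below $\rho/3$ for $\alpha\ge3/2$ once $B$ is large.

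For that expectation bound, keep the factor $1/m!$ from the Poisson mass. For $\lambda\ge B\alpha$ one has $\E[|H_\alpha(N)|\one\{N\le m\}]\le e^{-\lambda}\lambda^m+Che^{-\lambda}\lambda^{m-1}\le Ce^{-B\alpha}\lambda^\alpha$. Your intermediate expression $e^{-c\lambda_i}\alpha^\alpha\lambda_i^m$ is not dominated by $e^{-cB\alpha}\lambda_i^\alpha$ uniformly in $\alpha$; test $\lambda_i=B\alpha$.

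Likewise, the raw-moment estimate $\E N^q\le e^{Cq}(\lambda+q)^q$ from Lemma~\ref{lem:cc-false-heavy} cannot produce $(C/\sqrt B)^\alpha$. The central bound needs the same device applied to $N-\lambda$, using $\E e^{\pm s(N-\lambda)}=e^{\lambda(e^{\pm s}-1\mp s)}$, or the Chernoff-tail integration used in the paper.
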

\begin{proof}
Poisson Chernoff bounds imply
\[
 \mathbb P(|N-\lambda|\ge z)
 \le2e^{-z^2/[2(\lambda+z)]}
 \le2e^{-z^2/(4\lambda)}+2e^{-z/4}.
\]
Integration of $q\int_0^\infty z^{q-1}\mathbb P(|N-\lambda|\ge z)\,dz$
gives
\[
 (\mathbb E|N-\lambda|^q)^{1/q}
       \le C(\sqrt{q\lambda}+q),\qquad q\ge2.
\]
For $3/2\le q<2$, Jensen's inequality and the Poisson variance give
the same estimate. Let $S=\sum_i\lambda_i^\alpha$ and
$R=S^{1/\alpha}$. Since $\lambda_i\ge B\alpha$,
\[
 \mathbb E\sum_i|N_i-\lambda_i|^\alpha
            \le(C/\sqrt B)^\alpha S.
\]
Markov's inequality makes the $\ell_\alpha$ norm of the count error
at most $tR/2$, except with probability at most $\rho/3$, uniformly
for $\alpha\ge3/2$, once $B$ is large.

Write $g_\alpha(j)=G_\alpha(j)^{1/\alpha}$. The pointwise inequalities
$(j-\alpha)_+\le g_\alpha(j)\le j$ imply
\[
 \|g_\alpha(N)-\lambda\|_\alpha
 \le\|N-\lambda\|_\alpha+\alpha s^{1/\alpha}
 \le(t/2+1/B)R\le tR,
\]
where $s$ is the number of coordinates. Thus the truncated Gamma sum
lies in $[(1-t)^\alpha S,(1+t)^\alpha S]$ on this event.
The analytic and truncated statistics differ only when $N_i\le m$;
their absolute difference is at most $\alpha^\alpha$ there. For
$B\ge4$, Poisson lower tails give
\[
 \mathbb E\sum_i|H_\alpha(N_i)-G_\alpha(N_i)|
            \le B^{-\alpha}e^{-cB\alpha}S.
\]
The gap between $(1-t)^\alpha S$ and $(1-2t)^\alpha S$ is at least
$t(1-2t)^\alpha S$, and the upper gap is larger. Increase $B$ so that
Markov's inequality puts this signed correction below the gap with
probability at least $1-\rho/3$. This proves
\eqref{eq:cc-heavy-norm}. Conditioning on an independent pilot
only fixes the subset, so the same proof applies.
\end{proof}

The multiplicative factor $\omega$ deliberately places the heavy
estimate inside the permissible interval. This bias is useful:
it reserves room on both sides for random fluctuations. The
following corollary checks that the room is uniform for
$\alpha\ge3/2$.

\begin{corollary}[The actual omega heavy arm]\label{cor:cc-omega-heavy}
Let $E=2\varepsilon(\alpha-1)$, $E/2\le E_0\le E$,
$\omega=e^{-c_0E_0}$ for a fixed $0<c_0<1$, and $\kappa=e^{-E}$.
The constants in Lemma~\ref{lem:cc-heavy-norm} can be chosen so that
its omega-scaled statistic lies, with probability at least $1-\rho$,
in
\[
 [(\kappa+c_\varepsilon\kappa)S,
                        (1-c_\varepsilon\kappa)S].
\]
\end{corollary}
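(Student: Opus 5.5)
The plan is to apply Lemma~\ref{lem:cc-heavy-norm} with a fixed $t$ chosen in terms of $\varepsilon$ and $c_0$, and then check that multiplication by $\omega$ maps the interval $[(1-2t)^\alpha S,(1+2t)^\alpha S]$ into the stated one. On the success event of that lemma, which has probability at least $1-\rho$, the omega-scaled statistic lies in $[\omega(1-2t)^\alpha S,\omega(1+2t)^\alpha S]$. So it suffices to show two deterministic inequalities, uniformly over real $\alpha\ge3/2$ and $E_0\in[E/2,E]$:
\[
 \omega(1-2t)^\alpha\ge(1+c_\varepsilon)\kappa,\qquad
 \omega(1+2t)^\alpha\le1-c_\varepsilon\kappa.
\]

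Take logarithms, writing $\beta=\alpha-1\ge1/2$, so that $E=2\varepsilon\beta$ and $\alpha\le3\beta$. The lower inequality becomes
\[
 E-c_0E_0+\alpha\log(1-2t)\ge\log(1+c_\varepsilon).
\]
Since $c_0E_0\le c_0E$, the left side is at least
\[
 (1-c_0)2\varepsilon\beta-3\beta\,|\log(1-2t)|.
\]
Choose $t$ so small that $3|\log(1-2t)|\le(1-c_0)\varepsilon$. The left side is then at least $(1-c_0)\varepsilon\beta\ge(1-c_0)\varepsilon/2$, a fixed positive constant, and this fixes $c_\varepsilon$ for the lower inequality.

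For the upper inequality we need
\[
 -c_0E_0+\alpha\log(1+2t)\le\log(1-c_\varepsilon\kappa).
\]
Using $E_0\ge E/2=\varepsilon\beta$ and $\alpha\le3\beta$, the left side is at most $-\beta\bigl(c_0\varepsilon-6t\bigr)$. Shrinking $t$ further so that $6t\le c_0\varepsilon/2$ makes it at most $-c_0\varepsilon/4$. Since $\kappa\le1$, any $c_\varepsilon\le1-e^{-c_0\varepsilon/4}$ works. The final $c_\varepsilon$ is the minimum of the two constants, and $B=B(t,\rho)$ then comes from Lemma~\ref{lem:cc-heavy-norm}.

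The only delicate point is uniformity as $\alpha\to\infty$. Both $\log\omega$ and $\log\kappa$ are linear in $\beta$, and the multiplicative error $\alpha\log(1\pm2t)$ is also linear in $\beta$, because $\alpha\le3\beta$ holds on $\alpha\ge3/2$. Hence a single fixed $t$ suffices for all orders and no dependence on $\alpha$ enters. The conditional version for pilot-selected subsets is inherited directly from the lemma.
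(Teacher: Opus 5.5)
Your proof is correct and takes essentially the same route as the paper. You fix $t=t(\varepsilon,c_0)$ (also below $1/8$, as Lemma~\ref{lem:cc-heavy-norm} requires) so that $\alpha\log(1\pm2t)$ is a small multiple of $\beta$ via $\alpha\le3\beta$, then use $\beta\ge1/2$ to turn the remaining logarithmic gaps into fixed multiplicative margins. The only difference is cosmetic: the paper handles both endpoints at once through the single distance $d_0=\min(c_0/2,1-c_0)$ of $-c_0E_0$ from $-E$ and $0$, whereas you treat the two inequalities separately.
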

\begin{proof}
Put $d_0=\min(c_0/2,1-c_0)>0$. The logarithm $-c_0E_0$ has
distance at least $d_0E$ from both endpoints $-E$ and $0$.
Since $\alpha/(\alpha-1)\le3$, choose $t=t(\varepsilon,c_0)>0$
so that the logarithmic error in \eqref{eq:cc-heavy-norm} is at most
$d_0E/2$. The resulting bounds are
$\kappa e^{d_0E/2}$ and $e^{-d_0E/2}$. Because
$\alpha-1\ge1/2$, both have the asserted margins after choosing a
fixed positive $c_\varepsilon$.
\end{proof}

\paragraph{Connection to the deterministic bound.}
These lemmas control signed low-count leakage, count variance, and
true-heavy concentration for the specified statistic. The complementary
polynomial and aggregate bounds are established in
Section~\ref{sec:coupled-bias-bound}. Theorem~\ref{cu:theorem}
combines those deterministic bounds with the stochastic estimates here.

\paragraph{The remaining deterministic boundary: $p<a$.}
The approximation guarantee on $[a,b]$ says nothing about symbols
below $a$. There may be $k$ such symbols. We therefore bound their
total contribution by $k$ times a uniform bound, then compare that
quantity with the deterministic baseline already included in the
estimator.

\begin{lemma}[The complete polynomial below its low cutoff]
\label{lem:cc-below-cutoff}
For the coupled polynomial \eqref{eq:cc-coupled-Q}, suppose
$a\le b$, $0\le J_1,J_\delta,v_\delta\le1$ on $[0,1]$, and its
amplitudes are bounded. Then
\[
 |Q(p)|\le C^m a^\alpha\qquad(0\le p\le a).
\]
Consequently, the total absolute contribution of these coordinates
is at most $C^mka^\alpha$. For $\alpha\ge3/2$, relative to the actual
baseline $E_aP_*$ in Theorem~\ref{thm:cc-count}, its ratio is at most
\[
 \frac{2}{d_\varepsilon}(C/C_0)^m\,ka.
\]
This tends uniformly to zero as the final multiplier $D$ is increased,
since $ka\le\rho_+/D$.
\end{lemma}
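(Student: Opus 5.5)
I plan to bound each of the three summands of \eqref{eq:cc-coupled-Q} separately on $0\le p\le a$, then sum over coordinates, and finally compare with the baseline.

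\textbf{First summand.} For the reverse part $A'Q_-(p)\{1-J_1(p/b)\}$, I will use the coefficient bound \eqref{eq:cc-reverse-coeff}: $Q_-(p)=\sum_{\ell=0}^m[p^{m-\ell}]Q_-\,p^{m-\ell}$ with $|[p^{m-\ell}]Q_-|\le Cm^{2\gamma}a^{\gamma+\ell}(Cm^2)^\ell/(\ell!)^2$. For $p\le a$, each monomial is then at most $Cm^{2\gamma}a^{\alpha}(Cm^2)^\ell/(\ell!)^2$. Summing over $\ell\le m$ and using $m^{2\ell}/(\ell!)^2\le e^{2\ell}\binom m\ell^2\le e^{2\ell}4^m$ (as in step 3b of Theorem~\ref{thm:cc-count}), together with $m^{2\gamma}\le m^2\le 4^m$, gives $|Q_-(p)|\le C^m a^\alpha$. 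The factor $1-J_1$ lies in $[0,1]$ by hypothesis, and $A'$ is bounded. Equivalently, one can cite the final assertion of Proposition~\ref{prop:coupled-deterministic-bound}, whose Step~6 makes the same estimate. The one point to watch is the polynomial extension at small $p$: the bound is on the polynomial itself, not on $p^\alpha u_\gamma(a/p)$ through the profile values, so the coefficient route is the correct one.

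\textbf{Second and third summands, and summation.} The high arm $p^\alpha\{B'v_\delta(p/b)+\omega(1-B')J_\delta(p/b)\}$ is at most $(|B'|+\omega|1-B'|)p^\alpha\le Ca^\alpha$, since $0\le v_\delta,J_\delta\le1$ and $p\le a\le b\le1$ keeps $p/b\in[0,1]$. Combining the three summands, $|Q(p)|\le C^m a^\alpha$. Summing over at most $k$ coordinates gives the total $C^mka^\alpha$.

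\textbf{Comparison with the baseline.} Using $a^\alpha=a\cdot a^{\alpha-1}=(ka)\,(ka)^{\alpha-1}k^{-1}k^{1-\alpha}\cdot k$, we have $ka^\alpha=(ka)\cdot(ka)^{\alpha-1}P_*$. The baseline is $E_aP_*=\chi_\alpha d_\varepsilon C_0^m(ka)^{\alpha-1}P_*$ with $\chi_\alpha\ge1/2$ for $\alpha\ge3/2$. The ratio is therefore at most $(2/d_\varepsilon)(C/C_0)^m\,ka$. The $(ka)^{\alpha-1}$ factors cancel exactly, which is why no lower bound on $a$ is needed.

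\textbf{Uniformity in $D$.} With $C_0\ge C$ fixed earlier in the order of choices, $(C/C_0)^m\le1$. Since $ka\le\rho_+/D$, the ratio is at most $2\rho_+/(d_\varepsilon D)$, uniformly in $\alpha$ and $k$. It tends to zero as $D$ grows, and $D$ is chosen last. No step is a serious obstacle; the only care needed is tracking the $m$-dependence so that a single $C_0$ absorbs $C^m$.
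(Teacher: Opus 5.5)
Your proposal is correct and follows essentially the same route as the paper. Both arguments bound the reverse polynomial on $[0,a]$ through its rescaled coefficient sum (equivalently, the substitution $p=ay$), bound the high arm by $Cp^\alpha\le Ca^\alpha$, sum over $k$ coordinates, and use $ka^\alpha=(ka)^\alpha P_*$ to cancel the $(ka)^{\alpha-1}$ factor against the baseline $E_aP_*$. One cosmetic slip: your intermediate display equates $a^\alpha$ with an expression that actually equals $ka^\alpha$, though the identity you then use, $ka^\alpha=(ka)(ka)^{\alpha-1}P_*$, is correct.
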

\begin{proof}
Substitute $p=ay$, $0\le y\le1$, in the reverse polynomial.
Its rescaled coefficient sum is bounded by $C^m$, so
$|Q_-(p)|\le C^ma^\alpha$. The cutoff multiplier $1-J_1$ lies in
$[0,1]$. The complete high term is bounded in absolute value by
$Cp^\alpha\le Ca^\alpha$, since the high profiles and amplitudes
are bounded. Summation and
$ka^\alpha=(ka)^\alpha P_*$ prove the assertions.
\end{proof}
\subsection{One calibrated estimator and its matching upper bound}
\label{cu:section}

The preceding lemmas now have one concrete use: they prove that the
following hybrid power-sum estimate lies inside its prescribed interval.
The statement includes the full formula so that the estimator can be
read without reconstructing it from the technical lemmas. Its four
components are a deterministic correction, the known affine total,
the polynomial branch for a light pilot, and the Gamma branch for a
heavy pilot. In this subsection we treat noninteger orders
at least $3/2$; the near-one range is treated separately below.

\begin{theorem}[The common noninteger upper bound]
\label{cu:theorem}
Fix $0<\varepsilon<1/2$. For $k\ge k_\varepsilon$, uniformly over noninteger
$3/2\le\alpha<\log k$, the root $X$ in Lemma~\ref{upcal:parameters}
satisfies
\[
 S_\alpha(k,\varepsilon,1/3)
 \le C_\varepsilon\alpha k^{1-1/\alpha}X.
\]
Use the following construction in this range. Use the parameters
of Lemma~\ref{upcal:parameters}, with $c_\omega=1/8$, and the polynomial
$Q_0$ of \eqref{eq:coupled-single-polynomial}. Put
\[
 \chi=\min\{\alpha-1,1\},\quad
 s=(1+c_u u)^{-1},\quad Q=sQ_0=\sum_{j=0}^{d}q_jp^j,
 \quad \omega=s e^{-E_0/8},
\]
\[
 P_*=k^{1-\alpha},\qquad
 E_a=\chi d_\varepsilon C_0^m(ka)^{\alpha-1}.
\]
For independent $N_i\sim\operatorname{Poi}(np_i)$ and
$Y_i\sim\operatorname{Poi}(c_pnp_i)$ define
\[
 I_i=\mathbf1\{Y_i\le c_pB\alpha\},\quad
 \widehat Q_i=\sum_{j=0}^{d}q_j(N_i)_j/n^j,
 \quad \widehat G_i=\omega n^{-\alpha}\mathsf H_\alpha(N_i),
 \quad \mathsf H_\alpha(j)=\frac{\Gamma(j+1)}{\Gamma(j+1-\alpha)}.
\]
Use the entire reciprocal-Gamma value at its zeros, and set
\begin{equation}\label{cu:estimator}
 \widehat P=2E_aP_*+q_1+
 \sum_{i=1}^{k}\left[
 I_i(\widehat Q_i-q_1N_i/n)
 +(1-I_i)(\widehat G_i-q_1N_i/n)\right].
\end{equation}
All constants can be fixed in terms of $\varepsilon$ alone so that
\begin{equation}\label{cu:interval}
 \Pr_p\{e^{-2\varepsilon(\alpha-1)}P_\alpha(p)
               \le\widehat P\le P_\alpha(p)\}\ge 3/4.
\end{equation}
On a positive output return
$-\log\widehat P/(\alpha-1)-\varepsilon$, with an arbitrary output
otherwise. A fixed sample pool of size $C_\varepsilon n$ implements
the estimator with failure at most $1/3$.
\end{theorem}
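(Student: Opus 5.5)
The plan is to condition on the pilot counts $Y$ and decompose $\widehat P$ into an oracle mean plus four error terms. Each error term will be bounded by a small multiple of the margin $\mathcal M$ of Corollary~\ref{cor:coupled-common-bias-transfer}.

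\textbf{Deterministic calibration.} Fix the constants in the order prescribed by Theorem~\ref{thm:cc-count}, then verify the four calibration hypotheses of Corollary~\ref{cor:coupled-common-bias-transfer} from Lemmas~\ref{upcal:parameters}, \ref{upcal:baseline} and \ref{upcal:bounded-width}. These lemmas give $ka\le c_\varepsilon/D$ and $E/2\le E_0\le E$. The reserve \eqref{upcal:reserve} gives $E-E_0=\Delta\ge C_{\rm bias}h+c_{\rm reg}u$ whenever $E/h$ is large; here I would choose $C_h$ after $C_{\rm bias}$, noting that when $\Delta=E/2$ the inequality holds because $h$ is then small relative to $E$. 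When $E/h$ is bounded, Lemma~\ref{upcal:bounded-width} supplies $kt\le C/D$.

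\textbf{Oracle mean.} Let $\mathcal L^\circ=\{i:np_i\le B\alpha/2\}$ and $\mathcal H^\circ=\{i:np_i\ge B\alpha\}$, and split the middle band according to which side the pilot chooses. Define the oracle mean $\mu$ with light set $\{p_i\le b\}$, which is consistent because $nb\ge2B\alpha$. The corollary then gives $\kappa P+c\mathcal M\le\mu\le P-c\mathcal M$.

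\textbf{Error terms.} The realized $\widehat P$ differs from $\mu$ through the following terms, each assigned failure probability $1/16$.
\begin{enumerate}
\item Conditional polynomial noise on the selected light set. Its variance is at most $\eta\mathcal M^2$ by Theorem~\ref{thm:cc-count}, and Chebyshev applies.
\item The affine anchor, whose variance is $q_1^2/n\le C\mathcal M^2/n$ by Lemma~\ref{lem:cc-anchor}.
\item Pilot mistakes. False-heavy selections at small means are controlled by Corollary~\ref{cor:cc-false-heavy} in expectation $e^{-c_*\alpha}\mathcal B$. False-light selections at $p>b$ are controlled by Lemma~\ref{lem:cc-false-light}, with mass at most $e^{-c_*L}P$. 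Both are bounded by Markov against $\mathcal M$, using $\mathcal M\ge\kappa uP\ge c\kappa X^{-\alpha}P$ and $L\gtrsim\alpha(1+\log X)$ to absorb $e^{-c_*L}$.
\item The below-$a$ terms, bounded deterministically by Lemma~\ref{lem:cc-below-cutoff}.
\end{enumerate}
The genuinely heavy coordinates are handled by the $\ell_\alpha$ concentration of Lemma~\ref{lem:cc-heavy-norm}, applied conditionally on the pilot. Corollary~\ref{cor:cc-omega-heavy} shows that the $\omega$-scaled heavy sum lands strictly inside its interval, with margin $c\kappa S$. This replaces the oracle value $\omega\sum p_i^\alpha$ at the cost of only a fraction of the gap.

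\textbf{Main obstacle: the middle band.} The difficulty lies in coordinates with $B\alpha/2<np_i<B\alpha$ or $p_i\in(b,\cdot)$, which may be assigned to either branch. On such coordinates $\min(Lnp_i,1)=1$, so their contribution to $\mathcal B$ is full. I would show that both branches have mean within the same relative band: the polynomial by the localization estimate $|Q-\omega p^\alpha|\le Cp^\alpha(t/p)^3$, and the Gamma transform by \eqref{ag:eq:bias}. Each branch deviation is then a small multiple of $p_i^\alpha$, and hence of $\chi\kappa\mathcal B$.

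\textbf{Conclusion.} A union bound gives \eqref{cu:interval} with probability at least $3/4$. Poissonization from a fixed pool of size $C_\varepsilon n$, as in Section~\ref{sec:count-foundations}, then yields the sample bound $C_\varepsilon\alpha k^{1-1/\alpha}X$ with failure at most $1/3$.
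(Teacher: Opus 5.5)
Your treatment of the band $B\alpha/2\le np_i\le nb$ has a genuine gap, and it breaks uniformity in $\alpha$.

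\textbf{Where the argument fails.} You place a deterministic cut between the oracle's light and heavy sets inside this band, either at $p_i=b$ as in your definition of $\mu$, or at $np_i=B\alpha$ as in $\mathcal H^\circ$. Either cut leaves coordinates where the pilot disagrees with the oracle with probability of order one. Near $np_i\approx B\alpha$ the pilot picks either branch with probability about $1/2$, and for $B\alpha\ll np_i\le nb$ it picks the Gamma branch almost surely.

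You absorb the resulting discrepancy $Q(p_i)-\omega p_i^\alpha$ with the localization bound, asserting it is a small multiple of $p_i^\alpha$ ``and hence of $\chi\kappa\mathcal B$''. That implication fails uniformly in $\alpha$.
\begin{itemize}
\item On the band, $p_i/t=N^2p_i/b\asymp L\,np_i/c_b\gtrsim B\alpha L/c_b$. So localization gives relative error only $O\bigl((c_b/(B\alpha L))^3\bigr)$, which is polynomially small in $\alpha$.
\item The margin carries the factor $\kappa=e^{-2\varepsilon(\alpha-1)}$. This is exponentially small in $\alpha$, about $k^{-2\varepsilon}$ when $\alpha\asymp\log k$, and $B$ is a fixed constant.
\end{itemize}
A second omission: \eqref{ag:eq:bias} controls only the mean of the Gamma branch. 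The fluctuations of heavy-selected coordinates with $B\alpha/2<np_i<B\alpha$ appear in none of your error terms. They must also be fed into Lemma~\ref{lem:cc-heavy-norm}, with its threshold lowered to $B\alpha/2$ by enlarging $B$.

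\textbf{The missing idea.} Corollary~\ref{cor:coupled-common-bias-transfer} holds for \emph{every} light/heavy assignment whose light coordinates are at most $b$. It also tolerates replacing the heavy aggregate $\omega P_{\mathcal H}$ by any value in a fixed multiplicative window. So you never need to compare $Q$ with $\omega p^\alpha$ on the band. Condition on $Y$ and let the assignment follow the pilot, as the paper does:
\begin{itemize}
\item $\mathcal L=\{np_i<B\alpha/2\}\cup\{B\alpha/2\le np_i\le nb,\ I_i=1\}$;
\item $\mathcal H=\mathcal G\cup\mathcal F$, with $\mathcal G=\{I_i=0,\ np_i\ge B\alpha/2\}$ and $\mathcal F=\{p_i>b,\ I_i=1\}$.
\end{itemize}
Apply Lemma~\ref{lem:cc-heavy-norm} conditionally to $\mathcal G$. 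Then check that $\sum_{\mathcal G}\widehat G_i+\omega P_{\mathcal F}$, where $P_{\mathcal F}=\sum_{i\in\mathcal F}p_i^\alpha$, lies in that window.

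The only mismatches left are exponentially rare: false-light on $\mathcal F$, and false-heavy with $np_i<B\alpha/2$. In the false-heavy case you must account for the omitted $Q(p_i)$, not only the Gamma value. Both contribute $O\bigl((e^{-cL}+e^{-c\alpha})\mathcal B\bigr)$ in expectation, with $c$ made large through the pilot constants so that $e^{-c\alpha}\ll\kappa$.

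\textbf{A smaller point.} Bound the false-light error by $\mathcal B$, using $Lnp_i\ge1$ for $p_i>b$, rather than by $\kappa uP\gtrsim\kappa X^{-\alpha}P$. Your version requires the pilot exponent to exceed $1/c_L$. That would reverse the order of constants in Theorem~\ref{thm:cc-count}, where the pilot constants are fixed before $c_L$.
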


\begin{proof}
Write $P=P_\alpha(p)$, $\kappa=e^{-E}$, and introduce the single margin
\begin{equation}\label{cu:margin}
 \mathcal M=E_aP_*+\kappa uP+\chi\kappa\mathcal B,
 \qquad \mathcal B=\sum_i p_i^\alpha\min(Lnp_i,1).
\end{equation}
The factor $\chi$ is necessary: both the logarithmic interval and the
affine-anchored fluctuation vanish at order one.

\medskip\noindent\textbf{Final assembly, Step 1: establish a deterministic interior.}
We first verify the deterministic margin and its parameter budget.
Let $C$ be the lower-exponent constant in
Proposition~\ref{prop:coupled-deterministic-bound} and take
$C_h\ge2C+2$. Since $u\le v+h\le2\beta$, choose
$c_u>0$ sufficiently small in terms of $\varepsilon$.
If $\Delta=v+C_hh$, then $\Delta-Ch\ge u$.
If $\Delta=E/2$ and $E\ge4Ch$, then
$\Delta-Ch\ge E/4=\varepsilon\beta/2\ge\varepsilon u/4$.
Thus, wherever $E\ge4Ch$, the same reserve pays the lower $uP$
margin and the common rescaling. The pointwise bound, the summed
baseline, and the bound below $a$ consequently give strict margins
$c_\varepsilon\mathcal M$ for any admissible light/heavy assignment.
The baseline is chosen to leave at least $8E_aP_*$ before adding
$2E_aP_*$. Such a choice is possible by the explicit summed bounds:
first fix $C_0$, then $d_\varepsilon$ sufficiently small, and finally
make $ka\le C_\varepsilon/D$ small by increasing $D$.

For clarity, the allowed heavy aggregate in this assertion may replace
$\omega P_{\mathcal H}$ by any value with lower margin
$c_\varepsilon\kappa(u+\chi)P_{\mathcal H}$ above
$\kappa P_{\mathcal H}$ and upper margin
$c_\varepsilon(u+\chi)P_{\mathcal H}$ below $P_{\mathcal H}$.
Here $u\le u_0$ and $\chi\le1$, so a fixed heavy interior margin
dominates the stated multiple of $u+\chi$.
Its fixed upper deficit also pays its share of the baseline, since
$\min\{h+Y+\min(Z,1),1\}\le1$. The light coordinates have
$p_i\le b$; coordinates below $a$ contribute only the explicitly
bounded baseline error. This is exactly the summation used in
Corollary~\ref{cor:coupled-common-bias-transfer}.

The complementary inequality $E<4Ch$ is handled by the other bound for
the same polynomial, namely its localization toward $\omega p^\alpha$.
For $\alpha\ge3/2$, this inequality confines $\alpha$ and $h$ to
fixed compact ranges, and Lemma~\ref{upcal:bounded-width} gives
$kt\le C_\varepsilon/D$ and $u\ge c_\varepsilon$, with
$t=b/N^2$. The localized error, with $\nu=\min\{\alpha,2\}$, sums to
at most $C_\varepsilon(kt)^\nu P$; below $a$ its bound is
$C^{m+1}(ka)^\alpha P_*$. The deterministic correction is at most
$C_\varepsilon D^{-\beta}P$. All tend to zero uniformly as $D$
increases. Both gaps $1-\omega$ and $\omega-\kappa$ have fixed
positive lower bounds here, while
$c_\varepsilon P\le\mathcal M\le C_\varepsilon P$.
An aggregate heavy value within $\eta_\varepsilon P_{\mathcal H}$ of
$\omega P_{\mathcal H}$ is therefore admissible, for a fixed
sufficiently small $\eta_\varepsilon>0$.
These two bounds verify one deterministic bound; the polynomial,
pilot, root, rescaling, and correction in \eqref{cu:estimator} do not
change between them.

\medskip\noindent\textbf{Final assembly, Step 2: condition on the pilot and build an ideal comparison.}
The comparison statistic uses polynomial expectations on its light
coordinates and a concentrated heavy aggregate on the others. It is
chosen only for analysis; the estimator does not know $p_i$ or any
of the sets defined below.
Choose the true-heavy threshold $B\alpha/2$ with $B$ large.
The count budget has $nb=c_bL\ge2B\alpha$ and $na<B\alpha/2$;
the latter follows directly from the capacity equation after fixing
the degree constant. For a realization of the pilot define
\[
 \mathcal L=\{i:np_i<B\alpha/2\}
  \cup\{i:B\alpha/2\le np_i\le nb,\ I_i=1\},
 \qquad \mathcal H=\mathcal L^c,
\]
\[
 \mathcal G=\{i:I_i=0,\ np_i\ge B\alpha/2\},
 \qquad \mathcal F=\{i:p_i>b,\ I_i=1\}.
\]
Thus every light coordinate is at most $b$, and
$\mathcal H=\mathcal G\mathbin{\dot\cup}\mathcal F$.
Conditional on the pilot, Lemma~\ref{lem:cc-heavy-norm} applies to
$\mathcal G$. By choosing its relative-root tolerance sufficiently
small, the aggregate
\[
 V_{\mathcal H}=\sum_{i\in\mathcal G}\widehat G_i
                         +\omega P_{\mathcal F}
\]
is admissible for the preceding deterministic bound except with
a prescribed small fixed probability. Indeed
$-\log\omega=E_0/8+\log(1+c_uu)$ lies strictly between zero and $E$,
with gaps proportional to $\beta$. The logarithmic fluctuation of
the heavy norm is at most a fixed small multiple of $\alpha$;
$\alpha/\beta\le3$ converts it into a small part of those gaps.
When $E<4Ch$, $\alpha$ is bounded, so the same fixed choice of tolerance
also supplies the stronger absolute error
$\eta_\varepsilon P_{\mathcal H}$. Increasing $B$ enforces both
requirements simultaneously, including for the empty set.

\medskip\noindent\textbf{Final assembly, Step 3: bound the difference from the ideal comparison.}
On this event, the ideal mixed statistic
\[
 Z_0=2E_aP_*+\sum_{i\in\mathcal L}Q(p_i)+V_{\mathcal H}
\]
lies in $[\kappa P+c\mathcal M,P-c\mathcal M]$, with a fixed
$c=c_\varepsilon>0$. We compare the actual statistic to $Z_0$.
Its centered conditional polynomial fluctuation has unconditional variance
$\mathbb E\operatorname{Var}(\widehat Q\mid Y)$ at most
$\eta\mathcal M^2$ by Theorem~\ref{thm:cc-count} and
Lemma~\ref{upcal:baseline}. The factor $\chi\in[1/2,1]$ only changes
fixed constants in that theorem. The affine replacement contributes
$q_1(1-\sum_iN_i/n)$, with variance
$q_1^2/n\le C_\varepsilon\mathcal M^2/n$ by
Lemma~\ref{lem:cc-anchor}. These two terms are smaller than a fixed
fraction of $c\mathcal M$ with arbitrarily small fixed failure
probability, by Chebyshev's inequality and the stated choices of
$C_0,D,k_\varepsilon$.

The remaining difference consists precisely of
\[
 \sum_{i\in\mathcal F}\{Q(p_i)-\omega p_i^\alpha\}
 -\sum_{\substack{np_i<B\alpha/2\\I_i=0}}Q(p_i)
 +\sum_{\substack{np_i<B\alpha/2\\I_i=0}}\widehat G_i.
\]
Lemma~\ref{lem:cc-false-light} bounds the expectation of the absolute
polynomial part of the first term by $Ce^{-cL}\mathcal B$;
the same pilot lower tail bounds $\omega P_{\mathcal F}$, since
$Lnp_i\ge1$ on $p_i>b$. For the second term with
$p_i\ge a$, use $|Q(p_i)|\le Cp_i^\alpha$ and the pilot upper tail;
its bound is $Ce^{-c\alpha}\mathcal B$. The portion below $a$ is
at most $C^{m+1}ka^\alpha$, which is an arbitrarily small part of
$E_aP_*$ by Lemma~\ref{lem:cc-below-cutoff} and the choice of $D$.
The last, signed Gamma term has the same exponentially small
expectation by Lemma~\ref{lem:cc-false-heavy}.
Choose the pilot constants so that the exponents dominate
$\kappa^{-1}=e^{2\varepsilon\beta}$. Markov's inequality then makes
their total smaller than the remaining fixed fraction of
$c\mathcal M$, with a prescribed small fixed failure probability.
No union bound over the alphabet is used.

\medskip\noindent\textbf{Final assembly, Step 4: combine the events and return to fixed samples.}
A union bound over these fixed probability budgets proves
\eqref{cu:interval}, with slack; for example each of the heavy event,
the combined centered fluctuation, and the boundary errors can be
allocated failure $1/32$. 

The hierarchy is consistent: fix the profiles and damping, their
bias constants and $C_h$, then the pilot and degree constants; choose
$c_L$ sufficiently small; determine the cutoff comparison constants;
choose $C_0$, then $d_\varepsilon$ sufficiently small, and finally
$D$ and $k_\varepsilon$. The exact powers of $D$ in
\eqref{eq:cc-D-power} ensure that no earlier constant is required to
depend on $D$. Every choice depends only on the original
$\varepsilon$, and $E=2\varepsilon\beta$ has never been replaced in
the root.

Finally the conversion of \eqref{cu:interval} into entropy accuracy is
immediate. Independent Poisson main and pilot samples are generated
from a fixed pool of $\lceil4(1+c_p)n\rceil$ iid observations by
Poisson thinning; the pool overflow probability is exponentially small
in $n$. The strict probability slack absorbs it, giving the claimed
fixed-sample theorem.
\end{proof}

\clearpage
\section{The near-one upper bound}
\label{sec:near-one-proof}
\label{nuc:section}

This section proves the upper bound for $1<\alpha\le3/2$, including
the random pilot and signed Gamma arm. We use the regularized root and
a degree-one low profile. Auxiliary constants are chosen for this range;
they need not coincide with those used for orders at least $3/2$.

\paragraph{Why a separate calculation is necessary.}
When $\beta\downarrow0$, the admissible interval has width
\[
 (1-e^{-2\varepsilon\beta})P
   =2\varepsilon\beta P+O(\beta^2P).
\]
Thus the required absolute error in the power sum vanishes with
$\beta$. The calculation below shows that the estimator's error
vanishes at the same rate. The central cancellation is elementary:
at $\beta=0$, the target $p^{1+\beta}$ equals $p$, whose total mass
is known exactly. The technical work is to keep this cancellation
through the polynomial, the Gamma statistic, and the random pilot.

Fix $0<\varepsilon<1/2$. For $1<\alpha\le3/2$, write
$\beta=\alpha-1$, $\delta=1-\beta$, $h=\beta\delta$, and
$E=2\varepsilon\beta$. Define
\[
 \Psi=\beta^{1+1/\delta}\delta^{1+1/\beta},\qquad
 Q_*^2=\frac{\Psi}{\alpha^2\beta^2}
                  e^{(\log k)/\alpha-E/h},
\]
\[
 X=\inf\left\{x\ge1:\frac{\alpha h}{\beta}x^\alpha
 \left[1+\log\frac{x(1+\log x)}{Q_*^2}\right]_+\ge1\right\}.
\]
Use the following parameterization, with fixed constants chosen below,
\begin{align*}
 v_0&=\frac\beta\alpha X^{-\alpha},&
 E_0&=E-\min\{v_0+C_hh,E/2\},\\
 n&=D\alpha k^{1-1/\alpha}X,&
 L&=\lceil\alpha(C_L+c_L\log X)\rceil,&
 b&=c_bL/n.
\end{align*}
Let $z\in(0,1)$ solve $G_\beta(z)=e^{-E_0}$, where
$G_\beta$ is the exact envelope function in Lemma~\ref{oc:envelope},
and set
\[
 a=\lambda\frac{b h z^{1/h}}{L^2},\qquad
 A=\frac{1-z^{1/\beta}}{1-z^{1/h}},\qquad
 B=\frac{1-z^{1/\delta}}{1-z^{1/h}},
\]
\[
 A'=Ae^{-M\beta},\qquad B'=Be^{-M\delta},\qquad
 \omega_0=e^{-c_0E_0},\quad 0<c_0<1.
\]
Also use the common final rescaling
\[
 u_0^*=\min\{v_0+h,u_0\},\qquad
 s_0=(1+c_u u_0^*)^{-1},\qquad \omega=s_0\omega_0.
\]
Here $u_0>0$ is fixed and $c_u>0$ is chosen sufficiently small in terms
of $\varepsilon,c_0$. The numbers $C_h,c_b,\lambda,M,c_0$ are fixed parameters
chosen for this construction. Let the integer kernel scale $N$ satisfy $N\asymp L$,
with fixed comparison constants, and put $t=b/N^2$.

\paragraph{The common degree-one profile.}
Fix the same floor $q_L\in(0,1)$ as in the common low construction.
For $0<\beta\le1/2$, choose the unique
$c_\beta\in(\beta/(1+\beta),1)$ satisfying
\[
 (1-c_\beta)c_\beta^\beta(1+\beta)^{1+\beta}/\beta^\beta=q_L.
\]
The function on the left decreases strictly from one to zero on this
interval. Define
\[
 A_\beta^{\rm low}
 =\frac{\beta^\beta}{c_\beta^\beta(1+\beta)^{1+\beta}},\qquad
 u_\beta(x)=\frac{x^\beta(1-c_\beta x)}{A_\beta^{\rm low}},\qquad
 \tau=\frac{c_\beta(1+\beta)a}{\beta}.
\]
The maximum of $x^\beta(1-c_\beta x)$ occurs at
$\beta/[c_\beta(1+\beta)]<1$, so the normalization is exact and
$u_\beta(1)=q_L$. Since $1/4\le A_\beta^{\rm low}\le1$,
\[
 1-q_L\le c_\beta\le1-q_L/4.
\]
Thus, as a polynomial identity, including its extension to $p=0$,
\begin{equation}\label{nuc:eq:exact-low}
 p^\alpha u_\beta(a/p)=T_\tau(p),\qquad
 T_\tau(p)=(1+\beta)\tau^\beta p-\beta\tau^{1+\beta}.
\end{equation}
No differentiation of $c_\beta$ is needed below.

For clarity, the high profiles used here are the same localized profiles
as in the common construction. Write
\[
 R_N(x)=\left\{U_{N-1}(\sqrt{1-x})/N\right\}^{6},\quad
 K_N=R_N^2,\quad W=f+(1-f)(1-R_N)^2,\qquad 0<f<1,
\]
\[
 I_s=\int_0^1x^{s-1}K_N(x)\,dx,\qquad
 J_s(x)=I_s^{-1}\int_0^x u^{s-1}K_N(u)\,du,
\]
\[
 A_{s,\omega}=\sup_{0\le x<1}
       \frac{x^sK_N(x)}{1-\omega J_s(x)(1-K_N(x))W(x)},
\quad
 V_{s,\omega}(x)=\frac{x^sK_N(x)}{A_{s,\omega}}
                +\omega J_s(x)(1-K_N(x))W(x).
\]
Use the continuous extension of the quotient at a zero of $K_N$.
The coupled polynomial is
\begin{equation}\label{nuc:eq:polynomial}
 Q(p)=s_0\Bigl[A' T_\tau(p)\{1-J_1(p/b)\}
       +\omega_0 p^\alpha J_\delta(p/b)
       +B'p^\alpha\{V_{\delta,\omega_0}(p/b)
                       -\omega_0 J_\delta(p/b)\}\Bigr].
\end{equation}
It is a polynomial because $\alpha+\delta=2$.
Write $Q(p)=\sum_{j=0}^{d}q_jp^j$, with $d\le C_dL$.

The parameter $t=b/N^2$ is the effective endpoint scale of the
polynomial kernel. In this range, the root simplifies enough to show
that $n$ has order $k/\log k$ and that $kt$ can be made small by the
sample multiplier $D$. These two facts will control aggregate bias.

\begin{lemma}[Near-one parameter estimates]
\label{nuc:lem:parameters}
Uniformly over $1<\alpha\le3/2$, for sufficiently large $k$,
\[
 X\asymp_\varepsilon k^{1/\alpha}/\log k,\qquad
 n\asymp_\varepsilon Dk/\log k,\qquad
 L\asymp c_L\log k,\qquad kt\le C_\varepsilon/D.
\]
The ratio $\tau/t$ lies in a fixed positive compact interval independent
of $D$, and
\[
 A'=1+O_\varepsilon(\beta),\quad B'=O_\varepsilon(\beta),\quad
 \omega=1+O_\varepsilon(\beta).
\]
Constants may depend on the fixed profile and degree constants.
\end{lemma}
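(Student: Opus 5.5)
The plan is to reduce everything to the scalar root identity \eqref{upcal:root-identity} together with the near-one computation \eqref{upcal:near-one-Q}, and then apply Lemma~\ref{upcal:near-one} for the remaining assertions. The parameter rule here is the one of Lemma~\ref{upcal:parameters}, specialised to $m=1$, $\gamma=\beta$, $\delta=1-\beta$. The first step is therefore to evaluate $Q_*^2$. Cancelling the factor $\beta^{1+1/\delta}/\beta^2$ against $\beta^{-1}$ and using $E/h=2\varepsilon/\delta$ should give $Q_*^2=c(\beta)k^{1/\alpha}$, where $c(\beta)$ lies in a fixed positive compact interval, uniformly for $0<\beta\le1/2$. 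This uses $\beta|\log\beta|\le1/e$ and $\delta^{1/\beta}\ge e^{-2}$. Since $\alpha\le3/2$, we get $Q_*^2\ge ck^{2/3}\to\infty$ uniformly.

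The second step is to solve the root asymptotically. At the root $X$, the prefactor is $(\alpha h/\beta)X^\alpha=\alpha\delta X^\alpha$, and $\alpha\delta\ge3/4$. Once $Q_*^2$ is large, the candidate $x=y:=Q_*^2/\log Q_*^2$ gives $x(1+\log x)\asymp Q_*^2$. I will show that the bracket $[1+\log(x(1+\log x)/Q_*^2)]_+$ vanishes for $x\le y/C$ and exceeds a fixed constant for $x\ge Cy$, with a fixed $C$. I will also show that $x^\alpha\ge x\to\infty$ in that range, so the level-one crossing occurs at $X\asymp y$. More precisely, \eqref{upcal:root-identity} gives $X(1+\log X)=Q_*^2e^{-1+v_0/h+\sigma}$. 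Here $\sigma=0$ because $X>1$ for large $k$, and $v_0/h=X^{-\alpha}/(\alpha\delta)\to0$. Hence $X(1+\log X)\asymp Q_*^2\asymp k^{1/\alpha}$, and since $\log k^{1/\alpha}\asymp\log k$ uniformly ($\alpha\in(1,3/2]$), this yields $X\asymp_\varepsilon k^{1/\alpha}/\log k$.

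The third step reads off the remaining scales. We have $n=D\alpha k^{1-1/\alpha}X\asymp Dk/\log k$. Also $\log X\asymp\log k$, so $L=\lceil\alpha(C_L+c_L\log X)\rceil\asymp c_L\log k$ for large $k$, with constants depending on $C_L$. The bound $kt=kc_bL/(nN^2)\le C_\varepsilon k/(nL)\le C_\varepsilon/D$ is exactly \eqref{upcal:near-one-scales}, using $N\asymp L$. The claims that $\tau/t$ lies in a compact interval, $B'\le B\le C_\varepsilon\beta$, and $A'=Ae^{-M\beta}=1+O_\varepsilon(\beta)$ follow from Lemma~\ref{upcal:near-one}. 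That lemma gives $1-A\le C_\varepsilon\beta$ via $t_*\in[\varepsilon,T_\varepsilon]$, because $E_0/h\in[\varepsilon,4\varepsilon]$, together with the exact relation $\tau/t=c\lambda(N/L)^2\alpha\delta z^{1/h}$. The only adjustment is that $c_\beta$ replaces the fixed $c$, but $c_\beta\in[1-q_L,1]$, so the ratio still lies in a compact interval. Finally, $\omega=s_0\omega_0$ satisfies $1-\omega_0\le c_0E_0\le2\varepsilon\beta$ and $1-s_0\le c_uu_0^*\le c_u(v_0+h)\le C\beta$, using $v_0\le\beta/\alpha$.

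The main obstacle is uniformity in the second step. We need the root to lie in the regime where $\sigma=0$ and where $v_0/h$ is bounded (indeed small). This requires $X\to\infty$ uniformly in $\beta$, which is why $Q_*^2\gtrsim k^{2/3}$ must be established first, with constants not degenerating as $\beta\downarrow0$. The key cancellation is that $h/\beta=\delta$ does not vanish, so the root prefactor stays bounded below near order one.
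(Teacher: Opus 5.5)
Your proposal is correct and follows essentially the same route as the paper. You use the same cancellation giving $Q_*^2\asymp_\varepsilon k^{1/\alpha}$ and a sandwich of the root between fixed multiples of $Q_*^2/\log Q_*^2$, where the paper instead solves $y(1+\log y)=Q_*^2/e$ and checks the criterion at $y$ and $2y$; you then use the same overlap and amplitude estimates as Lemma~\ref{upcal:near-one}, with the correct observation that $c_\beta\in[1-q_L,1]$ keeps $\tau/t$ compact, and your extra appeal to \eqref{upcal:root-identity} is harmless but unnecessary once the sandwich gives $X\asymp y$.
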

\begin{proof}
Direct cancellation gives
\[
 Q_*^2/k^{1/\alpha}
 =\frac{\beta^{\beta/\delta}\delta^{1+1/\beta}}{\alpha^2}
                                     e^{-2\varepsilon/\delta}
 \asymp_\varepsilon1.
\]
The logarithm is uniformly bounded, using
$\beta|\log\beta|\le1/e$ and
$\log(1-\beta)\ge-\beta/(1-\beta)$.
Let $y\ge1$ solve $y(1+\log y)=Q_*^2/e$. Uniformly,
$y\asymp_\varepsilon k^{1/\alpha}/\log k$.
The root expression vanishes at $y$, and at $2y$ is at least
$(3/4)(2y)^\alpha\log2>1$ for sufficiently large $k$.
Thus $y\le X\le2y$, proving the sample and degree comparisons.
The bound on $kt$ follows from $t\asymp1/(nL)$.

Write $z=e^{-ht_*}$. Since $E_0\in[\varepsilon\beta,2\varepsilon\beta]$,
\[
 \varepsilon\le E_0/h\le4\varepsilon.
\]
The overlap inequalities, with $J(x)=x\coth x$,
$2[J(t_*/4)-1]\le F_\beta(t_*)/h\le t_*$ therefore put $t_*$
in a fixed positive compact interval. Consequently
\[
 \frac\tau t
   =c_\beta\lambda\frac{N^2}{L^2}\alpha\delta z^{1/h}
\]
is bounded above and below by positive constants, independently of $D$.
Also
$B=(1-e^{-\beta t_*})/(1-e^{-t_*})=O_\varepsilon(\beta)$ and
$1-A=e^{-\delta t_*}B=O_\varepsilon(\beta)$.
The attenuations preserve the asserted bounds. Since
$u_0^*\le v_0+h\le2\beta$, both $s_0$ and $\omega$ equal
$1+O_\varepsilon(\beta)$.
\end{proof}

\paragraph{The decisive near-one approximation estimates.}
The first inequality below controls bias after summing over symbols.
The next two identify the constant and nearly linear coefficients.
The last shows that every coefficient of degree at least two contains
a factor $\beta$. This factor is what ultimately yields variance of
order $\beta^2P^2$.

\begin{lemma}[Localized bias and affine cancellation]
\label{nuc:lem:bound}
For the polynomial \eqref{nuc:eq:polynomial}, uniformly in the near-one
range,
\begin{align}
 \sup_{0\le p\le b}|Q(p)-\omega p^\alpha|
       &\le C_\varepsilon\beta t^\alpha,
                                      \label{nuc:eq:bias}\\
 |q_0|&\le C_\varepsilon\beta t^\alpha,
 & |q_1-t^\beta|&\le C_\varepsilon\beta t^\beta,
                                      \label{nuc:eq:affine}\\
 |q_j|&\le C_\varepsilon\beta e^{CL}b^{\alpha-j},
 &&2\le j\le d.                    \label{nuc:eq:coefficients}
\end{align}
Here $C$ is independent of $D$.
\end{lemma}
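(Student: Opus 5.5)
The plan is to write $Q/s_0$ as $\omega_0p^\alpha$ plus three explicit error pieces and to bound each piece separately on $[0,b]$. Since $\omega=s_0\omega_0$ and $s_0=1+O_\varepsilon(\beta)$ with $s_0\le1$, the factor $s_0$ only rescales the errors, so it suffices to work with $Q/s_0$. Using $J_\delta\le1$ and adding and subtracting $A'T_\tau$, I would decompose
\[
 \frac{Q(p)}{s_0}-\omega_0p^\alpha
 =A'\{T_\tau(p)-p^\alpha\}\{1-J_1\}
 +(A'-\omega_0)p^\alpha\{1-J_1\}
 +\omega_0p^\alpha\{J_\delta-J_1\}
 +B'p^\alpha\{V_{\delta,\omega_0}-\omega_0J_\delta\},
\]
where the profiles are evaluated at $p/b$. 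Throughout, Lemma~\ref{nuc:lem:parameters} supplies $A'-\omega_0=O_\varepsilon(\beta)$, $B'=O_\varepsilon(\beta)$, and $\tau\asymp t$.

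\textbf{The four error terms.}
\begin{itemize}
\item \emph{Tangent term.} $T_\tau$ is the tangent line to $x^{1+\beta}$ at $\tau$. The concavity-type identity $x^{1+\beta}-T_\tau(x)=\tau^{1+\beta}\phi(x/\tau)$ holds with $0\le\phi(y)=y^{1+\beta}-(1+\beta)y+\beta\le C\beta\,y^{1+\beta}(1+|\log y|)^2\min(1,\ldots)$, more precisely $\phi(y)\le C\beta(1+y^{1+\beta}\log^2(2+y))$. Combining this with the kernel decay $1-J_1(x)\le C\min\{1,(N^2x)^{-5}\}$ from Lemma~\ref{lem:coupled-elementary-kernel} (tail exponent $d=6$ minus one) gives a bound of $C\beta\tau^{1+\beta}$ times a polynomial in $p/t$, damped by $(p/t)^{-5}$. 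This is $O(\beta t^\alpha)$ uniformly.
\item \emph{Amplitude term.} $(A'-\omega_0)p^\alpha(1-J_1)$ is $O(\beta)\,p^\alpha\min\{1,(p/t)^{-5}\}\le C\beta t^\alpha$.
\item \emph{Exponent term.} Lemma~\ref{lem:coupled-elementary-kernel} gives $0\le J_\delta-J_1\le C\beta$. I also need localization of this difference: both profiles tend to $1$ at the same rate, $1-J_s\le Cy^{s-d}$, so $J_\delta-J_1\le C\beta\min\{1,(p/t)^{-3}\}$ after interpolating the two bounds. Hence $p^\alpha(J_\delta-J_1)\le C\beta t^\alpha$.
\item \emph{High-profile term.} Bound $B'p^\alpha|V-\omega_0J_\delta|$ by \eqref{eq:coupled-high-localization}: it is at most $C\beta p^\alpha\min\{1,(p/t)^{-3}\}\le C\beta t^\alpha$.
\end{itemize}
This proves \eqref{nuc:eq:bias}. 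For $p<a$ the same pieces are bounded trivially, since $a\lesssim\beta t$.

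\textbf{Coefficients.} Only the first term contributes degrees $0$ and $1$, because $p^\alpha J_\delta$ and $p^\alpha V$ start at degree $\alpha+\delta=2$ and $J_1$ has no constant term. Hence
\[
 q_0=-s_0A'\beta\tau^{1+\beta}=O(\beta t^\alpha).
\]
For $q_1$, the leading contribution is $s_0A'(1+\beta)\tau^\beta$, plus $s_0A'\beta\tau^{1+\beta}J_1'(0)/b$. Since $J_1'(0)/b=1/(bI_1)\asymp N^2/b=1/t$, this extra piece is $O(\beta t^\beta)$. Also $\tau^\beta/t^\beta=1+O(\beta)$ because $\tau/t$ lies in a compact interval. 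This gives \eqref{nuc:eq:affine}.

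\textbf{Main obstacle: the higher coefficients.} For $j\ge2$ I need a factor $\beta$ in every coefficient $q_j$. I would take the contributions in turn.
\begin{itemize}
\item From $A'T_\tau J_1$: the coefficients are $\tau^\beta(1+\beta)[J_1]_{j-1}b^{1-j}-\beta\tau^{1+\beta}[J_1]_jb^{-j}$, and neither part is small on its own.
\item From $\omega_0p^\alpha J_\delta$: the coefficients of $J_\delta/x^\delta$ are $k_\ell/(I_\delta(\ell+\delta))$, which at $\delta=1$ match those of $J_1/x$.
\end{itemize}
So the combination should cancel at $\beta=0$, where the polynomial reduces to $p\{1-J_1\}+pJ_1=p$. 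The plan is to write each coefficient as an explicit function of $\beta$ that vanishes at $\beta=0$ and has derivative bounded by $e^{CL}b^{\alpha-j}$ on $[0,1/2]$. The inputs are the Lipschitz dependence of $I_s$ and $1/(\ell+s)$ on $s$, the bounds $|A'-1|,|\omega_0-1|=O(\beta)$, the bound $|\tau^\beta-b^\beta\cdots|$, and the coefficient-sum bound $C^{N+1}$ from Lemma~\ref{lem:coupled-elementary-kernel}. The $B'$ term already carries $\beta$ and has coefficient sums $e^{CL}$. The delicate point is that $\tau^\beta\ne b^\beta$. However, $(b/\tau)^{\beta}=(N^2\cdot O(1))^\beta\le e^{C\beta\log L}\le e^{CL}$, and $b^{\alpha-j}$ versus $\tau^\beta b^{1-j}$ differ only by this factor. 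The resulting mismatch is $O(\beta\log L)$ times the base coefficient, which is absorbed into $e^{CL}$. This yields \eqref{nuc:eq:coefficients}, and the constant $C$ never involves $D$.
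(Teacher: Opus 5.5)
\paragraph{Gap in the exponent term.} Your decomposition follows the paper's proof closely. So do the tangent, amplitude and high-profile estimates, the exact formulas for $q_0,q_1$, and the coefficientwise cancellation $x(1-J_1)+xJ_1=x$ at $\beta=0$ with $\log(\tau/b)=O(1+\log N)$. The gap is the term $\omega_0p^\alpha\{J_\delta-J_1\}$. You cite two facts: $0\le J_\delta-J_1\le C\beta$, and $1-J_s\le Csy^{s-6}$ for $y\ge1$. Together they give only
\[
 0\le J_\delta-J_1\le\min\{C\beta,\,CY^{-5}\},\qquad Y=p/t,
\]
and no interpolation of these yields $C\beta\min\{1,Y^{-3}\}$. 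At the crossover $Y=\beta^{-1/5}$ the minimum equals $\beta$, whereas $\beta Y^{-3}=\beta^{8/5}$. Consequently $Y^{1+\beta}\min\{\beta,Y^{-5}\}$ reaches about $\beta^{4/5}$ there. Whenever $\beta\ge N^{-10}$ this point lies in $[1,N^2]$, so you obtain only a bound of order $\beta^{4/5}t^\alpha$. That loses a factor $\beta^{-1/5}$, which is unbounded over the near-one range as $k\to\infty$. The factor $\beta$ is exactly what Theorem~\ref{nuc:thm:upper} needs in order to compare the bias with its $c_\varepsilon\beta P$ margin.

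The underlying reason is that $p^\alpha$ ranges up to $b^\alpha=N^{2\alpha}t^\alpha$, so the factor $\beta$ must persist in the tail. Knowing that both profiles approach one at rate $Y^{-5}$ says nothing about the rate of their difference. Your other three pieces avoid this problem because each carries an explicit $O(\beta)$ prefactor multiplying a function with tail decay.

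\paragraph{The missing estimate.} You need a tail bound that itself carries the factor $\beta$. Differentiate the normalized tail
\[
 1-F_s(Y)=M_s^{-1}\int_Y^{N^2}y^{s-1}K_N(y/N^2)\,dy
\]
in $s\in[1/2,1]$. Its derivative is
\[
 M_s^{-1}\int_Y^{N^2}y^{s-1}\log y\,K_N(y/N^2)\,dy-\frac{M_s'}{M_s}\bigl(1-F_s(Y)\bigr).
\]
Bound it using three inputs: $K_N\le y^{-6}$, $M_s\ge c$, and the bounded first absolute logarithmic moment $|M_s'|/M_s\le C$. This gives at most $CY^{-5}(1+\log Y)$ for $Y\ge1$. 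Integrating over $s\in[1-\beta,1]$ yields
\[
 |F_{1-\beta}(Y)-F_1(Y)|\le C\beta Y^{-5}(1+\log Y),
\]
and hence $Y^{1+\beta}|F_{1-\beta}(Y)-F_1(Y)|\le C\beta$ for $Y\ge1$. The uniform bound $C\beta$ covers $Y\le1$. This is precisely the estimate the paper uses; with it in place, the rest of your argument goes through.
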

\begin{proof}
The sine identity for $U_{N-1}$ gives
$0\le K_N(x)\le\min\{1,(N^2x)^{-6}\}$ and a positive local lower
bound. In the scaled variable $Y=N^2x$, for $1/2\le s\le1$ the
normalizing integral and its first absolute logarithmic moment are
bounded above and below by constants. Thus, with $F_s(Y)=J_s(Y/N^2)$,
\[
 1-F_1(Y)\le C\min(1,Y^{-5}),\qquad
 |F_{1-\beta}(Y)-F_1(Y)|
 \le C\beta\begin{cases}1,&Y\le1,\\
                         Y^{-5}(1+\log Y),&Y\ge1.
          \end{cases}
\]
The second estimate follows by differentiating the normalized density
in $s$; for $Y\ge1$, differentiate its tail integral.
The ordinary tangent inequality and the mean value theorem now give
\[
 \sup_{p\le b}
 |T_t(p)(1-J_1(p/b))+p^\alpha J_\delta(p/b)-p^\alpha|
 \le C\beta t^\alpha.
\]
Indeed, after scaling by $t$, write the error as
\[
 \{(1+\beta)Y-\beta-Y^{1+\beta}\}(1-F_1(Y))
       +Y^{1+\beta}(F_{1-\beta}(Y)-F_1(Y)).
\]
Each term is uniformly $O(\beta)$ by the preceding tail bounds.
Replacing $t$ by $\tau$ in the tangent changes this error by
$O_\varepsilon(\beta t^\alpha)$ because $\tau/t$ is in a fixed
positive compact interval and $p(1-J_1(p/b))\le Ct$.

The high normalization lies between the ordinary sharp normalization
and its tail-one counterpart. For $1/2\le\delta\le1$ it follows that
$A_{\delta,\omega_0}^{-1}\le CN^{2\delta}$. Its exact defining expression
then gives
\[
 |V_{\delta,\omega_0}(x)-\omega_0 J_\delta(x)|
       \le C\min\{Y^\delta,Y^{-3}\},\qquad Y=N^2x.
\]
To check the large-$Y$ bound, expand the difference into
$x^\delta K_N/A_{\delta,\omega_0}-\omega_0 J_\delta K_N
 -(1-f)\omega_0 J_\delta(1-K_N)(2R_N-R_N^2)$.
Consequently its product with $p^\alpha$ is bounded by $Ct^\alpha$.
Now subtract $\omega$ times the tangent-cutoff expression from
\eqref{nuc:eq:polynomial}. The remaining coefficients
$s_0A'-\omega$ and $s_0B'$ are $O_\varepsilon(\beta)$.
This proves \eqref{nuc:eq:bias}.

The constant and linear coefficients are exactly
\[
 q_0=-s_0A'\beta\tau^\alpha,\qquad
 q_1=s_0A'\left\{(1+\beta)\tau^\beta
                    +\beta\tau^\alpha\frac{J_1'(0)}b\right\}.
\]
Since $J_1'(0)/b=1/(bI_1)\asymp1/t$, this proves
\eqref{nuc:eq:affine}.
For \eqref{nuc:eq:coefficients}, rescale by $p=bx$.
The kernel coefficients have sum at most $C^N$ by the Chebyshev
recurrence. The integrated coefficients are
$[x^j]K_N/((j+s)I_s)$; for $1/2\le s\le1$, their derivatives in $s$
have sum at most $C^N(1+\log N)$.
At $\beta=0$, the two undamped expressions cancel coefficientwise
as $x(1-J_1)+xJ_1=x$. The factors
$s_0A'-1$, $\omega-1$, and $s_0B'$ are $O_\varepsilon(\beta)$;
also $\log(\tau/b)=O_\varepsilon(1+\log N)$.
These observations give the stated factor $\beta$ for every degree
at least two. The high term needs no derivative of its maximum
normalization because its prefactor $B'$ already contains that factor.
All powers and logarithms of $N$ are absorbed into $e^{CL}$.
\end{proof}

\paragraph{Assemble the near-one estimator.}
The statistic is the same polynomial/Gamma hybrid as before, with the
same global anchor. Its deterministic correction is denoted $r_k$
here to make clear exactly how much correction the proof can absorb.
The estimate need not be positive on every outcome; it is positive
on the success event, which is all that is needed for the guarantee.

\begin{theorem}[Near-one estimation theorem]
\label{nuc:thm:upper}
There are fixed choices of the pilot constants $B_0,c_p$, the degree
constants $C_L,c_L$, and the multiplier $D$, and a $k_\varepsilon$,
such that the following holds for every $k\ge k_\varepsilon$ and every
$1<\alpha\le3/2$.
Let independently
\[
 N_i\sim\operatorname{Poi}(np_i),\qquad
 Y_i\sim\operatorname{Poi}(c_pnp_i),\qquad
 I_i=\mathbf1\{Y_i\le c_pB_0\alpha\}.
\]
Put
\[
 \widehat Q_i=\sum_{j=0}^dq_j(N_i)_j/n^j,\qquad
 \widehat G_i=\omega n^{-\alpha}\mathsf H_\alpha(N_i),
\]
and use the single global affine convention
\begin{equation}\label{nuc:eq:estimator}
 \widehat P=q_1+r_k+
  \sum_i\left\{I_i(\widehat Q_i-q_1N_i/n)
       +(1-I_i)(\widehat G_i-q_1N_i/n)\right\}.
\end{equation}
Here the same deterministic repair used by the common construction is
allowed whenever $|r_k|\le\eta_0\beta k^{-\beta}$, for a sufficiently
small fixed $\eta_0>0$. In particular this includes
\[
 r_k=2\beta d_\varepsilon C_0(ka)^\beta k^{-\beta},
\]
after choosing $d_\varepsilon$ sufficiently small, since $ka\le1$
for sufficiently large $D$. A further term bounded by $Cka^\alpha$
may also be included.
Then, uniformly over $p\in\Delta_k$,
\[
 \operatorname{Prob}\{e^{-2\varepsilon\beta}P_\alpha(p)
                       \le\widehat P\le P_\alpha(p)\}\ge3/4.
\]
Thus $-\beta^{-1}\log\widehat P-\varepsilon$, with any definition
when $\widehat P\le0$, estimates $H_\alpha$ within $\varepsilon$.
The sample complexity is $O_\varepsilon(k/\log k)$, equivalently
$O_\varepsilon(\alpha k^{1-1/\alpha}X)$ uniformly on this range.
\end{theorem}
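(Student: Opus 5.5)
The approach follows the template of Theorem~\ref{cu:theorem}, with every bias and standard deviation required to carry an explicit factor $\beta$. Write $P=P_\alpha(p)$. The target interval $[e^{-2\varepsilon\beta}P,P]$ has width at least $c_\varepsilon\beta P$, and $P\ge k^{-\beta}\ge e^{-1}$ for $\alpha\le3/2$ once $\beta\log k\le1$. More generally $P\ge k^{-\beta}$, so it suffices to show three things: the oracle mean lies at distance at least $c_\varepsilon\beta P$ from both endpoints, the total variance is at most $\eta\beta^2P^2$, and the pilot and signed-Gamma leakages are at most $\eta\beta P$ in expectation. Chebyshev and Markov then finish the argument.

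\textbf{Step 1: deterministic centering.} Condition on the pilot and define $\mathcal L,\mathcal H,\mathcal G,\mathcal F$ exactly as in Step~2 of the proof of Theorem~\ref{cu:theorem}, with threshold $B_0\alpha/2$. Lemma~\ref{nuc:lem:bound}, inequality \eqref{nuc:eq:bias}, gives the light aggregate $\sum_{\mathcal L}Q(p_i)=\omega P_{\mathcal L}+O_\varepsilon(\beta(kt)^\alpha k^{1-\alpha})$. By Lemma~\ref{nuc:lem:parameters} this error is $O_\varepsilon(\beta D^{-\alpha}P)$. Lemma~\ref{nuc:lem:parameters} also gives $\omega=1+O_\varepsilon(\beta)$, and more precisely $-\log\omega=c_0E_0+\log(1+c_uu_0^*)$ lies in $[c\beta,(2\varepsilon-c)\beta]$. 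The centers $\omega P$ are therefore separated from both endpoints by $c_\varepsilon\beta P$. The repair $r_k$ and the below-$a$ term are at most $\eta_0\beta P$ and $C(ka)^\alpha P$, respectively. Choosing $\eta_0$ small and $D$ large absorbs both.

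\textbf{Step 2: variance with $\beta^2$.} The polynomial branch is handled by Lemma~\ref{lem:cc-near-one}. Its hypotheses \eqref{eq:cc-near-one-coeff} are exactly \eqref{nuc:eq:affine}--\eqref{nuc:eq:coefficients} together with $kt\le C_\varepsilon/D$, so the conditional variance is at most $\eta\beta^2P^2$. The global anchor $q_1(1-\sum_iN_i/n)$ needs separate care, since $q_1\approx t^\beta$ is not $O(\beta)$. I would not bound it separately. Instead I would group it with the heavy branch, so that each heavy coordinate contributes $\widehat G_i-q_1N_i/n=n^{-\alpha}(\omega\mathsf H_{1+\beta}(N_i)-(nb')^\beta N_i)$, with $b'$ defined by $(nb')^\beta=n^\beta q_1$. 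The light coordinates then contribute through $\widehat Q_i-q_1N_i/n$, and the constant $q_1$ cancels the mean of the sum.

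Proposition~\ref{ag:prop:order-one} applies with $B=nt$. Writing $\omega=1+O(\beta)$ and splitting off $(\omega-1)\mathsf H_{1+\beta}$, each heavy coordinate has variance
\[
\beta^2 n^{-2\alpha}\lambda_i(\lambda_i^{2\beta}+B^{2\beta})\{1+\log^2(\lambda_i/B)\}.
\]
The $(\omega-1)$ part has variance of order $\beta^2\lambda_i^{2\alpha-1}n^{-2\alpha}$ by \eqref{ag:eq:variance}. Summing over $\lambda_i\ge B_0\alpha/2$ and using $\lambda_i\ge cL\asymp B$ gives the total
\[
C\beta^2 n^{-1}\sum_ip_i^{1+2\beta}(1+\log^2(np_i/B))\le C\beta^2(\log^2 n)/n.
\]
This is $o(\beta^2P^2)$ because $n\asymp Dk/\log k$ and $P\ge k^{-\beta}\ge c$. (For $\beta\log k$ large, $P$ may be small; there I would replace the crude sum by Hölder, $\sum p_i^{1+2\beta}\le P^{(1+2\beta)/(1+\beta)}$, which suffices since $n\gg k^{1-1/\alpha}$.)

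\textbf{Step 3: pilot errors and the main obstacle.} False-light coordinates ($p_i>b$ with $I_i=1$) are handled by Lemma~\ref{lem:cc-false-light}. This needs the $\beta$-refined coefficient bound, and the factor $e^{-c_*L}$ with $L\asymp c_L\log k$ already beats $\beta^{-1}$ only if $\beta\ge k^{-c}$. For smaller $\beta$ I would use the cancellation form: the error $Q(p)-q_1p-\omega(p^\alpha-q_1p/\omega)$ has coefficients $O(\beta e^{CL})$ beyond degree one. False-heavy coordinates with $\lambda\le B_0\alpha/2$ are handled via Lemma~\ref{lem:cc-false-heavy}, but the bound must be applied to $\omega\mathsf H_{1+\beta}(N)-q_1n^\beta N$ rather than to $\mathsf H_\alpha$ alone. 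Using \eqref{ag:eq:order-one-difference} pointwise, this difference is $O(\beta(N+1)^{1+\beta}\log(\cdot))$, and the pilot tail $e^{-c\lambda}$-type bound then gives expectation $O(\beta e^{-c}n^{-\alpha}\min(\lambda,1))$. Summed over symbols this is at most $\eta\beta P$ after enlarging $c_pB_0$.

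I expect this step to be the main obstacle: keeping the factor $\beta$ uniformly in every pilot-error term, including the signed low counts, rather than only exponential smallness in $\alpha$, which is bounded here. Finally, Markov, Chebyshev and a union bound with failure $1/16$ each give probability $3/4$. The fixed-pool coupling then yields $n=O_\varepsilon(k/\log k)$ by Lemma~\ref{nuc:lem:parameters}.
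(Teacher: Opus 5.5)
Your architecture (center at $\omega P$, total variance $o(\beta^2P^2)$, pilot errors of size $\eta\beta P$) is the right one, but your Step~3 treatment of false-heavy coordinates fails as written. You bound $\mathbb E[(1-I)|\omega\mathsf H_{1+\beta}(N)-q_1n^\beta N|]$ and feed it to Markov. The anchor scale is $q_1^{1/\beta}n\asymp nt\asymp1/L$. It is not $\asymp L$, so your ``$\lambda_i\ge cL\asymp B$'' in Step~2 is wrong, although the $\log^2n$ bound there survives. Hence the logarithm hidden in \eqref{ag:eq:order-one-difference} is $\log((N+1)/B)\approx\log(N+1)+\log L$.

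Concretely, at $N=1$ the residual is $\omega/\Gamma(1-\beta)-(nt)^\beta(1+O(\beta))=1-e^{-\beta\log L}+O(\beta)$, which is $\asymp\beta\log L$ whenever $\beta\log L\le1$ and $L$ is large. Write $s(\lambda)=\Pr\{I=1\}$. At $\lambda\asymp1$ the misclassification probability $1-s(\lambda)$ is a fixed constant set by $c_p,B_0$. So for a distribution with $\asymp n$ coordinates of mean one, your first-moment sum is of order $(1-s(1))\,\beta P\log L$. No fixed enlargement of $c_pB_0$ makes this at most $\eta\beta P$ uniformly as $k\to\infty$ and $\beta\to0$. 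The absolute value discards a cancellation: the conditional mean of a false-heavy term differs from $\omega p_i^\alpha$ only by $\omega n^{-\alpha}(m_\alpha(\lambda_i)-\lambda_i^\alpha)=O(\beta e^{-\lambda_i}n^{-\alpha})$, with no $\log L$.

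The missing step is to split false-heavy contributions into mean and fluctuation. Only the Gamma bias should be paid by the pilot, via $(1-s(\lambda))e^{-\lambda}\le\eta\lambda^\alpha$. The fluctuation must go into the variance, where the $\log^2$ factors are killed by the factor $1/n$. For this you need Poincar\'e at all $\lambda\ge0$, using $V=(Z+1)/(\lambda+1)$, since Proposition~\ref{ag:prop:order-one} assumes $\lambda\ge1$.

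The paper does exactly this and uses no Markov step. It writes $\widehat P_t=t^\beta+\sum_i(\widehat G_i-t^\beta N_i/n)+\sum_iI_i(\widehat Q_i-\widehat G_i)$. It applies the Gamma-residual variance bound to every coordinate and bounds the switch terms by the second moments \eqref{nuc:eq:light-second}--\eqref{nuc:eq:gamma-second}, summed over all $k$ coordinates. It then computes the unconditional mean including pilot probabilities and concludes by Chebyshev.

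A smaller point: the $O(ka^\alpha)$ term needs an explicit factor $\beta$. That factor comes from $a/t=O_\varepsilon(\beta)$, not merely from taking $D$ large.
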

\begin{proof}
Write $P=\sum_i p_i^\alpha\ge k^{-\beta}$ and $K=\log k$.
All constants below are uniform over $0<\beta\le1/2$.
The bounds $E_0\in[\varepsilon\beta,2\varepsilon\beta]$ and
$u_0^*\le2\beta$ imply, for example when
$c_u\le(1-c_0)\varepsilon/4$,
\begin{equation}\label{nuc:eq:room}
 \min\{1-\omega,\omega-e^{-2\varepsilon\beta}\}
                                      \ge c_\varepsilon\beta.
\end{equation}

\medskip\noindent\textbf{Near-one proof, Step 1: control the mean.}
We first center the analysis at $\omega P$. The previous display says
this center is separated from both endpoints by a constant multiple
of $\beta P$.
First use the anchor $t^\beta$ in place of $q_1$ and omit $r_k$;
call this statistic $\widehat P_t$. The replacement will be controlled
at the end. Let $s(\lambda)$ denote the probability of the light pilot.
The analytic Gamma mean identity gives
\[
 |\mathbb E\mathsf H_\alpha(\operatorname{Poi}(\lambda))
                                      -\lambda^\alpha|
                              \le C\beta e^{-\lambda}
\]
including $\lambda=0$, where the value is $1/\Gamma(-\beta)$.
For every prescribed $\eta>0$, fixed sufficiently large $B_0,c_p$
ensure
\[
 (1-s(\lambda))e^{-\lambda}\le\eta\lambda^\alpha
                                      \quad(\lambda\ge0).
\]
For $\lambda\ge B_0/2$ use exponential decay; for
$1\le\lambda<B_0/2$ use the pilot upper tail; for $0<\lambda<1$ use
$(1-s(\lambda))\le(e\lambda/(B_0\alpha))^{c_pB_0\alpha}$
with $c_pB_0\ge3$. Thus the aggregate Gamma mean error is at most
$C\eta\beta P$.

On $p_i\le b$, Lemma~\ref{nuc:lem:bound} bounds the polynomial
mean error by $C_\varepsilon\beta k t^\alpha$.
For $p>b$, the affine coefficient bounds imply
\[
 |Q(p)-t^\beta p|\le C_\varepsilon\beta e^{CL}b^\alpha
                                      \{1+(p/b)^d\}.
\]
Choose $C_L$ so $nb=c_bL\ge2B_0\alpha$.
For $\lambda=np\ge nb$, the pilot lower tail gives
$s(\lambda)\le e^{-c_p\lambda/8}$.
Writing $w=p/b\ge1$, the exponent controlling the polynomial tail is
$CL+C_dL\log w-c_pc_bLw/8$. Increasing $c_p$ makes its supremum
arbitrarily negative, uniformly in $L\ge1$.
The remaining affine difference satisfies
\[
 |t^\beta p-\omega p^\alpha|/p^\alpha
                 \le C_\varepsilon\beta+\beta\log(p/t),
\]
and the same pilot exponential absorbs
$\log(p/t)=\log w+2\log N$.
Therefore the false-light mean beyond $b$ is at most $C\eta\beta P$.
We have proved
\begin{equation}\label{nuc:eq:mean-final}
 |\mathbb E\widehat P_t-\omega P|
 \le\beta P\{C_\varepsilon(kt)^\alpha+C\eta\}.
\end{equation}

\medskip\noindent\textbf{Near-one proof, Step 2: control all randomness.}
We next control the entire variance, including the random pilot.
The discrete Poisson Poincar\'e inequality used below is
$\Var f(Z)\le\lambda\E(f(Z+1)-f(Z))^2$ for
$Z\sim\Poi(\lambda)$. Thus a factor $\beta$ in each discrete
increment becomes a factor $\beta^2$ in variance.
Put $\xi=nt$. The analytic recurrence and the order-one increment
bound give, for all integers $j\ge0$,
\[
 |\Delta\{\omega\mathsf H_\alpha(j)-\xi^\beta j\}|
 \le C_\varepsilon\beta\{(j+1)^\beta+\xi^\beta\}
                         \{1+|\log((j+1)/\xi)|\}.
\]
For $Z\sim\operatorname{Poi}(\lambda)$ and
$V=(Z+1)/(\lambda+1)$,
$\mathbb EV^q+\mathbb E[V^q(\log V)^2]\le C$ uniformly in
$\lambda\ge0$ and $0\le q\le1$.
For $\lambda\ge1$ use the usual Poisson lower tail and fixed moments;
for $\lambda\le1$, $V\ge1/2$ and its fixed moments are bounded.
Poisson Poincar\'e consequently gives
\[
 \operatorname{Var}\{\omega\mathsf H_\alpha(Z)-\xi^\beta Z\}
 \le C_\varepsilon\beta^2\lambda
  \{(\lambda+1)^{2\beta}+\xi^{2\beta}\}
                    \{1+\log^2((\lambda+1)/\xi)\}.
\]
Since $\xi\asymp1/L$, $\lambda\le n$, and $2\beta\le1$, summing
in probability units yields
\begin{equation}\label{nuc:eq:heavy-variance}
 \sum_i\operatorname{Var}(\widehat G_i-t^\beta N_i/n)
 \le\frac{C_\varepsilon\beta^2K^2}{n}
      \{P_{2\alpha-1}+n^{-2\beta}+t^{2\beta}\}.
\end{equation}
The inequality $P_{2\alpha-1}\le P^{2-1/\alpha}$ implies that the
right side divided by $\beta^2P^2$ is at most
\[
 \frac{C_\varepsilon K^2}{n}
 \{k^{\beta/\alpha}+(k/n)^{2\beta}+(kt)^{2\beta}\}=o(1),
\]
uniformly, because $\beta/\alpha\le1/3$ and $n\asymp k/K$.

\medskip\noindent\textbf{Near-one proof, Step 3: account for switching branches.}
The preceding estimate concerns the Gamma residual. The actual
statistic replaces it by the polynomial residual whenever the pilot
selects light. The following two second-moment bounds control that
replacement directly, including the selector's randomness.
For the pilot correction, the same affine coefficient bounds and
factorial second-moment identity prove
\begin{equation}\label{nuc:eq:light-second}
 \mathbb E[I_i(\widehat Q_i-t^\beta N_i/n)^2]
                 \le C_\varepsilon\beta^2e^{CL}b^{2\alpha}.
\end{equation}
Indeed, if $\lambda\le H=nb$, then
$\mathbb E(Z)_j^2\le\lambda^j(\lambda+j)^j
 \le H^{2j}(1+d/H)^j$.
For $\lambda>H$, multiply
$\mathbb E(Z)_j^2\le\lambda^{2j}e^{j^2/\lambda}$
by $s(\lambda)\le e^{-c_p\lambda/8}$.
With $\lambda=Hw$ and $j\le C_dL$, the remaining factor has logarithm
at most $L\{C+C\log w-cw\}$ and hence is $e^{O(L)}$.
The $L^2$ triangle inequality across the degrees proves
\eqref{nuc:eq:light-second}.

A similar, simpler bound holds for the signed Gamma residual under the
light pilot. Reciprocal-Gamma differentiation on $j\ge1$ and the value
at $j=0$ give
\[
 |\omega\mathsf H_\alpha(j)-\xi^\beta j|
 \le C_\varepsilon\beta(j+1)^\alpha
                      \{1+|\log\xi|+\log(j+1)\}.
\]
Fixed Poisson moments and the exponential tail of $s(\lambda)$ then
imply
\begin{equation}\label{nuc:eq:gamma-second}
 \mathbb E[I_i(\widehat G_i-t^\beta N_i/n)^2]
       \le C_\varepsilon\beta^2n^{-2\alpha}(1+\log L)^2.
\end{equation}
This estimate includes all signed low-count values.

Now write the statistic exactly as
\[
 \widehat P_t=t^\beta+
  \sum_i(\widehat G_i-t^\beta N_i/n)
       +\sum_i I_i(\widehat Q_i-\widehat G_i).
\]
Independence across coordinates, the variance triangle inequality,
and \eqref{nuc:eq:heavy-variance}--\eqref{nuc:eq:gamma-second} control
its full variance, without treating the pilot as deterministic.
The additional terms divided by $\beta^2P^2$ are bounded by
\[
 C_\varepsilon e^{CL}K^6/k
       +C_\varepsilon K^3(1+\log K)^2/k=o(1)
\]
after choosing $c_L$ sufficiently small. Thus
$\operatorname{Var}\widehat P_t=o(\beta^2P^2)$ uniformly.

\medskip\noindent\textbf{Near-one proof, Step 4: restore the prescribed estimator.}
Finally, the global $q_1$ convention changes $\widehat P_t$ by
\[
 (q_1-t^\beta)\left(1-\sum_iN_i/n\right).
\]
Its mean is zero and its variance is at most
$C_\varepsilon\beta^2t^{2\beta}/n=o(\beta^2P^2)$.
The deterministic repair contributes at most $\eta_0\beta P$.
Any additional $O(ka^\alpha)$ term has ratio to $P$ at most
$C_\varepsilon\beta(kt)^\alpha$, because $a/t=O_\varepsilon(\beta)$.
Choose $\eta,\eta_0$ small, then $D$ sufficiently large in
\eqref{nuc:eq:mean-final}, and finally $k_\varepsilon$ sufficiently
large for the uniform variance bound.
Chebyshev's inequality and \eqref{nuc:eq:room} prove the asserted
probability interval with strict slack in the failure budget.
For a fixed-sample implementation, draw independently
$M_0\sim\operatorname{Poi}((1+c_p)n)$ and allocate each of the first
$M_0$ observations to the main or pilot sample with probabilities
$1/(1+c_p)$ and $c_p/(1+c_p)$. Use a pool of
$\lceil4(1+c_p)n\rceil$ iid observations; if $M_0$ exceeds the pool,
return an arbitrary answer. The ideal counts have exactly the independent
laws used above, and the overflow probability is exponentially small
in $n$. The initial strict slack absorbs this additional failure.
\end{proof}

\clearpage
\section{Lower bounds: two random distributions that the data cannot distinguish}

An upper bound supplies an estimator. A lower bound must rule out
\emph{every} estimator. We do this by constructing two random choices of
an unknown distribution. Their entropies are separated by more than
$2\varepsilon$, but their sample laws are close in total variation.
An estimator accurate to $\varepsilon$ would then distinguish two
statistically indistinguishable hypotheses, a contradiction.

The proof has four stages. First, construct two nonnegative random
weights $U_0,U_1$ with mean one. Their low integer moments agree, whereas
their $\alpha$th moments differ. Second, draw many such weights and
normalize them to obtain actual probability vectors. Third, show that
normalization preserves the entropy separation. Finally, bound the
statistical distance between the resulting count data. The parameter
calculation at the end chooses the weights so that this argument works
at the same sample scale as the upper bound.

\paragraph{Notation for this section.}
The subscript $i\in\{0,1\}$ denotes a hypothesis, not a symbol in the
alphabet. The random variable $U_i$ is one unnormalized weight;
$A_i=\mathbb E U_i^\alpha$ is its power moment. The letter $B$ bounds
these weights. Later, $b=k/2$ is the expected number of weights and
$\theta$ is the sampling intensity per unit weight. Thus $\theta b$,
not $\theta$, is the expected total count before normalization.
A signed measure $\lambda=\lambda^+-\lambda^-$ is written as the
difference of its positive and negative parts; their total masses need
not be equal. The notation $\delta_a$ means unit point mass at $a$.

\subsection{Why normalization must be analyzed explicitly}
\label{sec:common-noninteger-priors}

The weights do not initially sum to one. A tempting argument would
replace their sum by its expectation and ignore the error. That loses
uniformity when $\alpha$ approaches one. The following identity keeps
exactly the cancellation that is needed in that limit.

\begin{lemma}[Normalization with the affine term removed]
\label{unif:lem:normalization}
Let $\alpha>1$, $\beta=\alpha-1$, and let $U\ge0$ satisfy
$\mathbb EU=1$ and $A=\mathbb EU^\alpha<\infty$.
Consider either of the following experiments:
\begin{enumerate}
\item $b=N$ is a positive integer and $U_1,\ldots,U_N$ are independent
copies of $U$;
\item $b>0$, $Q\sim\operatorname{Poi}(b)$, and $U_1,\ldots,U_Q$ are
independent copies of $U$, independent of $Q$.
\end{enumerate}
In the first experiment sums below have $N$ terms, and in the second
they have $Q$ terms. Define
\[
 S=\frac1b\sum_jU_j,\qquad
 V=\frac1b\sum_jU_j^\alpha,\qquad W=V-AS.
\]
On $S>0$, put $p_j=U_j/(bS)$. Then
\begin{equation}\label{unif:eq:normalization-identity}
 \beta\left\{H_\alpha(p)-\left(\log b-\frac{\log A}{\beta}\right)\right\}
 =\beta\log S-\log\left(1+\frac{W}{AS}\right).
\end{equation}
For $0<m\le1$, set
\[
 t=\min\left\{\frac14,\frac{m}{16\beta}\right\},
 \qquad u=\frac m{16}.
\]
On $|S-1|\le t$ and $|W|\le Au$, the absolute entropy error in
\eqref{unif:eq:normalization-identity} is less than $m/(4\beta)$.
If the second moments below are finite, the failure probability is at most
\begin{equation}\label{unif:eq:normalization-failure}
 \frac{v_1}{bt^2}+\frac{v_\alpha}{bA^2u^2},
\end{equation}
where, in the independent experiment,
\[
 v_1=\operatorname{Var}(U),\qquad
 v_\alpha=\mathbb E(U^\alpha-AU)^2,
\]
and, in the compound-Poisson experiment,
\[
 v_1=\mathbb EU^2,\qquad
 v_\alpha=\mathbb E(U^\alpha-AU)^2.
\]
\end{lemma}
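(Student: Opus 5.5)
The identity is pure algebra, so I will derive it first. On $S>0$ put $p_j=U_j/(bS)$. Then
\[
 P_\alpha(p)=\frac{\sum_jU_j^\alpha}{(bS)^\alpha}=\frac{bV}{b^\alpha S^\alpha}=b^{-\beta}S^{-\alpha}V .
\]
Substituting $V=AS+W=AS(1+W/(AS))$ gives
\[
 P_\alpha(p)=b^{-\beta}AS^{-\beta}\bigl(1+W/(AS)\bigr),
\]
so that
\[
 -\log P_\alpha(p)=\beta\log b-\log A+\beta\log S-\log\bigl(1+W/(AS)\bigr).
\]
Subtracting $\beta(\log b-\log A/\beta)$ yields \eqref{unif:eq:normalization-identity}. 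The factor $1+W/(AS)$ is positive, because $V\ge0$ and $W/(AS)=V/(AS)-1>-1$.

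For the deterministic bound, assume $|S-1|\le t\le1/4$ and $|W|\le Au$ with $u=m/16\le1/16$. First, $|\log S|\le 2|S-1|\le 2t$ holds for $|S-1|\le1/4$, since $|\log(1+x)|\le|x|/(1-|x|)\le\frac43|x|$. Hence $\beta|\log S|\le2\beta t\le m/8$. Next, $|W/(AS)|\le u/(3/4)=4u/3\le1/12$, so $|\log(1+W/(AS))|\le\frac{12}{11}\cdot\frac{4u}{3}<2u=m/8$. The left side of the identity therefore has absolute value less than $m/4$, and dividing by $\beta$ gives an entropy error below $m/(4\beta)$. I will keep the inequality strict by using the strict bound in the second term.

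The failure probability comes from Chebyshev applied separately to $S$ and $W$, followed by a union bound.

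In the independent experiment, $\E S=1$ and $\Var S=\Var(U)/b$. Since $\E(U^\alpha-AU)=A-A=0$, the variable $W=b^{-1}\sum_j(U_j^\alpha-AU_j)$ is a centered i.i.d.\ average with $\Var W=v_\alpha/b$.

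In the compound-Poisson experiment, the compound Poisson variance formula $\Var\sum_{j\le Q}X_j=b\,\E X^2$ gives $\Var S=\E U^2/b$ and $\E S=1$. Likewise $\E W=0$ and $\Var W=\E(U^\alpha-AU)^2/b$.

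Chebyshev then gives $\Prob\{|S-1|>t\}\le v_1/(bt^2)$ and $\Prob\{|W|>Au\}\le v_\alpha/(bA^2u^2)$, which is \eqref{unif:eq:normalization-failure}. The event $S>0$ is contained in $|S-1|\le t$. No step is a real obstacle; the only care needed is keeping the constants at the boundary strict, and making sure the logarithm's argument is positive, which the nonnegativity of $V$ ensures.
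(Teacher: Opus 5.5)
Your proof is correct and follows the paper's argument essentially step for step: the same substitution $V=AS+W$ for the identity, the bound $S\ge3/4$ combined with $|\log(1+x)|\le|x|/(1-|x|)$ (you split the budget as $m/8+m/8$ where the paper uses $m/12+m/11$), and Chebyshev plus a union bound using the i.i.d.\ and compound-Poisson variance formulas. One wording slip: the inclusion goes the other way, $\{|S-1|\le t\}\subseteq\{S>0\}$, and it is this inclusion that makes the identity available on the good event.
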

\begin{proof}
\textbf{Step 1: isolate the normalization error.}
The actual probability is $p_j=U_j/(bS)$, so
\[
 \sum_jp_j^\alpha
 =\frac{\sum_jU_j^\alpha}{b^\alpha S^\alpha}
 =b^{1-\alpha}\frac{V}{S^\alpha}.
\]
If $S=1$ and $V=A$, its entropy would be
$\log b-(\log A)/\beta$. The difference from this deterministic center
is the quantity we must control.
Substitute $V=AS+W$ and take logarithms to obtain
\eqref{unif:eq:normalization-identity}.
On the stated event $S\ge3/4$. Hence
\[
 |\beta\log S|\le\frac43\beta t\le\frac m{12},
 \qquad
 \left|\frac W{AS}\right|\le\frac m{12}.
\]
The inequality $|\log(1+x)|\le|x|/(1-|x|)$ for $|x|<1$ gives
\[
 \left|\log\left(1+\frac W{AS}\right)\right|
 \le\frac{m/12}{1-m/12}\le\frac m{11}.
\]
Since $1/12+1/11<1/4$, the asserted entropy error follows.

\textbf{Step 2: control the two random errors separately.}
In both experiments $\mathbb ES=1$ and $\mathbb EW=0$, because
$\mathbb E(U^\alpha-AU)=0$. Independence gives
$\operatorname{Var}(S)=\operatorname{Var}(U)/b$ in the first experiment.
Conditioning on $Q$ gives
$\operatorname{Var}(S)=\mathbb EU^2/b$ in the second. Similarly, in
either experiment,
$\operatorname{Var}(W)=\mathbb E(U^\alpha-AU)^2/b$.
Chebyshev's inequality and a union bound prove
\eqref{unif:eq:normalization-failure}.
\end{proof}

\begin{remark}[The order-one limit is already included]
Define $g_\beta(u)=(u^{1+\beta}-u)/\beta$, with $g_\beta(0)=0$.
There is the exact cancellation
\begin{equation}\label{unif:eq:affine-cancellation}
 U^\alpha-AU
 =\beta\{g_\beta(U)-(\mathbb Eg_\beta(U))U\}.
\end{equation}
When the separation margin $m$ is proportional to $\beta$, the factor
$\beta^2$ in $v_\alpha$ cancels the factor $m^2$ in
\eqref{unif:eq:normalization-failure}. Thus the lemma does not lose a
factor $\beta^{-2}$ as $\alpha\downarrow1$. For example, if
$U\in\{0\}\cup[1,B]$, $0<\beta\le1/2$, and $A$ is bounded, then
$|g_\beta(U)|\le U^{1+\beta}\log B$ and
$\mathbb Eg_\beta(U)\le A\log B$. Consequently
\[
 \mathbb E\{g_\beta(U)-(\mathbb Eg_\beta(U))U\}^2
 \le C_A B^3(1+\log B)^2.
\]
This is a direct substitution into one variance formula, not a second
normalization argument.
\end{remark}

\subsection{Constructing moment-matched weights by interpolation}

A signed measure that integrates every low-degree monomial to zero
encodes moment matching: its positive and negative parts have identical
moments. Lagrange interpolation constructs such a measure explicitly.
The small perturbation $\rho$ below is deliberate. It permits one
integer moment to differ by a controlled amount, which is needed close
to an integer order.

For distinct nodes $u_0,\ldots,u_n$, the Lagrange basis is
\[
 \ell_j(u)=\prod_{i\ne j}\frac{u-u_i}{u_j-u_i}.
\]
Every polynomial $P$ of degree at most $n$ satisfies
$P(a)=\sum_j\ell_j(a)P(u_j)$. We evaluate this identity at a point
$a$ below every node, rather than inside the node interval.

\begin{lemma}[Perturbed interpolation factor]
\label{unif:lem:interpolation-factor}
Let $u_0,\ldots,u_n>0$ be distinct, let $0<a<\min_j u_j$, and let
$\ell_j$ be their Lagrange basis polynomials. For $\rho\ge0$, define
\[
 \nu_\rho=(1+\rho)\delta_a
 -\sum_{j=0}^n\ell_j(a)(1+\rho a/u_j)\delta_{u_j},
 \qquad C_a=\prod_{j=0}^n(1-a/u_j).
\]
Then
\begin{gather}
 \int u^j\,d\nu_\rho=0 \qquad(1\le j\le n),\\
 \int d\nu_\rho=\rho C_a,
 \qquad\int u^{-1}\,d\nu_0=a^{-1}C_a.
\end{gather}
In particular, $\nu_0$ annihilates every polynomial of degree at most
$n$. If $e\in\mathbb R$ and $P$ is the positive Jordan mass of
$u^{-e}\nu_\rho$, the measure $\lambda=u^{-e}\nu_\rho/P$ has positive
mass one and satisfies
\[
 \int u^{e+j}\,d\lambda=0\quad(1\le j\le n),
 \qquad \int u^e\,d\lambda=\rho C_a/P.
\]
\end{lemma}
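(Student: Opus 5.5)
The whole lemma should follow from the Lagrange identity $P(a)=\sum_j\ell_j(a)P(u_j)$ for polynomials of degree at most $n$, applied to well-chosen polynomials. For $1\le j\le n$ write
\[
 \int u^j\,d\nu_\rho=(1+\rho)a^j-\sum_j\ell_j(a)u_j^j-\rho a\sum_j\ell_j(a)u_j^{j-1}.
\]
Interpolation of $u^j$ gives $\sum_i\ell_i(a)u_i^j=a^j$, and interpolation of $u^{j-1}$ (degree $j-1\le n-1$) gives $a^{j-1}$. The total is therefore $(1+\rho)a^j-a^j-\rho a^j=0$. For $\rho=0$ the same computation with $j=0$ gives $\int d\nu_0=1-\sum\ell_i(a)=0$, so $\nu_0$ annihilates all polynomials of degree at most $n$.

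The total mass is the only place where interpolation fails. It equals
\[
 \int d\nu_\rho=\rho\Bigl(1-a\sum_i\ell_i(a)/u_i\Bigr),
\]
and I would evaluate $\sum_i\ell_i(a)/u_i$ by interpolating $f(u)=1/u$. The remainder formula for $1/u$ on nodes $u_i$ is exact:
\[
 \frac1u-\sum_i\ell_i(u)\frac1{u_i}=\frac{\prod_i(u-u_i)}{u\prod_i(-u_i)}\cdot(-1)^{n+1}\cdot(-1)^{n+1}.
\]
More cleanly, $1-u\sum_i\ell_i(u)/u_i$ is a polynomial of degree $n+1$ vanishing at every $u_i$ and equal to $1$ at $u=0$. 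Hence it equals $\prod_i(1-u/u_i)$, and evaluating at $u=a$ gives $\int d\nu_\rho=\rho C_a$. The same identity also gives
\[
 \int u^{-1}d\nu_0=a^{-1}-\sum_i\ell_i(a)/u_i=a^{-1}\prod_i(1-a/u_i)=a^{-1}C_a.
\]
Verifying this degree-$(n+1)$ polynomial identity is the one step needing care. I expect it to be the only real obstacle, and it is resolved by counting roots and checking the value at $0$.

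For the reweighted statement, the measure $\lambda=u^{-e}\nu_\rho/P$ is well defined because $\nu_\rho$ is supported on positive points. Then
\[
 \int u^{e+j}d\lambda=P^{-1}\int u^j d\nu_\rho,
\]
which is $0$ for $1\le j\le n$ and $\rho C_a/P$ for $j=0$. Its positive Jordan mass is one by definition of $P$. Here $P>0$ is needed: for $\rho>0$ the atom at $a$ has positive weight $(1+\rho)a^{-e}$, and for $\rho=0$ the measure $\nu_0$ is nonzero with zero total mass, so it has a positive part.
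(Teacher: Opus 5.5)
Your proposal is correct and follows essentially the same route as the paper: you interpolate $u^j$ and $u^{j-1}$ to get the moment cancellations, identify $1-u\sum_i\ell_i(u)/u_i$ with $\prod_i(1-u/u_i)$ by its roots and its value at $0$ to obtain the mass and $u^{-1}$ identities, and then reweight by $u^{-e}$. Two minor points: your case split for $P>0$ is unnecessary, since $a$ is not a node and so the atom at $a$ has weight $(1+\rho)a^{-e}>0$ for every $\rho\ge0$; and in your first display the summation index clashes with the exponent $j$.
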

\begin{proof}
For $1\le j\le n$, interpolation of $u^j$ and $u^{j-1}$ gives
\[
 (1+\rho)a^j-\sum_i\ell_i(a)u_i^j
             -\rho a\sum_i\ell_i(a)u_i^{j-1}=0.
\]
If $R$ interpolates $u\mapsto1/u$ at the nodes, the polynomial
$1-uR(u)$ has the same zeros and constant term as
$\prod_i(1-u/u_i)$ and has degree at most $n+1$. They therefore agree.
Evaluating at $a$ proves both remaining identities. Weighting by the
strictly positive function $u^{-e}$ and normalizing proves the last
assertions. The argument includes $n=0$, with the first cancellation
list empty.
\end{proof}

\paragraph{Why use two interpolation factors?}
Write a noninteger order as an integer plus a fractional part. One
factor will cancel integer moments below that order; the other will
cancel integer moments above it. A tensor product combines the two
cancellation lists. We separate its positive and negative terms into
two \emph{positive} measures, so the final construction gives probability
laws rather than formal signed objects.

\begin{lemma}[Tensor interpolation and mean-one priors]
\label{unif:lem:tensor-priors}
Fix $\alpha>1$ and $\beta=\alpha-1$. Let $\lambda_-$ and $\lambda_+$
be finite signed measures on $[a_-,1]$ and $[a_+,1]$, respectively,
where $0<a_-,a_+<1$. Their positive Jordan masses are one and their
negative Jordan masses are $R_-,R_+\in[0,1)$, with $R_-+R_+>0$.
Write $B=(a_-a_+)^{-1}$ and assume the single mean-matching identity
\begin{equation}\label{unif:eq:mean-annihilation}
 \left(\int u^\beta\,d\lambda_-(u)\right)
 \left(\int w^{-\beta}\,d\lambda_+(w)\right)=0.
\end{equation}
Push the following two positive tensor measures forward by
$z=a_-w/u$:
\[
 \mu_1=\lambda_-^+\otimes\lambda_+^+
       +\lambda_-^-\otimes\lambda_+^-,\qquad
 \mu_0=\lambda_-^+\otimes\lambda_+^-
       +\lambda_-^-\otimes\lambda_+^+.
\]
Then $Z=\int z^{-\beta}d\mu_i$ has the same positive value for
$i=0,1$. Place mass $(BZ)^{-1}z^{-\alpha}d\mu_i$ at $U_i=Bz$ and
complete its total mass to one by an atom at zero. These are probability
laws on $\{0\}\cup[1,B]$, with $\mathbb EU_i=1$. For every real $j>0$,
\begin{equation}\label{unif:eq:tensor-moments}
 \mathbb EU_1^j-\mathbb EU_0^j
 =\frac{B^{j-1}}Z
 \left(\int(u/a_-)^{\alpha-j}\,d\lambda_-(u)\right)
 \left(\int w^{j-\alpha}\,d\lambda_+(w)\right).
\end{equation}
Writing $A_i=\mathbb EU_i^\alpha$, one has exactly
\begin{equation}\label{unif:eq:tensor-gap}
 A_1=\frac{B^\beta}Z(1+R_-R_+),\qquad
 A_0=\frac{B^\beta}Z(R_-+R_+),\qquad
 \log\frac{A_1}{A_0}
 =\log\frac{1+R_-R_+}{R_-+R_+}.
\end{equation}
In particular, an exact accuracy $\varepsilon$ and a positive logarithmic
margin $m$ are preserved whenever
$1+R_-R_+\ge e^{2\varepsilon\beta+m}(R_-+R_+)$.
\end{lemma}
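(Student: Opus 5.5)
The proof is a direct computation with the tensor measures. I will evaluate every quantity by integrating products against $\lambda_\pm$ and using Fubini, keeping track of signs through the Jordan decomposition.

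\textbf{Step 1: the common normalizer.} Write $z=a_-w/u$, so $z^{-\beta}=a_-^{-\beta}u^\beta w^{-\beta}$. The two tensor measures differ by
\[
 \mu_1-\mu_0=(\lambda_-^+-\lambda_-^-)\otimes(\lambda_+^+-\lambda_+^-)=\lambda_-\otimes\lambda_+ .
\]
Hence
\[
 \int z^{-\beta}\,d\mu_1-\int z^{-\beta}\,d\mu_0
 =a_-^{-\beta}\Bigl(\int u^\beta\,d\lambda_-\Bigr)\Bigl(\int w^{-\beta}\,d\lambda_+\Bigr)=0
\]
by \eqref{unif:eq:mean-annihilation}. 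Positivity of $Z$ follows because $\mu_1\ge\lambda_-^+\otimes\lambda_+^+$, which is a nonzero positive measure, and $z^{-\beta}>0$.

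\textbf{Step 2: the laws are genuine probability laws.} The supports satisfy $u\in[a_-,1]$ and $w\in[a_+,1]$. Therefore $z\in[a_-a_+,1/a_+\cdot a_-]$, so $z\ge a_-a_+=1/B$ and $z\le 1/a_+$. This gives $U=Bz\ge1$, while the upper end requires $z\le1/B$ times $B$.

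Before going further, the range needs to be re-checked. As written, $Bz\le B$ needs $z\le1$, which means $a_-w\le u$. That holds when $a_-\le u$ and $w\le1$, so $z\le1$. The lower end is $z\ge a_-a_+$ at $u=1$, $w=a_+$. So $U\in[1,B]$ as claimed.

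The mass placed at $U_i$ is $(BZ)^{-1}z^{-\alpha}\,d\mu_i$, so
\[
 \E U_i=\int Bz\,(BZ)^{-1}z^{-\alpha}\,d\mu_i=\frac{1}{Z}\int z^{-\beta}\,d\mu_i=1 .
\]
The total mass is $(BZ)^{-1}\int z^{-\alpha}\,d\mu_i$. Since $z^{-\alpha}=z^{-\beta}z^{-1}\le Bz^{-\beta}$, this is at most $Z^{-1}\int z^{-\beta}\,d\mu_i=1$. The completing atom at zero is therefore nonnegative.

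\textbf{Step 3: the moment difference.} For $j>0$ the atom at zero contributes nothing. Thus
\[
 \E U_i^j=\frac{B^{j-1}}{Z}\int z^{j-\alpha}\,d\mu_i .
\]
Subtracting and again using $\mu_1-\mu_0=\lambda_-\otimes\lambda_+$, together with $z^{j-\alpha}=(u/a_-)^{\alpha-j}w^{j-\alpha}$, factors the integral and yields \eqref{unif:eq:tensor-moments}.

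\textbf{Step 4: the exact gap.} At $j=\alpha$ the integrand $z^0=1$, so
\[
 A_i=\frac{B^\beta}{Z}\,\mu_i(\text{total mass}).
\]
The mass of $\mu_1$ is $1\cdot1+R_-R_+$, and the mass of $\mu_0$ is $R_++R_-$. This gives \eqref{unif:eq:tensor-gap}. The final sentence is immediate, since the inequality is equivalent to $\log(A_1/A_0)\ge 2\varepsilon\beta+m$.

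The only delicate point is Step 2: the support range and total mass at most one. Its correctness depends on the orientation of the change of variables $z=a_-w/u$, which must be checked carefully. Everything else is bilinearity of $\mu_1-\mu_0$.
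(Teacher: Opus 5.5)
Your proof is correct and is essentially the paper's argument: $\mu_1-\mu_0$ is (the pushforward of) $\lambda_-\otimes\lambda_+$, which gives the common value of $Z$ and the factored moment difference. Also, $z\in[1/B,1]$ gives support in $[1,B]$ and total mass at most one; your bound $z^{-\alpha}\le Bz^{-\beta}$ is the same as the paper's remark that $U\ge1$ off the zero atom while $\E U_i=1$.

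Delete the first, incorrect range computation in Step~2 (the claim $z\le1/a_+$). Your recheck, $a_-w\le a_-\le u$ so $z\le1$, is the correct one and already settles the orientation issue you flag at the end.
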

\begin{proof}
\textbf{Step 1: keep the two hypotheses positive.}
The notation ``pushforward'' means that an atom at $(u,w)$ is moved
to the scalar value $z=a_-w/u$, without changing its mass.
Both pushforwards are supported on $[a_-a_+,1]=[1/B,1]$.
Their difference is the pushforward of $\lambda_-\otimes\lambda_+$.
Its $z^{-\beta}$ integral vanishes by
\eqref{unif:eq:mean-annihilation}. Their common integral $Z$ is positive
because both positive measures have positive total mass.
\textbf{Step 2: reweight to enforce mean one.}
At the point $U=Bz$, the assigned mass is proportional to
$z^{-\alpha}$. Multiplication by $U$ therefore leaves the power
$z^{1-\alpha}=z^{-\beta}$, precisely the quantity whose integrals
agree. The constructed nonzero measure has first moment
$\int z^{-\beta}d\mu_i/Z=1$. Its support lies in $[1,B]$, so its
total mass is at most one. Completion at zero is therefore legitimate.
\textbf{Step 3: read every moment from the tensor product.}
Direct integration gives
\[
 \mathbb E U_i^j
 =\int (Bz)^j\frac{z^{-\alpha}}{BZ}\,d\mu_i(z)
 =\frac{B^{j-1}}Z\int z^{j-\alpha}\,d\mu_i(z).
\]
Subtract and factor the tensor integral to obtain
\eqref{unif:eq:tensor-moments}. At $j=\alpha$, the integral is just
the total mass of $\mu_i$, proving \eqref{unif:eq:tensor-gap}.
No cancellation of overlapping atoms after pushforward is needed.
\end{proof}

\paragraph{Which moments actually cancel?}
The next statement is the algebraic reason for choosing the two
fractional exponents $\gamma$ and $1-\gamma$. The factor below
$\alpha$ handles degrees $1,\ldots,\lfloor\alpha\rfloor$; the other
factor handles the larger degrees. Only the degrees adjacent to
$\alpha$ can fail to match, and turning off the appropriate perturbation
eliminates either exception.

\begin{corollary}[All noninteger moment cancellations]
\label{unif:cor:all-moment-cancellations}
Let $\alpha=m_0+\gamma>1$ be noninteger, where $m_0\ge1$ is an
integer and $0<\gamma<1$. Choose an integer $L\ge m_0$.
Use Lemma~\ref{unif:lem:interpolation-factor} with exponent $\gamma$
and degree $m_0-1$ for $\lambda_-$, and with exponent $1-\gamma$
and degree $L-m_0$ for $\lambda_+$. Let their perturbations be
$\rho_-,\rho_+$. Suppose the mass assumptions in
Lemma~\ref{unif:lem:tensor-priors} hold. If $m_0=1$, require
$\rho_-=0$. Then the resulting priors have equal means. Their integer
moments through degree $L+1$ all agree except possibly moments $m_0$
and $m_0+1$; moment $m_0$ also agrees if $\rho_-=0$, and moment
$m_0+1$ also agrees if $\rho_+=0$.
\end{corollary}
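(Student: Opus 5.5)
The plan is to read everything off the moment-difference formula \eqref{unif:eq:tensor-moments} of Lemma~\ref{unif:lem:tensor-priors}, evaluated at integer $j$, combined with the cancellation identities of Lemma~\ref{unif:lem:interpolation-factor}. Equality of the means is the case $j=1$. Here the $\lambda_+$ factor $\int w^{1-\alpha}d\lambda_+=\int w^{-\beta}d\lambda_+$ appears, and the lemma's construction already gives $\mathbb EU_0=\mathbb EU_1=1$ once the mean-matching identity \eqref{unif:eq:mean-annihilation} holds. So the first task is to check \eqref{unif:eq:mean-annihilation} from the chosen exponents and perturbations.

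\textbf{The lower factor.} For $\lambda_-$ take $e=\gamma$ and degree $n=m_0-1$. Then $\lambda_-\propto u^{-\gamma}\nu_{\rho_-}$, and Lemma~\ref{unif:lem:interpolation-factor} gives
\[
\int u^{\gamma+i}\,d\lambda_-=0\ (1\le i\le m_0-1),\qquad \int u^\gamma\,d\lambda_-=\rho_-C_a/P.
\]
In \eqref{unif:eq:tensor-moments} the $\lambda_-$ factor is $a_-^{j-\alpha}\int u^{\alpha-j}d\lambda_-$. Writing $\alpha-j=\gamma+(m_0-j)$, this factor vanishes for $1\le j\le m_0-1$ with no conditions, and is proportional to $\rho_-$ at $j=m_0$.

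For the mean condition: $\beta=\gamma+(m_0-1)$. If $m_0\ge2$, then $\int u^\beta d\lambda_-=0$ by the list with $i=m_0-1$, so \eqref{unif:eq:mean-annihilation} holds and moment $1$ matches. If $m_0=1$, then $\beta=\gamma$ and $\int u^\gamma d\lambda_-\propto\rho_-$, which is zero because the statement requires $\rho_-=0$ in that case. This is exactly why that hypothesis is imposed.

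\textbf{The upper factor.} For $\lambda_+$ take $e=1-\gamma=\delta$ and degree $n=L-m_0$, so that
\[
\int w^{\delta+i}\,d\lambda_+=0\ (1\le i\le L-m_0),\qquad \int w^\delta\,d\lambda_+=\rho_+C_a/P.
\]
The $\lambda_+$ factor in \eqref{unif:eq:tensor-moments} is $\int w^{j-\alpha}d\lambda_+$. Writing $j-\alpha=\delta+(j-m_0-1)$, it vanishes for $m_0+2\le j\le L+1$ and is proportional to $\rho_+$ at $j=m_0+1$.

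\textbf{Collecting the cases.} Every integer $1\le j\le L+1$ falls into one of four ranges:
\begin{itemize}
\item $j\le m_0-1$: the lower factor vanishes;
\item $j=m_0$: the lower factor is proportional to $\rho_-$;
\item $j=m_0+1$: the upper factor is proportional to $\rho_+$;
\item $j\ge m_0+2$: the upper factor vanishes.
\end{itemize}
Since the difference in \eqref{unif:eq:tensor-moments} is a product of the two factors, one vanishing factor suffices. This gives the stated exceptions and their removal when $\rho_-=0$ or $\rho_+=0$. The requirement $L\ge m_0$ only makes the degree $L-m_0\ge0$ admissible; Lemma~\ref{unif:lem:interpolation-factor} covers degree zero with an empty cancellation list.

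The only delicate step is the index bookkeeping in the exponent shifts, together with the degenerate case $m_0=1$ (and $L=m_0$). I expect no genuine analytic obstacle.
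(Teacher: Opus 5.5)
Your proposal is correct and is essentially the paper's proof. Both arguments read each integer moment difference off the product formula \eqref{unif:eq:tensor-moments} and use the exponent shifts $\alpha-j=\gamma+(m_0-j)$ and $j-\alpha=\delta+(j-m_0-1)$ to reduce the two factors to $\int u^{m_0-j}\,d\nu_{\rho_-}$ and $\int w^{j-m_0-1}\,d\nu_{\rho_+}$. You then obtain mean matching by checking \eqref{unif:eq:mean-annihilation} through the $j=1$ case, with $\rho_-=0$ needed exactly when $m_0=1$, before the tensor lemma's construction is invoked — the same ordering the paper makes explicit.
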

\begin{proof}
The two factors in \eqref{unif:eq:tensor-moments}, apart from positive
normalizers and the power of $a_-$, are
\[
 \int u^{m_0-j}\,d\nu_{\rho_-}(u),\qquad
 \int w^{j-m_0-1}\,d\nu_{\rho_+}(w).
\]
The first vanishes for $1\le j<m_0$ and also for $j=m_0$ if
$\rho_-=0$. The second vanishes for $m_0+1<j\le L+1$ and also for
$j=m_0+1$ if $\rho_+=0$. In particular $j=1$ always vanishes under
the stated restriction. This proves mean matching before the
normalization in Lemma~\ref{unif:lem:tensor-priors} is invoked.
\end{proof}

\paragraph{Substitution at the order-one endpoint.}
For $0<\beta\le1/2$, use $\alpha=1+\beta$ and
\[
 a_-=e^{-8},\qquad
 \lambda_-=\delta_{a_-}-e^{-8\beta}\delta_1.
\]
This is precisely the unperturbed degree-zero factor with exponent
$\beta$. Choose the unperturbed complementary factor of exponent $1-\beta$ and
degree $L-1$, with negative mass $R_+\le e^{-16}$. Then all moments
through $L+1$ agree, and
\[
 \frac{A_1}{A_0}
 =\frac{1+e^{-8\beta}R_+}{e^{-8\beta}+R_+}\ge e^{7\beta}.
\]
Indeed this ratio decreases with $R_+$, and, for $r=e^{-16}$,
\[
 \frac d{dx}\left[-\log\frac{e^{-x}+r}{1+re^{-x}}\right]
 =\frac{1-r^2}{(1+re^x)(1+re^{-x})}\ge\frac78
 \quad(0\le x\le8).
\]
Integrate from $0$ to $8\beta$. This is the same tensor construction;
the description using convolution in logarithmic coordinates is a
change of variables.

\paragraph{Substitution near or between higher integers.}
Write $\alpha=r+\sigma s$ with $r\ge2$, $0<s\le1/2$, and
$\sigma\in\{-1,1\}$. Perturb the low factor when $\sigma=1$, and
the high factor when $\sigma=-1$. The only potentially unmatched
integer moment is $r$. If their negative masses satisfy
\[
 R_s\le\kappa,
 \qquad R_c\le\eta\kappa,
 \qquad \kappa=e^{-2\varepsilon\beta-C_0\eta},
\]
then the same exact mass formula gives
\[
 \log(A_1/A_0)
 \ge2\varepsilon\beta+C_0\eta-\log(1+\eta)
 \ge2\varepsilon\beta+(C_0-1)\eta.
\]
Thus no replacement of $\varepsilon$ by a smaller fixed accuracy is
involved.

\subsection{From the weight construction to an impossibility theorem}

We next connect the moment calculation to the observations. A
\emph{fingerprint} records how many symbols appeared once, twice, and
so on; it discards their names. Randomly relabeling the constructed
distributions makes these names uninformative. Poissonizing the number
of weights then makes the fingerprint coordinates independent, which
allows an exact statistical-distance calculation.

For probability laws $P,Q$, write
$\operatorname{TV}(P,Q)=\sup_A|P(A)-Q(A)|$. Under equal prior
probabilities, every test of $P$ versus $Q$ has average error at least
$(1-\operatorname{TV}(P,Q))/2$. Thus total variation at most $1/4$
forces testing error at least $3/8$. We will contradict this using a
hypothetical entropy estimator with error probability at most $1/3$.

\begin{lemma}[Fingerprint separation with exact accuracy]
\label{unif:lem:fingerprint-transfer}
Let $U_0,U_1\in[0,B]$ have mean one and let $A_i=\mathbb EU_i^\alpha$.
Fix $k\ge2$, $b=k/2$, $\theta>0$, and $0<m\le1$. Suppose
\[
 \log(A_1/A_0)\ge2\varepsilon(\alpha-1)+m.
\]
For $i=0,1$, let $\zeta_i$ be the right side of
\eqref{unif:eq:normalization-failure} in the compound-Poisson variant.
Put
\[
 q_{ij}=\mathbb E\frac{e^{-\theta U_i}(\theta U_i)^j}{j!},
 \qquad
 D=\sum_{j\ge1}\frac{(q_{1j}-q_{0j})^2}{q_{1j}+q_{0j}},
\]
with $0/0=0$. If
\[
 bD\le\frac1{16},\qquad
 \max_i\zeta_i\le\frac1{64},\qquad
 (e/4)^{k/2}\le\frac1{128},\qquad
 e^{-\theta b/16}\le\frac1{128},
\]
no estimator based on $n=\lfloor\theta b/4\rfloor\ge1$ iid samples
can have additive entropy error at most $\varepsilon$ with probability
at least $2/3$, uniformly on the $k$-symbol simplex.
\end{lemma}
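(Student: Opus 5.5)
We argue by contradiction via Le Cam's two-point method applied to two random distributions. Suppose $\widehat H$ uses $n=\lfloor\theta b/4\rfloor$ iid samples and has error at most $\varepsilon$ with probability at least $2/3$ at every $p\in\Delta_k$. Under hypothesis $i$, draw $Q\sim\Poi(b)$ weights $U_{i,1},\ldots,U_{i,Q}$, independent copies of $U_i$. Give weight $j$ an independent count $\Poi(\theta U_{i,j})$, and relabel uniformly at random into the alphabet. The entropy centers $h_i=\log b-(\log A_i)/\beta$ satisfy $h_0-h_1=\beta^{-1}\log(A_1/A_0)\ge 2\varepsilon+m/\beta$. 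Apply Lemma~\ref{unif:lem:normalization} with margin $m$: off an event of probability $\zeta_i\le1/64$, the normalized vector $p$ has $|H_\alpha(p)-h_i|<m/(4\beta)$. On these good events the two $\varepsilon$-windows around the realized entropies are disjoint, separated by at least $m/(2\beta)$. Thresholding $\widehat H$ at the midpoint $(h_0+h_1)/2$ therefore gives a test.

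The observation models must be matched carefully. In the Poissonized weight experiment, the fingerprint counts $F_j$ (the number of weights observed exactly $j$ times, $j\ge1$) are independent Poisson variables with means $bq_{ij}$. This follows by Poisson thinning/marking of the weight process. The unseen weights carry no observable information after random labeling. Hence the fingerprint is sufficient, and the labeled count vector is obtained from it by a common kernel. For products of Poissons, $\TV\le\sqrt{\sum_j (bq_{1j}-bq_{0j})^2/(bq_{1j}+bq_{0j})}$; one route is Hellinger distance, using $(\sqrt x-\sqrt y)^2\le (x-y)^2/(x+y)$ and $\TV\le\sqrt2 H$ up to the stated constants. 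Thus $\TV\le\sqrt{bD}\le1/4$.

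We next transfer this to the estimator's actual input: $n$ iid samples from the normalized $p$ on at most $k$ symbols. Three conditions must hold. First, $Q\le k$, which fails with probability at most $(e/4)^{k/2}$ by the Chernoff bound for $\Poi(k/2)$ exceeding $k$. Second, $S>0$ and the normalization is valid, which is included in the good event. Third, the total count $M=\sum_j\Poi(\theta U_j)$ is at least $n$. Conditional on the weights, $M\sim\Poi(\theta bS)$; on $|S-1|\le1/4$ its mean is at least $3\theta b/4$. Hence $\Pr\{M<\theta b/4\}\le e^{-\theta b/16}$ by a Poisson lower-tail bound.

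Conditional on the weights and $M$, the ordered labels are iid from $p$, via the random ordering kernel that the paper describes. So the first $n$ ordered observations form an iid sample of size $n$ from $p\in\Delta_k$ (padded with zero coordinates). Feeding them to $\widehat H$ is a measurable function of the observed experiment, common to both hypotheses. The test then errs under hypothesis $i$ with probability at most
\[
1/3+\zeta_i+(e/4)^{k/2}+e^{-\theta b/16}\le1/3+1/64+1/64<3/8 .
\]
But any test has average error at least $(1-\TV)/2\ge3/8$, a contradiction.

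The main obstacle is this transfer step. We must verify rigorously that the conditional law of the labeled observation sequence, given the Poissonized fingerprint data, can be generated by one kernel independent of $i$, and that truncating to the first $n$ draws yields exactly iid samples from the normalized vector. This coupling of compound-Poisson counts to fixed-size iid sampling is where care is needed. The remaining steps are direct applications of Lemma~\ref{unif:lem:normalization} and standard Poisson tail bounds.
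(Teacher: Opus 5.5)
Your proposal is correct and follows essentially the same route as the paper's proof. Poisson marking gives independent fingerprint entries $F_j\sim\Poi(bq_{ij})$ with $\TV\le\sqrt{bD}\le1/4$. A common random labeling-and-ordering kernel then produces the labeled iid input; it outputs an arbitrary fixed record when more than $k$ distinct symbols appear, an event contained in $\{Q>k\}$. The nearer-center test built from the normalization lemma has error at most $1/3+1/64+1/128+1/128=35/96<3/8$.

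The transfer step you flag is handled in the paper exactly as you sketch it. On $\{Q\le k\}$ the kernel's output agrees in law with a uniform injection of all $Q$ weights into $[k]$. Conditional on the weights and the total count, the uniformly ordered labels are iid from the normalized vector.
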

\begin{proof}
\textbf{Step 1: compute the fingerprint law.}
Under hypothesis $i$, draw $Q\sim\operatorname{Poi}(b)$ weights and
conditionally independent counts with means $\theta U_i$. Independent
Poisson marking gives independent fingerprint entries
$F_j\sim\operatorname{Poi}(bq_{ij})$, $j\ge1$. Their affinity is
\[
 \exp\left\{-\frac b2\sum_{j\ge1}
                    (\sqrt{q_{1j}}-\sqrt{q_{0j}})^2\right\}.
\]
Since $(\sqrt a-\sqrt c)^2\le(a-c)^2/(a+c)$, the total variation of
the two fingerprint laws is at most $\sqrt{bD}\le1/4$.
Every equal-prior test consequently has average error at least $3/8$.

\textbf{Step 2: turn the fingerprint into ordinary labeled data.}
Randomly assign distinct observed symbols to distinct labels in $[k]$
and randomly order their counts. If more than $k$ distinct symbols
are observed, output an arbitrary fixed record. This is one common
randomized kernel; the latter event is contained in $\{Q>k\}$.
On $Q\le k$, it agrees in distribution with first assigning a uniform
injection of all $Q$ coordinates into $[k]$ and then retaining only
the observed labels. The exceptional event has probability at most
$(e/4)^{k/2}$. Conditional on the weights, the total count is
$\operatorname{Poi}(\theta bS)$; its uniformly ordered labels are iid
from the normalized weights, conditional on its length. On the
normalization event, $S\ge3/4$, and the probability of fewer than
$n$ labels is at most $e^{-\theta b/16}$ by the Poisson lower tail.

\textbf{Step 3: an accurate estimate would produce an accurate test.}
Lemma~\ref{unif:lem:normalization} places the two entropies within
$m/[4(\alpha-1)]$ of centers separated by at least
$2\varepsilon+m/(\alpha-1)$. Hence an accurate entropy estimate
followed by the nearer-center decision is a correct test whenever the
normalization, support, and availability events hold. A hypothetical
estimator as in the statement would give testing error at most
\[
 \frac13+\frac1{64}+\frac1{128}+\frac1{128}
 =\frac{35}{96}<\frac38,
\]
a contradiction.
\end{proof}

\paragraph{Quantitative evaluation.}
The lemmas above give one positive-prior construction, one normalization
identity, and one testing transfer for every noninteger order.
Section~\ref{sec:calibrated-common-lower} supplies the interpolation nodes,
perturbations, moment bounds, and likelihood estimates at the root in Lemma~\ref{upcal:parameters}, including the degree-zero factor at the order-one endpoint.
Integer and bounded-root collision obstructions use the hidden-coordinate
bound of Section~\ref{sec:iid-common-lower}.
These lower-bound arguments do not provide the matching noninteger
upper approximation and noise bound.

\subsection{Choosing the parameters at the target sample scale}

The previous lemmas reduce the lower bound to three checks:
\begin{enumerate}
\item the two entropy centers differ by more than $2\varepsilon$,
      with enough margin to absorb normalization;
\item the fingerprint distance is small;
\item normalization, the alphabet-size constraint, and sample availability
      fail with sufficiently small probabilities.
\end{enumerate}
This subsection performs these checks with one parameter rule. The
calculation is longer than the preceding constructions because its
constants must remain uniform when $\alpha$ approaches one, approaches
an integer, or grows with $k$. For fixed $\alpha$, many of these
uniformity estimates would be unnecessary.
\label{sec:calibrated-common-lower}

Write $S_\alpha(k,\varepsilon,\zeta)$ for the least fixed number of
iid observations that estimates $H_\alpha$ within $\varepsilon$ with
failure probability at most $\zeta$, uniformly on $\Delta_k$. We reserve
$\delta$ below for the distance to the next integer. The notation
$K=\log k$ is only an abbreviation within this lower-bound calculation;
it does not denote another alphabet size. Likewise $Q^2$ below is a
deterministic scalar in the root equation, unrelated to the random
number of weights in the normalization lemma.
Fix $0<\varepsilon<1/2$. For $K=\log k$, noninteger
$1<\alpha<K$, and $\beta=\alpha-1$, write
$\gamma=\alpha-\lfloor\alpha\rfloor$, $\delta=1-\gamma$, and
\[
 h=\gamma\delta,\qquad E=2\varepsilon\beta,\qquad
 \Psi=\gamma^{1+1/\delta}\delta^{1+1/\gamma},\qquad
 Q^2=\frac{\Psi}{\alpha^2\beta^2}e^{K/\alpha-E/h}.
\]
Define the regularized root and sample scale by
\[
 X=\inf\left\{y\ge1:\frac{\alpha h}{\beta}y^\alpha
 \left[1+\log\frac{y(1+\log y)}{Q^2}\right]_+\ge1\right\},
 \qquad R=\alpha k^{1-1/\alpha}X.
\]

\begin{theorem}[Calibrated common lower bound]
\label{cal:thm:lower}
There are $c_\varepsilon>0$ and $k_\varepsilon$ such that, for all
$k\ge k_\varepsilon$ and all noninteger $1<\alpha<\log k$,
\[
 S_\alpha(k,\varepsilon,1/3)\ge c_\varepsilon R.
\]
The construction uses the tensor priors of
Lemma~\ref{unif:lem:tensor-priors} and the normalization and testing
lemmas of Section~\ref{sec:common-noninteger-priors}. The bounded-root
case uses the common hidden-coordinate likelihood calculation.
The interpolation parameters follow one rule for every noninteger order.
The accuracy in the displayed formula is exactly $\varepsilon$.
\end{theorem}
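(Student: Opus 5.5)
The proof splits according to whether the root $X$ is bounded. If $X\le X_0$ for a fixed threshold $X_0=X_0(\varepsilon)$, then $R\asymp_\varepsilon\alpha k^{1-1/\alpha}$. For $\alpha\ge3/2$ the bound then follows directly from Theorem~\ref{thm:iid-collision-new}, since $\ell=\alpha<\log k$. For $1<\alpha<3/2$ the near-one computation in Lemma~\ref{nuc:lem:parameters} shows $X\asymp_\varepsilon k^{1/\alpha}/\log k\to\infty$, so this case does not arise for large $k$.

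The main case is $X>X_0$. Here the root is an equality, and it gives the exact balance displayed in Section~\ref{main:lower}. I would set $T=\alpha\log X$, $v=(\beta/\alpha)e^{-T}$, $w=h+v$, $L\asymp_\varepsilon T+\alpha$, $\theta=c_\varepsilon\alpha k^{-1/\alpha}X$ and $B=c_mL/\theta$. The priors come from Corollary~\ref{unif:cor:all-moment-cancellations}. Let $\lambda_-$ be the interpolation factor of exponent $\gamma$ and degree $m_0-1$, with nodes on $[a_-,1]$, and $\lambda_+$ the factor of exponent $\delta$ and degree $L-m_0$, with nodes on $[a_+,1]$ and $a_-a_+=1/B$. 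The nodes should be Chebyshev-type nodes in logarithmic coordinates, chosen so that the negative masses $R_\pm$ mirror the envelope amplitudes of Lemma~\ref{oc:envelope}. Via \eqref{unif:eq:tensor-gap}, I then want
\[
 \log\frac{1+R_-R_+}{R_-+R_+}\ge E+w .
\]
Near an integer I would switch on one perturbation $\rho_\pm$, following the ``substitution near or between higher integers'' paragraph; at $m_0=1$ I would use the degree-zero factor of the order-one substitution. The negative masses should be evaluated as $R\approx\exp(-\text{exponent}\cdot\log(\text{range}))$ times Lagrange constants. This reproduces $\Psi$ and the $e^{E/h}$ factor, as in Theorem~\ref{oc:capacity} read in reverse.

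Next I would apply Lemma~\ref{unif:lem:fingerprint-transfer} with $b=k/2$. Three inputs are needed.
\begin{enumerate}
\item \emph{Entropy separation.} This is $\log(A_1/A_0)\ge E+w$, with margin $m\asymp w$ capped at $1$.
\item \emph{Normalization.} Lemma~\ref{unif:lem:normalization} controls the normalization failure through $v_1\le B$ and $v_\alpha$. By the remark after that lemma, $v_\alpha$ carries a factor $\beta^2$ near order one, which cancels $m^2$. The total is $\lesssim B^{3}\mathrm{polylog}/(k w^2)$, and this is small because $B\approx k^{1/\alpha}L/X\ll k^{1/3}$ in the relevant ranges.
\item \emph{Fingerprint distance.} I would bound $D$ through \eqref{eq:moment-expansion} with reference $u_*\asymp1$, using $q_{1j}+q_{0j}\ge q_{\cdot j}$ of a single atom for the denominator. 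The matched moments through degree $L+1$ kill all terms with $\ell\le L+1$, so the tail is $\sum_{\ell>L+1}(\theta B)^\ell/\ell!\cdot(\text{moment size})^2$. With $\theta B=c_mL$ and $c_m$ small, this tail is $e^{-cL}$ times $B^{2\ell}$ factors. One must check $kD\le1/16$; the choice $L\gtrsim\log k/\alpha+\log X$ makes this work, given $\log k/\alpha\approx E/h+\log X+\dots$.
\item \emph{The unmatched moment.} The single unmatched moment, at $r$ or $r+1$, contributes $\rho^2$ times $(\theta B)^{r}$ over the mixed-Poisson denominator at degree $r$. The denominator is bounded below by an actual atom of the priors, and I need the resulting contribution to be at most $1/k$. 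This is where $\rho\lesssim\eta\kappa$ and $\theta\propto k^{-1/\alpha}X$ combine with the root.
\end{enumerate}
The conclusion is $n=\lfloor\theta b/4\rfloor\asymp c_\varepsilon\alpha k^{1-1/\alpha}X= c_\varepsilon R$, with the conditions $e^{-\theta b/16}$ and $(e/4)^{k/2}$ trivial for large $k$. Monotonicity in $n$ then covers all smaller sample sizes.

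The main obstacle is the calibration in the third and fourth items: showing, uniformly in $\alpha$ including $h\to0$ and $\alpha\to\log k$, that separation $E+w$ and $kD\le1/16$ hold at the same time. This requires exact evaluation of the interpolation negative masses, tracking the $\beta/\alpha$ prefactors rather than losing constants in the exponent. I would first try to reduce both to the envelope identity \eqref{oc:hyperbolic} by choosing the node sets so that $R_\pm$ are explicit geometric-type expressions, and then absorb the remaining polynomial factors into the extra margin $w$.
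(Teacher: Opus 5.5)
Your architecture matches the paper's: the bounded root is handled by Theorem~\ref{thm:iid-collision-new}, and otherwise you use tensor priors with one perturbed factor, the fingerprint transfer at $b=k/2$, and the choices $L\asymp T+\alpha$, $\theta\propto\alpha k^{-1/\alpha}X$, $B=c_mL/\theta$. However, the distance and normalization steps do not close as proposed.

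\textbf{Fingerprint distance.} Bounding $D$ through \eqref{eq:moment-expansion} with a single reference $u_*\asymp1$ cannot work. The degree-$\ell$ term there is $\theta^\ell u_*^{-\ell}\bigl(\int(u-u_*)^\ell\,d\omega\bigr)^2/\ell!$. For $\ell>L+1$ no cancellation is available, and the moment is of size $B^\ell$ times the mass near the top of the support. The effective parameter is therefore $\theta B^2/u_*=c_mLB/u_*\gg L$, and the tail is of order $e^{c_mLB}$ times that squared mass. Equivalently, a $\operatorname{Poi}(\theta u_*)$ probability is far smaller than the true $q_{1j}+q_{0j}$ at large $j$. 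No choice of $L$ repairs this, since increasing $L$ also increases $B$.

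The missing idea is Lemma~\ref{cal:lem:one-defect}. Taylor-expand $q_{ij}$, cancel the matched moments, and bound the remainder in $\ell^1$ via $(a_j+R_j)^2/q\le 2a_j^2/q+2|R_j|$, which is valid because $|a_j+R_j|\le q$. A lower bound on $q_{1j}+q_{0j}$ is then needed only for $j\le r$, where actual atoms supply it. The remainder is $C(A_0+A_1)B^{-\alpha}(2e\theta B/(L+2))^{L+2}$. Its factor $B^{-\alpha}$ combines with $B^\alpha/k=e^{-T}(c_mL/(c_n\alpha))^\alpha$, $A_i\le c_n^{-s}e^{C_\varepsilon r}/\zeta$ and $\zeta\ge c_\varepsilon e^{-T}$, so the prefactor has logarithm at most $2T+C_\varepsilon\alpha$. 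That is why $L=C_L(T+\alpha)$ suffices.

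Your fallback $L\gtrsim\log k/\alpha+\log X$ is a much larger scale wherever $E/h\approx\log k/\alpha$ dominates. Since $B=c_mL/\theta$, it inflates $B^\alpha/k$ and breaks normalization. The defect term also needs $\rho\le C_\varepsilon v$, not $\rho\lesssim\eta\kappa$; that bound is the one on the complementary mass $R_c$. Write $k\theta^r=c_n^r\alpha^re^T(ke^{-T})^{\sigma s/\alpha}$. The factor $e^T$ is cancelled only by $\rho^2e^T\le C_\varepsilon v$. The last power is harmless because $\rho>0$ forces $v\ge c_\varepsilon s$, and the root then gives $sK/\alpha\le C_\varepsilon r$. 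The paper extracts all of this from the exact root identity for $\log t$.

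\textbf{Normalization.} The $B^3$-polylog bound from the remark after Lemma~\ref{unif:lem:normalization} is a near-one statement, and $B\ll k^{1/3}$ is false in general. Take $\alpha=2+\gamma$ with $\gamma$ chosen so that $E/h=\log k/\alpha$, so $\gamma\asymp\varepsilon/\log k$. Then the root is unbounded with $X\asymp(\log k/\log\log k)^{1/2}$, while $B=k^{1/2+o(1)}$ and $w\asymp e^{-T}$. Even the natural estimate $\mathbb E U_i^{2\alpha}\le B^\alpha A_i$, with $A_i\asymp1/\zeta$ and $\zeta\asymp w$, leaves a Chebyshev bound of order $B^\alpha/(kw)\asymp(\log\log k)^2$.

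What is needed is the tensor estimate $\mathbb E U_i^{2\alpha}/A_i^2\le C\zeta^2B^\alpha$. It follows from $Z/B^\beta\le C\zeta\kappa$, $\int z^\alpha\,d(\mu_0+\mu_1)\le C\zeta\kappa$ and $R_s+R_c\ge c\kappa$. Together with $\zeta\asymp w$ and $B^\alpha/k\le e^{-T/2}$, it gives a vanishing failure probability.

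\textbf{Separation.} The calibration you flag as the main obstacle is not obtained by mirroring the envelope amplitudes. The paper uses an asymmetric split, $R_s\le\kappa=e^{-E-C_0w}$ and $R_c\le\zeta\kappa$, with $a_s=(Ba_c)^{-1}$. The perturbation $\rho$ is read off from $t=(a_s/b_s)^s/\kappa$, and the root identity gives $\log t\le C_\varepsilon v-2s$. Separation $E+w$ then follows from the monotonicity of $(1+R_sR_c)/(R_s+R_c)$ in each negative mass.
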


We first give the count estimate and the interpolation bounds needed to
verify this theorem. These arguments are included to make the calibration
self-contained at the level of the interpolation and count estimates.

\paragraph{Why matching moments makes counts hard to distinguish.}
The probability of seeing a weight exactly $j$ times is
$\mathbb E[e^{-\theta U}(\theta U)^j/j!]$. Expanding the exponential
expresses this probability through integer moments of $U$. Matching
these moments removes the low-degree terms. The next lemma quantifies
what remains if one moment is allowed to differ.

\begin{lemma}[One unmatched moment]
\label{cal:lem:one-defect}
Let mean-one laws $U_i\in[0,B]$ have matching integer moments through
$L+1$, except possibly moment $r\le L+1$. Suppose $\alpha\le L+2$,
$A_i=\mathbb EU_i^\alpha$, and, for some $D_0>0$, $\rho\ge0$, $c_0>0$,
\[
 |\mathbb EU_1^r-\mathbb EU_0^r|\le\rho D_0,
 \qquad q_{ij}\ge c_0\theta^rD_0/j!\quad(1\le j\le r),
\]
where $q_{ij}=\mathbb E[e^{-\theta U_i}(\theta U_i)^j/j!]$.
Then
\[
 \sum_{j\ge1}\frac{(q_{1j}-q_{0j})^2}{q_{1j}+q_{0j}}
 \le\frac{C2^r}{c_0r!}\theta^r\rho^2D_0
 +C(A_0+A_1)B^{-\alpha}
 \left(\frac{2e\theta B}{L+2}\right)^{L+2}.
\]
If all the moments match, the first term is omitted and the denominator
hypothesis is unnecessary.
\end{lemma}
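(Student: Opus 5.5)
The plan is to expand each mixed Poisson probability in moments of $U_i$ and bound the difference coefficientwise. Write $\Delta_\ell=\mathbb EU_1^\ell-\mathbb EU_0^\ell$. Expanding $e^{-\theta u}$ gives
\[
 q_{1j}-q_{0j}=\frac{\theta^j}{j!}\sum_{s\ge0}\frac{(-\theta)^s}{s!}\Delta_{j+s}.
\]
Because $\Delta_\ell=0$ for $\ell\le L+1$ with $\ell\ne r$, the difference splits into two parts. The first is a single defect term, present only when $j\le r$: it has index $s=r-j$ and value $\theta^r(-1)^{r-j}\Delta_r/(j!(r-j)!)$. The second is a tail involving only degrees $\ell=j+s\ge L+2$.

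I will treat the two parts separately, using $(x+y)^2\le2x^2+2y^2$ in each ratio.

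\textbf{Defect part.} For $1\le j\le r$, the lower bound on $q_{1j}+q_{0j}\ge 2c_0\theta^rD_0/j!$ gives a ratio of at most
\[
 \frac{\theta^{2r}\rho^2D_0^2/(j!^2(r-j)!^2)}{c_0\theta^rD_0/j!}
 =\frac{\theta^r\rho^2D_0}{c_0\,j!\,(r-j)!^2}.
\]
Summing over $j$ and using $\sum_j 1/(j!(r-j)!)\le 2^r/r!$ produces the first displayed term.

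\textbf{Tail part.} It is cleaner not to expand the tail termwise but to bound it as an analytic remainder. Let $T_L(x)$ denote the degree-$(L+1)$ Taylor polynomial of $e^{-x}$, and put $\phi_j(u)=(\theta u)^j(e^{-\theta u}-T_{L+1-j}(\theta u))/j!$. The matched moments and the isolated defect account exactly for the polynomial part, so the tail equals $\int\phi_j\,d(\mu_1-\mu_0)$. I need to divide its square by $q_{1j}+q_{0j}$, which may be tiny. Cauchy--Schwarz against the positive measure $\mu_1+\mu_0$ gives
\[
 \Bigl(\int\phi_j\,d(\mu_1-\mu_0)\Bigr)^2\le\Bigl(\int|\phi_j|\,d(\mu_1+\mu_0)\Bigr)^2.
\]
Writing $|\phi_j|=\sqrt{w_j}\cdot(|\phi_j|/\sqrt{w_j})$ with $w_j(u)=e^{-\theta u}(\theta u)^j/j!$, one factor integrates to exactly $q_{1j}+q_{0j}$. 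The ratio is therefore at most $\int\phi_j^2/w_j\,d(\mu_1+\mu_0)$.

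The Taylor remainder bound $|e^{-x}-T_M(x)|\le x^{M+1}/(M+1)!$, valid for $x\ge0$, gives
\[
 \phi_j^2/w_j\le\frac{e^{\theta u}(\theta u)^{2(L+2)-j}}{j!\,((L+2-j)!)^2}.
\]
The $e^{\theta u}$ factor is unpleasant. Instead I will use the exact identity
\[
 e^{-x}-T_M(x)=\frac{(-1)^{M+1}}{M!}\int_0^x(x-y)^Me^{-y}\,dy,
\]
which keeps a factor $e^{-y}$ inside. Combining this with $w_j$ through \eqref{eq:moment-expansion}-type kernel manipulations should cancel the exponential. I expect this to be the main obstacle.

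Granting that cancellation, the bound is
\[
 \sum_j\frac{(\theta u)^{2(L+2)-j}}{j!\,(L+2-j)!^2}\le C\frac{(2\theta u)^{L+2}}{(L+2)!}\cdot(\theta u)^{L+2}\cdots .
\]
This still has the wrong shape: the target is linear in $A_i$ and not quadratic in powers. So the better route is to bound $\int|\phi_j|\,d\mu$ directly and use $q_{1j}+q_{0j}$ only for $j\le L+1$. Then $u^{L+2}\le B^{L+2-\alpha}u^\alpha$ (as $\alpha\le L+2$, $u\le B$) turns $\mathbb EU_i^{L+2}$ into $A_iB^{L+2-\alpha}$, and Stirling gives $(2e\theta B/(L+2))^{L+2}B^{-\alpha}$, matching the second term.

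If every moment matches, the defect part is absent, so the denominator hypothesis is never used and only the tail bound remains.
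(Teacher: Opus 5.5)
\textbf{There is a genuine gap in the tail part.} Your expansion, the isolation of the single defect term $a_j=(-1)^{r-j}\theta^r\Delta_r/(j!(r-j)!)$, and the defect bound via $q_{1j}+q_{0j}\ge 2c_0\theta^rD_0/j!$ are correct and agree with the paper. The remainder estimate $|R_j|\le\theta^{L+2}(\mathbb EU_0^{L+2}+\mathbb EU_1^{L+2})/(j!(L+2-j)!)$ is also correct, and so is the finish via $u^{L+2}\le B^{L+2-\alpha}u^\alpha$ and Stirling.

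The gap is the step you leave open. After writing $q_{1j}-q_{0j}=a_j+R_j$ and applying $(x+y)^2\le2x^2+2y^2$, you still have to control $R_j^2/(q_{1j}+q_{0j})$. For $r<j\le L+1$ there is no lower bound on this denominator at all. For $j\le r$, using $q\ge2c_0\theta^rD_0/j!$ on $R_j^2$ gives a term quadratic in the remainder, which is not of the stated shape.

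Your Cauchy--Schwarz route fails for the reasons you identified. Your fallback, bounding $\int|\phi_j|\,d\mu$ directly, only controls $|R_j|$. It never explains why the ratio $R_j^2/q$ is $O(|R_j|)$, and the phrase ``use $q_{1j}+q_{0j}$ only for $j\le L+1$'' does not supply this. As written, the tail part of the argument is not established.

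\textbf{The missing idea is elementary.} Because $q_{0j},q_{1j}\ge0$, the full numerator satisfies $|a_j+R_j|=|q_{1j}-q_{0j}|\le q_{1j}+q_{0j}=:q$. Split into two cases.
\begin{itemize}
\item If $|R_j|\le q$, then $2R_j^2/q\le2|R_j|$.
\item If $|R_j|>q$, then the whole ratio $(a_j+R_j)^2/q$ is at most $q<|R_j|$.
\end{itemize}
In both cases $(a_j+R_j)^2/q\le2a_j^2/q+2|R_j|$. This turns the $\chi^2$-type ratio into an $\ell^1$ bound on the remainder, which is exactly what the paper does.

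The same observation handles $j\ge L+2$. There $R_j=q_{1j}-q_{0j}$, so the ratio is at most $q_{1j}+q_{0j}$. The Poisson tail bound then gives $\sum_{j\ge L+2}q_{ij}\le\mathbb E(\theta U_i)^{L+2}/(L+2)!$. With this step inserted, your remainder summation and Stirling estimate complete the proof, and the exact-matching case follows since $a_j=0$.
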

\begin{proof}
For $j\le L+1$, write
\[
 q_{ij}
 =\sum_{\ell=0}^{L+1-j}
   \frac{(-1)^\ell\theta^{j+\ell}}{j!\,\ell!}
       \mathbb E U_i^{j+\ell}
   +\text{Taylor remainder}.
\]
Subtracting the two expressions cancels all matched moments. Equivalently,
expand $e^{-z}$ through degree $L+1-j$. The only
uncancelled polynomial term is
\[
 a_j=\frac{(-1)^{r-j}\theta^r\Delta_r}{j!(r-j)!}
 \quad(1\le j\le r),\qquad a_j=0\quad(j>r).
\]
The integral Taylor remainder gives
\[
 |R_j|\le\frac{\theta^{L+2}
 (\mathbb EU_0^{L+2}+\mathbb EU_1^{L+2})}{j!(L+2-j)!}.
\]
The tail above $L+1$ has the same factorial-moment bound.
Since $\mathbb EU_i^{L+2}\le B^{L+2-\alpha}A_i$, summing yields
\[
 \sum_{j\ge1}|R_j|\le C(A_0+A_1)B^{-\alpha}
 (2e\theta B/(L+2))^{L+2},
\]
where for tail indices $R_j=q_{1j}-q_{0j}$. If $q=q_{1j}+q_{0j}$,
then $|a_j+R_j|\le q$ implies
$(a_j+R_j)^2/q\le2a_j^2/q+2|R_j|$: for $|R_j|\le q$ use
$2R_j^2/q\le2|R_j|$, and otherwise use $(a_j+R_j)^2/q\le q$.
Finally
\[
 \sum_{j=1}^r\frac1{j!(r-j)!^2}\le\frac{2^r}{r!}.
\]
These estimates prove the assertion, including the case of exact matching.
\end{proof}

\paragraph{Why Chebyshev nodes are used.}
Extrapolating an interpolating polynomial outside its node interval can
make its coefficients very large. The nodes below keep that increase
controlled while leaving one small principal point $a$ outside the
interval. The endpoint coefficient also gives a nonzero atom in each
hypothesis; this will provide the denominator in the count-distance
bound. Here $T_n$ and $U_{n-1}$ denote the Chebyshev polynomials of the
first and second kinds, characterized by
$T_n(\cos t)=\cos(nt)$ and
$U_{n-1}(\cos t)=\sin(nt)/\sin t$, with their polynomial extensions
outside $[-1,1]$.

\begin{lemma}[Interpolation coefficients used in the calibration]
\label{cal:lem:nodes}
For degree $n\ge1$, use the Chebyshev extrema on $[b,1]$, and let
$\ell_j$ be their Lagrange basis. If
$b=c_b\eta/(n+1)^2$, $0<\eta\le3/5$, and $0<a\le b/2$, then,
for an absolute sufficiently small $c_b>0$,
\[
 \sum_j|\ell_j(a)|\le1+C\eta,\quad \ell_b(a)\ge1,
 \quad\sum_{u_j\ne b}|\ell_j(a)|\le C\eta,
 \quad |\ell_1(a)|\ge b/2.
\]
For $1/2\le\xi<1$, instead set $b=\xi/(n+1)^2$. If
$a=\exp(-5)b z^{1/\xi}$ with $0<z\le1$, the signed measure
$u^{-\xi}(\delta_a-\sum_j\ell_j(a)\delta_{u_j})$, divided by its
positive Jordan mass, has negative mass at most $\exp(-\xi)z$ and
principal positive mass at least $1/2$.
The latter conclusion also holds for $n=0$, using the single node $1$
and $b=\xi$.
\end{lemma}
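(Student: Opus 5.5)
We will use the explicit product formula for the Lagrange basis at the Chebyshev extrema, rather than general extrapolation bounds. The nodes are $u_j=b+(1-b)(1-\cos(\pi j/n))/2$, $0\le j\le n$. Write $x=$ the affine image of $u$ in $[-1,1]$, so that $u=b$ corresponds to $x=-1$, and $a<b$ corresponds to a point $x_a<-1$ at distance $\asymp b$ from $-1$ (precisely $x_a+1=-2(b-a)/(1-b)$). The node polynomial is $\pi(x)=c(x^2-1)U_{n-1}(x)$, and
\[
 \ell_j(x_a)=\frac{\pi(x_a)}{(x_a-x_j)\pi'(x_j)}.
\]
At an interior extremum, $\pi'(x_j)$ has modulus $\asymp n$ times a fixed normalization; at the endpoints it is about $2n^2$. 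Outside $[-1,1]$ we use $U_{n-1}(-\cosh s)=(-1)^{n-1}\sinh(ns)/\sinh s$, with $s\asymp\sqrt{|x_a+1|}\asymp\sqrt b$. Then $ns\asymp\sqrt{c_b\eta}$ is small, and $\sinh(ns)/\sinh s=n(1+O(n^2s^2))=n(1+O(c_b\eta))$.

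\textbf{First part.} With these formulas, we will show the following.
\begin{itemize}
\item The endpoint weight $\ell_b(a)=\prod_{i\ne0}(a-u_i)/(b-u_i)$ is a product of factors $>1$, since $a<b$ is farther from every other node. Hence $\ell_b(a)\ge1$.
\item Each interior weight is bounded by $|x_a+1|\,\sinh(ns)/(n\sinh s)$ divided by $|x_a-x_j|\,|(x_j^2-1)U_{n-1}'(x_j)|/n$. Using $1+x_j\asymp (j/n)^2$ and $|U_{n-1}'(x_j)|\asymp n/(1-x_j^2)$, the interior sum is $\lesssim\sum_j b/(b+j^2/n^2)\cdot n^{-1}\cdot$(bounded). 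This is $\lesssim n b\cdot O(1)$ plus a tail. Since $b\le c_b\eta/(n+1)^2$, we get the total $C\eta$.
\item The other endpoint $u=1$ gives $|\ell_1(a)|=\prod_{i\ne n}|a-u_i|/|1-u_i|$. This is close to $b\cdot$(endpoint ratio) $\ge b/2$ after the same $\sinh$ comparison.
\item Summing the three contributions gives $\sum_j|\ell_j(a)|\le1+C\eta$.
\end{itemize}
The main work is keeping $c_b$ absolute while $\eta$ is only bounded by $3/5$. We handle this by carrying the factor $1+O(c_b\eta)$ throughout and choosing $c_b$ small at the end.

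\textbf{Second part.} For $1/2\le\xi<1$ and $b=\xi/(n+1)^2$, we apply the same estimates with $\eta$ replaced by a fixed constant. The interior sum is then only $O(1)$, not small. However, the weighting $u^{-\xi}$ supplies the smallness: the principal atom gets weight $a^{-\xi}=e^{5\xi}b^{-\xi}z^{-1}$, while each node $u_j\ge b$ gets weight at most $b^{-\xi}$. The positive Jordan mass is therefore at least $a^{-\xi}+(\text{positive node terms})$, and the negative mass is at most $b^{-\xi}\sum_{\ell_j(a)<0}|\ell_j(a)|$. Dividing by the positive mass gives negative mass $\le e^{-5\xi}z\cdot O(1)$, which is at most $e^{-\xi}z$ because $e^{-4\xi}\le e^{-2}$ dominates the absolute constant from the $\ell_j$ sum. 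This is the step to check carefully; if the constant exceeds $e^{2}$, we will first try bounding the signed sum through $\ell_b(a)\ge1$ and $\sum\ell_j(a)=1$.

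For the principal positive mass, we need $a^{-\xi}/(\text{positive mass})\ge1/2$. The positive node mass is at most $b^{-\xi}(1+O(1))$, which is at most $a^{-\xi}$ times $e^{-5\xi}z\,O(1)\le1$, so the ratio is at least $1/2$.

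The case $n=0$ is direct: the single node is $1$ with $\ell\equiv1$, and the measure is $a^{-\xi}\delta_a-\delta_1$. Its negative mass ratio is $a^{\xi}=e^{-5\xi}\xi^{\xi}z\le e^{-\xi}z$, and its principal mass is one.
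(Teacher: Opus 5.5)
Your first part is a workable alternative to the paper's argument. There only an unspecified $C$ is needed, and your explicit weights $\ell_j(x_a)=\pi(x_a)/((x_a-x_j)\pi'(x_j))$ supply it after small repairs:
\begin{itemize}
\item You never bound $\ell_b(a)$ from above, although you need that bound to sum to $1+C\eta$. Your own formulas give $\ell_b(a)=\frac{1-x_a}{2}\cdot\frac{\sinh(ns)}{n\sinh s}=1+O(\eta)$.
\item $|\pi'(\pm1)|=2n$, not $2n^2$.
\item The interior sum is of order $bn^2$, not $O(nb)$; the first interior weight alone is already of that order. Since $bn^2\le c_b\eta$, your conclusion survives.
\end{itemize}

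The genuine gap is in the second part, at the step you flagged.
\begin{itemize}
\item The negative Jordan part of $u^{-\xi}(\delta_a-\sum_j\ell_j(a)\delta_{u_j})$ sits at the nodes with $\ell_j(a)>0$, not $<0$. These nodes include $u=b$, where $\ell_b(a)\ge1$.
\item So the negative mass is already at least $b^{-\xi}$, which is $e^{-5\xi}z$ relative to the principal atom $a^{-\xi}$. The claimed bound $e^{-\xi}z$ therefore leaves room only for $\sum_j|\ell_j(a)|\le e^{4\xi}$, roughly $7.4$ at $\xi=1/2$.
\item Term-by-term estimates with $\asymp$ and ``bounded'' factors give no explicit number, so the step does not close. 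The same unquantified constant enters your principal-mass bound.
\item Your fallback cannot supply the number. The signs of $\ell_j(a)$ alternate, so $\sum_j\ell_j(a)=1$ only says the even-index sum exceeds the odd-index sum by one. It bounds neither sum.
\end{itemize}

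The missing idea is the paper's exact identity. Because the nodes are the extrema of $T_n$, $T_n$ equals its own interpolant. Its nodal values $\pm1$ alternate in step with the signs of $\ell_j(t)$ outside the node interval. Hence
\[
 \sum_j|\ell_j(a)|=|T_n(t)|=\cosh(nu),\qquad t=(2a-1-b)/(1-b)=-\cosh u.
\]
Since $\cosh u\le(1+b)/(1-b)$, one has $u\le2\operatorname{arctanh}\sqrt b$. For $b=\xi/(n+1)^2$ this gives $nu\le2\sqrt\xi$, so $\sum_j|\ell_j(a)|\le\cosh(2\sqrt\xi)\le e^{2\xi}$. The negative-mass ratio is then at most $e^{2\xi}(a/b)^\xi\le e^{-\xi}z$. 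The principal mass fraction is at least $1/(1+e^{-\xi}z)\ge1/2$.

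The same identity gives your first part in one line: $\cosh(nu)\le1+C\eta$, and subtracting $\ell_b(a)\ge1$ yields the third inequality. Your $n=0$ case is correct.
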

\begin{proof}
Outside the node interval, the signs of the Lagrange coefficients
alternate, so their absolute sum is
\[
 |T_n(t)|,\qquad t=(2a-1-b)/(1-b)<-1.
\]
Writing $t=-\cosh u$ gives
$u\le2\operatorname{arctanh}\sqrt b$. Thus
$|T_n(t)|=\cosh(nu)$.
For the first choice of $b$, $n^2u^2\le C\eta$, whence
$\cosh(nu)\le1+C\eta$.
Every factor in the product formula for $\ell_b(a)$ is at least one;
subtracting this coefficient proves the third inequality.
The exact endpoint formula is
\[
 |\ell_1(a)|=\frac{b-a}{1-b}\frac{|U_{n-1}(t)|}{n}
 \ge b/2,
\]
since $|U_{n-1}(t)|\ge n$.
For the second choice, $b\le1/4$ and
$n\operatorname{arctanh}(\sqrt\xi/(n+1))\le\sqrt\xi$.
(The function $v\mapsto\operatorname{arctanh}v/v$ is increasing,
and $n\operatorname{arctanh}(1/(n+1))\le1$.) Therefore
$\cosh(nu)\le\cosh(2\sqrt\xi)\le\exp(2\xi)\le\exp(4\xi)$,
using $\cosh t\le e^{t^2/2}$.
The total weighted nodal mass divided by the principal mass $a^{-\xi}$
is at most $\exp(4\xi)(a/b)^\xi=\exp(-\xi)z$. Positive normalization only
reduces the negative mass, and its principal mass is at least
$1/(1+\exp(-\xi)z)\ge1/2$. The degree-zero assertion follows directly.
\end{proof}

\begin{proof}[Proof of Theorem~\ref{cal:thm:lower}]
\textbf{Step 1: choose the sample intensity and interpolation degrees.}
The target lower-bound scale is
$R=\alpha k^{1-1/\alpha}X$. We therefore choose
$\theta$ proportional to $R/k$; the eventual fixed sample size will be
$\lfloor\theta k/8\rfloor$. The interpolation degree $L$ controls the
Taylor remainder, while the support bound $B$ is chosen so that
$\theta B$ is a small multiple of $L$.
Put $x=X$. If $x\le X_0$, for a constant $X_0$ chosen below, the
common hidden-coordinate lower bound proves the claim. This case cannot
occur for $1<\alpha\le3/2$ once $k$ is large: here
$Q^2\asymp_\varepsilon k^{1/\alpha}$, so the root's bracket is negative
at every fixed $X_0$. The root comparison used in this observation is
proved at the end of the argument.

Choose a nearest integer $r\ge1$ and write
$\alpha=r+\sigma s$ with $0<s\le1/2$, $\sigma\in\{-1,1\}$;
at ties take the lower integer, so $r=1$ throughout
$1<\alpha\le3/2$. Set
\[
 T=\alpha\log x,\quad v=\frac{\beta}{\alpha}e^{-T},
 \quad w=h+v,\quad
 \zeta=\frac{c_*\ w}{1-e^{-E}},
 \quad \kappa=e^{-E-C_0w},
\]
\[
 L=\lceil C_L(T+\alpha)\rceil,\qquad
 \theta=c_n\alpha k^{-1/\alpha}x,\qquad B=c_mL/\theta.
\]
Here $c_*=c_*(\varepsilon)>0$ is chosen so that $\zeta\le1/8$ for
all orders: since $h\le\min\{\beta,1/4\}$ and
$v\le\min\{\beta,1\}$, it suffices to take a sufficiently small
multiple of $\varepsilon$. Increasing $X_0$ ensures $v\le E/4$ and
$0<w<1/2$. The active factor has exponent $s$ and degree $n_s$;
the complementary factor has exponent $\xi=1-s$ and degree $n_c$:
\[
 (n_s,n_c)=
 \begin{cases}
 (r-1,L-r),&\sigma=1,\\
 (L-r+1,r-2),&\sigma=-1.
 \end{cases}
\]
Define, for every order, the same parameters
\[
 b_s=\frac{c_b\zeta}{(n_s+1)^2},\qquad
 b_c=\frac{\xi}{(n_c+1)^2},\qquad
 a_c=\exp(-5)b_c(\zeta\kappa)^{1/\xi},\qquad
 a_s=(Ba_c)^{-1},
\]
\[
 t=(a_s/b_s)^s/\kappa,\qquad
 \rho=\left[t\exp\left\{C_1\zeta\frac{n_s}{n_s+1}\right\}-1\right]_+.
\]
Use $u^{-s}\nu_\rho$ in the active position and $u^{-\xi}\nu_0$
in the complementary position, followed by positive normalization.
When a degree is zero its only node is $1$; the associated $b$ is
only a scale parameter. The low and high positions are exactly those
of Corollary~\ref{unif:cor:all-moment-cancellations}.

\textbf{Step 2: use the root equation to control the perturbation.}
For $x>X_0>1$, the defining inequality of the root is an equality.
Because $x^\alpha=e^T$ and $v=(\beta/\alpha)e^{-T}$, it gives
\[
 1+\log\frac{x(1+\log x)}{Q^2}=\frac vh.
\]
Taking logarithms in the definition of $Q^2$ and rearranging yields
the identity below. This is the point where the lower-bound construction
uses exactly the same scalar root as the upper-bound proof.
The regularized root equation is
\[
 \frac K\alpha=\frac{E-v}{s(1-s)}+\log x+\log(1+\log x)
                  +1-\log\Psi+2\log\alpha+2\log\beta.
\]
Writing $d_s=n_s+1$, $d_c=n_c+1$, substitution gives the exact identity
\begin{align*}
 \log t={}&\frac{v+C_0w}{\xi}
  +s(1+1/\xi)\log(s/\zeta)+\log\xi
  +s\log c_n+2s\log(\alpha/\beta)\\
 &+s\left[4-\log(c_mc_b)
    +\log\frac{(d_sd_c)^2}{L\alpha^3(1+\log x)}\right].
\end{align*}
For $C_L\ge10$, $d_sd_c\asymp\alpha L$ and
$L\asymp C_L\alpha(1+\log x)$ for every order, including $r=1$.
Thus the second line is $O_{C_L,c_m,c_b}(s)$.

If $n_s\ge1$, then $\beta\ge1/2$ and
$w\asymp s+v$, $\zeta\asymp_\varepsilon s+v$.
In particular the positive part of
$s\log(s/\zeta)$ is $O_\varepsilon(s)$.
Choosing $c_n$ sufficiently small gives
\[
 \log t+C_1\zeta\frac{n_s}{n_s+1}\le C_\varepsilon v-2s.
\]
If $n_s=0$, then $r=1$, $s=\beta$, $w\asymp s$, and
$\zeta$ lies between two positive constants depending only on
$\varepsilon$. The potentially unbounded logarithms cancel exactly:
\[
 s(1+1/\xi)\log s+2s\log(\alpha/\beta)
       =\frac{s^2}{\xi}\log s+2s\log\alpha.
\]
All remaining terms in $\log t-s\log c_n$ are $O_\varepsilon(s)$.
Thus the same choice of $c_n$ makes $\log t\le-2s$ and, automatically,
$\rho=0$. In particular the mean-matching restriction at degree zero
is satisfied by the common parameter rule, without modifying the prior.
For all orders, after increasing $X_0$,
\[
 0\le\rho\le C_\varepsilon v\le1,\qquad
 \rho>0\Longrightarrow (r\ge2\ \hbox{and}\ v\ge c_\varepsilon s).
\]
The identity also gives $|\log t-s\log c_n|\le C_\varepsilon$.
To bound the principal point uniformly when $s$ is very small, retain
its individual terms before dividing by $s$:
\begin{align*}
 \log(a_s/b_s)={}&\log c_n-E/s+v/(s\xi)+C_0w/\xi\\
 &+(1+1/\xi)\log(s/\zeta)+(\log\xi)/s
       +2\log(\alpha/\beta)+O(1).
\end{align*}
Since $v\le E/4$ and $\xi\ge1/2$, the two terms containing $1/s$
and $E,v$ sum to at most $-E/(2s)$. At positive active degree,
$\zeta\ge c_\varepsilon s$ and $\alpha/\beta\le3$ bound the
remaining terms above uniformly. At degree zero, $s=\beta$ and
$\zeta\asymp_\varepsilon1$; the only unbounded logarithms cancel as
$(1+1/\xi)\log s-2\log s=(s/\xi)\log s\le0$.
It follows that
$a_s/b_s\le C_\varepsilon c_n\exp\{-E/(2s)\}\le1/2$.
For $n_s=0$ this inequality simply verifies that the principal point
lies below its only node. Fix $C_1$ before $c_n$. Since
$\rho\le C_\varepsilon v$ and $v/\zeta\le C_\varepsilon$,
a further decrease of $c_n$ ensures
$\rho a_s/b_s\le\zeta$ with coefficient one; thus absorbing the
perturbation into the nodal estimate does not require a circular
choice of $C_1$.

\textbf{Step 3: bound the positive and negative interpolation masses.}
Here ``raw'' means before dividing the signed measure by its positive
mass. Let $R_s,R_c$ be the negative masses after this division, and let
$P_s,P_c$ be the raw positive masses. For $n_s\ge1$ the total active nodal mass is at most
$b_s^{-s}(1+C\zeta)(1+\rho a_s/b_s)$, while its principal mass is
$(1+\rho)a_s^{-s}$. Since $n_s/(n_s+1)\ge1/2$, the definition of
$\rho$ gives $R_s\le\kappa$ when $C_1$ is sufficiently large.
The node at $b_s$ is negative and has coefficient at least one, whereas
the other nodes have total coefficient $O(\zeta)$. Hence
\[
 P_sa_s^s=1+\rho+O(\zeta\kappa),\qquad
 c_{\varepsilon,c_n}\kappa\le R_s\le\kappa.
\]
For $n_s=0$, exactly $P_sa_s^s=1$ and
$R_s=a_s^s=\kappa t b_s^s$. The same bounds hold because
$b_s$ is bounded above and below by positive
$\varepsilon$-dependent constants in this case.
Complementary interpolation gives $R_c\le\zeta\kappa$ and
$1\le P_ca_c^\xi\le2$.

\textbf{Step 4: verify the entropy separation at the original accuracy.}
The exact ratio $(1+R_sR_c)/(R_s+R_c)$ decreases in each negative
mass when both masses lie in $[0,1)$. Its logarithm therefore satisfies
\begin{align*}
 \log(A_1/A_0)
 &\ge E+C_0w+\log(1+\zeta\kappa^2)-\log(1+\zeta)\\
 &\ge E+C_0w-\zeta(1-\kappa^2)\\
 &\ge E+(C_0-2c_*-2C_0\zeta)w\ge E+w,
\end{align*}
after choosing, for example, $C_0\ge4$ and $c_*$ small.
The penultimate inequality uses
$1-\kappa^2\le2(1-e^{-E})+2C_0w$.
This is essential at order one, where both negative masses may not be
small relative to the desired logarithmic margin.
Only integer moment $r$ can differ through degree $L+1$; when $r=1$
even that moment agrees because $\rho=0$.

\textbf{Step 5: bound the unmatched moment and keep the count probabilities positive.}
The moment defect must be small, but smallness alone is insufficient:
the fingerprint distance divides its square by a count probability.
We therefore identify actual atoms that provide lower bounds on those
probabilities.
First suppose $r\ge2$. Let $P_-,P_+$ denote the raw positive masses in the low and high
positions, and let $Z$ be the tensor normalizer. Define
\[
 D_0=\frac{B^{r-1}}{ZP_-P_+}
 \begin{cases}a_-^{-s}a_+^{-1},&\sigma=1,\\
                a_-^{s-1},&\sigma=-1.
 \end{cases}
\]
The interpolation identities give the exact defect
$\Delta_r=\rho C_-C_+D_0$, where
$C_\pm=\prod_j(1-a_\pm/u_{\pm,j})\in(0,1]$.
For $\sigma=1$, the pairs $(a_-,a_+)$ and $(b_s,a_+)$ belong to
opposite parity measures and each contributes at least $D_0$ to its
$r$th moment. For $\sigma=-1$, use $(a_-,a_+)$ and $(a_-,b_s)$.
Their count means are at most $C\zeta/L<1$: for the first sign use
$\theta Ba_c=c_mLa_c\le C\zeta/L$, and for the second use
$\theta Bb_s=c_mLb_s\le C\zeta/L$.
Thus $(\theta U)^j\ge(\theta U)^r$ at these atoms for $j\le r$,
and $e^{-\theta U}$ is bounded below. Consequently
\[
 |\Delta_r|\le\rho D_0,\qquad
 q_{ij}\ge c\theta^rD_0/j!\quad(1\le j\le r).
\]

\textbf{Step 6: bound the power moments needed for normalization.}
The same atoms also control $A_i$ and the second moment of
$U_i^\alpha$. For all $r\ge1$, put
$J=\int u^\beta d\lambda_-^+=\int u^\beta d\lambda_-^-$.
The equality follows from low-factor interpolation, including its
possible perturbation. The exact factorization is
\[
 Z/B^\beta=J\int(a_+/w)^\beta d|\lambda_+|(w),
 \qquad c\le\int(a_+/w)^\beta d|\lambda_+|(w)\le2.
\]
The bounds on the last integral follow from the principal atom and
$a_+/w\le1$.
If $\sigma=1$ and $r\ge2$, its low exponent is $s$ and $\beta-s=r-1\ge1$.
The near node contributes at most $b_s^{r-1}\le C\zeta$, and all
other nodes have total coefficient $O(\zeta)$. The endpoint $1$ has
coefficient at least $b_s/2$; either Jordan sign can be used because
their $\beta$ moments both equal $J$. Dividing by $P_s\asymp a_s^{-s}$ gives
\[
 c a_s^s\zeta/r^2\le J\le C a_s^s\zeta,
 \qquad D_0\le Cr^2/\zeta.
\]
For $r=1$, the low factor has degree zero and exactly
$J=a_s^s$. Since $\zeta\asymp_\varepsilon1$ here, the same two-sided
bound on $J$ holds with $\varepsilon$-dependent constants. No defect
denominator is needed because all moments match.
If $\sigma=-1$, the low factor is complementary and
$\beta-\xi=r-2\ge0$. Its absolute nodal coefficient sum is bounded,
and the endpoint coefficient is at least $b_c/2$ (or one for degree
zero). Since $a_c^\xi=\exp(-5\xi)b_c^\xi\zeta\kappa$,
\[
 e^{-C_\varepsilon r}\zeta\kappa\le J\le C\zeta\kappa.
\]
The endpoint pair $(1,a_+)$ also gives
\[
 D_0\le\frac{C}{a_-|\ell_1(a_-)|}
       \le\frac{e^{C_\varepsilon r}}\zeta.
\]
Here $a_-^{-1}=\exp(5)b_c^{-1}(\zeta\kappa)^{-1/\xi}$ and
$\zeta^{-s/\xi}\le C_\varepsilon s^{-s/\xi}\le3C_\varepsilon$.

It follows in every degree that
\[
 D_0\le e^{C_\varepsilon r}/\zeta\quad(r\ge2),
 \qquad A_i\le c_n^{-s}e^{C_\varepsilon r}/\zeta.
\]
For the second inequality when $\sigma=1$ and $r\ge2$, use
$a_s^s=\kappa t b_s^s$, the lower bound on $t$, and
$b_s^{-s}\le e^{Cr}$; the other sign follows directly from $J$. At degree zero use $J=a_s^s$ and the established bound on $R_s$.

All degrees satisfy $Z/B^\beta\le C_\varepsilon\zeta\kappa$ and
\[
 \int z^\alpha d(\mu_0+\mu_1)(z)\le C_\varepsilon\zeta\kappa.
\]
Indeed $z=a_-w/u\le w$. For a complementary high factor, its total
nodal mass is $O(\zeta\kappa)$ and its principal atom satisfies
$a_c^\alpha\le C\zeta\kappa$. For an active high factor, the coefficient
bounds give $\int w^\alpha d|\lambda_+|\le C a_s^s\zeta\le C\zeta\kappa$;
the low factor's total variation is at most two. Using
$M_1=1+R_sR_c$, $M_0=R_s+R_c\ge c_{\varepsilon,c_n}\kappa$, the exact identity
\[
 \frac{\mathbb EU_i^{2\alpha}}{A_i^2}
 =B^\alpha\frac{Z}{B^\beta}
                 \frac{\int z^\alpha d\mu_i}{M_i^2}
\]
proves
\[
 \frac{\mathbb EU_i^{2\alpha}}{A_i^2}
 \le C_{\varepsilon,c_n}\zeta^2B^\alpha.
\]

\textbf{Step 7: show that the count experiments are close.}
There are two contributions. The first is the controlled defect in
moment $r$. The second is the remainder after matching all other
moments through degree $L+1$. We bound them separately and then use
the fingerprint-transfer lemma.
Lemma~\ref{cal:lem:one-defect} bounds $kD$ by a constant times
\[
 \frac{e^{C_\varepsilon r}k\theta^r\rho^2}{\zeta r!}
 +k(A_0+A_1)B^{-\alpha}
   (2\mathrm e\theta B/(L+2))^{L+2},
\]
with the first term omitted at $r=1$.
If $\rho>0$, then $r\ge2$, $\beta/\alpha\in[1/3,1)$ and
$v\ge c_\varepsilon s$. Thus
$T\le\log(1/s)+C_\varepsilon$, and the root identity implies
$sK/\alpha\le C_\varepsilon r$. Since $\rho\le C_\varepsilon v$,
\[
 \frac{e^{C_\varepsilon r}k\theta^r\rho^2}{\zeta r!}
 \le(C_\varepsilon c_n)^r\frac v\zeta
       \left(k e^{-T}\right)^{\sigma s/\alpha}
 \le(C_\varepsilon'c_n)^r.
\]
Here $v/\zeta$ is uniformly bounded; the absolute logarithm of the
last power is at most $(s/\alpha)(K+T)\le C_\varepsilon r$.
For the remainder, use
\[
 B^\alpha/k=e^{-T}(c_mL/(c_n\alpha))^\alpha,\qquad
 \theta B=c_mL.
\]
Uniformly in all orders $\zeta\ge c_\varepsilon e^{-T}$:
indeed $1-e^{-E}\le\min\{1,2\varepsilon\beta\}$ and
$v=(\beta/\alpha)e^{-T}$. Keeping the favorable factor
$c_n^{\alpha-s}\le1$, the prefactor has logarithm at most
$2T+C_\varepsilon\alpha$. Thus, by taking $c_m$ small and $C_L$
large before decreasing $c_n$, the remainder is at most $e^{-cL}$.
One then fixes $c_n$ and increases $X_0$ to make $kD$ as small as
required by the common testing lemma.

\textbf{Step 8: show that normalizing the weights does not destroy the gap.}
We now return to the residual $U_i^\alpha-A_iU_i$ from the first
lemma. Controlling this residual, rather than $U_i^\alpha$ alone,
preserves the factor $\alpha-1$ needed near order one.
The normalization margin is $m=w<1/2$. We use the following elementary
bound, which automatically removes the order-one singularity.
Since $U_i\in\{0\}\cup[1,B]$ and $\mathbb EU_i=1$, Jensen and
$U_i^\alpha\le B^\beta U_i$ give $1\le A_i\le B^\beta$.
For $u\in[1,B]$, therefore,
\[
 |u^\beta-A_i|
 \le\max\{u^\beta,A_i\}\min\{1,\beta\log B\}.
\]
The zero atom contributes no residual. Consequently, with
$d=\min\{1,\beta\log B\}$,
\[
 \frac{\mathbb E(U_i^\alpha-A_iU_i)^2}{A_i^2}
 \le d^2\left(\frac{\mathbb EU_i^{2\alpha}}{A_i^2}
                      +\mathbb EU_i^2\right)
 \le C_{\varepsilon,c_n}d^2(\zeta^2B^\alpha+B).
\]
The common normalization failure probability is therefore bounded by
\begin{equation}\label{cal:eq:uniform-failure}
 C_{\varepsilon,c_n}\left[
 \frac Bk\left(1+\frac{\beta^2}{w^2}\right)
 +\frac{d^2}{kw^2}(\zeta^2B^\alpha+B)\right].
\end{equation}
This is the same variance bound for every noninteger order.

Here are its uniform numerical bounds. The root equation and
$\Psi/(\alpha^2\beta^2)\le1$ give
\[
 K/\alpha\ge(E-v)/h\ge\varepsilon\beta/h,\qquad
 \beta/w\le K/(\varepsilon\alpha).
\]
Also, after fixing $c_n$ and increasing $X_0$,
\[
 \log(B^\alpha/k)
 =-T+\alpha\log(c_mL/(c_n\alpha))\le-T/2.
\]
At positive active degree, $\alpha\ge3/2$, $\beta\ge1/2$ and
$\zeta/w\asymp_\varepsilon1$.
Thus \eqref{cal:eq:uniform-failure} is at most
\[
 C_{\varepsilon,c_n}\bigl(K^2e^{-K/3}+e^{-T/2}\bigr),
\]
which is as small as needed by the final choices of $X_0$ and $k$.
At degree zero, the same root satisfies
$x\asymp_\varepsilon k^{1/\alpha}/K$, so $T\asymp K$ and
$B\asymp_{\varepsilon,c_n}K^2$.
Also $w\asymp\beta$, $\zeta\asymp_\varepsilon1$, and
$d\le\beta\log B$. Inserting these inequalities into the same
formula \eqref{cal:eq:uniform-failure} bounds it by
\[
 C_{\varepsilon,c_n}(B^\alpha+B)(1+\log B)^2/k=o(1),
\]
uniformly, since $\alpha\le3/2$ at degree zero.

\textbf{Step 9: complete the reduction to a fixed sample size.}
Support overflow and sample shortage vanish uniformly.
For positive degree, $\theta k\ge c_n\alpha k^{1-1/\alpha}\to\infty$;
at degree zero $\theta k\asymp_{\varepsilon,c_n} k/K$.
The exact parity gap is at least $2\varepsilon\beta+w$.
Thus the single testing transfer
Lemma~\ref{unif:lem:fingerprint-transfer} applies at
$n=\lfloor\theta k/8\rfloor\asymp_\varepsilon R$.

\textbf{Step 10: verify the root estimate used at the order-one endpoint.}
This final elementary estimate justifies the degree-zero assertions
used earlier in the proof. Note that
\[
 \log(\Psi/h^2)=(\gamma/\delta)\log\gamma
                  +(\delta/\gamma)\log\delta\in[-2,0].
\]
At degree zero $\alpha=1+\beta\le3/2$, whence
$Q^2\asymp_\varepsilon k^{1/\alpha}$ and
$\alpha h/\beta=1-\beta^2\in[3/4,1]$.
Let $y(1+\log y)=Q^2/\mathrm e$. Then
$y\asymp_\varepsilon k^{1/\alpha}/K$, and the root criterion gives
$y\le x\le2y$ for large $k$. These statements also show that the
bounded-root case occurs only at positive active degree. This completes
the proof with constants depending only on the fixed accuracy.
\end{proof}

\paragraph{What has been proved.}
The lower bound is quantitatively evaluated by one regularized-root
parameter rule, one tensor construction, and one statistical transfer.
The degree-zero interpolation factor receives its exact coefficient
bound within the same argument. This section
concerns the lower bound. The matching upper approximation and
noise theorem is proved in Section~\ref{cu:section}.

\subsection{A complete fixed-real-order specialization}
\label{sec:fixed-real-closure}

This specialization allows arbitrary fixed accuracy and a fixed
confidence level. Its upper bound is a consequence of the common
estimator, while its lower proof records the larger fixed separation
needed for the additive-function transfer.

\begin{theorem}[Matching bounds at every fixed noninteger order]
\label{fixed:thm:renyi}
Fix a noninteger $\alpha>1$, $\varepsilon>0$, and
$0<\delta<1/2$. Then, as $k\to\infty$,
\[
 S_\alpha(k,\varepsilon,\delta)
       =\Theta_{\alpha,\varepsilon,\delta}
                    \left(\frac{k}{\log k}\right).
\]
The upper bound follows from Theorem~\ref{main:theorem}, including
its separate near-one construction. The lower bound is a specialization of the common
tensor prior and Poisson likelihood certificate, with no restriction
that the fixed accuracy be smaller than $1/2$.
\end{theorem}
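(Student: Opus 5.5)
The plan is to handle the upper and lower bounds separately, reducing each to machinery already in the paper. The only new work is converting arbitrary fixed $(\varepsilon,\delta)$ into the regime those results cover.

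\textbf{Upper bound.} Fix noninteger $\alpha>1$.
\begin{enumerate}[leftmargin=*]
\item For $k$ large, $\alpha<\log k$. For fixed $\alpha$ the root satisfies $X\asymp_{\alpha,\varepsilon}k^{1/\alpha}/\log k$, by the computation after the proof of Theorem~\ref{main:theorem}. Hence $R_\alpha\asymp k/\log k$.
\item Accuracy $\varepsilon\ge1/2$ is no harder than accuracy $1/4$, so Theorem~\ref{main:theorem} gives failure $1/3$ at $O_\varepsilon(k/\log k)$ samples.
\item To reach failure $\delta<1/2$, I take the median of $O(\log(1/\delta))$ independent repetitions. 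A standard Chernoff bound on the number of good runs suffices, and the sample cost grows only by a constant factor depending on $\delta$.
\end{enumerate}

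\textbf{Lower bound, setup.} The goal is to rerun Theorem~\ref{cal:thm:lower} with general $\varepsilon$ and with the testing lemma's thresholds adjusted to $\delta$.
\begin{enumerate}[leftmargin=*]
\item In Lemma~\ref{unif:lem:fingerprint-transfer}, replace $1/3$ by $\delta$. I then require $\sqrt{bD}$, the normalization failure, the alphabet overflow, and the sample shortage to sum to less than $(1-2\delta)/2$. All four are small fixed constants, so the lemma survives with constants depending on $\delta$.
\item The separation requirement is $\log(A_1/A_0)\ge 2\varepsilon\beta+m$ with a fixed margin $m$. In the tensor construction, Lemma~\ref{unif:lem:tensor-priors} gives the exact gap $(1+R_-R_+)/(R_-+R_+)$. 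It therefore suffices to make $R_\pm$ small multiples of $e^{-2\varepsilon\beta-C}$.
\item With $\alpha$ fixed, I take $\gamma$ and $\delta_{\rm frac}$ bounded away from $0$ and $1$. Both factors are unperturbed ($\rho_\pm=0$), so all integer moments through $L+1$ match.
\item The Chebyshev-node bound of Lemma~\ref{cal:lem:nodes} yields negative masses $\lesssim z$ with $a\asymp b z^{1/\xi}/L^2$. Choosing $z=c\,e^{-2\varepsilon\beta-C}$ gives the required gap for any fixed $\varepsilon$, however large. The only effect is a fixed constant in the ratio $a/b$, i.e.\ in $B$.
\end{enumerate}

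\textbf{Lower bound, calibration.} I take $L\asymp\log k$, $\theta\asymp k^{-1/\alpha}X$ so that $\theta k\asymp k/\log k$, and $B=c_mL/\theta$. In the fully matched case, Lemma~\ref{cal:lem:one-defect} bounds $kD$ by
\[
 k(A_0+A_1)B^{-\alpha}(2e\,c_m L/(L+2))^{L+2}.
\]
This is at most $k\,e^{C}\,e^{-cL}$ when $c_m$ is small. Choosing $L=C'\log k$ with $C'$ large makes it $o(1)$; the constant $e^{C}$ depends on $\varepsilon$ only through $z$.

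The constraint on $B$ is that normalization must concentrate. By Lemma~\ref{unif:lem:normalization} this needs $B^\alpha/k$ and $B/k$ small, so I need $B\ll k^{1/\alpha}$. Since $B\asymp L/\theta\asymp k^{1/\alpha}L/X$, I choose $X\asymp c\,k^{1/\alpha}/\log k$ with a small constant $c$. Then $B\asymp k^{1/\alpha}/c$ up to logarithmic factors.

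\textbf{Main obstacle.} These choices must be reconciled. Making $B$ smaller forces $\theta$ to be smaller, while $L\asymp\log k$ is needed for the Taylor remainder. The balance $\theta B=c_mL$ with $B^{\alpha}\ll k$ gives
\[
 \theta\gtrsim L\,k^{-1/\alpha},\qquad n\asymp \theta k\gtrsim k^{1-1/\alpha}\log k,
\]
which is far below $k/\log k$. So I need the opposite choice: $\theta\asymp 1/\log k$ and $B\asymp\log^2 k$, which keeps $B^\alpha/k\to0$ trivially. The real obstacle is then the remainder. With $\theta B=c_mL$ and $A_iB^{-\alpha}$ bounded polynomially in $L$, the factor $(2ec_m)^{L}$ must beat $k$. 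This holds once $L=C'\log k$ with $c_m$ small. I expect the delicate check to be that the prior atoms $a\asymp bz^{1/\xi}/L^2$ remain compatible with support $[1/B,1]$ under these logarithmic scalings, with all $\varepsilon$- and $\delta$-dependence confined to fixed constants.
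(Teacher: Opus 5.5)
Your plan is correct and is essentially the paper's proof. The upper bound is identical: Theorem~\ref{main:theorem} at accuracy $\min\{\varepsilon,1/4\}$, the fixed-order asymptotics $X\asymp k^{1/\alpha}/\log k$, and median amplification. The lower bound uses the same ingredients as the paper:
\begin{itemize}
\item unperturbed tensor priors with negative masses at most $\kappa=\tfrac12e^{-2\varepsilon\beta-2}$, so $\log(A_1/A_0)\ge2\varepsilon\beta+2$ for every fixed $\varepsilon$;
\item $L=\lceil C_2\log k\rceil$, $B=(a_-a_+)^{-1}=\Theta_{\alpha,\varepsilon}(L^2)$, and $\theta=c_3L/B\asymp1/\log k$;
\item the fully matched count bound $kD\le CkB^{-\alpha}(2ec_3)^{L+2}=o(1)$ and normalization failure $O((B^\alpha+B)/k)$;
\item the fingerprint transfer at $\lfloor\theta k/8\rfloor$.
\end{itemize}

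The only genuine difference is how you reach failure probability $\delta$. The paper proves the lower bound at failure $1/3$ and transfers it to any $\delta<1/2$ by constant-size median amplification. You instead tighten the thresholds of Lemma~\ref{unif:lem:fingerprint-transfer}. Total variation and all three failure events are $o(1)$, so your requirement $\TV+\text{fail}<(1-2\delta)/2$ holds for large $k$. This is equally valid and avoids amplification.

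Three corrections to the write-up:
\begin{itemize}
\item \emph{Calibration arithmetic.} The middle of your calibration is wrong: $B\asymp L/\theta\asymp Lk^{1/\alpha}/X\asymp\log^2k$, not $k^{1/\alpha}/c$. Also, at fixed $L$ a smaller $B$ forces a larger $\theta$, so $\theta\gtrsim Lk^{-1/\alpha}$ is a harmless lower bound, not an obstacle. You nonetheless end at the right choice.
\item \emph{Support check.} The ``delicate'' support check you anticipate is automatic. As in the paper, $B$ is defined from the atoms as $(a_-a_+)^{-1}$ and $\theta$ is set afterwards, so the laws live on $\{0\}\cup[1,B]$ by construction.
\item \emph{Node lemma.} The second part of Lemma~\ref{cal:lem:nodes} requires exponent at least $1/2$, which fails for one of the two factors unless $\gamma=1/2$. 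For that factor, use the first part's absolute bound $\sum_j|\ell_j(a)|\le M$ with $b_d=c_b/(d+1)^2$, and set $a_e=b_d(\kappa/(2M))^{1/e}$, exactly as the paper does.
\end{itemize}
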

\begin{proof}
The upper bound follows from Theorem~\ref{main:theorem}.
For a fixed noninteger order, $h,\Psi,\alpha,\beta$ are fixed,
so the root satisfies
$X\asymp_{\alpha,\varepsilon} k^{1/\alpha}/\log k$.
For arbitrary fixed accuracy use
$\varepsilon'=\min\{\varepsilon,1/4\}$; this is only a
fixed-order specialization, and no such accuracy substitution is made
in the uniform theorem. Independent repetitions and a median give any
fixed failure probability. Thus the same estimator supplies the upper
bound stated here.

For the lower bound first take failure probability $1/3$, and allow
any fixed $\varepsilon>0$. Write
$\alpha=m+\gamma$, $m\ge1$, $0<\gamma<1$, and put
$\xi=1-\gamma$ and $E=2\varepsilon(\alpha-1)$. Let
\[
 \kappa=\tfrac12e^{-E-2},\qquad L=\lceil C_2\log k\rceil,
 \qquad n_-=m-1,\quad n_+=L-m.
\]
For an interpolation degree $d\ge1$, use the Chebyshev extrema on
$[b_d,1]$, where $b_d=c_b/(d+1)^2$ and $c_b>0$ is a sufficiently
small absolute constant. The node lemma gives an absolute
$M\ge1$ such that $\sum_j|\ell_j(a)|\le M$ whenever
$0<a\le b_d/2$. At degree zero use the single node $1$ and the
scale $b_0=c_b$; the same bound holds.

For exponent $e\in\{\gamma,\xi\}$ and its corresponding
degree $d\in\{n_-,n_+\}$, set
\[
 a_e=b_d\left(\frac{\kappa}{2M}\right)^{1/e}.
\]
Form the unperturbed weighted interpolation factor
$u^{-e}(\delta_{a_e}-\sum_j\ell_j(a_e)\delta_{u_j})$ and
normalize its positive Jordan mass to one. Denote the resulting
low and high factors by $\lambda_-,\lambda_+$, with principal
points $a_-,a_+$ and negative masses $R_-,R_+$.
The total weighted nodal mass divided by the principal mass is at
most
\[
 M(a_e/b_d)^e=\kappa/2.
\]
This bound also holds at degree zero, since its only node is $1\ge b_0$.
In particular $0<R_-,R_+\le\kappa$, and each normalized factor
retains at least $1/(1+\kappa/2)$ of its positive mass at the
principal point.

Use the common tensor-prior construction with these two factors.
Its integer moments through $L+1$ all match, because both
perturbation parameters are zero. Moreover
\[
 \log(A_1/A_0)
 =\log\frac{1+R_-R_+}{R_-+R_+}
 \ge-\log(2\kappa)=E+2.
\]
Here $B=(a_-a_+)^{-1}=\Theta_{\alpha,\varepsilon}(L^2)$.
We record why $A_0,A_1$ are bounded above independently of $k$,
which is needed in the count estimate. The low factor has fixed
degree, fixed nodes, and fixed principal point. Since it annihilates
$u^{\alpha-1}$,
\[
 J:=\int u^{\alpha-1}\,d\lambda_-^+(u)
   =\int u^{\alpha-1}\,d\lambda_-^-(u)>0
\]
is a fixed constant. Factoring the tensor normalization gives exactly
\[
 \frac Z{B^{\alpha-1}}
 =J\int(a_+/w)^{\alpha-1}\,d|\lambda_+|(w).
\]
The last integral is between $1/(1+\kappa/2)$ and $1+\kappa$:
use the principal atom for the lower bound and $a_+/w\le1$ for
the upper bound. The exact tensor moment formulas now give
$1\le A_i\le C_{\alpha,\varepsilon}$, where the lower bound
also follows from the mean-one condition and Jensen's inequality.

Choose a small absolute $c_3>0$ and put
$\theta=c_3L/B=\Theta_{\alpha,\varepsilon}(1/\log k)$.
The common moment-matching count bound, with its defect term absent,
gives
\[
 kD\le C_{\alpha,\varepsilon}kB^{-\alpha}
        \left(\frac{2e\theta B}{L+2}\right)^{L+2}
 \le C_{\alpha,\varepsilon}kB^{-\alpha}(2ec_3)^{L+2}=o(1),
\]
by first choosing $c_3$ small and then $C_2$ large. Normalization
uses the same residual bound as before, with a fixed logarithmic
margin of one. Since $U_i\in[0,B]$, $\mathbb EU_i=1$, and
$A_i\ge1$,
\[
 \mathbb EU_i^2\le B,\qquad
 \frac{\mathbb E(U_i^\alpha-A_iU_i)^2}{A_i^2}
 \le2\left(\frac{\mathbb EU_i^{2\alpha}}{A_i^2}
                 +\mathbb EU_i^2\right)
 \le2(B^\alpha+B).
\]
Thus its failure probability is
$O_{\alpha,\varepsilon}((B^\alpha+B)/k)=o(1)$.
Support overflow and sample shortage also vanish because
$\theta k\asymp k/\log k\to\infty$. The common fingerprint
transfer therefore rules out an estimator at
$\lfloor\theta k/8\rfloor=\Omega_{\alpha,\varepsilon}(k/\log k)$.
Nothing in these choices restricts the fixed $E$, and hence the
lower bound holds for every fixed positive accuracy.

For a general fixed $\delta<1/2$, an estimator with success
$1-\delta$ can be repeated a constant number of times on independent
blocks and its median has failure at most $1/3$. The binomial
Chernoff bound proves this for a number of blocks depending only
on $\delta$. The preceding $1/3$ lower bound consequently gives
the asserted lower bound for every such $\delta$.
\end{proof}

\clearpage
\section{Logarithmic additive functionals and exact experiment transfer}
\label{app:additive}
\subsection{A real-order universality theorem for additive logarithmic targets}
\label{sec:additive-real-universality}

For a functional $T$ on $\Delta_k$, write $S_T(k,e,\delta)$ for the
least fixed number of iid observations permitting error at most $e$
with failure probability at most $\delta$, uniformly over $\Delta_k$.
Write $S_\alpha$ for the R\'enyi target
$H_\alpha(p)=-(\alpha-1)^{-1}\log\sum_i p_i^\alpha$.
The following reduction is independent of any evaluation of $S_\alpha$.
Its proof uses only the local leading-power assumption and exact sampling reductions.

\begin{lemma}[Dimension-free additive estimation]
\label{lem:dimension-free-additive}
Suppose $f:[0,1]\to\mathbb R$ is continuous, $f(0)=0$, and
$g(x)=f(x)/x$ extends continuously to zero. For every $d,\delta>0$
there is $N=N(f,d,\delta)<\infty$, independent of $k$, and a statistic
$\widetilde F$ based on $N$ iid observations such that
\[
 \sup_{p\in\Delta_k}\Pr_p\left\{
  |\widetilde F-\sum_i f(p_i)|>d\right\}\le\delta.
\]
\end{lemma}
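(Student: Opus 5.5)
The plan is to write $\sum_i f(p_i)=\sum_i p_i g(p_i)=\E_{X\sim p}\,g(p_X)$. This turns the additive functional into the expectation of a bounded function of the size-biased coordinate. A uniform polynomial approximation of $g$ then reduces the task to estimating finitely many power sums $P_j(p)=\sum_i p_i^{j+1}$, each of which has an unbiased dimension-free estimator.

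First, by Weierstrass choose a polynomial $R(x)=\sum_{j=0}^{L}r_jx^j$ with $\sup_{[0,1]}|g-R|\le d/2$. Both $L$ and the coefficients depend only on $f$ and $d$. Then
\[
 \Bigl|\sum_i f(p_i)-\sum_{j=0}^L r_jP_{j+1}(p)\Bigr|
 \le\sum_i p_i|g(p_i)-R(p_i)|\le d/2
\]
uniformly in $k$ and $p\in\Delta_k$, because $\sum_ip_i=1$. This step uses $f(0)=0$ and the continuous extension of $g$; no lower-order term in $x$ is lost. The term $j=0$ is the known constant $r_0$.

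Second, estimate each $P_{j+1}$ for $1\le j\le L$ with a fixed sample size. I would use the fixed-sample U-statistic: split $N$ observations into $\lfloor N/(L+1)\rfloor$ disjoint blocks of size $L+1$. In each block, the indicator that its first $j+1$ observations coincide has mean exactly $P_{j+1}(p)$ and lies in $[0,1]$. Averaging over blocks gives an unbiased estimator with variance at most $(L+1)/N$, independent of $k$.

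Alternatively one could use the Poisson factorial statistic $n^{-(j+1)}\sum_i(N_i)_{j+1}$ from Lemma~\ref{lem:factorial-energy-new}. Its variance is $\sum_{\ell}\binom{j+1}{\ell}^2\ell!\,n^{-\ell}P_{2j+2-\ell}$, and each $P_q\le1$, so the bound is again free of $k$. It would then be converted to a fixed sample by the overflow coupling of Section~\ref{sec:count-foundations}. The block version avoids that coupling, so I would present it.

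Third, set $\widetilde F=r_0+\sum_{j=1}^L r_j\widehat P_{j+1}$. By Chebyshev and a union bound over the $L$ terms, each $|\widehat P_{j+1}-P_{j+1}|\le d/(2L\max_j|r_j|+1)$ except with total probability at most $\delta$, once $N\ge C(L,\max|r_j|,d,\delta)$. On that event $|\widetilde F-\sum_if(p_i)|\le d$. Every constant depends only on $f,d,\delta$.

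The main obstacle is only bookkeeping: the Weierstrass coefficients can be huge, but they are fixed once $f$ and $d$ are fixed. The essential point is the first step, where the error is weighted by $p_i$ so that it sums to $d/2$ rather than $kd/2$.
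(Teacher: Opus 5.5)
Your proposal is correct and follows essentially the paper's proof: Weierstrass approximation of $g$, the $p_i$-weighted error summing to $d/2$ uniformly in $k$, and collision indicators on disjoint blocks averaged and controlled by Chebyshev. The only cosmetic difference is that the paper combines all degrees into one bounded block statistic $W=c_1+\sum_{j\ge2}c_j\mathbf{1}\{X_1=\cdots=X_j\}$ and applies Chebyshev once, which avoids your union bound over degrees.
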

\begin{proof}
Choose a polynomial $B$ satisfying $\|B-g\|_\infty\le d/2$ and put
$Q(x)=xB(x)=\sum_{j=1}^L c_jx^j$. Then
\[
 \left|\sum_i Q(p_i)-\sum_i f(p_i)\right|
 \le\frac d2\sum_i p_i=\frac d2.
\]
For a block $X_1,\ldots,X_L$ of iid observations let
\[
 W=c_1+\sum_{j=2}^L c_j\mathbf1\{X_1=\cdots=X_j\}.
\]
Its mean is $\sum_iQ(p_i)$ and $|W|\le C_Q:=\sum_j|c_j|$.
The average of $b$ independent blocks has variance at most $C_Q^2/b$.
Taking an integer $b\ge\max\{1,4C_Q^2/(\delta d^2)\}$ and applying Chebyshev proves the
claim with $N=bL$; if $Q$ is linear its aggregate is already known.
Rao--Blackwellization with respect to the counts replaces each collision
indicator by $\sum_i(N_i)_j/(N)_j$, the same factorial-count statistic
used elsewhere in the framework. Polynomial approximation therefore
proves the claim without a dimension-dependent distribution estimate.
\end{proof}

\begin{theorem}[Transfer for every fixed real leading order]
\label{thm:real-leading-order-transfer}
Let $\alpha>1$, $\beta>0$, and $f\in C([0,1])$ satisfy
\[
 f(0)=0,\qquad f(x)>0\quad(0<x\le1),\qquad
 \lim_{x\downarrow0}\frac{f(x)}{a x^\alpha}=1
 \quad\text{for some }a>0.
\]
Define $H_{f,\beta}(p)=-\beta^{-1}\log F_f(p)$,
$F_f(p)=\sum_i f(p_i)$. For every $\varepsilon>0$, every
$0<\zeta<\beta\varepsilon$, and every $0<\delta<1/2$, there are
integers $m,N<\infty$, independent of $k$, such that
\begin{align}
 S_\alpha\left(\lfloor k/m\rfloor,
       \frac{\beta\varepsilon+\zeta}{\alpha-1},\delta\right)
 &\le S_{H_{f,\beta}}(k,\varepsilon,\delta),
 \label{eq:real-additive-lower-transfer}\\
 S_{H_{f,\beta}}(k,\varepsilon,\delta)
 &\le N+S_\alpha\left(k,
       \frac{\beta\varepsilon-\zeta}{\alpha-1},\delta/2\right).
 \label{eq:real-additive-upper-transfer}
\end{align}
The lower inequality is understood for $k\ge m$.
The constants $m,N$ may depend on $f,\alpha,\beta,\varepsilon,
\zeta,\delta$. No uniformity in a changing order or changing function
is asserted.
\end{theorem}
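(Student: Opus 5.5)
We prove both inequalities by exact sampling reductions combined with the deterministic leading-power band, following the outline of Section~\ref{sec:functional-main}. First fix $\tau\in(0,1]$ such that $e^{-\zeta}ax^\alpha\le f(x)\le e^{\zeta}ax^\alpha$ for $0\le x\le\tau$. This is possible by the limit hypothesis. Then for any $q\in\Delta_{k'}$ with $\max_iq_i\le\tau$ we have $|\log F_f(q)-\log a-\log P_\alpha(q)|\le\zeta$.

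\textbf{Lower transfer \eqref{eq:real-additive-lower-transfer}.} Choose an integer $m\ge1/\tau$ and put $k'=\lfloor k/m\rfloor$. For $p\in\Delta_{k'}$, let $q\in\Delta_k$ be the replicated vector: it gives mass $p_i/m$ to each of $m$ copies of symbol $i$, and zero to the $k-mk'$ unused symbols. An iid sample from $q$ has exactly the law of an iid sample from $p$ with independent uniform copy labels attached. So, given an estimator $\widehat H$ for $H_{f,\beta}$ on $\Delta_k$ using $n$ samples, we can simulate it from $n$ samples of $p$ by drawing these labels. Since every $q_j\le1/m\le\tau$ and $P_\alpha(q)=m^{1-\alpha}P_\alpha(p)$, we get
\[
 \bigl|\log F_f(q)-\log(am^{1-\alpha})-\log P_\alpha(p)\bigr|\le\zeta .
\]
Therefore $T=(\beta\widehat H+\log(am^{1-\alpha}))/(\alpha-1)$ satisfies $|T-H_\alpha(p)|\le(\beta\varepsilon+\zeta)/(\alpha-1)$ whenever $|\widehat H-H_{f,\beta}(q)|\le\varepsilon$. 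The failure probability is unchanged, which gives the inequality.

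\textbf{Upper transfer \eqref{eq:real-additive-upper-transfer}.} Positivity and continuity give $c_\tau=\min_{[\tau,1]}f>0$. Moreover $g=f/x$ extends continuously to $0$ with value $0$, because $\alpha>1$, so Lemma~\ref{lem:dimension-free-additive} applies. With $N$ fresh samples it yields $\widetilde F$ satisfying $|\widetilde F-F_f(p)|\le d$ with failure at most $\delta/2$, where $d$ is a small constant chosen below.
\begin{itemize}
\item If $\widetilde F\ge 2d'$ for a threshold $d'$ fixed below, then on the good event $F_f\ge d'$. We output $-\beta^{-1}\log\widetilde F$, whose error is at most $\beta^{-1}\log(1+d/d')$.
\item If $\widetilde F<2d'$, then on the good event $F_f<2d'+d$. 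Taking $2d'+d<c_\tau$ forces every $p_i<\tau$, since a single coordinate $p_i\ge\tau$ would give $F_f\ge c_\tau$. We then run an $H_\alpha$-estimator with accuracy $(\beta\varepsilon-\zeta)/(\alpha-1)$ and failure $\delta/2$ on an independent sample, and output $\bigl((\alpha-1)\widehat H_\alpha-\log a\bigr)/\beta$. The band gives error at most $\bigl((\beta\varepsilon-\zeta)+\zeta\bigr)/\beta=\varepsilon$.
\end{itemize}
The constants are fixed in the order $d'$ first (subject only to $3d'<c_\tau$), then $d\le d'$ small enough that $\beta^{-1}\log(1+d/d')\le\varepsilon$ and $2d'+d<c_\tau$.

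The failure probabilities add to at most $\delta$. The two independent sample blocks have total size $N+S_\alpha(\cdot)$. Because the branch is chosen by the first block, the second block remains independent, so no conditioning issue arises.

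\textbf{Main obstacle.} The delicate point is the case split in the upper bound. The threshold must be chosen so that the absolute estimate is used only where it already gives logarithmic accuracy, and so that the small case certifies $\max_ip_i<\tau$ purely from $\widetilde F$. This is exactly where positivity of $f$ on $(0,1]$ is needed, through $c_\tau>0$. The remaining details are routine bookkeeping: the unused coordinates in the replication, whose mass is zero, and the fact that the $H_\alpha$ bounds are invoked at arbitrary fixed accuracy, as provided by the cited results.
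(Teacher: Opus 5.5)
Your proposal is correct and follows the paper's proof essentially step for step. For \eqref{eq:real-additive-lower-transfer} you use the same replication with $m\ge1/\tau$ and uniform copy labels. For \eqref{eq:real-additive-upper-transfer} you use the same two-branch estimator, built from Lemma~\ref{lem:dimension-free-additive} and the positivity constant $\min_{[\tau,1]}f$. The paper's concrete choice is threshold $b/2$ and tolerance $\frac b4(1-e^{-\beta\varepsilon})$, which corresponds to your $d'=b/4$. Your first-branch error bound $\beta^{-1}\log(1+d/d')$ is valid for $d\le d'$, but only if you use $\widetilde F\ge 2d'$; the weaker fact $F_f\ge d'$ alone yields only $-\beta^{-1}\log(1-d/d')$.
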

\begin{proof}
Choose $\tau\in(0,1)$ sufficiently small that
\begin{equation}\label{eq:local-power-band}
 e^{-\zeta}a x^\alpha\le f(x)\le e^{\zeta}a x^\alpha
 \qquad(0\le x\le\tau).
\end{equation}

For the lower bound choose an integer $m\ge1/\tau$ and put
$d=\lfloor k/m\rfloor$. Given $p\in\Delta_d$, form its replicated
distribution on $md$ symbols,
\[
 q_{ij}=p_i/m,\qquad 1\le i\le d,\quad1\le j\le m,
\]
and pad with zeros to reach $k$ symbols. Sampling from $q$ requires
one sample $i$ from $p$ and an independent uniform label $j$; the latter
contains no information about $p$. Summing \eqref{eq:local-power-band}
gives
\[
 \left|\log F_f(q)-\log\left(a m^{1-\alpha}
                       \sum_i p_i^\alpha\right)\right|\le\zeta.
\]
Thus any estimator $\widehat H_f$ with error $\varepsilon$ yields
\[
 \widehat H_\alpha=
 \frac{\beta}{\alpha-1}\widehat H_f+
 \frac{\log(a m^{1-\alpha})}{\alpha-1}
\]
with error at most $(\beta\varepsilon+\zeta)/(\alpha-1)$ and
the same failure probability. This proves
\eqref{eq:real-additive-lower-transfer}. The construction consists
of exactly normalized probability vectors; no random-weight
normalization or concentration of a prior functional is required.
Moreover, the same equivalence holds exactly in the Poisson experiment:
independent counts of means $np_i/m$ aggregate to independent counts of
means $np_i$, and multinomial splitting reconstructs the finer counts.
For a positive reference $p^*$, replication preserves the likelihood
quadratic form itself:
\[
 \sum_{i,j}\frac{(q_{ij}-q_{ij}^*)(\widetilde q_{ij}-q_{ij}^*)}
                     {q_{ij}^*}
 =\sum_i\frac{(p_i-p_i^*)(\widetilde p_i-p_i^*)}{p_i^*}.
\]
Thus every Poisson-kernel testing certificate for the base power target
is transported without statistical loss; only the deterministic
functional approximation costs the displayed $\zeta$.

For the upper bound let $b=\min_{x\in[\tau,1]}f(x)>0$ and set
\[
 d_0=\frac b4(1-e^{-\beta\varepsilon}).
\]
Since $f(x)/x\to0$, Lemma~\ref{lem:dimension-free-additive} gives
an $N$-sample statistic $\widetilde F$ with
$|\widetilde F-F_f|\le d_0$ except on an event of probability
$\delta/2$, uniformly in $k$. Independently run a R\'enyi estimator
with error $(\beta\varepsilon-\zeta)/(\alpha-1)$ and failure
$\delta/2$. Output
\[
 \widehat H_f=
 \begin{cases}
  -\beta^{-1}\log\widetilde F,&\widetilde F\ge b/2,\\[1mm]
  ((\alpha-1)\widehat H_\alpha-\log a)/\beta,
       &\widetilde F<b/2.
 \end{cases}
\]
On the simultaneous success event, in the first branch
$F_f\ge b/2-d_0>b/4$, so
\[
 \left|\frac{\widetilde F}{F_f}-1\right|
 \le4d_0/b=1-e^{-\beta\varepsilon}.
\]
The resulting ratio lies between $e^{-\beta\varepsilon}$ and
$2-e^{-\beta\varepsilon}\le e^{\beta\varepsilon}$, proving
the required logarithmic accuracy. In the second branch
$F_f\le b/2+d_0<3b/4<b$. No coordinate can then be at least
$\tau$, because such a coordinate alone contributes at least $b$.
Consequently \eqref{eq:local-power-band} applies to every coordinate
and
\[
 |\log F_f-\log(aP_\alpha)|\le\zeta.
\]
The error of the displayed output is at most
$(\beta\varepsilon-\zeta+\zeta)/\beta=\varepsilon$.
A union bound proves \eqref{eq:real-additive-upper-transfer}.
\end{proof}

\begin{corollary}[Consequence of any fixed-order minimax evaluation]
Suppose a rate $R_\alpha(k)$ satisfies
$S_\alpha(k,e,\delta)=\Theta_{\alpha,e,\delta}(R_\alpha(k))$
for every fixed $e>0$, $0<\delta<1/2$, and
$R_\alpha(\lfloor k/m\rfloor)=\Theta_{\alpha,m}(R_\alpha(k))$
for every fixed positive integer $m$. If $R_\alpha(k)\to\infty$,
then every $f$ in Theorem~\ref{thm:real-leading-order-transfer}
satisfies
\[
 S_{H_{f,\beta}}(k,\varepsilon,\delta)
 =\Theta_{f,\alpha,\beta,\varepsilon,\delta}(R_\alpha(k)).
\]
In particular, the integer evaluation proved above gives
$\Theta(k^{1-1/r})$ for any $C^r$ function positive on $(0,1]$ satisfying
$f(0)=\cdots=f^{(r-1)}(0)=0<f^{(r)}(0)$, with $r\ge2$.
For the integer rate, arbitrary fixed accuracy follows from a
smaller-accuracy upper bound and the collision lower bound at that accuracy.
Constant-size median amplification compares any two fixed failure
probabilities in $(0,1/2)$ without changing the alphabet order.
For every fixed noninteger $\alpha>1$, the fixed-order consequence in
Section~\ref{sec:fixed-real-closure} also gives the unconditional
additive rate $\Theta(k/\log k)$. These fixed-order constants are
not uniform as the leading order moves toward an integer.
\end{corollary}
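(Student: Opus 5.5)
The plan is to apply Theorem~\ref{thm:real-leading-order-transfer} with $\zeta=\beta\varepsilon/2$. Both transfers change the R\'enyi accuracy to a fixed positive amount, $e_\pm=(\beta\varepsilon\pm\zeta)/(\alpha-1)$, and the failure probability to at most $\delta/2$. Every other quantity is preserved. For $k$ large, the lower transfer gives
\[
 S_{H_{f,\beta}}(k,\varepsilon,\delta)\ge S_\alpha(\lfloor k/m\rfloor,e_+,\delta)\ge c\,R_\alpha(\lfloor k/m\rfloor)\ge c'R_\alpha(k).
\]
The second inequality uses the assumed fixed-order evaluation at accuracy $e_+$ and confidence $\delta$. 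The third uses the hypothesis $R_\alpha(\lfloor k/m\rfloor)=\Theta_{\alpha,m}(R_\alpha(k))$, which applies because $m$ depends only on $f,\alpha,\beta,\varepsilon,\zeta,\delta$. The upper transfer gives
\[
 S_{H_{f,\beta}}\le N+S_\alpha(k,e_-,\delta/2)\le N+CR_\alpha(k).
\]
Since $N$ is independent of $k$ and $R_\alpha(k)\to\infty$, $N$ is absorbed into the constant. This proves the general $\Theta$ statement.

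For the integer consequence, take $f\in C^r$ positive on $(0,1]$ with vanishing derivatives through order $r-1$. Taylor's theorem gives $f(x)/(ax^r)\to1$ with $a=f^{(r)}(0)/r!>0$, so $f$ satisfies the hypotheses of Theorem~\ref{thm:real-leading-order-transfer} with $\alpha=r$. I then need $S_r(k,e,\delta)=\Theta_{r,e,\delta}(k^{1-1/r})$ for every fixed $e>0$ and $\delta\in(0,1/2)$.
\begin{itemize}
\item The upper bound comes from Theorem~\ref{tutorial:integer-raw} at accuracy $\min\{e,1/2\}$, since a smaller accuracy is harder. Median amplification over a constant number of blocks takes failure $1/3$ down to any fixed $\delta$.
\item The lower bound comes from Theorem~\ref{thm:iid-collision-new}. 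Its proof works for arbitrary fixed $\varepsilon$, because $B=e^{2\varepsilon+2}$ is free, and gives $\Omega(k^{1-1/r})$ at failure $1/3$. Repetition and a median transfer this to any fixed $\delta<1/2$.
\item The regularity condition $R_r(\lfloor k/m\rfloor)\asymp R_r(k)$ is immediate for $k^{1-1/r}$.
\end{itemize}

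For fixed noninteger $\alpha$, Theorem~\ref{fixed:thm:renyi} supplies $\Theta(k/\log k)$ for every fixed accuracy and confidence. Its lower bound explicitly allows arbitrary $\varepsilon$, and $k/\log k$ is stable under $k\mapsto\lfloor k/m\rfloor$. This yields the unconditional additive rate.

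The only delicate point is bookkeeping of quantifiers. The fixed-order evaluation must be available at the shifted accuracies $e_\pm$, which may exceed $1/2$, and at confidence $\delta/2$. I expect this to be the main obstacle, and I would handle it by citing the arbitrary-accuracy statements above rather than the uniform Theorem~\ref{main:theorem}, which assumes $\varepsilon<1/2$.
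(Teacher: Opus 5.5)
Your proposal is correct and follows essentially the paper's route. You apply Theorem~\ref{thm:real-leading-order-transfer} with $\zeta=\beta\varepsilon/2$, absorb $N$ via $R_\alpha(k)\to\infty$, and use replication stability of the rate, taking the fixed-order inputs from Theorems~\ref{tutorial:integer-raw} and~\ref{thm:iid-collision-new} (accuracy reduction plus median amplification) in the integer case and from Theorem~\ref{fixed:thm:renyi} in the noninteger case; one wording slip: the lower transfer keeps failure probability $\delta$ and only the upper one uses $\delta/2$, which your displayed inequalities already reflect.
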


\begin{corollary}[Complete alphabet dependence for positive analytic targets]
\label{cor:analytic-additive-classification}
Suppose $f$ is continuous on $[0,1]$, analytic on a neighborhood of
zero, and positive on $(0,1]$. Let $r\ge0$ be the first index with
$f^{(r)}(0)\ne0$, assuming $f(0)\ge0$. The ambient alphabet size $k$ is known; when $f(0)>0$, zero
coordinates are included in the sum. For fixed
$\beta,\varepsilon>0$ and failure probability $1/3$:
\begin{enumerate}
\item If $r=0$, no observations are needed for all sufficiently large $k$.
\item If $r=1$, $O_{f,\beta,\varepsilon}(1)$ observations suffice.
For every $k$ at which the range of $H_{f,\beta}$ exceeds
$2\varepsilon$, at least one observation is needed; otherwise its
range midpoint is a zero-sample estimator.
\item If $r\ge2$, the complexity is
$\Theta_{f,\beta,\varepsilon}(k^{1-1/r})$.
\end{enumerate}
\end{corollary}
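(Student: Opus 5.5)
The proof will treat the three cases separately, writing $f(x)=\sum_{j\ge r}c_jx^j$ near zero with $c_r=f^{(r)}(0)/r!\neq0$. Positivity on $(0,1]$ forces $c_r>0$ whenever $r\ge1$. The order $r\ge2$ case follows from existing results; the cases $r\le1$ are elementary and do not depend on sample complexity.

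\emph{Case $r=0$.} Here $f(0)>0$ and $f>0$ on $(0,1]$, so $m_f=\min_{[0,1]}f>0$. Since every coordinate contributes, $F_f(p)\ge k\,m_f$. I will bound the oscillation of $\log F_f$ over $\Delta_k$. Write $F_f(p)=kf(0)+\sum_i g(p_i)$ with $g=f-f(0)$. Analyticity at zero and continuity give $|g(x)|\le C_fx$ on $[0,1]$. Hence $|\sum_i g(p_i)|\le C_f$, and
\[
 |\log F_f(p)-\log(kf(0))|\le 2C_f/(kf(0))
\]
for large $k$. Consequently the range of $H_{f,\beta}$ has width $O(1/k)<2\varepsilon$ for large $k$. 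The constant output $-\beta^{-1}\log(kf(0))$ then needs no observations.

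\emph{Case $r=1$.} Now $f(0)=0$ and $g(x)=f(x)/x$ extends continuously with $g(0)=c_1>0$. Together with positivity on $(0,1]$, this gives $g\ge m>0$ on $[0,1]$. Therefore $F_f(p)\ge m\sum_ip_i=m$ uniformly in $k$, and $F_f\le\max g$. I will apply Lemma~\ref{lem:dimension-free-additive} with absolute accuracy $d=\tfrac m2(1-e^{-\beta\varepsilon})$ and failure $1/3$. This gives a statistic with $|\widetilde F/F_f-1|\le1-e^{-\beta\varepsilon}$, exactly as in the first branch of the proof of Theorem~\ref{thm:real-leading-order-transfer}. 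So $-\beta^{-1}\log\widetilde F$ is $\varepsilon$-accurate using $N(f,\beta,\varepsilon)=O(1)$ samples.

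For the lower statement in this case, suppose the range of $H_{f,\beta}$ on $\Delta_k$ exceeds $2\varepsilon$. A zero-sample estimator is a single randomized output $Z$. Pick $p,p'$ with $|H(p)-H(p')|>2\varepsilon$. The two intervals of radius $\varepsilon$ around $H(p)$ and $H(p')$ are disjoint, so $\Pr\{Z\in I_p\}+\Pr\{Z\in I_{p'}\}\le1<2\cdot\tfrac23$. Hence one of the two distributions fails with probability above $1/3$. Conversely, if the range is at most $2\varepsilon$, its midpoint is within $\varepsilon$ of every value. I expect this step to be routine; the only subtlety is that the range is a supremum, which I handle by taking near-extremal points.

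\emph{Case $r\ge2$.} Here $f(x)/(c_rx^r)\to1$ and $f>0$ on $(0,1]$, so $f$ satisfies the hypotheses of Theorem~\ref{thm:real-leading-order-transfer} with $\alpha=r$ and $a=c_r$. I will invoke the first corollary following that theorem, with $R_r(k)=k^{1-1/r}$. Its hypotheses are the fixed-accuracy integer evaluation and stability under $k\mapsto\lfloor k/m\rfloor$. The upper half of the evaluation is Theorem~\ref{tutorial:integer-raw} at any accuracy $\le1/2$, which is monotone in accuracy. The lower half is Theorem~\ref{thm:iid-collision-new}, valid at any fixed $\varepsilon$ since $\ell=r<\log k$ eventually. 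Stability is immediate because $\lfloor k/m\rfloor^{1-1/r}\asymp_m k^{1-1/r}$. I anticipate the main obstacle is only bookkeeping: both transfer inequalities change the R\'enyi accuracy to $(\beta\varepsilon\pm\zeta)/(r-1)$ and the failure probability to $\delta/2$, so I must confirm the integer bounds hold for those fixed values. This follows from the median amplification remark in that corollary.
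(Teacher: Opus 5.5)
Your proposal is correct and follows the paper's proof case by case. For $r=1$ you use Lemma~\ref{lem:dimension-free-additive} together with the disjoint-interval argument for zero-sample estimators, and for $r\ge2$ you combine Theorem~\ref{thm:real-leading-order-transfer} with the integer upper and lower bounds, as the paper does. The only variation is at $r=0$. There you use $|f(x)-f(0)|\le C_f x$, which analyticity at zero provides, to get an explicit $O(1/k)$ oscillation of $\log F_f$. The paper instead uses only continuity, bounding $\sup_p|F_f(p)/k-f(0)|\le\omega_f(t)+2M/(kt)$ by counting the at most $1/t$ coordinates above $t$. Your version is more quantitative, while the paper's needs no regularity of $f$ beyond continuity.
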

\begin{proof}
Positivity and analyticity imply that $r$ is finite and its leading
coefficient is positive. For $r\ge2$ apply the preceding transfer
and the proved integer rate.

For $r=1$, $h(x)=f(x)/x$ extends to a positive continuous function
on $[0,1]$. Consequently $F_f(p)\ge\min h>0$ uniformly in $k$.
Lemma~\ref{lem:dimension-free-additive}, with a fixed sufficiently
small absolute tolerance, therefore estimates its logarithm with the
requested accuracy. A zero-sample randomized estimator cannot attain
success $2/3$ at two targets farther than $2\varepsilon$ apart,
because their acceptable output intervals are disjoint. A range
midpoint suffices if the range has width at most $2\varepsilon$.

For $r=0$, write $a=f(0)>0$ and $M=\max|f|$. For every $t>0$,
at most $1/t$ coordinates of a probability vector exceed $t$.
If $\omega_f(t)=\sup_{0\le x\le t}|f(x)-a|$, it follows that
\[
 \sup_{p\in\Delta_k}|F_f(p)/k-a|
 \le\omega_f(t)+\frac{2M}{kt}.
\]
First let $k\to\infty$ and then $t\downarrow0$ to obtain uniform
convergence to $a$. Thus the deterministic output
$-\beta^{-1}\log(ka)$ eventually has error at most $\varepsilon$.
\end{proof}

\subsubsection{Why arbitrary smooth convex functions need additional data}

\begin{proposition}[Smooth convex unit-Lipschitz targets can be arbitrarily hard]
\label{prop:flat-function-hardness}
Fix $s>0$ and $A\ge(s+1)/s$, and define
\[
 f_s(0)=0,\qquad
 f_s(x)=\frac{e^{-A/x^s}}{As e^{-A}}\quad(0<x\le1).
\]
Then $f_s$ is $C^\infty$, convex and increasing on $[0,1]$,
$\|f_s'\|_\infty=1$, and $f_s^{(j)}(0)=0$ for every $j\ge0$.
Nevertheless, for every fixed $\beta,\varepsilon>0$,
estimating $-\beta^{-1}\log\sum_i f_s(p_i)$ with failure
probability at most $1/3$ requires
\[
 \Omega_{s,A,\beta,\varepsilon}(k^{2s})
\]
observations as $k\to\infty$.
\end{proposition}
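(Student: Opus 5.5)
We prove the regularity claims by direct calculus and the lower bound by a two-point Poisson likelihood argument. The argument is a simpler relative of Theorem~\ref{thm:iid-collision-new}: flatness of $f_s$ lets a relative perturbation of size about $k^{-s}\log k$ move the logarithmic target by a constant.

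\textbf{Regularity.} For $x>0$,
\[
 f_s'(x)=x^{-s-1}\exp\{A-A x^{-s}\}.
\]
Its logarithmic derivative is $x^{-1}\bigl(As\,x^{-s}-(s+1)\bigr)$. This is nonnegative on $(0,1]$ exactly when $As\ge(s+1)x^s$, which the hypothesis $A\ge(s+1)/s$ guarantees. Hence $f_s'$ is nondecreasing, so $f_s$ is convex and increasing, and $\|f_s'\|_\infty=f_s'(1)=1$. Every derivative of $e^{-A/x^s}$ has the form $(\text{polynomial in }x^{-1})\,e^{-A/x^s}$, which tends to $0$ as $x\downarrow0$. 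This gives $C^\infty$ smoothness and $f_s^{(j)}(0)=0$ for all $j$.

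\textbf{Hard pair.} Let the null $p^{(0)}$ be uniform on $[k]$. For the alternative $p^{(1)}$, set $p_1=q=(1+\eta)/k$ and give every other coordinate $(1-q)/(k-1)\le1/k$. Choose
\[
 \eta=\frac{C\log k}{Ask^s},
\]
with $C=C(\beta,\varepsilon)$ large. Then $F_0=kf_s(1/k)$ and $F_1\ge f_s(q)$. We have
\[
 \log\frac{f_s(q)}{f_s(1/k)}=Ak^s\bigl(1-(1+\eta)^{-s}\bigr)\ge\tfrac12 As\,\eta\,k^s=\tfrac C2\log k
\]
for large $k$. Therefore $\log(F_1/F_0)\ge(C/2-1)\log k>2\beta\varepsilon$, so the targets differ by more than $2\varepsilon$. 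The remaining coordinates of $p^{(1)}$ only lower $F_1$, but the bound used $f_s(q)$ alone, so they do no harm.

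\textbf{Indistinguishability.} Use independent Poisson counts with means $np_i$ and the overlap identity of Lemma~\ref{lem:poisson-overlap} with reference $p^{(0)}$. The distinguished coordinate contributes
\[
 \frac{n(\eta/k)^2}{1/k}=\frac{n\eta^2}{k}.
\]
The reservoir coordinates change by $O(\eta/k^2)$ each, so they contribute $O(n\eta^2/k)$ in total. Hence
\[
 \chi^2\le e^{Cn\eta^2/k}-1,
\]
which is small whenever $n\le ck/\eta^2\asymp k^{1+2s}/\log^2k$. An $\varepsilon$-accurate estimator would give a test with average error below $(1-\TV)/2$, a contradiction. This yields a Poisson lower bound of order $k^{1+2s}/\log^2 k$, which is $\Omega(k^{2s})$ for large $k$. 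We transfer it to fixed sample sizes exactly as in Step~4 of Theorem~\ref{thm:iid-collision-new}.

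\textbf{Main obstacle.} The delicate step is the separation calculation. The perturbation must be large enough that a single coordinate's $f_s$-value outweighs the $k$ null terms, which costs the $\log k$ factor in $\eta$. It must also be small enough relative to $k^{-s}$ that the per-coordinate change stays within $\chi^2$ budget, while we check that the reservoir adjustment cannot reverse the sign of $\log(F_1/F_0)$. If the Taylor step for $1-(1+\eta)^{-s}$ is awkward, we will instead use the mean value bound $1-(1+\eta)^{-s}\ge s\eta(1+\eta)^{-s-1}$.
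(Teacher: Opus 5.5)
Your argument is correct, and it takes a different and sharper route than the paper's.

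\textbf{The paper's route.} It compares the uniform vector with a balanced alternative in which half the coordinates equal $(1+ck^{-s})/k$ and half equal $(1-ck^{-s})/k$. A Taylor expansion gives the constant ratio $F_{f_s}(q)/F_{f_s}(p)\to\cosh(Asc)$. The per-observation $\chi^2$ is $c^2k^{-2s}$, so the method yields exactly $k^{2s}$.

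\textbf{Your route.} You perturb a single coordinate by the relative amount $\eta\asymp k^{-s}\log k$. Because $\log f_s$ has slope $Ask^{s+1}$ at $x=1/k$, that one coordinate's value is multiplied by roughly $k^{C}$ and outweighs all $k$ null terms. Since only one coordinate moves, the per-observation $\chi^2$ is exactly $\eta^2/(k-1)$. Perturbing every coordinate by relative size $\eta$ would instead cost about $\eta^2$. You therefore obtain $\Omega(k^{1+2s}/\log^2 k)$, a factor $k/\log^2 k$ more than the statement requires. This also shows that $k^{2s}$ is not the true order of this problem.

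\textbf{What each buys.} The paper's construction gives a clean bounded-ratio calculation and needs no domination argument. Yours gives a strictly stronger rate. Any fixed $C>2$ suffices in your choice of $\eta$ once $k$ is large, so $C$ need not depend on $\beta,\varepsilon$.

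\textbf{Minor points.}
\begin{itemize}
\item For noninteger $s$, the derivatives of $e^{-A/x^s}$ are finite sums of terms $x^{-j-ls}e^{-A/x^s}$ with integers $j,l\ge0$. They are not polynomials in $x^{-1}$, although your conclusion is unaffected.
\item Existence of the one-sided derivatives at the origin, and $f_s^{(j)}(0)=0$, follow inductively from the mean value theorem.
\item The Poissonization detour is unnecessary. The identity $\chi^2(q^{\otimes n},p^{\otimes n})=(1+\chi^2(q,p))^n-1$ with $\chi^2(q,p)=\eta^2/(k-1)$ handles a fixed sample size directly.
\end{itemize}
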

\begin{proof}
Every derivative at a positive $x$ is a finite sum of real powers of
$x$ times $e^{-A/x^s}$, so its limit at zero is zero. Direct
differentiation gives
\[
 f_s''(x)=\frac{e^{-A/x^s}}{As e^{-A}}\,
 As x^{-s-2}\{As x^{-s}-(s+1)\}\ge0.
\]
Thus the derivative is increasing, and the chosen normalization gives
$f_s'(1)=1$.

First take even $k$, let $p_i=1/k$, and let $q$ have $k/2$ entries
$(1+c k^{-s})/k$ and $k/2$ entries $(1-c k^{-s})/k$, where $c>0$
is fixed. For sufficiently large $k$, $q$ is a probability vector.
Taylor expansion of $(1\pm c k^{-s})^{-s}$ at one gives
\[
 -Ak^s\{(1\pm c k^{-s})^{-s}-1\}
 =\pm As c+O(k^{-s}).
\]
Therefore
\[
 \frac{F_{f_s}(q)}{F_{f_s}(p)}\longrightarrow\cosh(As c),
 \qquad
 \chi^2(q,p)=c^2k^{-2s}.
\]
Choose $c$ so that $\beta^{-1}\log\cosh(As c)>2\varepsilon$.
The two targets are then separated by more than $2\varepsilon$
for all sufficiently large $k$. For $n$ iid observations,
\[
 \TV(q^{\otimes n},p^{\otimes n})
 \le\frac12\sqrt{(1+c^2k^{-2s})^n-1}.
\]
If $n\le c_0 k^{2s}$ with a sufficiently small constant $c_0$,
this is less than $1/3$. Binary testing consequently has average
error greater than $1/3$, ruling out the proposed estimator.
For arbitrary $k$, use the largest even alphabet size below $k$
and pad with zeros. Since $f_s(0)=0$, the target is unchanged.
\end{proof}

\begin{proposition}[A single smooth convex target with exponential complexity]
Set
\[
 f(0)=0,\qquad
 f(x)=\frac{\exp(-e^{2/x})}{2e^2\exp(-e^2)}\quad(0<x\le1).
\]
This function is positive on $(0,1]$, increasing, convex,
$C^\infty$ and $1$-Lipschitz, with all derivatives zero at the
origin. For every fixed $\beta,\varepsilon>0$, the sample complexity
of its logarithmic additive functional is
$\Omega_{\beta,\varepsilon}(k^2e^{4k})$ along even alphabet sizes.
\end{proposition}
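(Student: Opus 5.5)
The plan follows the proof of Proposition~\ref{prop:flat-function-hardness}, with the perturbation size rescaled to the much flatter profile. Write $C=(2e^2\exp(-e^2))^{-1}$, $g(x)=e^{2/x}$, and $f=C e^{-g}$ on $(0,1]$.

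\textbf{Step 1: regularity.} Positivity is immediate. For every $j$, the derivative $f^{(j)}(x)$ on $(0,1]$ is a finite sum of terms (polynomial in $1/x$)$\times$(powers of $e^{2/x}$)$\times\exp(-e^{2/x})$. The doubly exponential factor dominates every such prefactor as $x\downarrow0$, so each derivative tends to zero and $f$ is $C^\infty$ at $0$ with all derivatives zero. For monotonicity and convexity, $f'(x)=C\cdot 2x^{-2}e^{2/x}\exp(-e^{2/x})>0$. Its logarithm has derivative
\[
 \frac{2}{x^2}\bigl(e^{2/x}-1-x\bigr)\ge0,
\]
because $e^{2/x}\ge1+2/x\ge1+x$ on $(0,1]$. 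Hence $f'$ is increasing, so $f$ is convex. The Lipschitz constant is $f'(1)=2Ce^2e^{-e^2}=1$, which is exactly the chosen normalization.

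\textbf{Step 2: two hypotheses.} For even $k$, let $p$ be uniform. Let $q$ have $k/2$ entries $(1+\eta)/k$ and $k/2$ entries $(1-\eta)/k$, where
\[
 \eta=\eta_k=\frac{c\,e^{-2k}}{2k}
\]
and $c>0$ is fixed. Then $F_f(p)=kC\exp(-e^{2k})$. Expanding,
\[
 \frac{2k}{1\pm\eta}=2k\mp2k\eta+O(k\eta^2),
\]
so that
\[
 e^{2k/(1\pm\eta)}-e^{2k}=e^{2k}\bigl(e^{\mp c e^{-2k}+O(k\eta^2)}-1\bigr)=\mp c+O(e^{-2k}).
\]
Therefore
\[
 \frac{F_f(q)}{F_f(p)}=\tfrac12\bigl(e^{c+o(1)}+e^{-c+o(1)}\bigr)\to\cosh c .
\]
Choose $c$ with $\beta^{-1}\log\cosh c>2\varepsilon$. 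Then the two targets are separated by more than $2\varepsilon$ for all large even $k$.

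\textbf{Step 3: indistinguishability.} Since $\chi^2(q,p)=\eta^2=c^2e^{-4k}/(4k^2)$, we get
\[
 \TV(q^{\otimes n},p^{\otimes n})\le\tfrac12\sqrt{(1+\eta^2)^n-1},
\]
which is below $1/3$ whenever $n\le c_0k^2e^{4k}$ for a small $c_0$. The usual testing argument from Proposition~\ref{prop:flat-function-hardness} then rules out any estimator with failure probability at most $1/3$ at this sample size.

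\textbf{Main obstacle.} The only delicate point is the uniformity of the error term in Step 2. Because $\eta$ is multiplied by the huge factor $e^{2k}$, I would bound the second-order remainder explicitly. The remainder $O(k\eta^2)$ inside the inner exponent is $O(e^{-4k}/k)$. After multiplying by $e^{2k}$ it is $O(e^{-2k}/k)\to0$. This justifies the $o(1)$ terms and hence the limit $\cosh c$.
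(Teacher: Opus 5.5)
Your proposal is correct and follows the paper's proof essentially step for step. You use the same uniform-versus-two-level comparison at perturbation scale $e^{-2k}/k$ and the same $\chi^2$ tensorization bound. Your $\eta$ is the paper's $t_k$ with $c$ halved, so you get $\cosh c$ where the paper gets $\cosh(2c)$. Your check $(\log f')'=\frac{2}{x^2}(e^{2/x}-1-x)\ge0$ is equivalent to the paper's computation that $u''/u>0$.
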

\begin{proof}
Flatness follows by repeated differentiation. For the unnormalized
function $u(x)=\exp(-e^{2/x})$,
\[
 \frac{u''(x)}{u(x)}
 =\frac{2e^{2/x}}{x^4}\{2(e^{2/x}-1)-2x\}>0
 \qquad(0<x\le1).
\]
The normalization makes $f'(1)=1$, proving the stated Lipschitz
bound. For even $k$, compare uniform $p$ with a vector having half
of its coordinates $(1+t_k)/k$ and half $(1-t_k)/k$, where
$t_k=c e^{-2k}/k$. Then
\[
 e^{2k/(1\pm t_k)}=e^{2k}\mp2c+o(1),\qquad
 \frac{F_f(q)}{F_f(p)}\longrightarrow\cosh(2c),\qquad
 \chi^2(q,p)=c^2e^{-4k}/k^2.
\]
Choose $c$ with $\log\cosh(2c)>2\beta\varepsilon$ and apply
the identical two-point likelihood calculation above.
\end{proof}

The proposition supplies smooth examples, rather than a discontinuous
support functional: convexity, a common Lipschitz constant, and even the
entire zero Taylor series do not imply a common polynomial sample rate.
Theorem~\ref{thm:real-leading-order-transfer} gives a complete reduction
when a finite leading power exists. It does not justify a minimax
formula for arbitrary positive continuous $f$.

\begin{corollary}[Nonpower additive functionals with matching bounds]
\label{thm:additive-new}
Fix an integer $r\ge2$, a continuous function
$h:[0,1]\to(0,\infty)$, and $\beta,\eps>0$.
For $f(x)=x^rh(x)$, estimating
$H_{f,\beta}(p)=-\beta^{-1}\log\sum_i f(p_i)$ over $\Delta_k$
with accuracy $\eps$ and success probability $2/3$ requires and suffices
to use $\Theta_{r,h,\beta,\eps}(k^{1-1/r})$ samples.
\end{corollary}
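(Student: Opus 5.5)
The plan is to obtain this corollary as a direct instance of Theorem~\ref{thm:real-leading-order-transfer} with $\alpha=r$, combined with the integer-order R\'enyi rates already proved. Write $a=h(0)>0$. Since $h$ is continuous and positive on the compact interval $[0,1]$, the function $f(x)=x^rh(x)$ is continuous, satisfies $f(0)=0$, is positive on $(0,1]$, and satisfies $f(x)/(ax^r)=h(x)/h(0)\to1$ as $x\downarrow0$. Thus every hypothesis of the transfer theorem holds. I will fix $\zeta=\beta\varepsilon/2$ and failure probability $\delta=1/3$ (success $2/3$).

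For the upper bound, I use \eqref{eq:real-additive-upper-transfer}. It bounds $S_{H_{f,\beta}}(k,\varepsilon,1/3)$ by $N+S_r(k,\beta\varepsilon/(2(r-1)),1/6)$, where $N$ is independent of $k$.
\begin{itemize}
\item Theorem~\ref{tutorial:integer-raw} gives $O(rk^{1-1/r})$ at any accuracy up to $1/2$ with success $2/3$. A larger target accuracy is handled by running the estimator at accuracy $\min\{\cdot,1/2\}$.
\item Median amplification over a constant number of blocks lowers the failure probability from $1/3$ to $1/6$ at a constant cost.
\end{itemize}
For fixed $r$ this gives $O_{r,h,\beta,\varepsilon}(k^{1-1/r})$.

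For the lower bound, I use \eqref{eq:real-additive-lower-transfer}. It gives $S_{H_{f,\beta}}(k,\varepsilon,1/3)\ge S_r(\lfloor k/m\rfloor,3\beta\varepsilon/(2(r-1)),1/3)$ for a fixed integer $m$. I then invoke Theorem~\ref{thm:iid-collision-new}, which holds for every fixed accuracy $\eps>0$ and for $\alpha=r\ge2\ge3/2$. For $k$ large we have $r<\log(k/m)$, so its bound is $\Omega(r(k/m)^{1-1/r})$. With $m$ and $r$ fixed, this is $\Omega(k^{1-1/r})$.

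The only point needing care is that the accuracy passed to the R\'enyi problem, $(\beta\varepsilon\pm\zeta)/(r-1)$, may exceed $1/2$. On the upper side, reducing the accuracy to $1/2$ handles this. On the lower side it causes no difficulty, because the collision lower bound is stated for arbitrary fixed $\eps$. The remaining steps are substitutions.
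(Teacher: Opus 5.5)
Your proposal is correct and follows the paper's proof. You apply Theorem~\ref{thm:real-leading-order-transfer} with $\alpha=r$ and $a=h(0)$, feed in the integer-order upper bound (Theorem~\ref{tutorial:integer-raw}, with the accuracy capped at $1/2$ and median amplification) and the hidden-coordinate lower bound (Theorem~\ref{thm:iid-collision-new}, valid at any fixed accuracy), and note that a fixed replication factor $m$ does not change the order $k^{1-1/r}$; you simply make explicit the bookkeeping (the choice $\zeta=\beta\varepsilon/2$, the accuracy cap, and the failure-probability adjustment) that the paper's one-paragraph proof leaves implicit.
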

\begin{proof}
Apply Theorem~\ref{thm:real-leading-order-transfer} with $\alpha=r$
and $a=h(0)>0$. The already established integer iid rate has order
$k^{1-1/r}$ for every fixed accuracy and confidence, and remains of
this order under a fixed replication of the alphabet. Both inequalities
in that theorem therefore have the displayed order.
\end{proof}
This includes $x^2+x^{5/2}$, $x^2e^x$, and $e^x-1-x$.
The constants need not be uniform in a changing function or order.

\clearpage

\clearpage
\section{Approximation duality and the limits of scalar reductions}
\label{app:duality}

\subsection{Exact multiplicative approximation and its count dual}
\label{sec:exact-multiplicative-dual}

The following duality preserves the requested logarithmic accuracy exactly.
It concerns approximation and likelihood overlap.  A minimax conclusion
requires the additional variance, target-concentration, and sample-reduction
conditions stated below.

\begin{theorem}[Exact multiplicative approximation duality]
\label{thm:exact-multiplicative-dual}
Let $\Theta$ be a compact metric space, let $f\in C(\Theta)$ satisfy
$f_*:=\min_\Theta f>0$, and let $V$ be a finite-dimensional linear
subspace of $C(\Theta)$ containing the constants. Define
\[
 d(f,V)=\inf_{v\in V}\left\|\frac{f-v}{f}\right\|_\infty.
\]
Then
\begin{equation}\label{eq:weighted-hb-dual}
 d(f,V)=\max\left\{\int f\,d\omega:
   \int v\,d\omega=0\ (v\in V),\quad
  \int f\,d|\omega|\le1\right\}.
\end{equation}
Here $\omega$ ranges over finite signed Borel measures.
The infimum and maximum are attained, and $0\le d(f,V)<1$.
If $d(f,V)>0$, an optimal $\omega$ can be chosen with
$\int f\,d|\omega|=1$.  Its Jordan parts have a common positive
mass $m$, and the probability laws $\mu_+=\omega^+/m$ and
$\mu_-=\omega^-/m$ satisfy
\begin{equation}\label{eq:exact-jordan-ratio}
 \int v\,d\mu_+=\int v\,d\mu_-\quad(v\in V),\qquad
 \frac{\int f\,d\mu_+}{\int f\,d\mu_-}
       =\frac{1+d(f,V)}{1-d(f,V)}.
\end{equation}
\end{theorem}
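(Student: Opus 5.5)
The plan is to reduce the statement to ordinary best approximation in $C(\Theta)$ with a weighted norm, and then apply the Hahn--Banach (Riesz) distance formula. Equip $C(\Theta)$ with the norm $\|g\|_f=\|g/f\|_\infty$. Since $0<f_*\le f\le\|f\|_\infty$, it is equivalent to the sup norm. Its dual is the space of finite signed Borel measures with dual norm $\int f\,d|\omega|$. Indeed, pairing $\int g\,d\omega=\int (g/f)\,f\,d\omega$ shows that the dual norm is the total variation of $f\omega$.

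The standard distance formula for a closed subspace $V$ of a normed space $X$ gives
\[
 \operatorname{dist}(f,V)=\max\{\langle\phi,f\rangle:\ \phi\in V^\perp,\ \|\phi\|_*\le1\}.
\]
The maximum is attained by Hahn--Banach, extending the functional $f+v\mapsto\operatorname{dist}(f,V)$ from $\operatorname{span}(V,f)$. This is exactly \eqref{eq:weighted-hb-dual}. Attainment of the infimum follows because $V$ is finite-dimensional, so a minimizing sequence is bounded and has a convergent subsequence.

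For the bounds, $d\ge0$ is trivial. Taking $v=c$ constant with $c=\|f\|_\infty$ (or the midpoint choice $v=(f_*+\|f\|_\infty)/2$) gives $\|(f-v)/f\|_\infty<1$, using only that $f$ is positive and continuous on a compact set. Hence $d<1$.

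Now suppose $d>0$ and take an optimal $\omega$. Since $d=\int f\,d\omega\le\int f\,d|\omega|\le1$, I can rescale: if $\int f\,d|\omega|<1$, dividing by it yields a feasible measure with a larger value, contradicting optimality. So $\int f\,d|\omega|=1$. Because $1\in V$, we have $\omega(\Theta)=0$, so $\omega^+$ and $\omega^-$ share a common mass $m$, and $m>0$ because $\omega\neq0$. Annihilation of $V$ gives $\int v\,d\mu_+=\int v\,d\mu_-$.

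Write $P=\int f\,d\omega^+$ and $N=\int f\,d\omega^-$. Then $P+N=1$ and $P-N=d$, so $P/N=(1+d)/(1-d)$. Here $N>0$ because $N=(1-d)/2$ and $d<1$. Dividing by $m$ gives \eqref{eq:exact-jordan-ratio}.

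The only delicate point is identifying the dual of the weighted norm and checking attainment. Both follow from norm equivalence together with the Riesz representation theorem on compact metric $\Theta$, so I expect no real obstacle.
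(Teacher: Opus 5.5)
Your argument matches the paper's in all essentials: the weighted norm $\|g/f\|_\infty$, the Riesz identification of its dual norm as $\int f\,d|\omega|$, Hahn--Banach for the distance formula, homogeneity to saturate the constraint, and $1\in V$ to equalize the Jordan masses. However, your step proving $d(f,V)<1$ fails as written.

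Relative error is asymmetric: a constant that overshoots $f$ where $f$ is small is penalized without bound.
\begin{itemize}
\item With $v=\|f\|_\infty$ you get $\|(f-v)/f\|_\infty=\|f\|_\infty/f_*-1$. This is at least $1$ as soon as $\|f\|_\infty\ge2f_*$.
\item The midpoint $v=(f_*+\|f\|_\infty)/2$ gives $(\|f\|_\infty-f_*)/(2f_*)$. This is at least $1$ once $\|f\|_\infty\ge3f_*$.
\end{itemize}
For example, $f(x)=1+9x$ on $[0,1]$ defeats both choices. The paper instead uses a sufficiently small positive constant. Any $c\in(0,2f_*)$ works, and $v=f_*$ gives exactly $\|(f-f_*)/f\|_\infty=1-f_*/\|f\|_\infty<1$.

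Your later claim $N=(1-d)/2>0$ was derived from this step, but it also holds independently: $N=\int f\,d\omega^-\ge f_*m>0$. In the case $d>0$ this even yields $d=1-2N<1$ directly from the dual side, without any primal approximant.

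A smaller point: the functional you extend should be $tf+v\mapsto t\,d(f,V)$ on $\operatorname{span}(V,f)$. This is well defined only when $f\notin V$; when $d=0$, simply take $\omega=0$.

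With these repairs your proof is complete and coincides with the paper's. The one remaining difference is that you obtain attainment of the maximum directly from the Hahn--Banach extension. The paper instead uses weak-star compactness of the feasible set, whose total variation is at most $1/f_*$.
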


\begin{proof}
Equip $C(\Theta)$ with the equivalent norm
$\|g\|_f=\|g/f\|_\infty$.  Since $V$ is finite-dimensional, it
is closed, and distance from $f$ to $V$ is attained.  Hahn--Banach
applied to the quotient $C(\Theta)/V$ represents this distance as
the supremum of $\Lambda(f)$ over norm-one-or-smaller continuous
linear functionals $\Lambda$ that annihilate $V$.  The Riesz
representation theorem identifies $\Lambda(g)=\int g\,d\omega$.
Its dual norm is
\[
 \sup_{|g|\le f}\left|\int g\,d\omega\right|
       =\int f\,d|\omega|.
\]
This proves \eqref{eq:weighted-hb-dual}.  The feasible measures
have ordinary total variation at most $1/f_*$; hence their set
is weak-star compact and the objective is continuous, proving
attainment.  A sufficiently small positive constant approximant
has relative error strictly less than one, so $d(f,V)<1$.
When $d(f,V)>0$, homogeneity makes the weighted variation
constraint active.  Because $1\in V$, $\omega(\Theta)=0$, so
$\omega^+(\Theta)=\omega^-(\Theta)=m>0$.  Writing
$A_\pm=\int f\,d\omega^\pm$ gives
$A_++A_-=1$ and $A_+-A_-=d(f,V)$.  Their ratio and the
annihilation conditions give \eqref{eq:exact-jordan-ratio}.
\end{proof}

\begin{corollary}[The same exact logarithmic threshold]
Fix $\beta,\varepsilon>0$, and set
$\eta=\tanh(\beta\varepsilon)$ and
$\kappa=e^{-2\beta\varepsilon}=(1-\eta)/(1+\eta)$.
There exists $v\in V$ satisfying
\[
 (1-\eta)f\le v\le(1+\eta)f
\]
if and only if there exists $q\in V$ satisfying
\[
 \kappa f\le q\le f.
\]
These equivalent conditions hold exactly when $d(f,V)\le\eta$.
If $d(f,V)>\eta$, the laws in
\eqref{eq:exact-jordan-ratio} have matching $V$-moments and
\[
 \frac1\beta\log
  \frac{\int f\,d\mu_+}{\int f\,d\mu_-}>2\varepsilon.
\]
\end{corollary}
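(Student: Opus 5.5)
The plan is to reduce both conditions to the single scalar quantity $d(f,V)$ from Theorem~\ref{thm:exact-multiplicative-dual}, using only affine rescalings inside $V$ and the identity $\kappa=(1-\eta)/(1+\eta)$. The second half will then follow from the Jordan-ratio statement \eqref{eq:exact-jordan-ratio} together with the strict monotonicity of $x\mapsto(1+x)/(1-x)$.

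\textbf{Step 1: the two bands are equivalent.} Suppose $v\in V$ satisfies $(1-\eta)f\le v\le(1+\eta)f$. Put $q=v/(1+\eta)$. Since $V$ is a linear space, $q\in V$. Dividing the inequalities by $1+\eta>0$ gives $\kappa f\le q\le f$, because $(1-\eta)/(1+\eta)=\kappa$. Conversely, given $q$ with $\kappa f\le q\le f$, take $v=(1+\eta)q$. Then $v\in V$ and $(1-\eta)f\le v\le(1+\eta)f$. The identity $\kappa=e^{-2\beta\varepsilon}=(1-\tanh(\beta\varepsilon))/(1+\tanh(\beta\varepsilon))$ follows directly from $\tanh x=(1-e^{-2x})/(1+e^{-2x})$.

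\textbf{Step 2: the band condition is exactly $d(f,V)\le\eta$.} Because $f>0$, the first band is equivalent to $\|(f-v)/f\|_\infty\le\eta$. If such a $v$ exists, then $d(f,V)\le\eta$. Conversely, if $d(f,V)\le\eta$, the infimum in the definition is attained by Theorem~\ref{thm:exact-multiplicative-dual}. This attainment is the one point that needs care: without it, $d=\eta$ would give only approximants with error $\eta+o(1)$. So some $v\in V$ has relative error $d(f,V)\le\eta$, and it satisfies the first band.

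\textbf{Step 3: separation when $d(f,V)>\eta$.} Here $d(f,V)>\eta\ge0$, so Theorem~\ref{thm:exact-multiplicative-dual} provides probability laws $\mu_\pm$ with matching $V$-moments and $\int f\,d\mu_+/\int f\,d\mu_-=(1+d)/(1-d)$, where $0<d<1$. The map $x\mapsto(1+x)/(1-x)$ is strictly increasing on $[0,1)$. Hence the ratio strictly exceeds $(1+\eta)/(1-\eta)=\kappa^{-1}=e^{2\beta\varepsilon}$. Taking logarithms and dividing by $\beta$ gives the strict inequality $>2\varepsilon$.

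I expect no real obstacle beyond the attainment point flagged in Step 2. The boundary case $d=\eta$ is handled by that attainment together with the strict inequality in Step 3.
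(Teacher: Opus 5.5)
Your proposal is correct and follows the paper's proof: the rescalings $q=v/(1+\eta)$ and $v=(1+\eta)q$ give the equivalence of the two bands, and your strict monotonicity of $t\mapsto(1+t)/(1-t)$, applied to $d(f,V)>\eta=\tanh(\beta\varepsilon)$, is the paper's identity $\log\frac{1+t}{1-t}=2\operatorname{arctanh}(t)$ in another form. You also invoke the attainment of the infimum in Theorem~\ref{thm:exact-multiplicative-dual} to handle the boundary case $d(f,V)=\eta$, which makes explicit a step the paper leaves implicit.
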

\begin{proof}
Use $q=v/(1+\eta)$ in the forward direction and
$v=(1+\eta)q$ in the reverse direction.  The separation follows
from $\log((1+t)/(1-t))=2\operatorname{arctanh}(t)$.
\end{proof}

This is an exact correspondence between an approximation band and
separated prior \emph{expectations}.  To obtain a lower bound for
estimating a realized functional, one must additionally establish
that the functional concentrates around those prior expectations.
For product random-weight constructions, this is a separate moment
calculation; for general priors on matrices, it is not automatic.

\begin{theorem}[A finite-dimensional regularized dual]
\label{thm:regularized-weighted-dual}
Under the preceding assumptions, let $V_0\subset C(\Theta)$ be
finite-dimensional and contain the constants, let
$\phi_1,\ldots,\phi_D\in C(\Theta)$, and let $t>0$. Put
\[
 \mathcal D_t(f)=\inf_{v\in V_0,\,c\in\mathbb R^D}
 \left\{\left\|\frac{f-v-\sum_{j=1}^D c_j\phi_j}{f}
                \right\|_\infty+t\|c\|_2\right\}.
\]
Then
\begin{equation}\label{eq:regularized-dual}
 \mathcal D_t(f)=\max_{\omega}
 \left\{\int f\,d\omega:
  \int f\,d|\omega|\le1,\quad
  \int v\,d\omega=0\ (v\in V_0),\quad
  \left\|\left(\int\phi_j\,d\omega\right)_{j=1}^D\right\|_2
       \le t\right\}.
\end{equation}
\end{theorem}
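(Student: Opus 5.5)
The plan is to reduce Theorem~\ref{thm:regularized-weighted-dual} to Theorem~\ref{thm:exact-multiplicative-dual} by a lifting trick, or else to repeat its Hahn--Banach argument directly on a product space. I would take the direct route, which is cleaner. Equip $C(\Theta)$ with the weighted norm $\|g\|_f=\|g/f\|_\infty$ and form the Banach space $Y=C(\Theta)\times\mathbb R^D$ with norm $\|(g,c)\|_Y=\|g\|_f+t\|c\|_2$. The linear map $T:V_0\times\mathbb R^D\to Y$, $(v,c)\mapsto(v+\sum_jc_j\phi_j,\,c)$, has finite-dimensional, hence closed, range $W$. Then $\mathcal D_t(f)=\inf_{w\in W}\|(f,0)-w\|_Y$, because for $w=T(v,c)$ this norm is exactly $\|(f-v-\sum c_j\phi_j)/f\|_\infty+t\|c\|_2$. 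The infimum is attained since $W$ is finite-dimensional and the norm is coercive on it.

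Next I apply the Hahn--Banach distance formula: the distance from $(f,0)$ to the closed subspace $W$ equals $\max\{\Lambda(f,0):\|\Lambda\|_{Y^*}\le1,\ \Lambda|_W=0\}$, and the maximum is attained. Riesz representation gives $\Lambda(g,c)=\int g\,d\omega+\langle y,c\rangle$ with $\omega$ a finite signed measure and $y\in\mathbb R^D$. The dual of an $\ell^1$-type sum is the $\ell^\infty$-type sum of the duals, so $\|\Lambda\|_{Y^*}=\max\{\int f\,d|\omega|,\ \|y\|_2/t\}$. Here I use, as in the earlier proof, that the dual of the weighted sup norm is $\int f\,d|\omega|$. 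Annihilation of $W$ means two things: $\int v\,d\omega=0$ for all $v\in V_0$, and $\int\phi_j\,d\omega+y_j=0$ for each $j$, that is, $y=-(\int\phi_j\,d\omega)_j$. Substituting $y$, the constraint $\|y\|_2\le t$ becomes the moment-defect bound in \eqref{eq:regularized-dual}, and $\Lambda(f,0)=\int f\,d\omega$. This yields \eqref{eq:regularized-dual}.

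For attainment, the feasible $\omega$ have total variation at most $1/\min f$. The feasible set is weak-star closed, because each constraint is defined by integrals against continuous functions and $\int f\,d|\omega|$ is weak-star lower semicontinuous. It is therefore weak-star compact, and the objective is continuous, so the maximum is attained.

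The main obstacle is the care needed in the norm computation for the product dual. I must check that the supremum of $\int g\,d\omega+\langle y,c\rangle$ over $\|g\|_f+t\|c\|_2\le1$ equals the stated maximum, which requires splitting the unit budget between the two coordinates. I also need to verify that the identification $|g|\le f$ gives exactly $\int f\,d|\omega|$, which follows by approximating $\operatorname{sign}$ of the Radon--Nikodym derivative in $L^1(|\omega|)$ by continuous functions (Lusin). The rest of the argument is routine.
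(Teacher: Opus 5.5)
Your proof is correct, but it takes a different route from the paper's.

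\textbf{The paper's route.} The paper writes $\|g/f\|_\infty=\sup\{\int g\,d\omega:\int f\,d|\omega|\le1\}$ and forms the Lagrangian $\int f\,d\omega-\int v\,d\omega-\sum_jc_j\int\phi_j\,d\omega+t\|c\|_2$. It interchanges $\inf_{(v,c)}$ and $\sup_\omega$ by Sion's minimax theorem, using weak-star compactness of the measure ball and affinity and continuity in $\omega$. It then eliminates the primal variables. The infimum over $v$ forces $\omega\perp V_0$, and $\inf_c\{t\|c\|_2-c\cdot a\}$ equals $0$ or $-\infty$ according as $\|a\|_2\le t$ or not.

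\textbf{Your route.} You absorb the penalty into the norm of $Y=C(\Theta)\times\mathbb R^D$, so the statement becomes the ordinary Hahn--Banach distance formula, exactly as in Theorem~\ref{thm:exact-multiplicative-dual}. The moment-defect constraint then appears as the condition that the $\mathbb R^D$ component $y=-\bigl(\int\phi_j\,d\omega\bigr)_j$ of an annihilating functional satisfy $\|y\|_2/t\le1$. Your budget-splitting computation of the product dual norm as $\max\{\int f\,d|\omega|,\|y\|_2/t\}$ is right.

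\textbf{What each buys.}
\begin{itemize}
\item Your argument needs no minimax theorem, and it gives dual attainment directly from Hahn--Banach. Your separate weak-star compactness argument is therefore correct but redundant.
\item It also shows that the regularized dual is literally the unregularized one in a larger normed space; with $D=0$ it reduces to the earlier theorem.
\item The paper's Lagrangian route is more flexible. It works essentially verbatim for any convex continuous penalty $\psi(c)$, yielding the dual objective $\int f\,d\omega-\psi^*\bigl((\int\phi_j\,d\omega)_j\bigr)$.
\item Your construction needs the penalty to be a norm, so that it can be built into $Y$.
\end{itemize}
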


\begin{proof}
The weighted norm has the dual representation used above. Thus the
primal objective is the infimum over $(v,c)$ of the supremum over
$\int f\,d|\omega|\le1$ of
\[
 \int f\,d\omega-\int v\,d\omega
       -\sum_j c_j\int\phi_j\,d\omega+t\|c\|_2.
\]
The measure ball is convex and weak-star compact; for fixed $(v,c)$
the integrand is continuous and affine in $\omega$, and for fixed
$\omega$ it is continuous and convex in $(v,c)$. Sion's minimax
theorem~\cite{Sion58} interchanges infimum and supremum.  The infimum over $v$
is finite precisely when $\omega$ annihilates $V_0$. The infimum
over $c$ of $t\|c\|_2-c\cdot a$ equals zero for $\|a\|_2\le t$
and minus infinity otherwise. This proves the identity. Compactness
also proves dual attainment.
\end{proof}

Unlike in the unregularized case, an optimal signed measure need not
saturate its weighted variation bound: scaling may violate the
moment-norm constraint. Nevertheless, if its objective is $z>0$
and $s=\int f\,d|\omega|\le1$, its Jordan ratio equals
$(s+z)/(s-z)\ge(1+z)/(1-z)$.  Normalizing Jordan parts changes the
size of their moment discrepancy by the inverse of their common
mass. That mass must be retained in any likelihood estimate.

\subsubsection{The Poisson basis gives the same quadratic form on both sides}

Let $\theta=(\theta_e)_{e=1}^q$ range over a compact subset
$\Theta\subset[0,\infty)^q$, and observe independent
$N_e\sim\operatorname{Poi}(\nu_e\theta_e)$, where $\nu_e>0$.
Fix a reference $\theta^*$ with every $\theta_e^*>0$; the
reference need not belong to $\Theta$. Write
$\lambda_e^*=\nu_e\theta_e^*$.  For $\ell\ge0$ define
\[
 C_\ell(N;\lambda)=\sum_{j=0}^{\ell}
   \binom{\ell}{j}(-\lambda)^{\ell-j}(N)_j,
 \qquad H_\ell(N;\lambda)=\frac{C_\ell(N;\lambda)}
                              {\sqrt{\ell!\lambda^\ell}}.
\]
For a multi-index $a\in\mathbb N_0^q$, put
\[
 H_a(N)=\prod_e H_{a_e}(N_e;\lambda_e^*),\qquad
 \phi_a(\theta)=\prod_e
   \frac{\nu_e^{a_e/2}(\theta_e-\theta_e^*)^{a_e}}
        {\sqrt{a_e!(\theta_e^*)^{a_e}}}.
\]
Then
\begin{equation}\label{eq:charlier-common}
 \mathbb E_\theta H_a(N)=\phi_a(\theta),\qquad
 \mathbb E_{\theta^*}H_a(N)H_b(N)=\mathbf1_{\{a=b\}}.
\end{equation}
Indeed, the generating function
\[
 \sum_{\ell\ge0}C_\ell(N;\lambda)\frac{z^\ell}{\ell!}
          =e^{-\lambda z}(1+z)^N
\]
has mean $e^{(\mu-\lambda)z}$ under $\operatorname{Poi}(\mu)$.
The expected product of two such generating functions under
$\operatorname{Poi}(\lambda)$ is $e^{\lambda zw}$, proving both
identities by coefficient comparison and then independence.

For a finite polynomial $Q(\theta)=\sum_a c_a\phi_a(\theta)$,
the statistic $\widehat Q=\sum_a c_aH_a(N)$ is unbiased and
\begin{equation}\label{eq:reference-poly-variance}
 \operatorname{Var}_{\theta^*}(\widehat Q)=\sum_{a\ne0}c_a^2.
\end{equation}
If $L_\theta=dP_\theta/dP_{\theta^*}$ denotes the likelihood ratio,
direct summation of the Poisson series gives
\begin{align}
 K(\theta,\vartheta)
 &:=\mathbb E_{\theta^*}L_\theta L_\vartheta\\
 &=\exp\left\{\sum_e\frac{\nu_e
        (\theta_e-\theta_e^*)(\vartheta_e-\theta_e^*)}
                        {\theta_e^*}\right\}
   =\sum_{a\in\mathbb N_0^q}\phi_a(\theta)\phi_a(\vartheta).
 \label{eq:common-poisson-kernel}
\end{align}
For every finite signed measure $\omega$ on $\Theta$, uniform
absolute convergence on the compact square gives
\begin{equation}\label{eq:signed-likelihood-energy}
 \mathbb E_{\theta^*}\left(\int L_\theta\,d\omega(\theta)\right)^2
 =\iint K(\theta,\vartheta)\,d\omega(\theta)d\omega(\vartheta)
 =\sum_a\left(\int\phi_a\,d\omega\right)^2.
\end{equation}
Thus the Euclidean norm of polynomial coefficients is exactly
reference sampling noise, whereas its dual norm is exactly the
corresponding truncated likelihood overlap. The two sides use
the same feature map, not separately chosen transforms.

If only $|a|\le L$ is retained, put
\[
 R=\max_{\theta\in\Theta}\sum_e
      \frac{\nu_e(\theta_e-\theta_e^*)^2}{\theta_e^*}.
\]
Minkowski's inequality and the multinomial theorem imply
\begin{equation}\label{eq:kernel-tail-explicit}
 \sum_{|a|>L}\left(\int\phi_a\,d\omega\right)^2
 \le\|\omega\|_{\rm TV}^2\sum_{j>L}\frac{R^j}{j!}
 \le\|\omega\|_{\rm TV}^2e^R\frac{R^{L+1}}{(L+1)!}.
\end{equation}
In particular the finite-dimensional dual is an actual likelihood
certificate only after this tail is bounded. Weighted variation
control alone gives $\|\omega\|_{\rm TV}\le1/f_*$, which may be
too costly for sharp entropy rates.

For arbitrary $\theta$, factorial substitution has the exact
variance formula
\begin{equation}\label{eq:global-derivative-variance}
 \operatorname{Var}_\theta(\widehat Q)
  =\sum_{a\ne0}\frac{\theta^a}{a!\nu^a}
                   \bigl(\partial^aQ(\theta)\bigr)^2.
\end{equation}
To verify this identity for a multivariate polynomial, expand its
factorial estimator into monomials, apply the overlap identity in
Lemma~\ref{lem:factorial-energy-new} independently in every coordinate,
and collect terms with common multi-index $a$. The coefficient is
$\theta^a/(a!\nu^a)$ times the square of $\partial^aQ(\theta)$;
the $a=0$ term is the square of the mean and is subtracted.
Consequently a reference norm bound is not a uniform variance bound.
One must either control \eqref{eq:global-derivative-variance}
on the relevant parameter region or use a norm incorporating all
required references.  Removing a polynomial term from the random
statistic is legitimate only if its aggregate value is known.

\subsubsection{Scope of the approximation duality}

\paragraph{Scalar powers and additive functions.}
For $I=[a,b]$ with $a>0$ and $f(x)=x^\alpha$, the unregularized
duality applies for every real $\alpha>1$, with the same exact
accuracy threshold $\tanh((\alpha-1)\varepsilon)$. Integer
powers lie in a finite polynomial space, so their deterministic
approximation error vanishes; the remaining constraint is count
noise. General positive continuous $f$ uses the same theorem.
For $x^\alpha$ on an interval including zero, however, $f_*=0$,
so this theorem cannot simply be invoked. Restricting to
$[a,b]$ and letting $a\downarrow0$ requires uniform weighted
measure bounds, a controlled affine correction using known mass,
and tightness. None follows from the displayed duality alone.

\begin{proposition}[Pointwise relative approximation fails at zero]
Let $b>0$ and let $\alpha>0$ be noninteger. For every polynomial
$Q$,
\[
 \sup_{0<x\le b}\left|\frac{Q(x)}{x^\alpha}-1\right|\ge1.
\]
\end{proposition}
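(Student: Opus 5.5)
The plan is to argue by contradiction at the origin: if some polynomial $Q$ had relative error strictly less than one on $(0,b]$, then $Q(x)/x^\alpha$ would stay in $(0,2)$ on the whole interval. We then expand $Q$ near $0$ and compare its leading behaviour with the noninteger power.

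Write $Q(x)=\sum_{j=0}^d c_jx^j$. If $Q\equiv0$, the supremum equals one and the claim holds. Otherwise let $j_0$ be the smallest index with $c_{j_0}\ne0$. As $x\downarrow0$,
\[
 \frac{Q(x)}{x^\alpha}=c_{j_0}x^{j_0-\alpha}\bigl(1+O(x)\bigr).
\]
Since $\alpha$ is noninteger, $j_0\ne\alpha$, which leaves two cases.

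\emph{Case $j_0<\alpha$.} The exponent $j_0-\alpha$ is negative, so $|Q(x)/x^\alpha|\to\infty$. The relative error is then unbounded, and in particular its supremum is at least one.

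\emph{Case $j_0>\alpha$.} The exponent is positive, so $Q(x)/x^\alpha\to0$. Hence $|Q(x)/x^\alpha-1|\to1$, and the supremum over $(0,b]$ is at least this limit, namely one.

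The only point requiring care is that the supremum is taken over the open interval $(0,b]$. It is therefore attained only in the limit at $0$, and a limit value bounds the supremum from below, which suffices. Noninteger $\alpha$ is used exactly to exclude $j_0=\alpha$. In that excluded case the ratio would tend to $c_{j_0}$, and choosing $c_{j_0}=1$ could make the error small. There is no real obstacle; this is the elementary degeneration behind $f_*=0$ noted before the proposition.
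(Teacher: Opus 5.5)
Your proof is correct and matches the paper's argument. Split on the lowest-degree nonzero monomial, whose integer degree differs from $\alpha$, so the ratio either blows up or tends to zero, and the zero polynomial gives error exactly one. (The contradiction framing and the remark that the supremum is attained only in the limit at $0$ are unnecessary, since the supremum may be attained at an interior point, but neither affects the direct lower bound you prove.)
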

\begin{proof}
For a nonzero $Q$, its first nonzero monomial has integer degree
$r\ne\alpha$. If $r<\alpha$, $|Q(x)/x^\alpha|$ is unbounded
near zero. If $r>\alpha$, $Q(x)/x^\alpha\to0$, so the
displayed supremum is at least one. The zero polynomial has
error exactly one.
\end{proof}
Thus known-mass cancellations and aggregate approximation cannot
be discarded.

A valid global formulation instead takes $\Theta=\Delta_k$ and
$f(p)=\sum_i p_i^\alpha\ge k^{1-\alpha}>0$ for $\alpha>1$.
The exact same duality then applies to polynomials on the simplex,
including additive polynomials with known affine aggregate.
It likewise applies to any positive additive target
$f(p)=\sum_i f_0(p_i)$ with a positive minimum over the chosen
compact parameter class. A rate conclusion from this abstract formulation requires a quantitative
evaluation of the constrained approximation problem. The constructive
iid proof supplies the needed certificates for powers directly.

\paragraph{Symmetry does not force additivity.}
For the iid count model, permutation averaging of a polynomial
statistic preserves its degree and does not increase its worst-case
bias or variance over the simplex. Indeed, averages preserve each
pointwise bias band for a symmetric target, and
$\operatorname{Var}(m^{-1}\sum_j Z_j)\le
m^{-1}\sum_j\operatorname{Var}(Z_j)$ by Cauchy--Schwarz.
Relabeling counts identifies the latter variances with those at
permuted parameters.  Therefore a symmetric polynomial suffices
for a symmetric global approximation/noise certificate.
Such a polynomial need not be additive. For example, on
$\Delta_4$ the symmetric polynomial $(\sum_i p_i^2)^2$ is not
of the form $c+\sum_i g(p_i)$ for a twice differentiable $g$.
Writing $p_4=y=1-p_1-p_2-p_3$, the mixed derivative
$\partial_{p_1}\partial_{p_2}$ of an additive expression is
$g''(y)$, whereas that of the displayed polynomial is
\[
 8(p_1-y)(p_2-y)+4\sum_i p_i^2.
\]
At fixed $y$ and $p_1=p_2=t$, this equals
$32t^2-16t+12y^2+4(1-y)^2$, which varies with $t$ on the
interior of the simplex. Hence permutation symmetrization does
not justify restricting the global primal to additive polynomials.

\clearpage
\section*{AI methodology}
\addcontentsline{toc}{section}{AI methodology}
OpenAI ChatGPT/Codex assisted with mathematical exposition, proof review,
literature comparison, and LaTeX preparation.


\begin{thebibliography}{JVHW15}
\addcontentsline{toc}{section}{References}
\bibitem[AK26]{AK26}
Arman Adibi and Piotr Krysta.
Tight sample bounds for R\'enyi and min-entropy estimation.
arXiv:2607.16966, 2026.
\url{https://arxiv.org/abs/2607.16966}.

\bibitem[AOST17]{AOST17}
Jayadev Acharya, Alon Orlitsky, Ananda Theertha Suresh, and Himanshu Tyagi.
Estimating R\'enyi entropy of discrete distributions.
\emph{IEEE Transactions on Information Theory}, 63(1):38--56, 2017.
\href{https://doi.org/10.1109/TIT.2016.2620435}{doi:10.1109/TIT.2016.2620435}.
\url{https://arxiv.org/abs/1408.1000}.

\bibitem[FS18]{FS18}
Kazuto Fukuchi and Jun Sakuma.
Minimax optimal additive functional estimation with discrete distribution.
arXiv:1812.00001, 2018.
\url{https://arxiv.org/abs/1812.00001}.

\bibitem[JVHW15]{JVHW15}
Jiantao Jiao, Kartik Venkat, Yanjun Han, and Tsachy Weissman.
Minimax estimation of functionals of discrete distributions.
\emph{IEEE Transactions on Information Theory}, 61(5):2835--2885, 2015.
\href{https://doi.org/10.1109/TIT.2015.2412945}{doi:10.1109/TIT.2015.2412945}.
\url{https://arxiv.org/abs/1406.6956}.

\bibitem[OS17]{OS17}
Maciej Obremski and Maciej Skorski.
R\'enyi entropy estimation revisited.
In \emph{Approximation, Randomization, and Combinatorial Optimization.
Algorithms and Techniques (APPROX/RANDOM 2017)}, volume 81 of LIPIcs,
pages 20:1--20:15, 2017.
\url{https://doi.org/10.4230/LIPIcs.APPROX-RANDOM.2017.20}.

\bibitem[PW19]{PW}
Yury Polyanskiy and Yihong Wu.
Dualizing Le Cam's method for functional estimation, with applications
to estimating the unseens.
arXiv:1902.05616, 2019; revised 2021.
\url{https://arxiv.org/abs/1902.05616}.

\bibitem[R\'en61]{Renyi61}
Alfr\'ed R\'enyi.
On measures of entropy and information.
In \emph{Proceedings of the Fourth Berkeley Symposium on Mathematical
Statistics and Probability}, volume 1, pages 547--561.
University of California Press, 1961.
\url{https://projecteuclid.org/euclid.bsmsp/1200512181}.

\bibitem[VV11]{VV11}
Gregory Valiant and Paul Valiant.
Estimating the unseen: An $n/\log(n)$-sample estimator for entropy and
support size, shown optimal via new CLTs.
In \emph{Proceedings of the 43rd Annual ACM Symposium on Theory of
Computing (STOC)}, pages 685--694, 2011.
\url{https://doi.org/10.1145/1993636.1993727}.

\bibitem[Sio58]{Sion58}
Maurice Sion.
On general minimax theorems.
\emph{Pacific Journal of Mathematics}, 8(1):171--176, 1958.
\url{https://doi.org/10.2140/pjm.1958.8.171}.

\bibitem[WY16]{WY16}
Yihong Wu and Pengkun Yang.
Minimax rates of entropy estimation on large alphabets via best
polynomial approximation.
\emph{IEEE Transactions on Information Theory}, 62(6):3702--3720, 2016.
\href{https://doi.org/10.1109/TIT.2016.2548468}{doi:10.1109/TIT.2016.2548468}.
\url{https://arxiv.org/abs/1407.0381}.

\bibitem[WY19]{WY19}
Yihong Wu and Pengkun Yang.
Chebyshev polynomials, moment matching, and optimal estimation of the unseen.
\emph{The Annals of Statistics}, 47(2):857--883, 2019.
\url{https://doi.org/10.1214/17-AOS1665}.

\bibitem[WY20]{WY20}
Yihong Wu and Pengkun Yang.
Polynomial methods in statistical inference: Theory and practice.
\emph{Foundations and Trends in Communications and Information Theory},
17(4):402--586, 2020.
\url{https://doi.org/10.1561/0100000095}.
\end{thebibliography}
\end{document}